\newif\ifExtended
\newcommand{\ifextension}[2]{\ifExtended%
#1%
\else%
#2%
\fi}
\newcommand{\new}[2]{#2}
\newcommand{\del}[2]{}
\newcommand{\chg}[3]{#2}
\newcommand{\sidecomment}[2]{}
\begin{document}

\title[Heterogeneous Dynamic Logic: Provability Modulo Program Theories]{\texorpdfstring{Heterogeneous Dynamic Logic:\\
Provability Modulo Program Theories}{Heterogeneous Dynamic Logic: Provability Modulo Program Theories}}

\author{Samuel Teuber}
\orcid{0000-0001-7945-9110}
\affiliation{%
  \institution{Karlsruhe Institute of Technology (KIT)}
  \city{Karlsruhe}
  \country{Germany}
}
\email{teuber@kit.edu}

\author{Mattias Ulbrich}
\orcid{0000-0002-2350-1831}
\affiliation{%
  \institution{Karlsruhe Institute of Technology (KIT)}
  \city{Karlsruhe}
  \country{Germany}
}
\email{ulbrich@kit.edu}

\author{André Platzer}
\orcid{0000-0001-7238-5710}
\affiliation{%
  \institution{Karlsruhe Institute of Technology (KIT)}
  \city{Karlsruhe}
  \country{Germany}
}
\email{platzer@kit.edu}

\author{Bernhard Beckert}
\orcid{0000-0002-9672-3291}
\affiliation{%
  \institution{Karlsruhe Institute of Technology (KIT)}
  \city{Karlsruhe}
  \country{Germany}
}
\email{beckert@kit.edu}

\begin{abstract}
\looseness=-1
Formally specifying, let alone verifying, properties of systems involving multiple programming languages is inherently challenging.
We introduce \emph{Heterogeneous Dynamic Logic} (HDL), a framework for
combining reasoning principles from distinct (dynamic) program logics in a modular and compositional way.
HDL mirrors the architecture of satisfiability modulo theories (SMT):
Individual dynamic logics, along with their calculi, are treated as \emph{dynamic theories} that can be combined to reason about heterogeneous systems whose components are verified using different program logics.
HDL provides two key operations:
\emph{Lifting} extends an individual dynamic theory with new program constructs (e.g., the havoc operation or regular programs) and automatically augments its calculus with sound reasoning principles for the new constructs;
and \emph{Combination} enables cross-language reasoning in a \emph{single modality} via \emph{Heterogeneous Dynamic Theories}, facilitating the reuse of existing proof infrastructure.
\new{L5}{By lifting combined theories with regular programs, we obtain \emph{heterogeneous} control structures that allow us to reason about intertwined cross-language behavior.}
We formalize dynamic theories, their lifting and combination, and prove the soundness of all proof rules in Isabelle.
We also introduce a proof rule combining deductive DL-based reasoning with reasoning principles from Kleene Algebras with Tests.
Furthermore, we prove \emph{relative completeness} theorems for lifting and combination:
Under usual assumptions, reasoning about lifted or combined theories is no harder than reasoning about the constituent dynamic theories and their common first-order structure (i.e., the ``data theory'').
We demonstrate HDL's value by verifying an automotive case study where a Java controller (formalized in Java dynamic logic) steers a plant model (formalized in differential dynamic logic).
\end{abstract}

\begin{CCSXML}
<ccs2012>
<concept>
<concept_id>10003752.10010124.10010138.10010142</concept_id>
<concept_desc>Theory of computation~Program verification</concept_desc>
<concept_significance>500</concept_significance>
</concept>
<concept>
<concept_id>10003752.10003790.10002990</concept_id>
<concept_desc>Theory of computation~Logic and verification</concept_desc>
<concept_significance>500</concept_significance>
</concept>
<concept>
<concept_id>10011007.10010940.10010992.10010998.10010999</concept_id>
<concept_desc>Software and its engineering~Software verification</concept_desc>
<concept_significance>500</concept_significance>
</concept>
<concept>
<concept_id>10003752.10003790.10003793</concept_id>
<concept_desc>Theory of computation~Modal and temporal logics</concept_desc>
<concept_significance>500</concept_significance>
</concept>
</ccs2012>
\end{CCSXML}

\ccsdesc[500]{Theory of computation~Program verification}
\ccsdesc[500]{Theory of computation~Logic and verification}
\ccsdesc[500]{Software and its engineering~Software verification}
\ccsdesc[500]{Theory of computation~Modal and temporal logics}

\keywords{Dynamic Logic, Theory Combination, Relative Completeness, Isabelle}

\maketitle

\section{Introduction}
\label{sec:introduction}
\looseness=-1
Since its origins in Hoare logic~\cite{DBLP:journals/cacm/Hoare69} and Dijkstra's \del{CCBY}{weakest precondition }predicate transformers~\cite{Dijkstra_Predicate_Transformers}, formal verification has grown into a mature discipline, supporting a wide range of languages, from bytecode~\cite{DBLP:conf/foveoos/Ulbrich10,DBLP:conf/fm/PaganoniF23}, C~\cite{DBLP:journals/fac/KirchnerKPSY15,DBLP:conf/tacas/ClarkeKL04}, or Java~\cite{KeYBook,DBLP:conf/fm/BlomH14}, to modern languages like Rust~\cite{DBLP:journals/pacmpl/Astrauskas0PS19,DBLP:journals/pacmpl/LattuadaHCBSZHPH23} and even to hybrid programs for cyber-physical systems~\cite{Platzer08}.
While \del{CCBY}{these }individual ``worlds'' enjoy robust verification infrastructures, it remains challenging to \emph{rigorously combine} verification results across languages.
Yet, many modern systems\del{CCBY}{, especially cyber-physical systems,} are inherently heterogeneous and would benefit from \chg{CCBY}{methodologies}{verification methodology} enabling \emph{compositional} reasoning across program logics.
\chg{CCBY}{This}{Crucially, this is not merely a \emph{tooling} challenge, but} raises the fundamental questions of how semantics and proof principles of \del{CCBY}{distinct }program logics can be combined to \emph{specify} and \emph{verify} heterogeneous systems.
In practice, heterogeneous verification today often resorts to encoding one system (e.g., a software controller) in the language of another (e.g., hybrid programs)~\cite{DBLP:conf/iccps/GarciaMP19,DBLP:journals/lites/KamburjanMH22,DBLP:conf/nips/TeuberMP24,DBLP:conf/qsw/KlamrothBSD23}.
However, such encodings are neither seamless nor intuitive:
They obscure structure \del{CCBY}{due to incompatible programming paradigms }and fail to reuse existing \chg{CCBY}{infrastructure}{tools available for the individual languages}.

\paragraph{Dynamic Logic.}
We argue that Dynamic Logic (DL) provides a strong foundation
for reasoning about heterogeneous systems due to its expressiveness and its structural properties.
DL generalizes traditional verification techniques~\cite{DBLP:journals/cacm/Hoare69,Dijkstra_Predicate_Transformers}
by treating programs as first-class citizens of the logic embedded within modalities (rather than residing outside the logic).
This enables DL to specify a large variety of systems including imperative languages~\cite{harel_first-order_1979} or hybrid systems~\cite{Platzer08}.
Crucially, DL's treatment of programs gives rise to a fully compositional logic that
naturally supports properties such as
liveness~\cite{DBLP:journals/fac/TanP21} or incorrectness~\cite{DBLP:conf/tap/RummerS07} as well as hyperproperties (e.g. refinement~\cite{Ulbrich2013,DBLP:conf/lics/LoosP16,prebet_uniform_2024} or secure information flow~\cite{beckert_information_2013}) \emph{without necessitating specialized logics} (as necessary for, e.g., Hoare logic).
However, prior DL frameworks are confined to a \emph{single, fixed programming language}, and are thus ill-suited for verifying heterogeneous systems that combine components from multiple ``worlds'' (e.g. Java code and hybrid systems).
As a lighter-weight counterpart to \del{L12}{(propositional)} DL, Kleene Algebra with Tests~\cite{DBLP:journals/toplas/Kozen97} (KAT) 
is an \del{L12}{established,} effective methodology for (automated) program verification.
\new{L12}{Our approach also integrates with program rewriting via KAT to ease heterogeneous verification.}
\del{L12}{We introduce a proof calculus capable of combining KAT reasoning with deductive DL reasoning.
Section X demonstrates how this combination can be leveraged to ease heterogeneous systems verification.}

\paragraph{Satisfiability Modulo Theories.}
The success of satisfiability modulo theories (SMT) has transformed first-order reasoning across a wide range of applications\chg{S10}{.}{:}
By requiring a \emph{unified set of assumptions} (e.g., stable infiniteness~\cite{DBLP:journals/tcs/Oppen80}) from first-order theories, SMT solvers can delegate theory-specific queries, e.g., involving integers or arrays, to specialized sub-solvers.
Users can freely combine theories within a single formula while theory combination mechanisms ensure soundness and, under suitable assumptions, even completeness.
The seminal results by Shostak, Nelson and Oppen, and their subsequent extensions~\cite{DBLP:journals/tcs/Oppen80,DBLP:journals/toplas/NelsonO79,DBLP:conf/frocos/TinelliH96,TinelliZ04,barrett_satisfiability_2018, Shostak} thus paved the way for the modular combination of first-order theories.
This enables a \emph{separation of concerns} between theory-specific reasoning and the logical structure connecting the theories.
Inspired by this design, we ask: Can a similar, modular structure be applied to program logics?
That is, can we reason about heterogeneous programs while preserving the semantics and proof rules established for individual languages?
This paper answers the question in the affirmative, building on the rigorous foundations of dynamic logic.

\paragraph{Contribution.}
\looseness=-1
This paper introduces \emph{Heterogeneous Dynamic Logic} (HDL): The foundational formal notions that allow
DL to serve as a generalized, SMT-inspired vehicle to combine
different \emph{program logics} (as \emph{dynamic theories}), just like first-order logic serves as a vehicle to combine \del{S1}{of }different \emph{data logics} (as first-order theories).
Heterogeneous Dynamic Logic supports two key operations:
\begin{itemize}
    \item \emph{Lifting} (\Cref{sec:liftedDL}) extends a given \emph{dynamic theory} $\dynamicTheory$ with additional program constructs, such as nondeterministic choice ($\havoced{\dynamicTheory}$, \Cref{sec:liftedDL:havoc}) or the closure over regular programs ($\regular{\dynamicTheory}$, \Cref{sec:liftedDL:regular}).
    Lifting also automatically extends the dynamic theory's calculus with sound proof rules to reason about the extended programming language.
    \item \looseness=-1 \emph{Combination} (\Cref{sec:hdl}) merges
    two \del{ACM}{dynamic }theories $\zero{\dynamicTheory}$ and $\one{\dynamicTheory}$ (defined over \del{ACM}{distinct }programming languages $\zero{\signaturePrograms},\one{\signaturePrograms}$) into a single \emph{heterogeneous dynamic theory}
    $\hero{\dynamicTheory}$.
    This new dynamic theory supports \emph{heterogeneous programs} in which programs from $\zero{\signaturePrograms}$ and $\one{\signaturePrograms}$ can be composed freely using regular programs\chg{ACM}{. Combination also merges and extends the individual proof calculi.}{ without the necessity of embeddings required in prior approaches.
    Combination also merges and extends the calculi of the individual theories into a common, sound calculus for the heterogeneous dynamic theory.}
\end{itemize}
\noindent
\looseness=-1
This enables rigorous, modular reasoning over \emph{heterogeneous programs} with deeply intertwined behavior of programs stemming from individual logics.
Along with a relatively complete proof calculus for lifted/combined dynamic theories,
we \new{L3}{also} present an axiom \chg{S2}{leveraging}{leverages} KAT-style equational rewriting --
coupling KAT's reasoning principles with DL's proof system.
We introduce a formalization of dynamic theories, their lifting and combination and their sound proof calculi in Isabelle in a reusable manner via its Locale infrastructure
(formalized Lemmas/Theorems are in \isaText{cyan}, see \ifextension{\Cref{apx:isabelle}}{Appendix D}).
\new{L3}{%
This formalization directly supports %
interactive theorem proving for heterogeneous systems with proof infrastructure reuse. Simultaneously, the SMT-like architecture of our proof calculus lays the groundwork for %
automated heterogeneous verification in the long term.
}

\paragraph{Overview.}
\looseness=-1
The remainder of this paper is structured as follows:
\Cref{sec:exemplary_case_study_intro} motivates the challenge of heterogeneous verification using a concrete case study.
\Cref{sec:related-work} surveys prior attempts to support general cross-language verification.
\Cref{sec:DL,sec:elementary} introduce our notion of \emph{dynamic theories}, a generalization of dynamic logic with reduced assumptions,
and demonstrate that our assumptions suffice to recover well-known properties of ``DL-like'' logics.
In this context, we also connect DL's proof calculus to equational reasoning via KAT.
\Cref{sec:liftedDL} presents \emph{lifting},
an approach to extend a given dynamic theory with additional program \chg{S16}{features}{featues} such as regular programs along with deriving the necessary proof rules.
To this end, we also prove that our regular program lifting results in a KAT which enables tool reuse.
\Cref{sec:hdl} proves that the \emph{heterogeneous dynamic theory} over two dynamic theories is once again a \emph{dynamic theory}.
Corresponding additional proof rules are derived including additional KAT-style identities which ease verification in practice.
Finally, \Cref{sec:rel-complete} establishes conditions under which \emph{relative completeness} transfers from a dynamic theory to its lifted and/or combined version.
\Cref{case_study_guarantee} showcases the utility of HDL on our automotive case study.
\ifextension{}{Proofs can be found in the extended version on arXiv~\cite{DBLP:journals/corr/abs-2507-08581}.}

\section{Motivation: Case Study}
\label{sec:exemplary_case_study_intro}

\begin{wrapfigure}[17]{r}{0.5\textwidth}
\vspace*{-2em}
\begin{minipage}{\linewidth}
\begin{center}
\includegraphics[width=7cm]{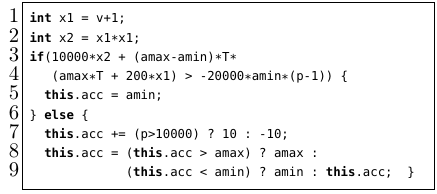}
\end{center}
\vspace*{-1em}%
\subcaption{Java Controller \texttt{ctrl} that can be analyzed using JavaDL.}
\label{lst:background:alpha_ctl}
\end{minipage}

\vspace*{-0.2em}%
\begin{minipage}{\linewidth}
\[
\textit{env} \equiv \Big(\textit{t}:=0; \left(\textit{x}'=\textit{v}, \textit{v}'=\textit{a}, \textit{t}'=1 \ \& \ \textit{t} \leq \textit{T}\right)\Big);
\]

\vspace*{-1.25em}%
\subcaption{Environment model that can be analyzed in \ac{dL}.}
\label{fig:background:dl_model}
\end{minipage}

\vspace*{-0.2em}%
\begin{minipage}{\linewidth}
\begin{center}
\includegraphics[width=7.5cm]{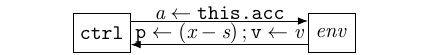}
\end{center}

\vspace*{-0.75em}%
\subcaption{The heterogeneous system ($\leftarrow$ denotes some kind of assignment; semantics see \Cref{case_study_guarantee}).}
\label{fig:heterogeneous_sketch}
\end{minipage}

\vspace*{-1em}
\caption{Heterogeneous system case study}
\end{wrapfigure}%
\looseness=-1
As \chg{S17}{example of a}{an exemplary} heterogeneous system
we consider a car steered by a software controller written in Java while driving towards a stop sign.
To effectively leverage existing proof infrastructures, we wish to model the time-continuous, physical system in Differential Dynamic Logic~\cite{Platzer08} (\acs{dL}) using \ac{keymaerax}~\cite{Platzer08,Platzer2017a,Platzer2020,Fulton2015}\chg{L2}{ %
while verifying the controller using}{.
In contrast, we wish to verify the controller using} Java Dynamic Logic~\cite{DBLP:conf/javacard/Beckert00} (JavaDL) \chg{L2}{using the \KeY{} tool~\cite{KeYBook}}{via the verification tool \KeY{}}.

\paragraph{Java Dynamic Logic.}
\noindent
\looseness=-1
JavaDL allows reasoning about Java programs~\cite{DBLP:conf/javacard/Beckert00,beckert_dynamic_2006,KeYBook} (including exceptions, heap state etc.).
Consider the method in \Cref{lst:background:alpha_ctl} serving as controller for a car approaching a stop sign at distance \texttt{p} with velocity \texttt{v}\chg{S10}{.
The}{:
This} method computes an acceleration via the \emph{stateful} field \texttt{this.acc} and brakes in case the condition in line 3 is satisfied.
The method is not idempotent due to \texttt{this.acc} and its computation also depends on static variables \texttt{amax}, \texttt{amin} and \texttt{T}.
While JavaDL (and \KeY{}) is capable of verifying Java programs,
JavaDL cannot reason about physical dynamics, e.g. given as differential equations.

\paragraph{Differential Dynamic Logic.}
\looseness=-1
Differential Dynamic Logic~\cite{Platzer08} (\acs{dL}) allows reasoning about \emph{hybrid programs} whose states evolve either along discrete state transitions (variable assignments) or continuously along differential equations.
Consider the hybrid program \textit{env} in \Cref{fig:background:dl_model} \chg{S10}{which }{:
It }sets a clock variable \textit{t} to $0$ and evolves \textit{x}, \textit{v} and \textit{t} along a differential equation for a maximum time of \textit{T}.
The program describes the evolution of a car's position under constant acceleration $\textit{a}$.
Using the proof calculus implemented in the theorem prover \ac{keymaerax}~\cite{Platzer08,Platzer2017a,Platzer2020,Fulton2015}, we can\del{L2}{ (only)} prove statements about hybrid programs.
\new{L2}{However, \ac{keymaerax} cannot natively reason over Java programs which, e.g., encompass integer arithmetic with overflows and heap state invariants.} %

\paragraph{Challenge}
Both of the introduced verification methodologies come with strengths and weaknesses\chg{S10}{.}{:}
While they excel\del{S13}{l} at verifying properties for their ``native'' domain of computation, they lack support for their respective counterpart.
However, to verify the safety of the entire heterogeneous system, we require a formalism that allows us to reason about the \emph{looped} heterogeneous system sketched in \Cref{fig:heterogeneous_sketch}.
We propose a \emph{general} framework that gives a formal semantic\new{S19}{s} to this sketch and provides a proof calculus that emphasizes the \emph{\chg{S20}{reuse}{reusage}} of proof infrastructure.
Our framework is not tied to the particular combination of languages discussed in this case study.

\section{Related Work}
\label{sec:related-work}
\looseness=-1
\paragraph{Dynamic Logic.}
As summarized by Ahrendt \emph{et al.}~\cite{Ahrendt2025}, 
DL has been instantiated for a wide range of behavioral languages (e.g., Java~\cite{KeYBook}, hybrid programs~\cite{Platzer08,Platzer2012}, byte code~\cite{DBLP:conf/foveoos/Ulbrich10,Ulbrich2013} or abstract state machines~\cite{DBLP:journals/jucs/SchellhornA97}) allowing proofs of numerous properties (functional correctness~\cite{DBLP:conf/tacas/BeckertSUWW24,DBLP:journals/fac/BoerGKJUW23}, safety~\cite{DBLP:conf/icfem/PlatzerQ09,DBLP:journals/sttt/JeanninGKSGMP17}, liveness~\cite{DBLP:journals/fac/TanP21}, incorrectness~\cite{DBLP:conf/tap/RummerS07}, refinement~\cite{Ulbrich2013,DBLP:conf/lics/LoosP16} or information flow~\cite{beckert_information_2013,DBLP:conf/lics/BohrerP18}).
However, the \emph{combination} of behavioral languages has not previously been studied.

\paragraph{Parameterized Dynamic Logic}
Parameterized Dynamic Logic~\cite{zhang2025parameterizeddynamiclogic}
is a framework designed to build a dynamic logic and a calculus for a
given programming \del{ACM}{or modeling }language based on symbolic execution
rules.
\chg{ACM}{
While their approach also generalizes the concept of dynamic logic, it does not address the challenge
}{
Like HDL, this aims at generalizing the concept of dynamic
logic, but does not address the challenge of integrating logics for
different programming languages }
\new{L13}{to reason about \emph{heterogeneous} programs which is at the core of this paper}.
\del{L13}{However, the ability to reason about \emph{combined, heterogeneous programs} while retaining the proof calculi of the individual logics is at the core of this paper.
Unlike our work, Parameterized DL currently lacks a theorem prover formalization.}

\paragraph{Satisfiability Modulo Theories.}
Research on SMT theory combination~\cite{DBLP:journals/tcs/Oppen80,DBLP:journals/toplas/NelsonO79,DBLP:conf/frocos/TinelliH96,TinelliZ04,barrett_satisfiability_2018} is not readily extensible to modal/dynamic logic: Much of the theory is limited to quantifier-free first-order logic (which is even more restrictive than full first-order (dynamic) logic), and classic SMT theory combination has no way of dealing with a dynamic logic's program primitives.

\paragraph{Intermediate Verification Languages.}
\looseness=-1
To reason about programs or state transition systems across \del{ACM}{programming or modeling }languages, an effective approach is to encode them into a unified intermediate representation\chg{L13}{. }{, transforming them into artifacts within the same formalism. }%
\chg{L13}{This paradigm is used in}{Intermediate representations have long been vital in compiler design and have been extended to}\del{ACM}{ deductive verification via} systems like Boogie~\cite{Boogie}, Why3~\cite{filliatre13esop} and Viper~\cite{MSS16}, which enable language-agnostic reasoning\del{ACM}{ due to their generality}. Similarly, in model checking, common input formats like SMV and  exchange formats such as VMT~\cite{DBLP:journals/fmsd/CimattiGMRT22} and Moxi~\cite{DBLP:conf/cav/JohannsenNDISTVR24} facilitate \chg{ACM}{analyzing}{the analysis of} heterogeneous systems.

In contrast, heterogeneous dynamic logic deliberately avoids assuming a shared \del{L13}{programming }language or logic, aiming to integrate verification systems without a clear common ground. For instance, differential dynamic logic supports\new{L13}{ continuous} value evolution through differential equations -- a capability that cannot be naturally encoded within the languages mentioned earlier.

\paragraph{Fibring.}
In contrast to logic fibring~\cite{gabbay_overview_1996,gabbay1998fibring} (established for modal logics), Heterogeneous Dynamic Logic goes beyond the fibring of ``truth bearing'' entities that one could expect from a DL with multiple modalities. %
\chg{L13}{%
For example, fibring could produce formulas of the form $\phi_1 \implies\left[\alpha_1\right]_1\left[\alpha_2\right]_2\phi_2$ (see \Cref{subsec:hdl:simple}) where $\left[\cdot\right]_1$ and $\left[\cdot\right]_2$ represent modalities for distinct programming languages.
}{%
For example, for formulas $\phi_1 \implies \left[\alpha_1\right]_1\phi_1$ of dynamic logic 1 and $\left[\alpha_2\right]_2\phi_2$ of dynamic logic 2, a suitable definition of fibring could (possibly) produce formulas of the form $\phi_1 \implies\left[\alpha_1\right]_1\left[\alpha_2\right]_2\phi_2$ (see also discussion in Section X).}%
\chg{ACM}{While}{However, while} merging of truth-bearing entities can encode \emph{some} behaviors observed in heterogeneous systems\del{ACM}{ (e.g. sequential composition)}, there is a large class of behaviors that \emph{cannot} be modeled by merely merging truth bearing entities since it ignores the programs.
In particular, we cannot model \del{ACM}{the }looped execution\del{ACM}{ of $\alpha_1$ and $\alpha_2$ which is also essential for our case study (\Cref{sec:exemplary_case_study_intro})}.

\paragraph{UTP and Heterogeneous Bisimulation}
\looseness=-1
The Unified Theories of Programming~\cite{hoare1997unified} (UTP) and its Isabelle formalization\del{ACM}{ Isabelle/UTP}~\cite{DBLP:conf/utp/FosterZW14} provide a common framework for integrating diverse programming languages into a common semantics with \emph{shared state}\del{ACM}{ via relational predicates}.
While this enables proofs on cross-language systems,
it typically requires embedding existing program semantics within UTP.
In contrast, HDL aims for modular\new{L13}{ and sound} \emph{reuse} of existing (external) verification infrastructure via the common interface provided by DL. %
\del{L13}{%
To this end, dynamic theories are granted their individual state spaces with communication explicitly modeled within the heterogeneous dynamic theory.
This approach enables the sound reuse of existing verification infrastructures without the necessity of re-encoding or reimplementing logics:}
For example, our case study\new{L13}{ natively} leverages the existing DL-based proof infrastructures for Java~\cite{KeYBook} and hybrid programs~\cite{Fulton2015} unchanged,
whereas the UTP community has invested considerable effort (notably, Foster's formalisation of hybrid relations~\cite{DBLP:conf/utp/Foster19}) to embed hybrid programs \emph{within} UTP. %
\chg{L13}{%
This highlights HDL's orthogonal philosophy in comparison to UTP. HDL also provides native support for a wide range of properties that can be encoded in DL (see also \cite[Sec. 3]{Ahrendt2025}).}{
This highlights HDL's orthogonal philosophy in supporting
theory combination without sacrificing existing, specialized proof infrastructure. Moreover, by relying on dynamic logic, HDL natively supports a wide range of program properties (see also X). %
}

\looseness=-1
\chg{L13}{The range of properties}{This} also distinguishes HDL from \del{L13}{notions of }heterogeneous (bi)simulation~\cite{DBLP:conf/fossacs/NoraRSW25} 
\chg{L13}{which is}{that are} designed for \del{L13}{relational }\emph{comparison} of \del{L13}{distinct }systems (e.g., a labeled transition system \chg{L13}{vs.\ }{and }a probabilistic system)\del{L13}{, whereas HDL enables verification of a wide range of relational and non-relational first-order properties over arbitrarily composed heterogeneous systems, not merely behavioral comparisons}.

\paragraph{Contract-based\new{L13}{ and Polyglot} Verification.}
\looseness=-1
Our approach provides three key advantages in comparison to contract-based techniques:
First, in our framework all components, all communication, all invariants, and all desired properties can be modeled in a single integral formal system and \chg{L13}{can later be proved}{later proven} in a single formula.
While \del{ACM}{there exist }language agnostic contract languages like Contract-LIB~\cite{DBLP:conf/isola/ErnstPU24} \chg{ACM}{aim to bridge}{\chg{L13}{that bridge}{aiming at bridging}} the specification gap between different PLs, they cannot \chg{L13}{model}{allow modeling} heterogeneous systems in a single \del{L13}{formal }program.
\chg{ACM}{Instead, they require a meta-level argument on combination outside any program logic.}{
Instead, they require a meta-level argument outside the program logic for why a combination is possible.}
Secondly, in its usual formulation\del{ACM}{ (given a precondition, some postcondition holds for all executions)}, contracts \emph{overapproximate} functional behavior and are hence only suitable for correctness/safety verification (box-properties) \chg{L13}{and not liveness/relational properties supported by DL.}{and not reachability/liveness verification (diamond properties, which talk about some executions) or many relational properties supported by our dynamic logic formulation.}
Finally, our relative completeness results in \Cref{sec:rel-complete} ensure that\del{ACM}{, under reasonable assumptions,} \chg{L13}{we can prove all properties of interest using our calculus by reusing proofs for individual components.}{
any correct heterogeneous system can be proven correct via heterogeneous dynamic logic if we can already reason about the system's components. It is unclear under which circumstances contract-based techniques can provide relative completeness guarantees.}

Polyglot verification, specifically PolyVer~\cite{Chen2025PolyVer}, advocates for a contract-based verification approach\del{L13}{ which comes with some of the limitations outlined above}\new{L13}{ for cross-language verification}.
While PolyVer can alleviate the issue of requiring meta-level arguments by providing an automaton-based semantics for the polyglot system, their approach does not provide a relative completeness guarantee.
Hence, it is a priori unclear whether their specification language is sufficiently expressive to prove all safety properties.
In contrast, PolyVer proposes an interesting, \chg{L13}{AI}{Large Language Model}-based \chg{L13}{approach to synthesize auxiliary contracts}{methodology to synthesize intermediate lemmas which are necessary for verification}.
\chg{L13}{Evaluating}{This approach has the potential of increasing the degree of automation in heterogeneous verification -- evaluating} similar techniques for heterogeneous dynamic logic could be an exciting direction of future work.

\paragraph{Kleene Algebras.} %
\looseness=-1%
Kleene Algebra Modulo Theories~\cite{DBLP:conf/pldi/0002BC22} (KAMT) \chg{L13}{is a conceptually related approach that proposes to turn}{proposes an approach to turn} a client theory into a Kleene Algebra with Tests (KAT).
\del{L13}{The objective of this framework is conceptually related to our regular program lifting and product-based combination approach.}
In contrast to KAMT, we derive a \emph{dynamic logic}\new{L10}{ and emphasize \emph{reuse} of proof infrastructure for combined theories}.
\chg{L13}{%
Kozen~\cite{DBLP:journals/toplas/Kozen97} motivated KAT with the observation that}{%
Kozen~\cite{DBLP:journals/toplas/Kozen97} notes as a motivation for the conception of KAT that} %
``many simple program manipulations [...] do not require the full power of [propositional DL]''\chg{L13}{;}{ which underlines that dynamic logics support a much wider range of properties at the cost of lacking decidability (in the first-order case).
This, in particular, also concerns the support for quantifiers, liveness and hyperproperties beyond equality and refinement which are difficult, if not impossible, to encode in KA(M)T.}
\chg{L13}{indeed, dynamic}{As outlined above, dynamic} logics support a \emph{wider} range of properties\new{L13}{, including \emph{quantifiers}, liveness, or properties with \emph{alternating modalities}, which are typically beyond KA(M)T's power.}
While \chg{L13}{DL's}{the} expressivity \chg{L13}{precludes decidability}{of dynamic logics prohibits the derivation of decidability results}, we \chg{L13}{prove}{are able to prove} relative completeness results for our calculus under theory lifting or combination (see \Cref{sec:rel-complete}).
\chg{L13}{Moreover, as}{%
Nonetheless, as we show with} axiom \ref{axiom:E}\chg{L13}{ demonstrates,}{, where applicable,} we can also integrate \chg{L13}{KA(M)T}{KAT/KAMT} results into our reasoning \chg{L13}{where applicable}{while, in general, providing a more powerful logic}.
\new{L10}{In complementary research, Kleene Algebra with Domain~\cite{DBLP:journals/tocl/DesharnaisMS06,DBLP:conf/RelMiCS/Sedlar23,DBLP:journals/corr/abs-2311-06937} extends KAT with (anti-)domain operators to recover propositional DL expressivity, but remains in the propositional, \emph{quantifier-free} setting.}

\del{L13}{
We emphasize differences in comparison to related work here while deferring a more detailed related work discussion to Appendix X.
\emph{Dynamic Logics} have been instantiated for a wide range of behavioral languages and properties~\cite{Ahrendt2025,KeYBook,Platzer08,Platzer2012,DBLP:conf/foveoos/Ulbrich10,Ulbrich2013,DBLP:journals/jucs/SchellhornA97,DBLP:conf/tacas/BeckertSUWW24,DBLP:journals/fac/BoerGKJUW23,DBLP:conf/icfem/PlatzerQ09,DBLP:journals/sttt/JeanninGKSGMP17,DBLP:journals/fac/TanP21,DBLP:conf/tap/RummerS07,Ulbrich2013,DBLP:conf/lics/LoosP16,beckert_information_2013,DBLP:conf/lics/BohrerP18}.
Unlike prior works, we study their common principles and how their semantics can be combined in a uniform fashion.
Unlike \emph{Parameterized Dynamic Logic}~\cite{zhang2025parameterizeddynamiclogic}, we address the challenge of combining programming languages while retaining existing proof calculi and provide a formalization of our results.
In contrast, literature on \emph{Satisfiability Modulo Theories} extensively investigated theory combination~\cite{DBLP:journals/tcs/Oppen80,DBLP:journals/toplas/NelsonO79,DBLP:conf/frocos/TinelliH96,TinelliZ04,barrett_satisfiability_2018}, but is often limited to quantifier-free first-order segments and does not admit program modalities.
While deductive verification systems with \emph{intermediate languages}~\cite{Boogie,filliatre13esop,MSS16} or some \emph{model checking based techniques} with a common exchange format~\cite{DBLP:journals/fmsd/CimattiGMRT22,DBLP:conf/cav/JohannsenNDISTVR24} enable the analysis of heterogeneous systems, we deliberately avoid the need for a common shared programming language. This, e.g., enables the integration of continuous program state transitions as observed in, e.g., differential dynamic logic~\cite{Platzer08,Platzer2012}.
Unlike \emph{Fibring}~\cite{gabbay_overview_1996,gabbay1998fibring}, our combination of logics goes beyond the combination of ``truth bearing'' entities, by equally fibring programming languages and their semantics (see Section X).
While the \emph{Unified Theories of Programming}~\cite{hoare1997unified,DBLP:conf/utp/FosterZW14} provide a common framework for integrating diverse programming languages, they assume shared state and require a reimplementation of program semantics inside the common framework. In contrast, we emphasize reusage of proof infrastructure by reducing verification problems to the individual logics.
In comparison to \emph{contract-based} verification approaches (e.g. CONTRACT-LIB~\cite{DBLP:conf/isola/ErnstPU24} or PolyVer~\cite{Chen2025PolyVer}),
we support a wider range of properties as overapproximating contracts limit application to safety properties.
Additionally, we explicitly formalize the interaction between homogeneous components, i.e. we require no meta-level argument on the possibility or semantics of combination.
Moreover, unlike other approaches, our framework comes with a relative completeness guarantee: Under reasonable assumptions any correct heterogeneous system can be proven correct.
\emph{Kleene Algebra Modulo Theories}~\cite{DBLP:conf/pldi/0002BC22} (KAMT) proposes an approach to turn a client theory into a Kleene Algebra with Tests.
The objective of this framework is conceptually related to our regular program lifting and product-based combination approach.
In contrast to KAMT, we derive a dynamic logic which admits quantifiers and hence provides us with a significantly more expressive logic that supports a wider range of (hyper-)properties due to dynamic logic's compositional nature.
While this expressivity prohibits the derivation of decidability results, we are able to prove relative completeness of our calculus.
As we show with axiom (E), where applicable, we can also integrate KAT/KAMT results into our reasoning while, in general, providing a more powerful logic.
}

\section{Dynamic Theories}
\label{sec:DL}
The foundation of our approach is the idea behind \Ac{FODL}~\cite{harel_first-order_1979,harel_dynamic_2000}, which extends first-order logic to a multi-modal logic where programs parameterize modalities.
For instance, for an integer first-order theory, the formula
\begin{equation}
\label{eq:DL:simple_example}
\underbrace{0 \leq v}_{\text{atom}} \rightarrow
[\underbrace{w \coloneqq v+1}_{\text{program}}]
,\underbrace{1 \leq w}_{\text{atom}}
\end{equation}
states that after all execution of $w \coloneqq v+1$, $w$ is positive if $v$ was non-negative.
Unlike Hoare triples~\cite{DBLP:journals/cacm/Hoare69} or predicate transformers~\cite{Dijkstra_Predicate_Transformers}, Dynamic Logic treats programs as first-class citizens of formulas, enabling compositional reasoning about richer properties such as reachability ($\modDia{w \coloneqq v+1 }{w>0}$) and hyper-properties ($\modBox{\programOne}{\modDia{\programTwo}{\formulaOne}}$).

Our contribution generalizes this principle: not only the first-order theory but also the underlying program language can be chosen freely.
We introduce \emph{dynamic theories}\chg{S10}{ as a}{: A} minimal semantic interface capturing the assumptions on programs and first-order reasoning.
Concrete program logics (existing or new) are instantiations of a dynamic theory; multiple such theories can be composed into a heterogeneous dynamic theory that still satisfies the same interface.
This composability allows ``stacking'' theories while retaining derived proof rules. %
\del{S4}{In Section X we show that this suffices for a relative completeness result: Under reasonable assumptions, two relatively complete calculi for two theories can be merged into one for their heterogeneous combination.}%
By parameterizing over both programs and atoms, our framework unifies diverse program logics and provides a reusable foundation akin to SMT theory combination --- ensuring that compliant dynamic logics can interoperate and inherit proof rules automatically.

\subsection{Syntax}
To focus the presentation on the modal aspects of the logic, we abstract away from the term structure of a Dynamic Logic's first-order fragment and instead begin our investigation directly at the level of first-order atoms. %
\chg{S5}{We formalize the syntactic material of a dynamic theory in the notion of a \emph{dynamic signature}:}{%
Consequently, the syntactic material of a Dynamic Logic is given via the set of variables $\signatureVariables$ the set of its first-order atoms $\signatureAtoms$ and the set of its programs $\signaturePrograms$.
We call this the \emph{dynamic signature} of a dynamic theory:
}
\begin{definition}[Dynamic Signature]
    \label{def:DL:signature}
    A \emph{dynamic signature} $\dynamicSignature$ is given by a set of variables $\signatureVariables$, a set of first-order atoms $\signatureAtoms$, and the set of programs $\signaturePrograms$ with all three sets pairwise disjoint. We denote this as $\dynamicSignature=\dynamicSignatureTuple$
\end{definition}
Given a signature $\dynamicSignature$, we can define the structure of formulas that are part of a dynamic theory in a manner that is similar to classical \ac{FODL} while constraining programs and variables to all syntactical constructs contained in $\dynamicSignature$:
\begin{definition}[$\dynamicSignature$-formulas]
    \label{def:DL:formulas}
Let $\dynamicSignature = \dynamicSignatureTuple$ be a dynamic signature.
Given a variable $\variableOne\in\signatureVariables$, an atom $\atomOne\in\signatureAtoms$, and a program $\programOne\in\signaturePrograms$, $\dynamicSignature$-formulas are defined by the following grammar:
$
    \quad\grammar{\formulaOne,\formulaTwo}{
        \atomOne \grammarOr
        \neg \formulaOne \grammarOr
        \formulaOne \land \formulaTwo \grammarOr
        \fa{\variableOne}{\formulaOne} \grammarOr
        \modBox{\programOne}{\formulaOne}
    }
$
\end{definition}
The set of all $\dynamicSignature$-Formulas is denoted as $\formulaSet{\dynamicSignature}$.
We denote the first-order fragment (i.e. all formulas without any box modality) as $\folFormulaSet{\dynamicSignature} \subset \formulaSet{\dynamicSignature}$.
Further, we introduce syntactic sugar for the diamond modality, which we denote as $\left\langle \programOne \right\rangle \formulaOne \equiv \neg\left[\programOne\right]\neg\formulaOne$.
Similarly, syntactic sugar can be introduced for existential quantifiers and other logical connectives ($\lor$, $\rightarrow$, $\ldots$).

\newcommand{\ring}{\ensuremath{\mathcal{R}}}

\begin{example}[Syntax for Ordered Semirings]
\label{ex:DL:syntax_semiring}
As a running example, consider an ordered semi-ring $\left(\ring,+,\cdot, \leq\right)$ and an arbitrary set of variables $\signatureVariables^{\ring}=\left\{\variableOne_1,\variableOne_2,\dots\right\}$:
We define terms as literals $c\in L \subseteq\ring$, variables $v\in\signatureVariables^{\ring}$ and any composition of terms with $+$ and $\cdot$.
Atoms are any formulas of the form $t_1 \leq t_2$ and programs have the form $\variableOne \coloneqq t_1$ (for $t_1,t_2$ terms and $\variableOne\in\signatureVariables^{\ring}$).
Together, these components form a dynamic signature containing variables, atoms and programs.
\end{example}

\subsection{Semantics}
The previous section outlined the syntactic constructs necessary to define a dynamic theory.
With formulas of our dynamic theory defined,
we can now focus on the semantic constructs that are necessary to define a dynamic theory.
To this end, we require five components:
First, since we are considering a modal logic, we require a state space $\stateSpace$ over which we can evaluate $\dynamicSignature$-formulas.
Secondly, we require functions to evaluate the values of variables ($\variableEval$) w.r.t. some universe ($\universe$), the truth-value of atoms ($\atomEval$), and the state-transition behavior of programs ($\programEval$).
Each of these components must satisfy some conditions.
\chg{}{Once}{As will be shown below, once} these components are defined, we can define the evaluation of formulas.
We now begin by defining state spaces and variable evaluations:
\begin{definition}[Universes, State Spaces and Variable Evaluation]
\label{def:DL:universe_state_space_var_eval}
Given a dynamic signature $\dynamicSignature$ with variables $\signatureVariables$ and a set $\universe \neq \emptyset$ called \emph{universe},
a \emph{state space} $\stateSpace\neq\emptyset$ is defined as a
set of states together with a \emph{variable evaluation function} $\variableEval : \stateSpace \times \signatureVariables \to \universe$ such that $\stateSpace$ and $\variableEval$ satisfy: %
\begin{localeAssmBlock}
\localeAssm{\localeAssmInterpolation}{
For all $\stateOne,\stateTwo \in \stateSpace$ and $\variableSet \subseteq \signatureVariables$,%
there exists an $\stateThree \in \stateSpace$ such that for all $\variableOne \in \signatureVariables$:\\
$\variableEval\mleft(\stateThree,\variableOne\mright) = \variableEval\mleft(\stateOne,\variableOne\mright)$ if $\variableOne \in \variableSet$ and else $\variableEval\mleft(\stateThree,\variableOne\mright)=\variableEval\mleft(\stateTwo,\variableOne\mright)$
}
\end{localeAssmBlock}
\end{definition}
This definition of state spaces slightly deviates from popular definitions of Dynamic Logic that often consider the state space as the set of all mappings from $\signatureVariables$ to $\universe$
which for us corresponds to defining $\variableEval\mleft(\stateOne,\variableOne\mright)=\stateOne\mleft(\variableOne\mright)$.
However, our set-based definition allows for more complex state spaces.
For example, we can globally exclude the possibility of a variable $\variableOne$ taking on certain values $\valueOne\in\universe$, which can be leveraged for typed/sorted state spaces.
Similarly, we can construct state spaces that are formed by composing states from other state spaces (an example of this can be found in \Cref{sec:hdl}).
\new{S69}{Our approach also circumvents formalization hurdles in Isabelle/HOL that would accompany a dependently typed mapping approach for state spaces.}
Importantly, the \newNameForInterpolation{} property retains a notion of well-formedness of the state space\chg{S10}{
 which ensures that if a variable $\variableOne\in\signatureVariables$ is assigned some value $\valueOne\in\universe$, then any other state can also be modified so that $\variableOne$ has that value $\valueOne$.
}{: %
If there exists a state where a variable $\variableOne\in\signatureVariables$ is assigned some value $\valueOne\in\universe$, then any other state can also be modified so that $\variableOne$ has that value $\valueOne$.}
For example, this can ensure that a variable which (in a more fine-grained view of the dynamic theory) is an integer variable can indeed be assigned all integer values independently of the remaining state.
\del{S69}{The }\NewNameForInterpolation{} \del{S69}{property }is important for proving coincidence properties for formulas and programs \chg{S69}{which is crucial for proof rule derivation}{that are crucial for the derivation of the proof rules of dynamic theories}.
\chg{S69}{For}{Going forward, for} states $\stateOne,\stateTwo \in \stateSpace$ we will say $\stateOne$ and $\stateTwo$ are equal on $\variableSet \subseteq \signatureVariables$ (denoted
$\equalOn{\stateOne}{\stateTwo}{\variableSet}$) whenever for all $\variableOne\in\variableSet$ it holds that $\variableEval\mleft(\stateOne,\variableOne\mright)=\variableEval\mleft(\stateTwo,\variableOne\mright)$.

\new{L6}{%
For the semiring dynamic signature from \Cref{ex:DL:syntax_semiring} we can define the universe $\universe=\ring$ and the state space as the set of all mappings $\stateOne:\signatureVariables\to\universe$.
The variable evaluation function then is $\variableEval^{\ring}\left(\stateOne,\variableOne\right)=\stateOne\mleft(\variableOne\mright)$.
\chg{S10}{For example, if $\ring=\mathbb{Z}$, then}{
For example, consider the case of $\ring=\mathbb{Z}$:
In this case,} a variable can take on any integer value.
}

\chg{S6; S65}{A state can also be used for the evaluation of atoms which we define as follows:%
}{%
With state spaces defined, we can now turn to the evaluation of atoms, which requires a function that ensures that the evaluation of atoms depends only on a finite set of variables:}
\begin{definition}[Atom Evaluation]
\label{def:DL:atom_eval}
Given a dynamic signature \dynamicSignature{} with variables \signatureVariables{} and atoms \signatureAtoms{} and a state space $\stateSpace$
we define a \emph{free variable overapproximation} $\freeVarsAtom : \signatureAtoms \to 2^{\signatureVariables}$ and an \emph{atom evaluation function} $\atomEval :\signatureAtoms \to 2^{\stateSpace}$ as any functions satisfying the following properties:
\begin{localeAssmBlock}
\localeAssm{\localeAssmFVAtomOverapprox}{
For any atom $\atomOne\in\signatureAtoms$ and any states $\stateOne,\stateTwo\in\stateSpace$\\
if $\equalOn{\stateOne}{\stateTwo}{\left(\freeVarsAtom\mleft(a\mright)\right)}$ then $\stateOne\in\atomEval\mleft(\atomOne\mright)$ iff $\stateTwo\in\atomEval\mleft(\atomOne\mright)$
}
\localeAssm{\localeAssmFVAtomFinite}{
$\freeVarsAtom\mleft(\atomOne\mright)$ is finite for all atoms $\atomOne \in \signatureAtoms$.
}
\end{localeAssmBlock}
\end{definition}
While the \chg{S22}{coverage}{overapproximation} property ensures that $\freeVarsAtom$ covers all variables that influence the evaluation of a given atom, the finiteness property ensures that atoms are well-behaved in a sense that is classically expected in first-order logic 
(where only variables that syntactically occur in the finite formula can have an impact).
These two properties are once again of particular importance for coincidence properties essential for deriving the dynamic theory proof rules.

\new{L6}{%
Continuing our semiring dynamic signature from \Cref{ex:DL:syntax_semiring}, we define the atom evaluation function that takes a state $\stateOne$ and evaluates an atomic formula (and its resp. terms) w.r.t. $\stateOne$ (i.e. $\stateOne\in\atomEval\mleft(t_1 \leq t_2\mright)$ iff $\stateOne\mleft(t_1\mright) \leq \stateOne\mleft(t_2\mright)$ in $\mathbb{Z}$, where $\stateOne\mleft(t\mright)$ evaluates the variables in term $t$ w.r.t. $\stateOne$).
Concretely, a state $\stateOne$ satisfies the atom $0 \leq v$ iff $\variableEval^{\mathbb{Z}}\mleft(\stateOne,\variableOne\mright)=\stateOne\mleft(\variableOne\mright)\geq\stateOne\mleft(0\mright)= 0$.
The definition of $\freeVarsAtom^{\ring}$ can then be given via straightforward syntactical analysis of terms and formulas.
}

\del{S65}{With the semantics of state spaces and atoms defined, we can now move on to the evaluation of programs.}
Similarly to the semantics of atoms, the semantics of programs also have side conditions on how their behavior is influenced by variables \chg{S47}{which is captured by a program's free variables $\freeVarsProgram{\cdot}$.
The side conditions also provide us with an (implicit) parameterization of programs w.r.t. their variables.
Formally, we require}{. To this end, we define} the program evaluation function to satisfy the following constraints:
\begin{definition}[Program Evaluation]
\label{def:DL:prog_eval}
Given a dynamic signature \dynamicSignature{} with variables \signatureVariables{} and programs \signaturePrograms{} and a state space $\stateSpace$
we define a \emph{free variable overapproximation} $\freeVarsProgram : \signaturePrograms \to 2^{\signatureVariables}$ and a \emph{program evaluation function} $\programEval : \signaturePrograms \to 2^{\stateSpace^2}$ as any functions satisfying the following properties:
\begin{localeAssmBlock}
\localeAssm{\localeAssmFVProgOverapprox}{
For all programs $\programOne\in\signaturePrograms$, all states $\stateOne,\stateTwo,\stateThree\in\stateSpace$, and any set $\variableSet\supseteq \freeVarsProgram\mleft(\programOne\mright)$:\\
if $\equalOn{\stateOne}{\stateTwo}{V}$ and $\left(\stateOne,\stateThree\right) \in \programEval\mleft(\programOne\mright)$, 
then there exists a state $\stateThreeVariant$ such that\\
$\left(\stateTwo,\stateThreeVariant\right)\in\programEval\mleft(\programOne\mright)$
and $\equalOn{\stateThree}{\stateThreeVariant}{V}$
}
\localeAssm{\localeAssmFVProgFinite}{
For all programs $\programOne \in \signaturePrograms$ the result of $\freeVarsProgram\mleft(\programOne\mright)$ is finite.
}
\localeAssm{\localeAssmProgExtensionality}{
For all programs $\programOne\in\signaturePrograms$ and all states $\stateOne,\stateOneVariant,\stateTwo,\stateTwoVariant \in\stateSpace$:\\
if $\equalOn{\stateOne}{\stateOneVariant}{\signatureVariables}$ and $\equalOn{\stateTwo}{\stateTwoVariant}{\signatureVariables}$, then $\left(\stateOne,\stateTwo\right)\in\programEval\mleft(\programOne\mright)$ iff $\left(\stateOneVariant,\stateTwoVariant\right)\in\programEval\mleft(\programOne\mright)$
}
\end{localeAssmBlock}
\end{definition}
The \chg{S22}{coverage}{overapproximation} property establishes that any variable possibly influencing the state transition of a program $\programOne$ must be part of $\freeVarsProgram\mleft(\programOne\mright)$.
This is achieved by ensuring that if two states agree on a superset of $\freeVarsProgram\mleft(\programOne\mright)$, then there must exist ``isomorphic'' state transitions, i.e. state transitions in $\programEval\mleft(\programOne\mright)$ where the resulting states equally agree on this set of variables.
For the remainder of this paper, it is paramount that this property is not just satisfied for $\freeVarsProgram\mleft(\programOne\mright)$, but also for its supersets to ensure equivalent effects on any variables written by $\programOne$.
The second condition ensures that this new set of free variables is finite, too, which is relevant for central coincidence properties.
Finally, the extensionality property ensures that program behavior is only dependent on variable assignment and not intensional differences between states\chg{L6}{ which is also relevant for coincidence.}{.
This is necessary to prove (syntactic and semantic) coincidence properties.}

\new{L6}{
Recall the dynamic signature from \Cref{ex:DL:syntax_semiring},
the program evaluation function for $\variableOne \coloneqq t_1$ induces a transition relation $\left(\stateOne,\stateTwo\right)\in\programEval^{\ring}\mleft(\variableOne\coloneqq t_1\mright)$ iff $\stateTwo\mleft(\variableOne\mright) = \stateOne\mleft(t_1\mright)$
\new{S25}{ and $\equalOn{\stateOne}{\stateTwo}{\left\{\variableOne\right\}^\complement}$ where $\stateOne\mleft(t_1\mright)$ is the evaluation of term $t_1$ in state $\stateOne$} (going forward we denote by $\variableSet^\complement$ the complement set $\signatureVariables\setminus\variableSet$ for $\variableSet\subseteq\signatureVariables$).
$\freeVarsProgram$ is again obtained via syntactic analyses.
}

\chg{S65}{
We summarize the semantics of a dynamic theory as a \emph{domain of computation}, which together with its dynamic signature (syntax) fully specifies the theory:}{
Based on these definitions, we can then define the \emph{domain of computation}.
Together with the underlying syntactic material, this makes up a \emph{dynamic theory}:}
\begin{definition}[Domain of Computation]
\label{def:DL:domain}
Given a dynamic signature \dynamicSignature{}, a \emph{domain of computation} of \dynamicSignature{} (denoted as $\domComp=\domCompTuple$) consists of:
\begin{itemize}
    \item A universe $\universe$ (see \Cref{{def:DL:universe_state_space_var_eval}})
    \item A corresponding state space $\stateSpace$ (see \Cref{{def:DL:universe_state_space_var_eval}})
    \item A variable evaluation function $\variableEval$ (see \Cref{{def:DL:universe_state_space_var_eval}})
    \item An atom evaluation function $\atomEval$ and free variable overapproximation $\freeVarsAtom$ (see \Cref{def:DL:atom_eval})
    \item A program evaluation function $\programEval$ and free variable overapproximation $\freeVarsProgram$ (see \Cref{def:DL:prog_eval})
\end{itemize}
\end{definition}
\begin{definition}[Dynamic Theory]
\label{def:DL:theory}
A \emph{dynamic theory} is the combination of a dynamic signature $\dynamicSignature$ and a corresponding domain of computation $\domComp$.
We denote a dynamic theory as $\dynamicTheory=\dynamicTheoryTuple$.
\end{definition}%
\chg{S65}{%
\noindent
\looseness=-1
The semantics of $\dynamicSignature$-formulas (\del{L6}{see }\Cref{def:DL:formulas}) for a dynamic theory $\dynamicTheory=\dynamicTheoryTuple$ are defined as follows}{
Given a dynamic theory $\dynamicTheory=\dynamicTheoryTuple$, we can then define the semantics of the $\dynamicSignature$-formulas described in Definition X as follows}\del{L6}{ (for $\variableSet\subseteq\signatureVariables$ we denote by $\variableSet^\complement$ the complement set $\signatureVariables\setminus\variableSet$)}:
\begin{definition}[Semantics of Formulas]
\label{def:DL:semantics}
Consider a dynamic theory $\dynamicTheory=\dynamicTheoryTuple$.
We define the semantics of $\dynamicSignature$-formulas as follows:
\begin{align*}
\sem{\atomOne} &= \atomEval\mleft(\atomOne\mright)
&\sem{\neg\formulaOne} &= \stateSpace \setminus \sem{\formulaOne}
&\sem{\formulaOne\land\formulaTwo} &= \sem{\formulaOne} \cap \sem{\formulaTwo}\\
\sem{\fa{\variableOne}{\formulaOne}}&=\rlap{$\left\{\stateOne \in \stateSpace ~\middle|~ \text{for all } \stateTwo \in \stateSpace: \equalOn{\stateOne}{\stateTwo}{\left\{\variableOne\right\}^\complement} \text{ implies } \stateTwo \in \sem{\formulaOne}  \right\}$}\\
\sem{\modBox{\programOne}{\formulaOne}} &=\rlap{$\left\{ \stateOne\in\stateSpace ~\middle|~ \text{for all } \stateTwo \in \stateSpace: \left(\stateOne,\stateTwo\right)\in\programEval\mleft(\programOne\mright) \text{ implies } \stateTwo\in\sem{\formulaOne} \right\}$}
\end{align*}
\end{definition}
We say $\formulaOne$ is \emph{$\dynamicTheory$-valid} (denoted as $\models_{\dynamicTheory} \formulaOne$) iff every state in $\stateSpace$ satisfies $\formulaOne$, i.e. $\stateSpace = \sem{\formulaOne}$.
If a particular state $\stateOne \in \stateSpace$ satisfies a formula $\formulaOne$ (i.e., $\stateOne \in \sem{\formulaOne}$) we denote this as $\stateOne \models_{\dynamicTheory} \formulaOne$ and say $\stateOne$ satisfies $\formulaOne$ or that $\formulaOne$ is $\dynamicTheory$-satisfiable.
For example, the formula $\formulaOne \land \modBox{\programOne}{\formulaTwo}$ is true in a state $\stateOne$ if $\stateOne \models_{\dynamicTheory} \formulaOne$ (i.e. $\stateOne$ satisfies $\formulaOne$) and if for every $\left(\stateOne,\stateTwo\right) \in \sem{\programOne}$ it holds that $\stateTwo \models_{\dynamicTheory} \formulaTwo$ (i.e. every such $\stateTwo$ satisfies $\formulaTwo$).

\new{L6}{%
\looseness=-1
By now we have instantiated the semantics for the semiring dynamic signature $\left(\ring,+,\cdot,\leq\right)$ in \Cref{ex:DL:syntax_semiring}.
\Cref{lem:DL:semiring_dl} shows that our construction yields a dynamic theory $\dynamicTheory^{\ring}$ satisfying all properties of \Cref{def:DL:domain}.
A concrete example for a formula of this theory is \Cref{eq:DL:simple_example}.
Its program $\variableTwo \coloneqq \variableOne + 1$ induces a state transition $\left(\stateOne,\stateTwo\right)\in\programEval^{\mathbb{Z}}\mleft(\variableTwo \coloneqq \variableOne + 1\mright)$ iff $\stateTwo\mleft(\variableTwo\mright) = \stateOne\mleft(\variableOne+1\mright)=\stateOne\mleft(\variableOne\mright)+1$ (and $\equalOn{\stateOne}{\stateTwo}{\left\{\variableTwo\right\}^\complement}$).
It is then easy to see that \Cref{eq:DL:simple_example} is indeed valid in $\dynamicTheory^{\mathbb{Z}}$: For any state with $\stateOne\mleft(\variableOne\mright)\geq 0$ and any state transition to $\stateTwo$ such that $\stateTwo\mleft(\variableTwo\mright) = \stateOne\mleft(\variableOne\mright)+1$ (and otherwise equal) it holds that $\stateTwo\mleft(\variableTwo\mright) \geq 1$.
}

\paragraph{Relation to classical First-Order Dynamic Logic.}
The syntax and semantics of classical first-order dynamic logic~\cite{harel_dynamic_2000}, are more restrictive in the supported structure of atoms and state spaces\chg{S10}{.}{:}
State spaces are assumed to be simple valuations~\cite[p. 293]{harel_dynamic_2000} (and not more complex constructs admitting the evaluation of variables and atoms), and atoms are defined as predicates over terms~\cite[p. 103]{harel_dynamic_2000}.
We demonstrate that axioms and proof rules derivable for classical, interpreted first-order dynamic logic are equally derivable using our smaller set of assumptions;
Moreover, this relaxation, in turn, has two advantages:
\begin{enumerate*}
    \item The designer of a program logic, formalized as dynamic theory, can focus on the particularities of the program logic at hand and obtain extensions like regular programs or havoc operators for free (lifting, see \Cref{sec:liftedDL}),
    \item Program logics formalized as dynamic theories can be combined arbitrarily in a manner that admits compositional reasoning (heterogeneous dynamic theories, see \Cref{sec:hdl}).
\end{enumerate*}

\subsection{Static Semantics}
For deriving proof rules, we need a precise notion of semantically free and bound variables, i.e. those that affect or are affected by a formula or program. \Cref{def:DL:domain} already assumes overapproximations $\freeVarsAtom$\chg{S8}{ and }{/}$\freeVarsProgram$ of these sets that are typically defined syntactically\chg{S10}{. For example, in}{: In} \Cref{ex:DL:syntax_semiring}, variables in expressions are considered free, and those on the left of assignments are bound.
Such approximations are not exact: For instance, $x \coloneqq x + y - y$ changes no state and hence has syntactically, but not semantically, free or bound variables.
Exact computation of these sets is undecidable~\cite{rice1953classes}.
As we have no knowledge on atom and program structure, syntactic approximations are unavailable and we instead adopt the semantic approach from Platzer~\cite{Platzer2017a}:
\begin{definition}[Free and Bound Variables]
\label{def:DL:free_bound_vars}
    Let \programOne{} be some program and $\formulaOne{}$ be some formula over a dynamic theory $\dynamicTheory{}=\dynamicTheoryTuple$.
    Free variables \freeVarsSem{} and bound variables \boundVarsSem{} are defined as follows:
    \begin{align*}
        \freeVarsSem\mleft(\formulaOne\mright) = &\left\{\variableOne \in \signatureVariables~\middle|~
            \text{exists }
            \stateOne,\stateOneVariant\in\stateSpace
            \text{ s.t. }
            \equalOn{\stateOne}{\stateOneVariant}{\left\{\variableOne\right\}^\complement}
            \text{ and }
            \stateOne \in \sem{\formulaOne} \not\ni \stateOneVariant
        \right\}\\
        \freeVarsSem\mleft(\programOne\mright) = &\left\{\variableOne \in \signatureVariables~\middle|~
            \text{exists }
            \stateOne,\stateOneVariant,\stateTwo\in\stateSpace
            \text{ s.t. }
            \equalOn{\stateOne}{\stateOneVariant}{\left\{\variableOne\right\}^\complement}
            \text{ and }
            \left(\stateOne,\stateTwo\right) \in \programEval\mleft(\programOne\mright)
            \right.\\
             &\left.\hphantom{\left\{\variableOne \in \signatureVariables~\middle|\right.}
             \text{but there is no } \stateTwoVariant\text{ s.t. }
             \equalOn{\stateTwo}{\stateTwoVariant}{\left\{\variableOne\right\}^\complement}
             \text{ and }
             \left(\stateOneVariant,\stateTwoVariant\right) \in \programEval\mleft(\programOne\mright)
        \right\}\\
        \boundVarsSem\mleft(\programOne\mright) = &\left\{\variableOne \in \signatureVariables~\middle|~
        \text{exists }
        \stateOne,\stateOneVariant \in \stateSpace
        \text{ s.t. }
        \left(\stateOne,\stateOneVariant\right) \in \programEval\mleft(\programOne\mright)\text{ and }
        \variableEval\mleft(\stateOne,\variableOne\mright) \neq \variableEval\mleft(\stateOneVariant,\variableOne\mright)
        \right\}
    \end{align*}
\end{definition}
We denote $\variablesOfSem{\programOne} = \freeVarsSem\mleft(\programOne\mright) \cup \boundVarsSem\mleft(\programOne\mright)$.
Going forward we will only use $\boundVarsSem\mleft(\programOne\mright)$ and $\variablesOfSem{\programOne}$ which combines both free and bound variables.
The definition of free variables breaks down for atoms
depending on an infinite set of variables (e.g., consider some atom that is true iff an infinite number of variables have value 1).
However, we avoid this issue by assuming that $\freeVarsAtom$ is finite for all atoms.
Note that $\freeVarsAtom\mleft(\atomOne\mright)$ and $\freeVarsProgram\mleft(\programOne\mright)$ resp. overapproximate $\freeVarsSem\mleft(\atomOne\mright)$ and $\freeVarsSem\mleft(\programOne\mright)$ (see \ifextension{\Cref{{isaLem:fv_soundness}}}{{\isaText{Lemma}~C.1}}).
\begin{textAtEnd}
\begin{isabelleLemma}{Soundness of Free Variables}{\texttt{dynLogCore.dyn\_FV\_fml\_sound} and \texttt{dynLogCore.dyn\_FV\_prog\_sound}}
\label{isaLem:fv_soundness}
For any atom $\atomOne\in\signatureAtoms$ and $\programOne\in\signaturePrograms$:
\begin{align*}
     \freeVarsSem\mleft(\atomOne\mright) &\subseteq \freeVarsAtom\mleft(\atomOne\mright)
    &\freeVarsSem\mleft(\programOne\mright) &\subseteq \freeVarsProgram\mleft(\programOne\mright)
\end{align*}
\end{isabelleLemma}
\end{textAtEnd}

\subsection{Instantiations of Dynamic Theories}
\looseness=-1
\chg{S26}{We have now}{Section X} introduced an elaborate definition of dynamic theories.
While the proof-theoretic value of these definitions will become clear in the subsequent sections through the derivability of numerous proof rules from the proposed assumptions,
we should first examine whether these definitions are sensible.
To this end, we prove that three Dynamic Logics are instantiations of our definitions:
\begin{isabelleLemma}{PDL as Dynamic Theory}{pdlDLFull}
\label{isaLem:pdl_dyn_theory}
    Propositional Dynamic Logic over arbitrary atomic programs and propositional atoms is an instantiation of a dynamic theory.    
\end{isabelleLemma}
\begin{isabelleLemma}{Semi-Ring First-Order Dynamic Logic as Dynamic Theory}{semiring\_dl\_full}
\label{lem:DL:semiring_dl}
    Consider first-order atoms (equality and less than) over terms of an ordered semi-ring (e.g. natural numbers, see \Cref{ex:DL:syntax_semiring}). The dynamic logic with programs of the form $\variableOne \coloneqq t$ where $t$ is a term of the semi-ring is an instantiation of a dynamic theory.
\end{isabelleLemma}
\begin{proof}[Proof Sketch]
To prove that Semi-Ring First-Order Dynamic Logic is a dynamic theory, we prove that the state space, variables, atom and program evaluation, and their respective free variable definitions (see \Cref{ex:DL:syntax_semiring}) satisfy the six requirements from \Cref{def:DL:universe_state_space_var_eval,def:DL:atom_eval,def:DL:prog_eval}.
As this logic's state space is the set of all (unique) variable to universe mappings, \textsc{\localeAssmInterpolation{}} and \textsc{\localeAssmProgExtensionality{}} is trivially satisfied. \textsc{\localeAssmFVAtomOverapprox{}} and \textsc{\localeAssmFVProgOverapprox{}} can be achieved by constructing \freeVarsAtom{} and \freeVarsProgram{} as sound syntactical analyses.
\textsc{\localeAssmFVAtomFinite{}} and \textsc{\localeAssmFVProgFinite{}} are achieved through the syntactic construction of atoms/programs which only touch a finite set of variables.
\end{proof}
\begin{isabelleLemma}{Differential Dynamic Logic as Dynamic Theory}{dl\_dyn}
\label{lem:DL:diffDL_theory}
The variables, atoms, and programs of Differential Dynamic Logic \del{S27}{with semantics }as formalized in the \chg{S27}{Archive of Formal Proofs}{AFP}~\cite{Differential_Dynamic_Logic-AFP} form a dynamic theory.
\end{isabelleLemma}

Consequently, all proof rules and results described below hold for these three dynamic theories.
While the logic from \Cref{lem:DL:semiring_dl} might at first seem overly simplistic, we show in \Cref{sec:liftedDL} that we can lift this theory to support a wider range of programs.
As discussed in \Cref{sec:introduction}, this paper is not only about existing dynamic theories, but also about the ability to construct \emph{further} dynamic theories which are, by design, interoperable with the collection of existing dynamic theories.

\subsection{Proof Calculi}
From a program verification perspective, the objective of dynamic logics, and dynamic theories, is to \chg{S9}{prove}{check} that a program $\programOne$ satisfies \new{S9}{formalized (functional) }properties.
For example, we might want to verify that for all inputs satisfying $\formulaOne$, the program $\programOne$ satisfies the postcondition $\formulaTwo$.
From a proof-theoretic perspective this corresponds to proving that dynamic theory formula $\formulaOne \implies \modBox{\programOne}{\formulaTwo}$ is $\dynamicTheory$-valid, i.e.\ that $\models_{\dynamicTheory} \formulaOne \implies \modBox{\programOne}{\formulaTwo}$.
This is usually achieved by \emph{proof calculi} that, from axioms and proof rules, derive that a certain formula is valid.

For the purpose of this work, we consider a calculus to be a relation $\calculus[\dynamicTheory] \in 2^{\formulaSet{\dynamicSignature}} \times \formulaSet{\dynamicSignature}$ where $\left(\Gamma,\formulaOne\right)\in\;\calculus[\dynamicTheory]$ (denoted $\Gamma \calculus[\dynamicTheory] \formulaOne$ or $\calculus[\dynamicTheory] \formulaOne$ iff $\Gamma=\emptyset$) iff $\formulaOne$ can be derived from $\Gamma$ using the proof rules and axioms of the calculus $\calculus[\dynamicTheory]$.
We are only interested in \emph{sound} proof calculi:
\begin{definition}[Sound Calculus]
For a dynamic theory $\dynamicTheory$ a proof calculus $\calculus[\dynamicTheory]$ is called \emph{sound} iff for all formulas $\formulaOne\in\formulaSet{\dynamicSignature}$ it holds that $\calculus[\dynamicTheory] \formulaOne$ implies $\models_{\dynamicTheory} \formulaOne$.
\end{definition}%
\looseness=-1%
\chg{S66}{Soundness}{In the present context, soundness} is a compositional property\chg{S10}{, i.e.\ a}{:
A} proof calculus is sound if all proof rules are sound and all instantiations of axioms are valid.
Soundness is thus an inductive consequence of the soundness of individual calculus steps, because the calculus iteratively generates (valid) axiom instantiations and applies (sound) proof rules.
\chg{S62}{As common in dynamic logics~\cite{beckert_dynamic_2006,Platzer08,DBLP:conf/foveoos/Ulbrich10} and their practical implementations~\cite{Fulton2015,KeYBook},}{
For} the remainder of this work\chg{S62}{ assumes}{, we assume} that all (elementary) dynamic theories \dynamicTheory{} are equipped with a sound proof calculus $\calculus[\dynamicTheory]$ that is complete w.r.t. propositional and uninterpreted first-order reasoning (e.g., by extending the Hilbert or Sequent calculus).
\Cref{sec:elementary} extends this calculus with additional proof rules that hold independently of a particular dynamic theory.
\Cref{sec:liftedDL,sec:hdl} extend and combine dynamic theories and provide sound calculi building on the given calculus for $\dynamicTheory$.
This approach guarantees compositionality\chg{S10}{.}{:}
Although our theories become more expressive as we lift and merge them with the proposed constructs, we can nonetheless reuse the calculus $\calculus[\dynamicTheory]$ of the original theory.

\paragraph{Equational Reasoning.}
In addition to proof rule based reasoning, our approach also supports reasoning via a refinement~\cite{DBLP:conf/lics/LoosP16} and an equivalence relations between programs which can be proven via syntactic rewriting using techniques from Kleene Algebra with Tests~\cite{DBLP:journals/toplas/Kozen97} (KAT):
\begin{definition}[\del{L12}{Program }Refinement and Equivalence]
\label{def:DL:refinement_equivalence}
A program $\programOne$ \emph{refines} $\programTwo$ (denoted as $\katRef{\programOne}{\programTwo}$) iff for all $\left(\stateOne,\stateTwo\right)\in\programEval\mleft(\programOne\mright)$ it holds that $\left(\stateOne,\stateTwo\right)\in\programEval\mleft(\programTwo\mright)$.
\chg{L12}{%
Programs are \emph{equivalent} (\katEq{\programOne}{\programTwo}) iff $\katRef{\programOne}{\programTwo}$ and $\katRef{\programTwo}{\programOne}$.
}{
We say two programs are equal (denoted \katEq{\programOne}{\programTwo}) iff $\katRef{\programOne}{\programTwo}$ and $\katRef{\programTwo}{\programOne}$.}
\end{definition}

\del{L12}{
Throughout this work we will introduce known, proven identities $\katEq{\cdot}{\cdot}$ which can then be used to justify the application of axiom \ref{axiom:E} introduced in the next section.
Additionally, we show that our regular program lifting (Section X) forms a Kleene Algebra with Tests~\cite{DBLP:journals/toplas/Kozen97} (KAT) over \katRef{\cdot}{\cdot},\katEq{\cdot}{\cdot} --- making it ammenable to rewriting and normalization based tooling for KAT.
}

\section{Elementary Results and Proof Rules for Dynamic Theories}
\label{sec:elementary}
\begin{figure}[t]
    \centering
    \fbox{\begin{minipage}[t]{0.97\linewidth}\vspace{0pt}
    \begin{minipage}{0.36\linewidth}
    \begin{axiomBlock}
        \refAxiom{G}{G}{
        $\dfrac{\calculus[\dynamicTheory]\schemaVarOne}{\calculus[\dynamicTheory]\modBox{\schemaProgramOne}{\schemaVarOne}}$
        }
        \refAxiom{E}{E}{
        $\modBox{\schemaProgramOne}{\schemaVarOne}
        \iff
        \modBox{\schemaProgramTwo}{\schemaVarOne}$
        \hfill $(\katEq{\schemaProgramOne}{\schemaProgramTwo})$
        }
    \end{axiomBlock}
    \end{minipage}\begin{minipage}{0.61\linewidth}
    \begin{axiomBlock}
        \refAxiom{V}{V}{
        $
        \schemaVarOne \implies \modBox{\schemaProgramOne}{\schemaVarOne}
        $
        \hfill ($\freeVarsSem\mleft(\schemaVarOne\mright) \cap \boundVarsSem\mleft(\schemaProgramOne\mright) = \emptyset$)
        }
        \refAxiom{B}{B}{
        $
        \left(\fa{\variableOne}{\modBox{\schemaProgramOne}{\schemaVarOne}}\right)
        \iff
        \left(\modBox{\schemaProgramOne}{\fa{\variableOne}{\schemaVarOne}}\right)
        $
        \hfill
        ($\variableOne \notin \variablesOfSem{\schemaProgramOne}$)
        }
        \refAxiom{K}{K}{
        $
        \left(\modBox{\schemaProgramOne}{\left(\schemaVarOne \implies \schemaVarTwo\right)}\right)
        \implies
        \left(
        \modBox{\schemaProgramOne}{\schemaVarOne}
        \implies
        \modBox{\schemaProgramOne}{\schemaVarTwo}
        \right)
        $
        }
    \end{axiomBlock}
    \end{minipage}
    \end{minipage}}\vspace*{-0.5em}
    \caption{Elementary axioms and proof rules for dynamic theories}
    \label{fig:elementary-axioms}
\end{figure}
Concerning the semantic definitions of free and bound variables,
we observe that the definitions from \Cref{def:DL:free_bound_vars} exactly characterize the variables determining the valuation of a formula as well as the \chg{S51}{variables changed by the}{footprint of a} program.
\chg{S51}{All}{Importantly, all} three definitions come with a minimality guarantee that ensures we are not overly conservative in our estimation of which variables impact valuations:
\vspace*{-0.4cm} %
\begin{isabelleLemma}{Coincidence Lemma for Formulas \& Minimality}{dyn\_fv\_sem\_coincidence,dyn\_fv\_sem\_coincidence\_smallest}
\label{isaLem:coincidence_formulas}
\chg{S52}{Given a dynamic theory $\dynamicTheory$ 
and formula $\formulaOne$ define $\variableSet = \freeVarsSem\mleft(\formulaOne\mright)$.
Then $\formulaOne$ has the same truth value in any two states agreeing on $\variableSet$ (i.e., if $\stateSpace \ni \equalOn{\stateOne}{\stateOneVariant}{\variableSet} \in \stateSpace$ then $\stateOne \in \sem{\formulaOne}$ iff $\stateOneVariant \in \sem{\formulaOne}$).
$\variableSet$ is the smallest set with this \emph{coincidence property}.
}{
For any dynamic theory $\dynamicTheory$ 
a formula $\formulaOne$ has the same truth value in any two states that agree on $\freeVarsSem\mleft(\formulaOne\mright)$ (i.e., for $\variableSet = \freeVarsSem\mleft(\formulaOne\mright)$, if $\stateSpace \ni \equalOn{\stateOne}{\stateOneVariant}{\variableSet} \in \stateSpace$ then $\stateOne \in \sem{\formulaOne}$ iff $\stateOneVariant \in \sem{\formulaOne}$).
The set of semantically free variables ($\freeVarsSem$) is the smallest set with this \emph{coincidence property}.}
\end{isabelleLemma}
\begin{isabelleLemma}{Coincidence Lemma for Programs \& Minimality}{dyn\_prog\_fv\_sem\_coincidence,dyn\_prog\_fv\_sem\_coincidence\_smallest}
\label{isaLem:coincidence_programs}
\chg{S52}{%
Given a dynamic theory $\dynamicTheory$ with program $\programOne$ consider some $\variableSet \supseteq \freeVarsSem\mleft(\programOne\mright)$.
If for states $\stateOne,\stateTwo\in\stateSpace$ only differing on a finite set of variables (i.e., $\left\{v \in \signatureVariables~\middle|~\variableEval\mleft(\stateOne,\variableOne\mright) \neq \variableEval\mleft(\stateTwo,\variableOne\mright)\right\}$ is finite) it holds that $\equalOn{\stateOne}{\stateTwo}{\variableSet}$
and $\left(\stateOne,\stateOneVariant\right)\in\programEval\mleft(\programOne\mright)$,
then there exists a $\stateTwoVariant\in\stateSpace$ such that $\left(\stateTwo,\stateTwoVariant\right)\in\programEval\mleft(\programOne\mright)$ and $\equalOn{\stateOneVariant}{\stateTwoVariant}{\variableSet}$.
$\freeVarsSem\mleft(\programOne\mright)$ is the smallest set with this property.
}{%
For any dynamic theory $\dynamicTheory$, consider states $\stateOne,\stateTwo,\stateOneVariant\in\stateSpace$ such that $\stateOne$ and $\stateTwo$ only differ on a finite set (i.e., $\left\{v \in \signatureVariables~\middle|~\variableEval\mleft(\stateOne,\variableOne\mright) \neq \variableEval\mleft(\stateTwo,\variableOne\mright)\right\}$ is finite).
If $\equalOn{\stateOne}{\stateTwo}{\variableSet}$ for $\variableSet \supseteq \freeVarsSem\mleft(\programOne\mright)$ and $\left(\stateOne,\stateOneVariant\right)\in\programEval\mleft(\programOne\mright)$,
then there exists a $\stateTwoVariant\in\stateSpace$ such that $\left(\stateTwo,\stateTwoVariant\right)\in\programEval\mleft(\programOne\mright)$ and $\equalOn{\stateOneVariant}{\stateTwoVariant}{\variableSet}$.
The semantically free variables of a program ($\freeVarsSem$) are the smallest set with this property.}
\end{isabelleLemma}
\begin{isabelleLemma}{Bounded Effect \& Minimality}{dyn\_prog\_bv\_sem\_bounded,dyn\_prog\_bv\_sem\_bounded\_smallest1,dyn\_prog\_bv\_sem\_bounded\_smallest2}
\label{isaLem:bounded_effect_progams}
\chg{S52}{%
Given a dynamic theory $\dynamicTheory$ and a program $\programOne$ define $\variableSet = \boundVarsSem\mleft(\programOne\mright)$.
For all $\left(\stateOne,\stateTwo\right) \in \programEval\mleft(\programOne\mright)$ it holds that $\equalOn{\stateOne}{\stateTwo}{\variableSet^\complement}$.
$\variableSet$ is the smallest set with this property.
}{%
For any dynamic theory $\dynamicTheory$, the set $\boundVarsSem\mleft(\programOne\mright)$ is the minimal set with the \emph{bounded effect} property:
For all $\left(\stateOne,\stateTwo\right) \in \programEval\mleft(\programOne\mright)$ it holds that $\equalOn{\stateOne}{\stateTwo}{\left(\boundVarsSem\mleft(\programOne\mright)\right)^\complement}$.%
}
\end{isabelleLemma}
\looseness=-1
\chg{L5}{%
We will now derive Hilbert-style proof calculus rules (extending the usual first-order axioms) which hold \emph{independently} of any particular dynamic theory.
In particular, these rules also hold for \emph{combined}, heterogeneous dynamic theories (see \Cref{sec:hdl}).
}{%
Based on these results, we can now begin the derivation of a proof calculus for the validity of $\dynamicSignature$-formulas.
To this end, we propose a Hilbert-style calculus with the usual first-order axioms.}
Since we do not know the inner workings of programs $\programOne \in \signaturePrograms$ it is the obligation of the designer of a dynamic theory to provide a suitable calculus to decompose and prove statements about such atomic programs.
However, we can nonetheless provide calculus rules that hold independently of the concrete programs.
A first set of these axioms and rules can be found in \Cref{fig:elementary-axioms}.
These rules will be extended in the subsequent section by calculus rules 
for reasoning about the closure of regular programs over the programs in $\signaturePrograms$.
By providing our calculus rules, we propose a \emph{compositional} calculus\chg{S10}{.}{:}
The theory designer can focus on proof procedures for statements about (from our perspective) atomic programs, while our rules provide the general, compositional principles that are independent of the particular programming language at hand.
To this end, we formalize the soundness of the described axioms:
\begin{isabelleTheorem}{Soundness of Elementary Proof Rules}{dyn_axiom_G,dyn_axiom_K,dyn_axiom_V,dyn_axiom_B}
\label{thm:elementary:soundness}
The proof rules and axioms in \Cref{fig:elementary-axioms} are sound w.r.t. to any dynamic theory \dynamicTheory.
\end{isabelleTheorem}

\looseness=-1
The proof rule \ref{axiom:G} and the axiom \ref{axiom:V} are akin to first-order Hilbert-calculus generalization rules by allowing us to wrap formulas into modalities.
Barcan's axiom \ref{axiom:B} encodes a constant domain property for the dynamic theory's state space that emerges from the state space properties outlined in \Cref{sec:DL}.
\ref{axiom:V} allows us to eliminate programs that do not affect the postcondition while axiom \ref{axiom:K} allows distributing modalities across implications.
Finally, \ref{axiom:E} allows us to swap programs in modalities for equivalent programs.
As described in \Cref{thm:elementary:soundness}, we can prove the soundness of these axioms independently of the considered state space, atomic formulas, and programs.
\new{L4}{%
While \Cref{fig:elementary-axioms} presents the axioms as requiring \emph{semantic} side conditions, these can be discharged via syntactic algorithms.
For example, the sets $\freeVarsSem$ and $\boundVarsSem$ can be over-approximated using syntactic analyses (e.g., \Cref{def:DL:prog_eval} establishes an overapproximation of $\freeVarsSem$ for programs)
Equivalence results ($\katEq{\alpha}{\beta}$) can be obtained using sound KAT~\cite{DBLP:journals/toplas/Kozen97} rewriting rules or natively within dynamic logic~\cite[Sec. 7]{DBLP:conf/lics/LoosP16}.
}

\section{Lifting Dynamic Theories}
\label{sec:liftedDL}
In this section, we demonstrate, for the first time, the power of the concise formalization of necessary properties for the construction of a dynamic theory\chg{S10}{.}{:}
Given a dynamic theory $\dynamicTheory$, we can \emph{extend} its functionality, e.g., by extending the structure of state spaces, atoms, or programs, with \emph{common} constructs.
By proving that this extension \emph{again} satisfies the properties outlined in \Cref{sec:DL}, all properties, proof rules, and axioms are lifted to the extended, more expressive, dynamic theory.
Similarly to conservative extensions in first-order theory, our lifting ensures that previously valid axioms and formulas remain sound.
Unlike conservative extensions, some liftings of dynamic theories meaningfully extend the expressiveness of the logic's programming language beyond the original theory's expressiveness by allowing new state transitions that were previously impossible.
The most prominent example of this is the introduction of loops in \Cref{sec:liftedDL:regular}\new{L11}{ which lifts a theory to its \emph{regular closure} over programs}.
\chg{L11}{%
Before discussing regular closures, 
we first propose a lifting which does \emph{not} increase expressivity but serves as a gentle introduction of all concepts reused later.
To this end we extend a dynamic theory by the havoc operator $\variableOne \coloneqq *$ which assigns the variable $\variableOne$ to an arbitrary value (\Cref{sec:liftedDL:havoc})
}{%
In Section X, we begin by illustrating this principle for the havoc operator $\variableOne \coloneqq *$ which assigns the variable $\variableOne$ to an arbitrary value.
Subsequently, in Section X, we apply this approach to a more ambitious extension: %
The \emph{closure} of regular programs over $\signaturePrograms$:
In this instance, we provide axioms for the decomposition of all regular program operators, including loop invariants and variants (the latter under minimal assumptions about the availability of natural number constraints).}

\subsection{Havoc Operator Lifting}
\label{sec:liftedDL:havoc}
On a syntactical level, lifting a dynamic theory $\dynamicTheory$ to a new dynamic theory $\havoced{\dynamicTheory}$ amounts to enhancing the original set of programs $\signaturePrograms$ with the havoc operator on all variables in $\signatureVariables$, i.e. $\havoced{\signaturePrograms} = \signaturePrograms \cupdot \left\{ \variableOne \coloneqq * ~\middle|~ \variableOne\in\signatureVariables\right\}$.
We can similarly extend $\dynamicTheory$'s domain of computation $\domComp$.
Overall, this amounts to the following lifting operation $\havoced{\left(\cdot\right)}$\new{S29}{(where $\cupdot$ is the disjoint union)}:
\begin{definition}[Havoc Lift]
\label{def:liftedDL:havoced}
Given a dynamic theory $\dynamicTheory$ we define $\havoced{\dynamicTheory} = \left(\havoced{\dynamicSignature},\havoced{\domComp}\right)$ such that all components of $\dynamicTheory$ and $\havoced{\dynamicTheory}$ are equal except for:\\
\begin{minipage}[t]{0.6\textwidth}\vspace{0pt}
\begin{itemize}
    \item $\havoced{\signaturePrograms} = \signaturePrograms\enspace\cupdot\enspace\left\{ \variableOne \coloneqq * ~\middle|~ \variableOne\in\signatureVariables\right\}$
    \item $\havoced{\programEval}\mleft(\programOne\mright) = \programEval\mleft(\programOne\mright)$ (for $\programOne \in \signaturePrograms$)
    \item $\havoced{\programEval}\mleft(\variableOne\coloneqq *\mright) = \left\{ \left(\stateOne,\stateTwo\right) \in \stateSpace^2 ~\middle|~ \equalOn{\stateOne}{\stateTwo}{\left\{\variableOne\right\}^\complement} \right\}$\hfill ($\variableOne \in \signatureVariables$)
\end{itemize}
\end{minipage}%
\begin{minipage}[t]{0.4\textwidth}\vspace{0pt}
\begin{itemize}
    \item $\havoced{\freeVarsProgram}\mleft(\programOne\mright) = \freeVarsProgram\mleft(\programOne\mright)$ (for $\programOne \in \signaturePrograms$)
    \item $\havoced{\freeVarsProgram}\mleft(\variableOne \coloneqq * \mright) = \emptyset$ (for $\variableOne \in \signatureVariables$)
\end{itemize}
\end{minipage}
\end{definition}%
\chg{S53}{%
\noindent
Note that $\variableOne$ is \emph{not} free in $\variableOne \coloneqq *$ as its value does not influence the program's reachable states. 
We prove that the havoc lift $\havoced{\dynamicTheory}$ of a dynamic theory $\dynamicTheory$ satisfies all properties of a dynamic theory:
}{
For any dynamic theory $\dynamicTheory$ we can prove that its havoc lifting $\havoced{\dynamicTheory}$ again satisfies all properties of a dynamic theory and consequently all properties, axioms and proof rules carry over:}
\begin{isabelleTheorem}{Havoc Lifted Dynamic Theories}{HavocDynLog.havoc\_dl}
\label{thm:liftedDL:havoced}
The havoc lift $\havoced{\dynamicTheory}$ of any dynamic theory $\dynamicTheory$ is a dynamic theory.
\end{isabelleTheorem}
For first-order atoms (\signatureAtoms) and the state space (\stateSpace), the required properties for a dynamic theory are still satisfied, because we did not modify these components. For programs, we ensured that our semantic definition of havoc satisfies the properties asserted in \Cref{def:DL:prog_eval}.
While it is advantageous that the properties proven for arbitrary dynamic theories carry over from $\dynamicTheory$ to $\havoced{\dynamicTheory}$, the havoc lifting would be useless if properties of the \emph{specific} theory $\dynamicTheory$ under consideration would not carry over as well.
If this were the case, all properties proven for a $\dynamicTheory$ under consideration would have to be reproved for $\havoced{\dynamicTheory}$ and the lifting operation would lose its utility.
Fortunately,\del{S50}{ it turns out that, syntactically,} our new set of formulas is a superset of the original formulas, i.e. $\formulaSet{\dynamicSignature} \subseteq \formulaSet{\havoced{\dynamicSignature}}$.
This raises the question of whether their semantics also carry over.
Indeed, we can prove a calculus rule which translates proofs about $\dynamicTheory$ into proofs about $\havoced{\dynamicTheory}$:
\begin{isabelleLemma}{Havoc Reduction}{dyn_axiom_HR}
\label{isaLem:liftedDL:havoc:reduction}
For any dynamic theory $\dynamicTheory$, the following proof rule is sound:
\begin{prooftree}
    \AxiomC{$\calculus[\dynamicTheory] \schemaVarOne$}
    \LeftLabel{\axiomLabel{HR}{HR}\normalfont (HR)}
    \RightLabel{ (assuming $\schemaVarOne\in\formulaSet{\dynamicSignature}$)}
    \UnaryInfC{$\calculus[\havoced{\dynamicTheory}] \schemaVarOne$}
\end{prooftree}
\end{isabelleLemma}
Hence, we can reuse \emph{any} decision procedure developed for determining the validity of formulas in $\formulaSet{\dynamicSignature}$ to determine the validity of formulas in the havoc-free fragment of $\formulaSet{\havoced{\dynamicSignature}}$ and carry these results over using \ref{axiom:HR}.
Additionally, we prove the soundness of the well-known \ref{axiom:havoc} axiom for the decomposition of the havoc operator:
\begin{isabelleLemma}{Havoc Axiom}{dyn_axiom_havoc}
\label{isalem:liftedDL:havoc:axiom}
For any dynamic theory $\dynamicTheory$, the following axiom is sound in $\havoced{\dynamicTheory}$:
\begin{axiomBlock}
    \refAxiom{havoc}{$\coloneqq\!\!*$}{
        $\left(\modBox{\variableOne\coloneqq *}{\schemaVarOne}\right) \iff \left(\fa{\variableOne}{\schemaVarOne}\right)$
    }
\end{axiomBlock}
\end{isabelleLemma}

Before we move to regular programs, let us recap what this section demonstrated\chg{S10}{.}{:}
Given any dynamic theory $\dynamicTheory$, we can create a new dynamic theory $\havoced{\dynamicTheory}$ which:
\begin{enumerate*}
    \item Has the additional program primitive $\variableOne \coloneqq *$;
    \item Inherits the dynamic theory axioms \ref{axiom:G}, \ref{axiom:V}, \ref{axiom:B} and \ref{axiom:K};
    \item Allows reuse of existing proof infrastructure for $\dynamicTheory$ through \ref{axiom:HR} (as a valid formula of $\dynamicTheory$ is also valid in $\havoced{\dynamicTheory}$).
\end{enumerate*}
\new{S63}{%
Rule \ref{axiom:HR} enables this reuse even when $\dynamicTheory$'s axiom schemata may not hold for arbitrary $\havoced{\dynamicTheory}$-instantiations, by restricting reductions to formulas from $\formulaSet{\dynamicSignature}$.
}%
The remainder of this section, as well as \Cref{sec:hdl}, will demonstrate significantly more expressive lifting procedures with the same properties\chg{S10}{,
i.e.\ transfer}{:
Transfer} of proof rules and validity preservation.

\begin{example}%
\label{ex:liftedDL:havoc:semiring}
Continuing the example of a dynamic theory over a semi-ring $\dynamicTheory^{\ring}$ defined in \Cref{ex:DL:syntax_semiring}\chg{S64}{%
, we can lift $\dynamicTheory^{\ring}$ with havoc resulting in $\havoced{\left(\dynamicTheory^{\ring}\right)}$.
}{:
Using the lifting procedure above, we can lift this dynamic theory to support the havoc operator
resulting in the theory $\havoced{\left(\dynamicTheory^{\ring}\right)}$.}
\chg{S64}{%
Beyond programs $x \coloneqq t$ (for semiring terms $t$) the theory $\havoced{\left(\dynamicTheory^{\ring}\right)}$ then also supports programs $x \coloneqq *$.
While this does not increase expressivity (see axiom \ref{axiom:havoc}), it can increase expressivity when combined with the regular closure defined below.
$\havoced{\left(\dynamicTheory^{\ring}\right)}$ also admits the application of all proof principles from $\dynamicTheory^{\ring}$ via \ref{axiom:HR}.
}{In addition to the proof principles from $\dynamicTheory^{\ring}$, this theory also supports the proof rule \ref{axiom:HR} and the axiom \ref{axiom:havoc}.}
\end{example}

\subsection{Regular Program Lifting}
\label{sec:liftedDL:regular}
Given a set of (atomic) programs, it is often useful to consider their \emph{regular programs closure}, i.e., any construction of programs obtained by sequential composition, nondeterministic choice, (first-order logic) checks, and nondeterministic loops.
This is an interesting set of programs, because it allows constructing well-known imperative constructs such as \texttt{if} or \texttt{while}, from \chg{S54}{a small}{minimal} set of operations.
It is also isomorphic to the program constructs considered in Kleene Algebra with Tests~\cite{DBLP:journals/toplas/Kozen97}.
We now show that any dynamic theory $\dynamicTheory$ can be lifted to this more expressive set of programs while retaining the properties established above.
To this end, we begin by defining the regular closure of programs along with their semantics:
\begin{definition}[Regular Closure]
\label{def:liftedDL:regular_closure}
Given a dynamic theory $\dynamicTheory$ over programs $\signaturePrograms$ with program evaluation function $\programEval$ we define the regular closure of programs, denoted $\regular{\signaturePrograms}$ via the following grammar (where $\programThree \in \signaturePrograms$ and $\formulaOneFOL\in\folFormulaSet{\dynamicSignature}$):
$
\grammar{\programOne,\programTwo}{
\programThree \grammarOr
\programOne;\programTwo \grammarOr
\programOne \cup \programTwo \grammarOr
?\left(\formulaOneFOL\right) \grammarOr
\left(\programOne\right)^*
}.
$\\
We then lift the program evaluation function to $\regular{\programEval}$ as follows:
\begin{itemize}
    \item $\regular{\programEval}\mleft(\programThree\mright) = \programEval\mleft(\programThree\mright)$ (for $\programThree\in\signaturePrograms$)
    \item $\regular{\programEval}\mleft(\programOne;\programTwo\mright) = \left\{
    \left(\stateOne,\stateThree\right) \in \stateSpace^2~\middle|~
    \text{there exists } \stateTwo\in\stateSpace \text{ s.t. }
    \left(\stateOne,\stateTwo\right) \in \regular{\programEval}\mleft(\programOne\mright) \text{ and }
    \left(\stateTwo,\stateThree\right) \in \regular{\programEval}\mleft(\programTwo\mright) 
    \right\}$
    \item $\regular{\programEval}\mleft(\programOne\cup\programTwo\mright) = 
    \regular{\programEval}\mleft(\programOne\mright)\cup\regular{\programEval}\mleft(\programTwo\mright)$
    \item $\regular{\programEval}\mleft(?\left(\formulaOneFOL\right)\mright) = \left\{
    \left(\stateOne,\stateTwo\right) \in \stateSpace ~\middle|~
    \stateOne\in\sem{\formulaOneFOL} \text{ and } \equalOn{\stateOne}{\stateTwo}{\signatureVariables}
    \right\}$
    \item $\regular{\programEval}\mleft(\left(\programOne\right)^*\mright) = \bigcup_{n\in\mathbb{N}} \regular{\programEval}\mleft(\programOne^n\mright)$
    where $\programOne^0 \enspace\equiv\enspace ?\left(\top\right)$ and $\programOne^{n+1}\enspace\equiv\enspace \left(\programOne;\programOne^n\right)$
\end{itemize}
\end{definition}
To create a new dynamic theory, it remains to define the set of free variables $\regular{\freeVarsProgram}$.
To this end, the free variables for $\programThree\in\signaturePrograms$ are defined as $\freeVarsProgram\mleft(\programThree\mright)$, and the free variables under sequential composition and nondeterministic choice are defined as the union of their components.
The free variables of a looped program $\left(\programOne\right)^*$ are the free variables of $\programOne$.
For the check $?\left(\formulaOneFOL\right)$ we leverage our knowledge about the free variables of atoms (encoded in \freeVarsAtom) to construct the free variables of a first-order formula (see \ifextension{\Cref{def:liftedDL:freeVarsFOL}}{Definition~C.1}).
Note that popular definitions of $?\left(\formulaOneFOL\right)$ often enforce semantics where it must be the case that $\stateOne=\stateTwo$ while we allow transitions to equivalent states as the usual definition breaks our extensionality assumption in \Cref{def:DL:prog_eval}
(since we allow distinct states $\stateOne,\stateTwo\in\stateSpace$ such that $\equalOn{\stateOne}{\stateTwo}{\signatureVariables}$).
This particularity disappears if we assume \stateSpace{} to be a ``classical'' mapping from variables to values.
We can then formally define the regular closure lift and prove that it constructs once again a dynamic theory:
\begin{definition}[Regular Closure Lift]
\label{def:liftedDL:regular_lift}%
\looseness=-1
For a dynamic theory $\dynamicTheory$ and its regular closure over programs $\regular{\signaturePrograms},\regular{\programEval}$ (see \Cref{def:liftedDL:regular_closure}) and $\regular{\freeVarsProgram}$ as defined in \ifextension{\Cref{def:liftedDL:programFV}}{{Definition~C.2}},
the \emph{regular closure lift} is defined as:
\[
\regular{\dynamicTheory} = \left(\left(\signatureVariables,\signatureAtoms,\regular{\signaturePrograms}\right),\left(
\universe,\stateSpace,\variableEval,\atomEval,\freeVarsAtom,\regular{\programEval},\regular{\freeVarsProgram}
\right)\right)
\]
\end{definition}
\begin{textAtEnd}
\begin{definition}[Free Variables of First-Order Formulas]
\label{def:liftedDL:freeVarsFOL}
Given a dynamic theory $\dynamicTheory$, we define the syntactically free variables $\freeVarsSyn$ over
$\formulaSet{\dynamicSignature}$ as follows:
\begin{itemize}
    \item $\freeVarsSyn\mleft(\atomOne\mright) = \freeVarsAtom\mleft(\atomOne\mright)$ (for $\atomOne \in \signatureAtoms$)
    \item $\freeVarsSyn\mleft(\neg\formulaOne\mright) = \freeVarsSyn\mleft(\formulaOne\mright)$
    \item $\freeVarsSyn\mleft(\formulaOne\land\formulaTwo\mright) = \freeVarsSyn\mleft(\formulaOne\mright)\cup\freeVarsSyn\mleft(\formulaOne\mright)$
    \item $\freeVarsSyn\mleft(\fa{\variableOne}{\formulaOne}\mright) = \freeVarsSyn\mleft(\formulaOne\mright)\setminus\left\{\variableOne\right\}$
    \item $\freeVarsSyn\mleft(\modBox{\programOne}{\formulaOne}\mright) = \freeVarsProgram\mleft(\programOne\mright) \cup \freeVarsSyn\mleft(\formulaOne\mright)$
\end{itemize}
\end{definition}
\begin{isabelleLemma}{$\freeVarsSyn$ is an Overapproximation}{dynLogCore.dyn\_FV\_fml\_sound}
\label{isaLem:fv_is_overapprox}
For any dynamic theory $\dynamicTheory$ and any formula $\formulaOne\in\formulaSet{\dynamicSignature}$ it holds that:
\[
\freeVarsSem\mleft(\formulaOne\mright) \subseteq \freeVarsSyn\mleft(\formulaOne\mright)
\]
\end{isabelleLemma}
\begin{definition}[Free Variables over Regular Programs]
\label{def:liftedDL:programFV}
For any dynamic theory $\dynamicTheory$ and its regular closure over programs $\regular{\signaturePrograms}$, we define $\regular{\freeVarsProgram}$ as follows:
\begin{itemize}
    \item $\regular{\freeVarsProgram}\mleft(\programThree\mright) = \freeVarsProgram\mleft(\programThree\mright)$ (for $\programThree \in \signaturePrograms$)
    \item $\regular{\freeVarsProgram}\mleft(\programOne;\programTwo\mright) = \regular{\freeVarsProgram}\mleft(\programOne\cup\programTwo\mright) = \regular{\freeVarsProgram}\mleft(\programOne\mright) \cup \regular{\freeVarsProgram}\mleft(\programTwo\mright)$
    \item $\regular{\freeVarsProgram}\mleft(?\left(\formulaOneFOL\right)\mright) = \freeVarsSyn\mleft(\formulaOneFOL\mright)$
    \item $\regular{\freeVarsProgram}\mleft(\left(\programOne\right)^*\mright) = \regular{\freeVarsProgram}\mleft(\programOne\mright)$
\end{itemize}
\end{definition}
\end{textAtEnd}
\begin{isabelleTheorem}{Regular Closure over Dynamic Theory}{kat_dl}
\label{thm:liftedDL:regular}
The regular closure lift $\regular{\dynamicTheory}$ of any dynamic theory $\dynamicTheory$ is a dynamic theory.
\end{isabelleTheorem}
Just as before, we can define a reduction rule which allows us to reuse proof results in $\regular{\dynamicTheory}$ that have already been shown for $\dynamicTheory$.
Additionally, our lifting comes with numerous axioms on the decomposition of regular programs which we summarize in \Cref{fig:liftedDL:axioms} and are provably sound:
\begin{isabelleLemma}{Regular Reduction}{dyn_axiom_RR}
\label{isalem:liftedDL:rr_axiom}
For any dynamic theory $\dynamicTheory$, the following proof rule is sound:
\begin{prooftree}
    \AxiomC{$\calculus[\dynamicTheory] \schemaVarOne$}
    \LeftLabel{\axiomLabel{RR}{RR}\normalfont (RR)}
    \RightLabel{ (assuming $\schemaVarOne\in\formulaSet{\dynamicSignature}$)}
    \UnaryInfC{$\calculus[\regular{\dynamicTheory}] \schemaVarOne$}
\end{prooftree}
\end{isabelleLemma}
\begin{isabelleTheorem}{Soundness of Proof Rules over Regular Closure}{}
\label{isathm:soundess_regular_axioms}
For any dynamic theory $\dynamicTheory$ the axioms in \Cref{fig:liftedDL:axioms} are sound in $\regular{\dynamicTheory}$.
\end{isabelleTheorem}
\begin{figure}[t]
\fbox{\begin{minipage}[t]{0.97\linewidth}\vspace{0pt}
\def\fCenter{~\vdash~}
\begin{minipage}[t]{0.48\textwidth}
\begin{axiomBlock}
    \refAxiom{?}{?}{
    $
    \modBox{?\left(\schemaVarOne\right)}{ \schemaVarTwo}
    \iff
    \left(\schemaVarOne \implies \schemaVarTwo\right)
    $}
    \refAxiom{seq}{;}{
    $
    \modBox{\schemaProgramOne;\schemaProgramTwo}{\schemaVarOne}
    \iff
    \modBox{\schemaProgramOne}{\modBox{\schemaProgramTwo}}{\schemaVarOne}
    $}
    \refAxiom{star}{\ensuremath{^*}}{
    $
    \modBox{\schemaProgramOne^*}{\schemaVarOne}
    \iff
    \schemaVarOne \land 
    \modBox{\schemaProgramOne}{\modBox{\schemaProgramOne^*}}{\schemaVarOne}
    $}
\end{axiomBlock}
\end{minipage}%
\begin{minipage}[t]{0.49\textwidth}
\begin{itemize}[label={ABC},align=left, leftmargin=*,itemsep=5pt]
    \refAxiom{cup}{\ensuremath{\cup}}{
    $
    \modBox{\schemaProgramOne \cup \schemaProgramTwo}{\schemaVarOne}
    \iff
    \modBox{\schemaProgramOne}{\schemaVarOne}
    \land
    \modBox{\schemaProgramTwo}{\schemaVarOne}
    $}
    \refAxiom{I}{I}{
    $
    \modBox{\schemaProgramOne^*}{\left(\schemaVarOne \implies \modBox{\schemaProgramOne}\schemaVarOne\right)}
    \implies
    \left(\schemaVarOne \implies \modBox{\schemaProgramOne^*} \schemaVarOne\right)
    $}
\end{itemize}
\end{minipage}
\end{minipage}}
\caption{Axioms for regular programs \new{S55}{originally due to Segerberg~\cite{Segerberg1982ACT}}}
\label{fig:liftedDL:axioms}
\end{figure}%
The proof rules provided in \Cref{fig:liftedDL:axioms} allow us to decompose checks, sequential compositions and nondeterminstic choice programs (see axioms \ref{axiom:?}, \ref{axiom:seq}, \ref{axiom:cup}).
Moreover, they allow the iterative decomposition of loops as well as inductive invariant reasoning about loops (see axioms \ref{axiom:star}, \ref{axiom:I}).
However, the axioms are,  yet, insufficient for loop termination.

\paragraph{Loop Convergence}
A significant strength of dynamic logics, and hence dynamic theories, is their ability to reason about partial \emph{and} total correctness via the dual diamond and box modalities.
To this end, it is sometimes necessary to not only reason about loop invariants (via \ref{axiom:I}), but also about loop \emph{variants} to prove progress.
To this end, consider a formula $\modDia{\programOne^*}{\formulaOne}$\chg{S10}{.
In this instance,}{:
Here,} we must prove that running program $\programOne$ in a loop \emph{eventually} leads to a state satisfying $\formulaOne$.
This is usually achieved via well-founded induction proofs\chg{S10}{.}{:}
Given a formula $\phi\mleft(N\mright)$ that is satisfied for a sufficiently large $N\in\mathbb{N}$, one iteration of \programOne{} ensures that $\phi\mleft(N-1\mright)$ is satisfied in the post state.
If $\phi\mleft(0\mright)$ implies $\formulaOne$,
this proves that $\formulaOne$ becomes satisfied after iteratively running $\programOne$ for a sufficiently large number of steps.
This requires \emph{counting}.
However, due to the generality of our assumed state space, we currently have no way of counting or performing well-founded induction.
Hence, we now propose a set of additional assumptions that allow us to introduce a loop termination rule:
\begin{definition}[Inductive Expressivity]
\label{def:liftedDL:inductive_expressive}
Given some dynamic theory $\dynamicTheory$, assume some function $\universeToNat : \universe \to \mathbb{N}$ mapping elements from the universe to natural numbers.
We say a variable $\variableOne \in \signatureVariables$ is \emph{\integerExpressive} w.r.t. $\universeToNat$ iff for all $n \in \mathbb{N}$ there exists a state $\stateOne\in\stateSpace$ such that $\universeToNat\left(\variableEval\mleft(\stateOne,\variableOne\mright)\right)=n$.
Let $\integerVariables$ be a set of such variables.
We say $\dynamicTheory$ has \emph{inductive expressivity} iff for some given $\universeToNat$ there exist two functions $\natPlusOne,\natEq : \integerVariables^2 \to \folFormulaSet{\dynamicSignature}$ and one function $\natPositive : \integerVariables \to \folFormulaSet{\dynamicSignature}$ such that:
\begin{localeAssmBlock}
\localeAssm{Positive Sound}{
For any variable $\variableOne\in\integerVariables$ and any state $\stateOne\in\stateSpace$:\\
$\stateOne\in\sem{\natPositive\mleft(\variableOne\mright)}$ iff $\universeToNat\mleft(\variableEval\mleft(\stateOne,\variableOne\mright)\mright)>0$}
\localeAssm{Equal Sound}{
For variables $\variableOne,\variableTwo\in\integerVariables$  with $\variableOne \neq \variableTwo$ and any state $\stateOne\in\stateSpace$:\\
$\stateOne\in\sem{\natEq\mleft(\variableOne,\variableTwo\mright)}$ iff $
\universeToNat\mleft(\variableEval\mleft(\stateOne,\variableOne\mright)\mright) = 
\universeToNat\mleft(\variableEval\mleft(\stateOne,\variableTwo\mright)\mright)
$
}
\localeAssm{Plus One Sound}{
For variables $\variableOne,\variableTwo\in\integerVariables$  with $\variableOne \neq \variableTwo$ and any state $\stateOne\in\stateSpace$:\\
$\stateOne\in\sem{\natPlusOne\mleft(\variableOne,\variableTwo\mright)}$ iff $
\universeToNat\mleft(\variableEval\mleft(\stateOne,\variableOne\mright)\mright) + 1 = 
\universeToNat\mleft(\variableEval\mleft(\stateOne,\variableTwo\mright)\mright)
$
}
\end{localeAssmBlock}
\end{definition}

This inductive expressivity yields all that is required to prove loop convergence, which enables us to prove loop termination and reachability of certain states via loop iteration.
To this end, dynamic logics usually are equipped with %
\chg{S56}{%
a proof rule~\cite{harel_first-order_1979}
or axiom~\cite{Platzer2012}
}{
an axiom} that looks something like this:
\begin{align*}
\modBox{\programOne^*}{
\left(
\fa{\variableOne}{\left(
\left(v>0 \land \schemaVarOne\left(\variableOne\right)\right) \implies
\modDia{\programOne}{\schemaVarOne\left(\variableOne-1\right)}
\right)
}
\right)
} &\implies
\fa{\variableOne}{\left(
\schemaVarOne\mleft(\variableOne\mright) \implies
\modDia{\programOne^*}{
\schemaVarOne\mleft(0\mright)}
\right)
}
&&(\variableOne\notin\variablesOf{\programOne})
\end{align*}
This \chg{S11}{axiom}{axioms} formalizes the well-founded induction argument outlined above:
If $\schemaVarOne\mleft(\variableOne\mright)$ is satisfied and one iteration of $\programOne$ allows us to reach a state satisfying $\schemaVarOne\mleft(\variableOne-1\mright)$, then we can reach a state with $\schemaVarOne\mleft(0\mright)$ by iterating the execution of $\programOne$ (assuming $\variableOne$ is a natural number).
To prove loop reachability properties, we also want an axiom like this one in the regular closure over our dynamic theory; however, we lack any means to parameterize formulas with specific variables.
After all, our theory so far makes no assumptions about how atomic formulas in $\signatureAtoms$ evaluate w.r.t. variables beyond the definition of $\freeVarsAtom$.
Hence, we derive the following axiom which can be proven independently of the dynamic theory at hand so long as the properties from \Cref{def:liftedDL:inductive_expressive} are given:
\begin{isabelleLemma}{Loop Convergence Rule}{dyn\_axiom\_C}
\label{isalem:loop_convergence}
For any inductively expressive dynamic theory $\dynamicTheory$, the following axiom is sound for its regular closure $\regular{\dynamicTheory}$ for any \integerExpressive{} distinct variables $\variableOne,\variableTwo\in\integerVariables$ with $\variableTwo \notin \freeVarsSem\mleft(\schemaVarOne\mright)$ and $\variableOne,\variableTwo \notin \variablesOf{\programOne}$:
\begin{axiomBlock}
    \refAxiom{C}{C}{
    $\begin{array}{l}\left(
    \modBox{\programOne^*}{\left(
    \fa{\variableOne}{\left(
    \left(
    \natPositive\left(\variableOne\right) \land
    \phi
    \right)
    \implies
    \modDia{\programOne}{\left(
    \fa{\variableTwo}{\left(
    \natPlusOne\left(\variableTwo,\variableOne\right) \implies
    \fa{\variableOne}{\left(
    \natEq\left(\variableOne,\variableTwo\right) \implies \schemaVarOne\right)
    }
    \right)
    }
    \right)
    }
    \right)
    }
    \right)
    }
    \right)\\\implies
    \left(
    \fa{\variableOne}{\left(
    \schemaVarOne \implies
    \modDia{\programOne^*}{\left(
    \ex{\variableOne}{\left(\neg\natPositive\left(\variableOne\right) \land \schemaVarOne\right)}
    \right)}
    \right)
    }
    \right)
    \end{array}
    $
    }
\end{axiomBlock}
\end{isabelleLemma}
Note that once we apply $\natPlusOne,\natEq$ or $\natPositive$ to concrete variables, they reduce to a concrete $\dynamicTheory$-formula yielding a concrete instantiation of axiom $\ref{axiom:C}$.
Importantly, we cannot subtract from  $\variableOne$ within a single predicate since any predicate (including $\natPlusOne$) requires a reference point from which to subtract 1. We mitigate this by employing the helper variable $\variableTwo$.
To demonstrate the utility of the additional assumptions, we have also proven that a concrete Dynamic Logic instantiates the properties defined in \Cref{def:liftedDL:inductive_expressive}:
\begin{isabelleLemma}{Semi-Ring First-Order Dynamic Logic over Natural Numbers}{semiring_dl_full_nat,semiring_dl_full_nat_int}
\label{lem:semiring_dl_inductive}
The semi-ring dynamic theory $\dynamicTheory^{\ring}$\del{}{ (from Lemma X)}, instantiated for natural numbers or integers (resp. \del{}{denoted }$\dynamicTheory^{\mathbb{N}}$ or $\dynamicTheory^{\mathbb{Z}}$), is inductively expressive.
\chg{}{Hence}{Consequently}, axiom \ref{axiom:C} is a sound axiom for its havoc lifted, regular closure $\regular{\left(\havoced{\left(\dynamicTheory^{\mathbb{N}}\right)}\right)}$ (resp. $\regular{\left(\havoced{\left(\dynamicTheory^{\mathbb{Z}}\right)}\right)}$).
\end{isabelleLemma}

\begin{example}[Regular Program Lifting for Ordered Semiring]
\label{ex:liftedDL:regular:semiring}
We reconsider the example of a dynamic logic over an ordered semiring $\left(\ring,+,\cdot,\leq\right)$ for $\ring=\mathbb{Z}$:
As seen in \Cref{lem:semiring_dl_inductive}, this dynamic theory is inductively expressive.
Hence, all axioms derived up to this point (\Cref{fig:elementary-axioms,fig:liftedDL:axioms} as well as \ref{axiom:HR}, \ref{axiom:RR}, \ref{axiom:havoc}, \ref{axiom:C}) apply to its havoc lifted, regular closure $\regular{\left(\havoced{\left(\dynamicTheory^{\mathbb{Z}}\right)}\right)}$.
From a very simple definition of assignment, we have thus derived a dynamic theory with full support for regular programs and nondeterministic assignment.
For example, the following program is part of this dynamic theory:
\begin{equation}
\label{eq:gauss_sum}
1 \leq n \implies
\modBox{
x \coloneqq 0;
i \coloneqq 0;
\big(
?\left(i \leq n\right);
x \coloneqq x + i;
i \coloneqq i + 1
\big)^*;
?\left(\neg\left(i \leq n\right)\right)
}{
2*x \leq n*(n+1)
}
\end{equation}
This formula asserts that x is no larger than the Gauss formula predicts after summing up the first $n$ integers.
Based on our simple definitions in \Cref{ex:DL:syntax_semiring} we can now perform reasoning about programs like the one in \Cref{eq:gauss_sum}.
In fact, $\regular{\left(\havoced{\left(\dynamicTheory^{\mathbb{Z}}\right)}\right)}$ provides the equivalent of a guarded command language over integers and, as will be shown in \Cref{sec:rel-complete}, a relatively complete proof calculus for it.
In fact, our initial definition of $\dynamicTheory^{\mathbb{Z}}$
even could have omitted the assignment programs (by initializing $\signaturePrograms^{\mathbb{Z}}=\emptyset$) by defining assignment as syntactic sugar for havoc and check~\cite{DBLP:journals/corr/abs-2504-03495}:
\begin{align*}
\variableOne \coloneqq t &\equiv\left(
\variableTwo\coloneqq*;?\left(t\leq\variableTwo\land\variableTwo\leq t\right);\variableOne\coloneqq*;?\left(\variableOne\leq\variableTwo\land\variableTwo\leq\variableOne\right)\right)
&\text{for $w\in\signatureVariables^{\mathbb{Z}}$ fresh}
\end{align*}
\end{example}

\paragraph{Kleene Algebra with Tests}%
\chg{L12}{%
We prove that our regular closure lifting forms a Kleene Algebra with Test~\cite{DBLP:journals/toplas/Kozen97,harel_dynamic_2000} over the theory's programs (see \ifextension{\Cref{isathm:reg_closure_kat}}{{\isaText{Theorem}~C.1}}).
In combination with axiom \ref{axiom:E}, this enables KAT-style rewriting within dynamic logics.
}{
In contrast to the formula-based approach presented in this section so far,
Kleene Algebra with Tests, as an orthogonal direction of research, proposes to prove properties over regular programs of the kind discussed in this section using equational rewriting and normalization.
Indeed, our definition of regular programs cannot only be used to derive standard axioms of dynamic logic, but it can equally be used to prove the axioms of a Kleene Algebra with Tests w.r.t. our refinement and equality relations $\katRef{\cdot}{\cdot}$ and $\katEq{\cdot}{\cdot}$:}
\begin{textAtEnd}
\begin{isabelleTheorem}{Regular Closure forms KAT}{KATRewriting}
\label{isathm:reg_closure_kat}
The program constructs of the regular closure lifting satisfy all axioms of a Kleene Algebra with Tests (see \Cref{def:KAT}) where $+$ is $\cup$, $\cdot$ is $;$, $1$ is $?\left(\top\right)$, $( \cdot )^*$ is the nondeterministic loop and the boolean algebra is constructed over checks $?\left(\cdot\right)$.\footnotemark
\footnotetext{For negation of check constructs we introduce an additional operator which aggregates and negates a given program consisting purely of checks (see \Cref{def:KAT}).}
\end{isabelleTheorem}%
\begin{definition}[Kleene Algebra with Tests]
\label{def:KAT}
We base our definition of Kleene and Boolean Algebra on prior literature~\cite{DBLP:journals/toplas/Kozen97,harel_dynamic_2000}.
In the notation of this paper, our regular program lifting satisfies the Kleene Algebra with Tests axioms if the following identitites hold:
\begin{align*}
    \schemaProgramOne \cup \left(\schemaProgramTwo \cup \schemaProgramThree\right) &\katEqSym \left(\schemaProgramOne \cup \schemaProgramTwo\right) \cup \schemaProgramThree
&   \schemaProgramOne \cup \schemaProgramTwo &\katEqSym \schemaProgramTwo \cup \schemaProgramOne
&   \schemaProgramOne \cup ?\left(\bot\right) &\katEqSym \schemaProgramOne
&   \schemaProgramOne \cup \schemaProgramOne &\katEqSym \schemaProgramOne\\
\schemaProgramOne ; \left(\schemaProgramTwo ; \schemaProgramThree\right)
    &\katEqSym \left(\schemaProgramOne ; \schemaProgramTwo\right) ; \schemaProgramThree
&   ?\left(\top\right) ; \schemaProgramOne
    &\katEqSym \schemaProgramOne
&   \schemaProgramOne ; ?\left(\top\right)
    &\katEqSym \schemaProgramOne
&   \schemaProgramOne ; \left(\schemaProgramTwo \cup \schemaProgramThree\right)
    &\katEqSym \left(\schemaProgramOne ; \schemaProgramTwo\right) \cup \left(\schemaProgramOne ; \schemaProgramThree\right)\\
   ?\left(\top\right) \cup (\schemaProgramOne ; \schemaProgramOne^{*})
    &\katEqSym \schemaProgramOne^{*}
&    ?\left(\bot\right) ; \schemaProgramOne
    &\katEqSym ?\left(\bot\right)
&   \schemaProgramOne ; ?\left(\bot\right)
    &\katEqSym ?\left(\bot\right)
&   \left(\schemaProgramOne \cup \schemaProgramTwo\right) ; \schemaProgramThree
    &\katEqSym \left(\schemaProgramOne ; \schemaProgramThree\right) \cup \left(\schemaProgramTwo ; \schemaProgramThree\right)\\
    ?\left(\top\right) \cup (\schemaProgramOne^{*} ; \schemaProgramOne)
    &\katEqSym \schemaProgramOne^{*}
    &\rlap{$%
    \schemaProgramOne \cup (\schemaProgramTwo;\schemaProgramThree) \katRefSym \schemaProgramThree \implies \schemaProgramTwo^*;\schemaProgramOne \katRefSym \schemaProgramThree%
    $}\\
    &&\rlap{$%
    \schemaProgramOne \cup (\schemaProgramThree;\schemaProgramTwo) \katRefSym \schemaProgramThree \implies \schemaProgramOne;\schemaProgramTwo^* \katRefSym \schemaProgramThree%
    $}
\end{align*}
and if furthermore for any programs $\schemaProgramOne,\schemaProgramTwo,\schemaProgramThree$ \emph{only} containing checks as atomic programs the following identitites hold:
\begin{align*}
    \schemaProgramOne \cup \left(\schemaProgramTwo \cup \schemaProgramThree\right) &\katEqSym \left(\schemaProgramOne \cup \schemaProgramTwo\right) \cup \schemaProgramThree
&   \schemaProgramOne ; \left(\schemaProgramTwo ; \schemaProgramThree\right) &\katEqSym \left(\schemaProgramOne ; \schemaProgramTwo\right) ; \schemaProgramThree
&   \schemaProgramOne \cup \schemaProgramTwo &\katEqSym \schemaProgramTwo \cup \schemaProgramOne
&   \schemaProgramOne ; \schemaProgramTwo &\katEqSym \schemaProgramTwo ; \schemaProgramOne\\
\schemaProgramOne \cup \left(\schemaProgramTwo ; \schemaProgramThree\right)
    &\katEqSym \left(\schemaProgramOne \cup \schemaProgramTwo\right) ; \left(\schemaProgramOne \cup \schemaProgramThree\right)
&   \schemaProgramOne ; \left(\schemaProgramTwo \cup \schemaProgramThree\right)
    &\katEqSym \left(\schemaProgramOne ; \schemaProgramTwo\right) \cup \left(\schemaProgramOne ; \schemaProgramThree\right)
&   \schemaProgramOne \cup ?\!\left(\bot\right)
    &\katEqSym \schemaProgramOne
&   \schemaProgramOne \cup ?\!\left(\top\right)
    &\katEqSym ?\!\left(\top\right)\\
    \schemaProgramOne \cup \overline{\schemaProgramOne}
    &\katEqSym ?\!\left(\top\right)
&   \schemaProgramOne \cup \schemaProgramOne
    &\katEqSym \schemaProgramOne
&   \schemaProgramOne ; ?\!\left(\top\right)
    &\katEqSym \schemaProgramOne
&   \schemaProgramOne ; ?\!\left(\bot\right)
    &\katEqSym ?\!\left(\bot\right)\\
    \schemaProgramOne ; \overline{\schemaProgramOne}
    &\katEqSym ?\!\left(\bot\right)
&   \schemaProgramOne ; \schemaProgramOne
    &\katEqSym \schemaProgramOne
&
&
&
&
\end{align*}
Where $\overline{\schemaProgramOne} \equiv ?\left(\neg\progToFml{\schemaProgramOne}\right)$ and $\progToFml{\schemaProgramOne}$ is defined as follows:
\begin{align*}
    \progToFml{?\left(\schemaVarOne\right)} &\equiv \schemaVarOne
&   \progToFml{\schemaVarOne;\schemaVarTwo} &\equiv \schemaVarOne \land \schemaVarTwo
&   \progToFml{\schemaVarOne\cup\schemaVarTwo} &\equiv \schemaVarOne \lor \schemaVarTwo
&   \progToFml{\left(\schemaVarOne\right)^*} &\equiv \top
\end{align*}
Note that $\progToFml{\schemaProgramOne}$ for $\schemaProgramOne\in\signaturePrograms$ need not be defined as we assume all atomic programs to be checks whenever $\progToFml{\cdot}$ is applied.
\end{definition}
\begin{isabelleLemma}{Correctness of $\progToFml{\cdot}$}{KATRewriting.eq_if_pure}
\label{isalem:correctness_check_to_fml}
If $\schemaProgramOne$ only contains checks as atomic programs then the following identity holds:
\[
\katEq{\schemaProgramOne}{?\left(\progToFml{\schemaProgramOne}\right)}
\]    
\end{isabelleLemma}
\end{textAtEnd}
\del{L12}{%
In combination with axiom \ref{axiom:E}, this insight allows us to mix-and-match proof tactics from Kleene Algebra with Tests with proof tactics from dynamic logic.}

\section{Heterogeneous Dynamic Theories}
\label{sec:hdl}%
\del{S65}{%
We have shown how any dynamic theory~$\dynamicTheory$ can be lifted for more complex behavior—either via havoc~($\havoced{\dynamicTheory}$) or via the regular closure~($\regular{\dynamicTheory}$).
In both cases, the developer of a logic can focus on axiomatizing their concrete program constructs while reusing existing components to increase expressiveness.}%
We \del{S65}{now }enhance dynamic theories further by \chg{S65}{allowing \emph{combination of}}{\emph{combining}} multiple theories to reason about \emph{heterogeneous systems}, whose components operate on different domains of computation, but may interact.
Heterogeneous theories serve two purposes:
\begin{enumerate*}
    \item they provide a unified framework that explicitly models both systems and their interactions, and
    \item they enable rigorous reasoning about the overall system by \emph{reusing} results and proof calculi from the \emph{homogeneous} component theories.
\end{enumerate*}
The heterogeneous proof calculus decomposes obligations to the individual theories, analogous to how reasoning over regular programs reduces to reasoning over atomic programs.

Given two \del{}{dynamic }theories $\zero{\dynamicTheory},\one{\dynamicTheory}$, we first construct a \emph{simple heterogeneous theory}~$\preHero{\dynamicTheory}$ over $\zero{\signaturePrograms}\cup\one{\signaturePrograms}$, then lift it with havoc and regular closure.
The resulting \emph{fully heterogeneous dynamic theory}~$\hero{\dynamicTheory}$ provides an expressive logic combining $\zero{\dynamicTheory}$ and $\one{\dynamicTheory}$ with all axioms from \Cref{sec:elementary,sec:liftedDL} included.

\subsection{Simple Heterogeneous Dynamic Theories}
\label{subsec:hdl:simple}
We assume we are given two dynamic theories $\zero{\dynamicTheory},\one{\dynamicTheory}$ and begin by defining the state space of our new dynamic theory:
\begin{definition}[Heterogeneous State Space]
\label{def:hdl:hero_state_space}
Given two dynamic theories $\zero{\dynamicTheory},\one{\dynamicTheory}$ and their resp.
state spaces $\zero{\stateSpace},\one{\stateSpace}$ we define the \emph{heterogeneous state space} as $\hero{\stateSpace}=\zero{\stateSpace}\times\one{\stateSpace}$.
For a given state $\left(\zero{\stateOne},\one{\stateOne}\right)\in\hero{\stateSpace}$ variable evaluation is defined as follows:
\[
\hero{\variableEval}\left(\left(\zero{\stateOne},\one{\stateOne}\right),\variableOne\right)=\begin{cases}
    \zero{\variableEval}\left(\zero{\stateOne},\variableOne\right) & \variableOne \in \zero{\signatureVariables}\\
    \one{\variableEval}\left(\one{\stateOne},\variableOne\right) & \variableOne \in \one{\signatureVariables}\\
\end{cases}
\]    
\end{definition}
A priori, the behavior modeled in $\zero{\dynamicTheory}$ and $\one{\dynamicTheory}$ hence happens in isolation.
However, similarly to reasoning about concurrent systems, the appeal of reasoning about heterogeneous systems lies precisely in the \emph{interaction} of these a priori independent behaviors.
Hence, to relate the behavior of these two independent state space components,
we introduce a new set of atoms (denoted $\common{\signatureAtoms}$) \chg{L7}{that allows us to constrain \emph{permissible combinations} of individual states $\zero{\stateOne}$ and $\one{\stateOne}$}{to reason about and relate $\zero{\stateOne}$ and $\one{\stateOne}$}:
\begin{definition}[Heterogeneous Atoms]
\label{def:hero_atoms}
Given two dynamic theories $\zero{\dynamicTheory},\one{\dynamicTheory}$ and their heterogeneous state space $\hero{\stateSpace}$ we define a new set of \emph{heterogeneous atoms} $\common{\signatureAtoms}$ over variables $\common{\signatureVariables}=\zero{\signatureVariables}\cup\one{\signatureVariables}$ with an evaluation function $\common{\atomEval}:\common{\signatureAtoms} \to 2^{\hero{\stateSpace}}$ and a free variable function $\common{\freeVarsAtom}:\common{\signatureAtoms} \to 2^{\common{\signatureVariables}}$ such that the constructs satisfy the atom evaluation requirements from \Cref{def:DL:atom_eval}.
\end{definition}
Heterogeneous \chg{S36}{atoms}{Atoms} can be considered the dynamic theory counterpart to mixed terms in first-order theory combination.
Just like in the case of \new{L7}{first-order }theory combination, these terms allow us to constrain the relation between values in both worlds.
\new{L7}{%
As a concrete example, \Cref{case_study_guarantee} defines the heterogeneous atom $\mathrm{round}\mleft(\texttt{v},\textit{w}\mright)$ for JavaDL integer variables $\texttt{v}$ and real-valued \dL{} variables \textit{w}.
Its semantics will allow us to \emph{constrain} the possible values of the Java variable $\texttt{v}$ (in state $\zero{\stateOne}$) based on the values of the \dL{} variable $\textit{w}$ (in $\one{\stateOne}$).
In combination with havoc ($x \coloneqq *$) and checks ($?\left(\formulaOneFOL\right)$), heterogeneous atoms serve as communication primitives between languages.
}

\chg{S65}{A simple heterogeneous dynamic theory is then defined as follows:}{Based on these definitions, we can now proceed to define the syntactic materials of our simple heterogeneous dynamic theory:}
\begin{definition}[Simple Heterogeneous Dynamic Signature]
\label{def:hdl:simple_signature}
Given two dynamic theories $\zero{\dynamicTheory},\one{\dynamicTheory}$ with resp. signatures $\zero{\dynamicSignature},\one{\dynamicSignature}$ and a set of heterogeneous atoms $\common{\signatureAtoms}$ over $\zero{\dynamicTheory},\one{\dynamicTheory}$ (see \Cref{def:hero_atoms}) we define the \emph{simple heterogeneous dynamic signature} as $\preHero{\dynamicSignature} = \left(\hero{\signatureVariables},\hero{\signatureAtoms},\preHero{\signaturePrograms}\right)$ with:

\begin{align*}
    \hero{\signatureVariables} &=
    \zero{\signatureVariables}\Dot{\cup}\one{\signatureVariables}
    &\hero{\signatureAtoms} &=
    \zero{\signatureAtoms}\Dot{\cup}\one{\signatureAtoms}\Dot{\cup}\common{\signatureAtoms}
    &\preHero{\signaturePrograms} &=
    \zero{\signaturePrograms}\Dot{\cup}\one{\signaturePrograms}
\end{align*}   
\end{definition}
The domain of computation, \chg{}{i.e.\ the heterogeneous theory's semantics, naturally follows:}{i.e. the semantical definition of our new dynamic theory then naturally follows:}
\begin{definition}[Simple Heterogeneous Domain of Computation]
\label{def:hdl:simple_dom_comp}
Given two dynamic theories $\zero{\dynamicTheory},\one{\dynamicTheory}$ and notation as in \Cref{def:hdl:simple_signature} we define the simple heterogeneous domain of computation $\preHero{\domComp}$ as follows:
\begin{itemize}
    \item The universe is the combination of individual universes: $\hero{\universe}=\zero{\universe}\cup\one{\universe}$
    \item The state space is the heterogeneous state space $\hero{\stateSpace}$ over $\zero{\dynamicTheory},\one{\dynamicTheory}$ with the corresponding variable evaluation function $\hero{\variableEval}$ (see \Cref{def:hdl:hero_state_space})
    \item The atom evaluation function $\hero{\atomEval}$ and program evaluation function $\preHero{\programEval}$ is given as below (with corresponding free variable functions $\hero{\freeVarsAtom},\preHero{\freeVarsProgram}$ as defined in \ifextension{\Cref{def:free_vars_simple_hero}}{{Definition~C.4}}).
\end{itemize}
\begin{align*}
    \hero{\atomEval}\mleft(\atomOne\mright) &= 
    \begin{cases}
        \zero{\atomEval}\mleft(\atomOne\mright)\times\one{\stateSpace} & \atomOne\in\zero{\signatureAtoms}\\
        \zero{\stateSpace}\times\one{\atomEval}\mleft(\atomOne\mright) & \atomOne\in\one{\signatureAtoms}\\
        \common{\atomEval}\mleft(\atomOne\mright) & \atomOne\in\common{\signatureAtoms}
    \end{cases}
    &\preHero{\programEval}\mleft(\programOne\mright) &= 
    \begin{cases}
        \zero{\programEval}\mleft(\programOne\mright) & \programOne\in\zero{\signaturePrograms}\\
        \one{\programEval}\mleft(\programOne\mright) & \programOne\in\one{\signaturePrograms}
    \end{cases}
\end{align*}
\end{definition}
\begin{textAtEnd}
\begin{definition}[Free Variables for Simple Heterogeneous Domain of Computation]
\label{def:free_vars_simple_hero}
Given two dynamic theories $\zero{\dynamicTheory},\one{\dynamicTheory}$ and notation as in \Cref{def:hdl:simple_signature}, we define free variables of atoms and programs as follows:
\begin{align*}
\hero{\freeVarsAtom}\mleft(\atomOne\mright) &= \begin{cases}
    \zero{\freeVarsAtom}\mleft(\atomOne\mright) & \atomOne\in\zero{\signatureAtoms}\\
    \one{\freeVarsAtom}\mleft(\atomOne\mright) & \atomOne\in\one{\signatureAtoms}\\
    \common{\freeVarsAtom}\mleft(\atomOne\mright) & \atomOne\in\common{\signatureAtoms}\\
\end{cases}
&
\hero{\freeVarsProgram}\mleft(\programOne\mright) &=\begin{cases}
    \zero{\freeVarsProgram}\mleft(\programOne\mright) & \atomOne\in\zero{\signaturePrograms}\\
    \one{\freeVarsProgram}\mleft(\programOne\mright) & \atomOne\in\one{\signaturePrograms}
\end{cases}
\end{align*}
\end{definition}
\end{textAtEnd}
The notation of our simple heterogeneous signature $\preHero{\dynamicSignature}$ and the simple heterogeneous dynamic theory $\preHero{\domComp}$ is already giving a hint at the components of the dynamic theory which are here to stay for the fully heterogeneous version (denoted with $\hero{\left(\cdot\right)}$) vs. components of the simple heterogeneous theory that still require further refinement (denoted with $\preHero{\left(\cdot\right)}$):
While the universe, state space, variables and atoms remain as-is, the range of admissible programs will still be extended.
Before we lift $\preHero{\dynamicTheory}$ to a more expressive set of programs, we begin by showing that our simple heterogeneous dynamic signature and domain of computation indeed form a dynamic theory:
\begin{isabelleTheorem}{Simple Heterogeneous Dynamic Theories}{het\_dl}
\label{isathm:simple_hdl}
Assume two dynamic theories $\zero{\dynamicTheory},\one{\dynamicTheory}$ and a corresponding set of heterogeneous atoms $\common{\signatureAtoms}$, then $\preHero{\dynamicSignature}$ and $\preHero{\domComp}$ as defined in \Cref{def:hdl:simple_signature,def:hdl:simple_dom_comp} form a dynamic theory $\preHero{\dynamicTheory}$ as defined in \Cref{def:DL:theory}.
\chg{S57}{The soundness of all}{%
All} axioms and proof rules from \Cref{fig:elementary-axioms} \chg{S57}{carries}{carry} over.
\end{isabelleTheorem}
\noindent
\looseness=-1
This result unlocks the first four proof rules for $\preHero{\dynamicTheory}$.
However, the range of properties we will be able to formalize and prove with $\preHero{\dynamicTheory}$ \chg{S37}{is}{are} still underwhelming\chg{S10}{ %
as modalities %
}{:
Each modality} in $\preHero{\dynamicTheory}$ can only contain a program from $\zero{\dynamicTheory}$ \emph{or} a program from $\one{\dynamicTheory}$\chg{S10}{ --- }{, but }\emph{not both}.
Before we \emph{lift} this limitation in the next section, we \del{S10}{first }consider how this logic relates to another technique for combining logics, namely fibring.

\paragraph{Relation to Fibring.}%
By explicitly labeling the modalities w.r.t. the contained program (i.e. $\modBox{\programOne}{\formulaOne}$ becomes $\zero{\left[\programOne\right]}\formulaOne$ for $\programOne\in\zero{\signaturePrograms}$ and $\one{\left[\programOne\right]}\formulaOne$ otherwise), we can interpret this logic as a ``multi-dynamic logic'' with two distinct modal operators.
This approximately corresponds to what could be achieved via fibring~\cite{gabbay1998fibring} of dynamic logics where we could combine the truth bearing entities of $\zero{\dynamicTheory}$ and $\one{\dynamicTheory}$.
\chg{S38}{%
As fibring does not recognize programs as first-class  citizens,
}{While fibring allows for the combination of truth-bearing entities from $\zero{\dynamicTheory}$ and $\one{\dynamicTheory}$, 
it does not recognize programs as first-class citizens.
Hence,} it cannot integrate their respective program constructs
which restricts the expressive power of analyzable systems.
In particular, it cannot capture systems involving, e.g., intricate control structures such as loops.
Instead, fibring is confined to modeling systems that can be decomposed into a (nondeterministic) finite, sequential compositions of homogeneous subsystems.
Importantly, our dynamic theory does \emph{not} have two distinct modalities.
This key difference allows us to apply the lifting procedures devised in \Cref{sec:liftedDL} which naturally yields a significantly more expressive program logic.

\subsection{Fully Heterogeneous Dynamic Theories}
Using the lifting procedures from \Cref{sec:liftedDL} we can define the fully heterogeneous dynamic theory, supporting the composition of homogeneous programs using all regular program constructs, in a straightforward manner:
\begin{definition}[Fully Heterogeneous Dynamic Theories]
\label{def:hdl:fully_hero}%
Given two dynamic theories $\zero{\dynamicTheory},\one{\dynamicTheory}$ and a corresponding set of heterogeneous atoms $\common{\signatureAtoms}$ forming the simple heterogeneous dynamic theory $\preHero{\dynamicTheory}$, we define the \emph{fully heterogeneous dynamic theory} over $\zero{\dynamicTheory},\one{\dynamicTheory}$ as
$
\hero{\dynamicTheory} = \regular{\left(\havoced{\left(\preHero{\dynamicTheory}\right)}\right)}
$
with $\regular{\left(\cdot\right)}$ and $\havoced{\left(\cdot\right)}$ as given in \Cref{thm:liftedDL:havoced} and \Cref{thm:liftedDL:regular}.
\end{definition}

As planned, we can recover all axioms established above (with the exception of \ref{axiom:C} for which the requirements will be discussed below):
\begin{isabelleTheorem}{Fully Heterogeneous Dynamic Theories}{het\_dl\_havoc,het\_dl\_kat}
\label{isathm:full_hdl}
Given two dynamic theories $\zero{\dynamicTheory},\one{\dynamicTheory}$ and a corresponding set of heterogeneous atoms $\common{\signatureAtoms}$ the fully heterogeneous dynamic theory $\hero{\dynamicTheory}$ is a dynamic theory.
Proof rules and axioms from \Cref{fig:liftedDL:axioms} as well as axioms \ref{axiom:HR}, \ref{axiom:havoc}, and \ref{axiom:RR} carry over to $\hero{\dynamicTheory}$.
\end{isabelleTheorem}

Just as for the regular closure over programs of \emph{one} dynamic theory, the construction of a heterogeneous dynamic theory over \emph{two} theories would be entirely devoid of purpose if validity wouldn't carry over from the individual logics to the combined one.
Fortunately, we can show that for any \emph{pure} $\zero{\dynamicSignature}/\one{\dynamicSignature}$-formula the validity w.r.t. $\zero{\dynamicTheory}/\one{\dynamicTheory}$ entails the validity w.r.t. $\hero{\dynamicTheory}$:
\begin{isabelleLemma}{Heterogeneous Reduction}{dyn\_axiom\_R0,dyn\_axiom\_R1}
\label{isalem:hdl:reduction}
For any fully heterogeneous dynamic theory $\hero{\dynamicTheory}$ over arbitrary $\zero{\dynamicTheory},\one{\dynamicTheory}$ the following proof rules are sound:\\
\begin{minipage}{0.5\linewidth}
\begin{prooftree}
    \AxiomC{$\calculus[\zero{\dynamicTheory}] \schemaVarOne$}
    \LeftLabel{\axiomLabel{HR0}{HR0}\normalfont (HR0)}
    \RightLabel{ (assuming $\schemaVarOne\in\formulaSet{\zero{\dynamicSignature}}$)}
    \UnaryInfC{$\calculus[\hero{\dynamicTheory}] \schemaVarOne$}
\end{prooftree}
\end{minipage}
\begin{minipage}{0.5\linewidth}
\begin{prooftree}
    \AxiomC{$\vdash_{\one{\dynamicTheory}} \schemaVarOne$}
    \LeftLabel{\axiomLabel{HR1}{HR1}\normalfont (HR1)}
    \RightLabel{ (assuming $\schemaVarOne\in\formulaSet{\one{\dynamicSignature}}$)}
    \UnaryInfC{$\vdash_{\hero{\dynamicTheory}} \schemaVarOne$}
\end{prooftree}
\end{minipage}
\end{isabelleLemma}

\paragraph{Loop Convergence}
In order for the loop convergence rule to carry over as well, we require that our heterogeneous dynamic theory $\hero{\dynamicTheory}$ is inductively expressive.
To this end, we note that whenever one of the homogeneous dynamic theories $\zero{\dynamicTheory},\one{\dynamicTheory}$ is inductive, then the heterogeneous dynamic theory is inductive as well:
\begin{isabelleLemma}{Inductive Expressivity of $\hero{\dynamicTheory}$}{inductively\_expressive\_1,inductively\_expressive\_2}
\label{lem:hdl:inductive}
Let $\zero{\dynamicTheory}$ be inductively expressive, then there exists a (constructive) adjusted function $\hero{\universeToNat}$ such that $\hero{\dynamicTheory}$ is inductively expressive w.r.t. the variables $\zero{\integerVariables}$.
The same holds in case $\one{\dynamicTheory}$ is inductively expressive.
\end{isabelleLemma}
\new{S58}{Using this result, any dynamic theory can be made inductively expressive by combining it with an already inductively expressive theory.}
Depending on whether we use a loop convergence rule inherited from $\zero{\dynamicTheory}$ or from $\one{\dynamicTheory}$, we denote it as (C$^{(0)}$) or (C$^{(1)}$), respectively.

\paragraph{Program Rewriting.}
\chg{L12}{%
While not necessary for relative completeness,
axiom \ref{axiom:E} can also be used to rewrite \emph{homogeneous regular program constructs} (here in {\color{blue}blue} and {\color{orange}orange}) into \emph{fully heterogeneous regular program constructs} (here in {\color{magenta}magenta}) via the following identities (see \ifextension{\Cref{isathm:heterogeneous_rewriting}}{{%
\isaText{Theorem}~C.2%
}}%
):
}{
While not strictly necessary for relative completeness, it turns out that axiom \ref{axiom:E} is of particular appeal for the verification of heterogeneous systems in cases where our heterogeneous dynamic theory contains (at least) one homogeneous dynamic theory which is lifted with regular programs.
To demonstrate the utility of equational program rewriting, consider the program 
$
\alpha_{\text{mixed}}
~\equiv~
{\color{magenta} \big(}
{\color{orange} \programOne_1 ; \programOne_2}
{\color{magenta} ; }
{\color{blue} \left(\programTwo_1 \cup \programTwo_2\right)^*}
{\color{magenta} \big)^*}.
$
where we have marked \emph{homogeneous program constructs} in {\color{blue} blue} and {\color{orange} orange} while \emph{fully heterogeneous program constructs} are marked in {\color{magenta} magenta}.
In this instance, we would now have to take account of program compositions in three different proof calculi:
The proof calculus of {\color{orange} homogeneous dynamic theory 1} for the sequential composition ${\color{orange} ; }$ and
the proof calculus of {\color{blue} homogeneous dynamic theory 2} for ${\color{blue} \cup}$ and ${\color{blue} \left(\cdot\right)^*}$.
Additionally. we need the proof calculus of the {\color{magenta} heterogeneous dynamic theory} for ${\color{magenta} ; }$ and the outer loop ${\color{magenta} \left(\cdot\right)^*}$.
Likely, there are cases where a more elegant proof could be achieved by reasoning about all regular programs at the \emph{heterogeneous} level.
To this end, we prove an additional set of identities:}
\new{L13}{
\begin{align*}
    \katEq{{\color{blue} ? \left(\schemaVarOne\right)}}{
        ~\color{magenta} ? \left({\color{blue} \schemaVarOne}\right)
    }
    &&
    \katEq{{\color{blue} \schemaProgramOne \cup \schemaProgramTwo}}{
        \color{magenta}
        {\color{blue} \schemaProgramOne} \cup {\color{blue} \schemaProgramTwo}
    }
    &&
    \katEq{{\color{blue} \schemaProgramOne ; \schemaProgramTwo}}{
        \color{magenta}
        {\color{blue} \schemaProgramOne} ; {\color{blue} \schemaProgramTwo}
    }
    &&
    \katEq{{\color{blue} \left(\schemaProgramOne\right)^*}}{
        \color{magenta}\left(
        {\color{blue} \schemaProgramOne} \right)^*
    }
\end{align*}
}
\begin{textAtEnd}
\begin{isabelleTheorem}{Heterogeneous Rewriting}{HeterogeneousRewriting}
\label{isathm:heterogeneous_rewriting}
Let $\zero{\dynamicTheory},\one{\dynamicTheory}$ be two dynamic theories and let $\hero{\dynamicTheory}$ be the fully heterogeneous dynamic theory over $\regular{\left(\zero{\dynamicTheory}\right)}$ and $\one{\dynamicTheory}$.
Then the following identities hold where programs of $\hero{\dynamicTheory}$ are in {\color{magenta} magenta} and programs of $\regular{\left(\zero{\dynamicTheory}\right)}$ are in {\color{blue} blue}:
\begin{align*}
    \katEq{{\color{blue} ? \left(\schemaVarOne\right)}}{
        ~\color{magenta} ? \left({\color{blue} \schemaVarOne}\right)
    }
    &&
    \katEq{{\color{blue} \schemaProgramOne \cup \schemaProgramTwo}}{
        \color{magenta}
        {\color{blue} \schemaProgramOne} \cup {\color{blue} \schemaProgramTwo}
    }
    &&
    \katEq{{\color{blue} \schemaProgramOne ; \schemaProgramTwo}}{
        \color{magenta}
        {\color{blue} \schemaProgramOne} ; {\color{blue} \schemaProgramTwo}
    }
    &&
    \katEq{{\color{blue} \left(\schemaProgramOne\right)^*}}{
        \color{magenta}\left(
        {\color{blue} \schemaProgramOne} \right)^*
    }
\end{align*}
\end{isabelleTheorem}
\noindent
\new{S12}{A corresponding result can be derived if some dynamic theory $\zero{\dynamicTheory}$ is combined with the regular closure $\regular{\left(\one{\dynamicTheory}\right)}$.}
\end{textAtEnd}
Using these identities, \chg{L13}{%
we can turn a program
$
\alpha_{\text{mixed}}
~\equiv~
{\color{magenta} \big(}
{\color{orange} \programOne_1}
{\color{magenta} ; }
{\color{blue} \left(\programTwo_1 \cup \programTwo_2\right)^*}
{\color{magenta} \big)^*}.
$
into a program 
$%
\alpha_{\text{unmixed}}
~\equiv~
{\color{magenta} \big(
{\color{orange} \programOne_1}
;
\left({\color{blue} \programTwo_1} \cup {\color{blue} \programTwo_2}\right)^*
\big)^*}
$
by syntactically deriving that $\katEq{\alpha_{\text{mixed}}}{\alpha_{\text{unmixed}}}$.
}{we can derive that $\katEq{\alpha_{\text{mixed}}}{\alpha_{\text{unmixed}}
~\equiv~
{\color{magenta} \big(
{\color{orange} \programOne_1} ; {\color{orange}\programOne_2}
;
\left({\color{blue} \programTwo_1} \cup {\color{blue} \programTwo_2}\right)^*
\big)^*}
}$. %
}
\chg{L13}{
This result, in combination with axiom \ref{axiom:E}, formalizes the natural correspondence between homogeneous and heterogeneous regular programs and enables us to decompose a homogeneous programs.
}{
This allows us to \emph{atomatize} the appearances of homogeneous programs in the case where logics with regular program support are turned into a heterogeneous theory\chg{S10}{,
i.e.\ any}{:
Any} appearance of $\alpha_{\text{mixed}}$ in some formula can then be replaced by an appearance of $\alpha_{\text{unmixed}}$ using axiom (E).
While the idea that a homogeneous loop may be replaced by a heterogeneous loop (and vice-versa) may seem obvious,
Theorem X along with axiom (E) turns this knowledge into a set of proof rules that may be applied in practice.}

\section{Relative Completeness}
\label{sec:rel-complete}
The expressivity of heterogeneous dynamic logic, enabling us to reason about intertwined heterogeneous dynamics, would equally be its downfall if the expressivity were to keep us from verifying properties in practice.
To this end, we usually desire to formalize dynamic logics in a way that guarantees \emph{relative completeness}, i.e. the property that any valid formula can also \chg{S39}{be proved}{been proven} valid if we assume the availability of an oracle for first-order validity\chg{L14}{:}{.}
\begin{definition}[Relative Completeness]
\label{def:rel-complete:rel-complete}
\chg{L14}{A dynamic theory $\dynamicTheory$ is called \emph{relatively complete}}{%
Given a dynamic theory $\dynamicTheory$, we say it is \emph{relatively complete}} if there exists a calculus $\calculus$ such that
if for some formula $\formulaOne\in\formulaSet{\dynamicTheory}$ it holds that $\models_{\dynamicTheory} \formulaOne$ then also
$\Gamma_{\dynamicTheory} \calculus \formulaOne$ where
$\Gamma_{\dynamicTheory} = \left\{ \formulaTwoFOL \in \folFormulaSet{\dynamicTheory}~\middle|~ \models_{\dynamicTheory} \formulaTwoFOL\right\}$.
\end{definition}
\noindent
\looseness=-1
This property is interesting, because it guarantees that, using the given calculus, proving properties about the dynamic logic is ``no harder'' than proving properties about its underlying (first-order) data theory.
\del{L14}{In }\ifextension{\Cref{apx:rel-complete}}{Appendix B} \chg{L14}{introduces}{we present} a calculus of metaproperties \del{L14}{which allows us} to reason about relative completeness and related properties for lifted and combined dynamic logics.
\chg{L14}{%
Due to space constraints, we focus the exposition in the paper to central assumptions necessary to obtain relative completeness and then present \Cref{thm:rel-complete:hero-dl} which proves that, under common, reasonable assumptions, the heterogeneous dynamic theory over two relatively complete dynamic theories is once again relatively complete.
}{%
Due to space constraints, we focus the exposition of the paper to one central result, namely
the preservation of relative completeness under combination in heterogeneous dynamic theories\chg{S10}{.
We show that under}{:
Under} common, reasonable assumptions, for two dynamic theories with relatively complete proof calculi their combination with a fully heterogeneous dynamic theory's proof calculus is once again relatively complete:}

\subsection{Assumptions for Relative Completeness.}
\new{L14}{%
We begin by providing a succinct overview on
properties of a dynamic theory $\dynamicTheory=\dynamicTheoryTuple$ required to prove 
relative completeness
while deferring formal definitions to the appendix.}

\paragraph{Finite Support.}
\new{L14}{%
We require that $\dynamicTheory$ has \emph{finite support} (see \ifextension{\Cref{def:rel-complete:finite-support}}{Definition B.2}), i.e. that there exists a syntactic over-approximation $\boundVarsProgram$ of semantically bound variables (i.e. $\boundVarsSem[\signaturePrograms]\mleft(\programOne\mright) \subseteq \boundVarsProgram\mleft(\programOne\mright)$ for all programs $\programOne$) and that $\boundVarsProgram$ is always \emph{finite}, i.e. all programs only modify a finite set of variables.}

\paragraph{Gödel Expressivity.}
\new{L14}{%
\looseness=-1
Like many relative completeness proofs in dynamic logic~\cite{Platzer2012,beckert_dynamic_2006,harel_first-order_1979}, we require a first-order theory which is capable of encoding \emph{finite sequences} of domain values into a \emph{single value}.
The first example of this mechanism is due to Gödel~\cite{godel1931formal} who showed that any finite sequence of integer values can be injectively mapped to a single integer value.
A more general description of this encoding mechanism (here predicate $R$) was established by Parikh (first published by Harel~\cite[p. 29]{harel_first-order_1979}):
\[
\left(\forall x_1\dots x_n\right)\left(\exists y\right)\left(\forall w\right)\left(\forall i\right) \left(\left(i\in\mathbb{N} \land 1 \leq i \leq n\right) \Longrightarrow \left(R\mleft(w,i,y\mright) \equiv \left(w=x_i\right)\right)\right)
\]
We generalize this notion to our definition of state spaces and atoms here and call an inductively expressive theory \emph{Gödel expressive} (\ifextension{\Cref{def:rel-complete:goedel-expressive}}{Definition B.5}), if the domains of all its variables support this type of sequence encoding.
Assuming suitable first order theories, Gödel expressivity has previously been shown for multiple dynamic logics including interpreted integer first-order dynamic logic~\cite{harel_first-order_1979}, differential dynamic logic over reals~\cite{Platzer2012} and dynamic logic with non-rigid functions~\cite{beckert_dynamic_2006}.}

\paragraph{FOL Expressivity.}
\new{L14}{%
An additional assumption serving as bread and butter of many meaningful relative completeness proofs~\cite{Platzer2012,harel_first-order_1979,beckert_dynamic_2006} is the notion of \emph{FOL expressivity} (\ifextension{\Cref{def:rel-complete:fol-expressive}}{{Definition~B.3}}).
This property asserts that for any program $\programOne$ there exists a first-order \emph{rendition} formula $\programRendition{\programOne}$ that \emph{exactly} relates the pre and post state of $\programOne$'s variables.
In essence, this ensures that we can express the behavior of programs in the theory's first-order fragment.
To prove FOL expressivity, one often relies on Gödel expressiveness.
For example, \ifextension{\Cref{lem:rel-completeness:fol-expressiveness-regular}}{{Lemma~C.14}} proves that for any FOL expressive, Gödel expressive dynamic theory with finite support its regular closure is also FOL expressive.}

\paragraph{Communicating Theories.}
\new{L14}{%
\looseness=-1
While previous assumptions constrained individual theories,
we also require an assumption about the expressivity of heterogeneous atoms in $\common{\signatureAtoms}$.
To this end, \ifextension{\Cref{def:hdl:communicating}}{Definition B.6} ensures that heterogeneous theories adhering to it have a common first-order structure that is expressive enough to exchange integer values via atoms formalizing equality between integers of both theories.
Without communication, our renditions of heterogeneous programs cannot \emph{exactly} represent the progress of a heterogeneous loop as the individual theories would not be able to synchronously count the number of loop iterations.}

\subsection{Relative Completeness under Combination.}
\new{L14}{We prove \Cref{thm:rel-complete:hero-dl} which states that the fully heterogeneous dynamic theory over two relatively complete dynamic theories is again relatively complete (proof see \ifextension{\cpageref{proof:rel-complete:hero-dl}}{{extended version}}).}
\begin{theorem}[Relative Completeness of Fully Heterogeneous Dynamic Theories]
\label{thm:rel-complete:hero-dl}
Let $\zero{\dynamicTheory},\one{\dynamicTheory}$ be two relatively complete (\Cref{def:rel-complete:rel-complete}), FOL expressive (\ifextension{\Cref{def:rel-complete:fol-expressive}}{{Definition~B.3}}), Gödel expressive (\ifextension{\Cref{def:rel-complete:goedel-expressive}}{{Definition~B.5}}) dynamic theories with finite support (\ifextension{\Cref{def:rel-complete:finite-support}}{{Definition~B.2}})
that are communicating in $\preHero{\dynamicTheory}$ (\ifextension{\Cref{def:hdl:communicating}}{{Definition~B.6}}).
Then the fully heterogeneous dynamic theory $\hero{\dynamicTheory}$ with its proof calculus $\calculus[\hero{\dynamicTheory}]$ is relatively complete.
\end{theorem}

\section{Case Study}
\label{case_study_guarantee}
\chg{L9}{To demonstrate our approach, we apply HDL to the case study from \Cref{sec:exemplary_case_study_intro}. We }{%
To address the case study presented in Section X, we}%
first derive results about the individual two components.
Subsequently, we merge these results into a system level guarantee.
\new{L9}{We will use the homogeneous (atomic) equality predicate $t_1 \doteq t_2$ which states that the values of two terms $t_1,t_2$ are equal
as well as the homogeneous inequality predicates ($\leq$,$>$,$<$).}

\paragraph{Java.}
\looseness=-1
For the Java component \texttt{ctrl} we wish to verify that 
once the relative distance to the stop sign (given as \texttt{p}) becomes too small it must brake by choosing an appropriate acceleration.
Using JavaDL and the proof calculus implemented in the interactive theorem prover \KeY~\cite{DBLP:conf/javacard/Beckert00,KeYBook}, we can then derive the result in \Cref{lem:java_ctrl_correctness}.
We do not only prove properties on the\new{L2}{ stateful} result of \texttt{this.acc}, but also on the preservation of the heap structure\new{L2}{, the correctness of (possibly overflowing) integer arithmetic,} and the field values of \texttt{A}, \texttt{B}, and \texttt{T} (mechanized proof in \KeY{})\chg{L2}{. %
Note that these aspects of control software would be hard to model in \dL{} which allows only real-valued variables.
}{:}

\begin{lemma}[Correctness of \texttt{ctrl}]
\label{lem:java_ctrl_correctness}
The following \texttt{JavaDL} formula is valid:\footnote{For the actual proof we require further assumptions which we summarize as $\texttt{heap\_assumptions}$. This concerns the correct initialization of the heap, the correct instantiation of Java objects, etc. While this paper denotes object fields as variables (e.g. \texttt{A}), these values are technically obtained via a rigid function which we omit for clarity~\cite{KeYBook}.}
\begin{align*}
&\underbrace{
\left(
\begin{array}{l}
    \texttt{heap\_assumptions}~\land\\
    \texttt{A} > 0 \land \texttt{A}^-\doteq\texttt{A}~\land\\
    \texttt{B} < 0 \land \texttt{B}^-\doteq\texttt{B}~\land\\
    \texttt{T} > 0 \land \texttt{T}^-\doteq\texttt{T}
\end{array}
\right)
}_{\texttt{ctrlPre}}
 \implies
\modBox{\texttt{ctrl}}
\underbrace{
\left(
\begin{array}{l}
    \texttt{heap\_assumptions}~\land\\
    \texttt{B} \leq \texttt{this.acc} \leq \texttt{A}~\land\\
    \left(\texttt{brake\_cond} \implies \texttt{this.acc}\doteq\texttt{B}\right)~\land\\
    \texttt{A}^-\doteq\texttt{A} \land \texttt{B}^-\doteq\texttt{B} \land \texttt{T}^-\doteq\texttt{T}
\end{array}
\right)
}_{\texttt{ctrlPost}}
\end{align*}
with \texttt{brake\_cond} abbreviating $
10^4(\texttt{v}+1)^2 + (\texttt{A}-\texttt{B})\texttt{T}(\texttt{A}\texttt{T} + 200(\texttt{v}+1)) > -2 \cdot 10^4\texttt{B}(\texttt{p}-1)$.
\end{lemma}

\paragraph{Differential Dynamic Logic.}
Using the \del{}{proof }calculus implemented in the theorem prover \ac{keymaerax},
we can prove statements about hybrid programs.
For example, we can prove the \chg{L9}{validity of the }{following }formula\chg{L9}{ in \Cref{lem:dl_env_correctness}}{ valid} which shows (under some assumptions) that  \textit{x} will always be at most \textit{s} (the stop sign's position) and an additional invariant after execution of \textit{env}
(proof mechanized in \ac{keymaerax}).
\new{L9}{While \dL{} provides us with an effective infrastructure to reason about the environment's differential equations, it is unfit to reason about Java's heap and integer arithmetic.}
\begin{lemma}[Evolution of \textit{env}]
\label{lem:dl_env_correctness}
The following formula is valid in \dL{}:
\begin{align*}
&\underbrace{\left(
\begin{array}{l} 
\textit{A} > 0~\land~
\textit{B} > 0~\land~
\textit{T} > 0~\land\\
\mathrm{acc\_assumptions}~\land~
\textit{x} + \textit{v}^2/(2\textit{B}) \leq \textit{s} 
\end{array}
\right)}_{\textit{envPre}}
\implies
\modBox{\textit{env}}
\underbrace{\big(
    \textit{x}\leq\textit{s} \land
    \overbrace{\textit{x} + \textit{v}^2/(2\textit{B}) \leq \textit{s}}^{\text{invariant}}
\big)}_{\textit{envPost}}
\end{align*}
with the following abbreviation for $\mathrm{acc\_assumptions}$:
\begin{equation*}
\left(
    -\textit{B} \leq \textit{a} \land  \textit{a} \leq \textit{A} \land
    \left(
        \left(\textit{x} + {\left(\textit{v}^2\right)}/{2\textit{B}} + \left({\textit{A}}/{\textit{B}} + 1\right)\left({\textit{A}\textit{T}^2}/{2} +\textit{T}\textit{v}\right)  > \textit{s}\right)
        \implies
        \textit{a} \doteq -\textit{B}
    \right)
\right)\lor
\left(
    \textit{v} \doteq 0 \land \textit{a} \doteq 0
\right)
\end{equation*}
\end{lemma}

\paragraph{Heterogeneous Safety.}
Both \dL{} and JavaDL are instantiations of dynamic theories and we can hence construct a fully heterogeneous dynamic theory $\dynamicTheory^{(d\texttt{Java}\mathcal{L})}$, here called \emph{differential JavaDL}, over $\dynamicTheory^{(\dL{})}$ and $\dynamicTheory^{(\texttt{JavaDL})}$.
To this end, we add heterogeneous atoms to $\common{\signatureAtoms}$ with respective evaluation $\common{\atomEval}$ for any integer variable \texttt{v} in JavaDL and any real arithmetic variable \textit{w} in \dL{}:
\begin{itemize}
    \item $\intToReal\left(\texttt{v},\textit{w}\right)$ 
    with
    $\stateOne\in\common{\atomEval}\mleft(\intToReal\left(\texttt{v},\textit{w}\right)\mright)$ iff 
    $\mathbb{Z} \ni \hero{\variableEval}\left(\stateOne,\texttt{v}\right) =\hero{\variableEval}\left(\stateOne,\textit{w}\right)$
    \item $\mathrm{round}\left(\texttt{v},\textit{w}\right)$ 
    with
    $\stateOne\in\common{\atomEval}\mleft(\mathrm{round}\left(\texttt{v},\textit{w}\right)\mright)$ iff
    $\hero{\variableEval}\left(\stateOne,\texttt{v}\right) \in \mathbb{Z}$ and
    $\hero{\variableEval}\left(\stateOne,\texttt{v}\right) - 0.5 <
    \hero{\variableEval}\left(\stateOne,\textit{w}\right)$ \new{S44}{as well as}
    $ \hero{\variableEval}\left(\stateOne,\textit{w}\right)
    \leq \hero{\variableEval}\left(\stateOne,\texttt{v}\right) + 0.5$
\end{itemize}
\chg{L9}{The}{These} atoms allow us to explicitly model the communication between $\dynamicTheory^{(\dL{})}$ and $\dynamicTheory^{(\texttt{JavaDL})}$
\chg{L9}{and are}{which is} all we need to model the full heterogeneous system in $\dynamicTheory^{(d\texttt{Java}\mathcal{L})}$.
After executing the controller we \chg{L9}{set}{anonymize} the \dL{} variable $a$\new{L9}{ to an arbitrary value} and \chg{L9}{then assert it has the value}{set it to the value} of the Java variable in \texttt{this.acc} via a check (using the atomic formula $\intToReal$).
Based on the \chg{L9}{new}{updated} acceleration we execute the environment \textit{env} and \del{L9}{similarly }retrieve the \del{L9}{new }position and velocity\new{L9}{ via $\mathrm{round}$}.
This can be formalized as
\begin{align*}
&\texttt{ctrl};~\hero{\alpha} \equiv~
\texttt{ctrl};~
\left(
\textit{a} \coloneqq *;~
?\left(
\intToReal\left(\texttt{this.acc},a*100\right)
\right);~
\textit{env};~
\texttt{p}\coloneqq *;~
\texttt{v} \coloneqq *;~
?\left(
\mathrm{coupling}
\right)
\right)
\end{align*}
\looseness=-1
where we abbreviate $\mathrm{coupling}\equiv\left(\mathrm{round}\left(\texttt{p}, 100\left(\textit{s}-\textit{x}\right)\right) \land
\mathrm{round}\left(\texttt{v}, 100\textit{v}\right)\right)$.
\new{L9}{%
For our safety proof, we define the following coupled precondition which defines admissible initial states of our heterogeneous system:
\begin{align*}
    \mathrm{coupledPre} ~\equiv~&
    \left(\texttt{ctrlPre} ~\land~ \phi_2 \land \phi_3\right) &
    \phi_2~\equiv~
& \left( \textit{A}>0 ~\land~ \textit{B}>0 ~\land~ \textit{T}>0 ~\land~ x + v^2/(2*B) \leq s\right)\\
    \phi_3~\equiv~
&
    \rlap{$\big(\intToReal\left(\texttt{A}^-, 100\textit{A}\right)  ~\land~
     \intToReal\left(\texttt{B}^-, -100\textit{B}\right)~\land~
     \intToReal\left(\texttt{T}^-, 100*\textit{T}\right)~\land~\mathrm{coupling}\big)$}%
\end{align*}
Note that 
$\texttt{ctrlPre}\in\formulaSet{\dynamicTheory^{(\texttt{JavaDL})}}$,
$\phi_2\in\formulaSet{\dynamicTheory^{(\dL{})}}$, and
$\phi_3\in\formulaSet{\dynamicTheory^{(d\texttt{Java}\mathcal{L})}}$,
i.e.\ $\texttt{ctrlPre}$ and $\phi_2$ constrain the homogeneous states, while $\phi_3$ synchronizes them.
}%
Using our heterogeneous calculus we can then prove that our heterogeneous system never overshoots the stop sign:
\begin{lemma}[Safety of the Heterogeneous System]
\label{lem:case-study-result}
The following formula is valid in $\dynamicTheory^{(d\texttt{Java}\mathcal{L})}$:
\begin{align}
    \label{eq:valid_formula_example}
    \mathrm{coupledPre}
    \implies
    \modBox{\left(\texttt{ctrl};~\hero{\alpha}\right)^*}
    x \leq s
\end{align}
\end{lemma}
\begin{proof}[Proof Sketch]
We provide a sketch for the proof of \Cref{lem:case-study-result} below (Proof A and B). %
Beyond the heterogeneous calculus rules presented above, we use the rules from \Cref{fig:additional_axioms}
which are derivable in our calculus (see \ifextension{\Cref{sec:additional-proof-rules}}{Appendix A}).

\begin{minipage}{0.6\textwidth}
\textbf{\footnotesize Proof A}\\
\resizebox{\linewidth}{!}{%
\begin{minipage}{10cm}
\begin{prooftree}
\footnotesize
\AxiomC{\textit{see }\textbf{Proof B}}
\UnaryInfC{%
$
\mathrm{coupledPre}
\implies
\modBox{
\texttt{ctrl}}\modBox{
\hero{\alpha}
}
\mathrm{coupledPre}
$}
\LeftLabel{\ref{axiom:seq}}
\UnaryInfC{$
\mathrm{coupledPre}
\implies
\modBox{
\texttt{ctrl};~
\hero{\alpha}
}
\mathrm{coupledPre}
$}
\LeftLabel{\ref{axiom:indMain}}
\UnaryInfC{$
\mathrm{coupledPre}
\implies
\modBox{\left(\texttt{ctrl};~\hero{\alpha}\right)^*}
\mathrm{coupledPre}
$}
\AxiomC{$*$ (FOL)}
\UnaryInfC{$B>0 \land x+v^2/2B \leq s \rightarrow x \leq s$}
\UnaryInfC{$\cdots$}
\UnaryInfC{$
\mathrm{coupledPre}
\implies
x \leq s
$}
\LeftLabel{\ref{axiom:MRMain}}
\BinaryInfC{$
\mathrm{coupledPre}
\implies
\modBox{\left(\texttt{ctrl};~\hero{\alpha}\right)^*}
x \leq s
$}
\end{prooftree}
\end{minipage}
}
\end{minipage}

\textbf{\footnotesize Proof B}\\
\vspace*{-1cm}\begin{prooftree}
\footnotesize
\AxiomC{$*$ (\KeY: \Cref{lem:java_ctrl_correctness})}
\LeftLabel{\ref{axiom:HR0}}
\UnaryInfC{$\texttt{ctrlPre}
\implies
\modBox{\texttt{ctrl}} \texttt{ctrlPost}$}
\LeftLabel{\ref{axiom:FiMain}}
\UnaryInfC{$\texttt{ctrlPre} \land \phi_2
\implies
\modBox{\texttt{ctrl}} \left(\texttt{ctrlPost} \land \phi_2\right)$}
\AxiomC{$*$ (FOL)}
\UnaryInfC{%
$\mathrm{coupledPre}
\implies
\phi_3$}
\LeftLabel{\ref{axiom:V}}
\UnaryInfC{%
$\mathrm{coupledPre}
\implies
\modBox{\texttt{ctrl}} \phi_3$}
\LeftLabel{\ref{axiom:boxAndMain}}
\BinaryInfC{%
\parbox{6cm}{
$
\mathrm{coupledPre}
\implies
\modBox{\text{\texttt{ctrl}}} \left((\text{\texttt{ctrlPost}} \land \phi_2) \land \phi_3\right)
$
\vspace*{-0.3cm}
}
}
\def\extraVskip{2pt}
\AxiomC{\parbox{2cm}{
\textit{using \Cref{lem:dl_env_correctness}}\\
\textit{see \ifextension{\Cref{apx:proof_example}}{Appendix C.3}}
}}
\UnaryInfC{\parbox{3cm}{$
\left((\texttt{ctrlPost} \land \phi_2) \land \phi_3\right)$\\
$
\implies
\modBox{
\hero{\alpha}
}
\mathrm{coupledPre}
$}}
\LeftLabel{\ref{axiom:MRMain}}
\BinaryInfC{%
$
\mathrm{coupledPre}
\implies
\modBox{
\texttt{ctrl}}\modBox{
\hero{\alpha}
}
\mathrm{coupledPre}
$}
\end{prooftree}
\vspace*{-0.5cm}
\end{proof}
\begin{figure}[t]
\fbox{\begin{minipage}[t]{0.97\textwidth}\vspace{0pt}
\def\fCenter{~\vdash~}
\hspace*{-0.5em}%
\begin{minipage}[t]{0.53\textwidth}
\begin{axiomBlock}
\refAxiom{boxAndMain}{\ensuremath{\left[\right]\land}}{
    $
    \left(
    \modBox{\schemaProgramOne}{\schemaVarOne} \land \modBox{\schemaProgramOne}{ \schemaVarTwo}
    \right)
    \iff
    \modBox{\schemaProgramOne}{\left(\schemaVarOne \land \schemaVarTwo\right)}
    $}
\refAxiom{FiMain}{\ensuremath{F^{(i)}}}{
    $
    \left(\schemaVarTwo \implies \modBox{\schemaProgramOne}{\schemaVarThree}\right)
    \implies
    \left(
    \schemaVarOne \land \schemaVarTwo
    \implies
    \modBox{\schemaProgramOne}{\left(\schemaVarThree \land \schemaVarOne\right)}
    \right)
    $\\
    \hspace*{-0.5cm}\textit{where $\schemaProgramOne\in\signaturePrograms^{(1-i)},\schemaVarOne \in \formulaSet{\dynamicSignature^{(i)}}$ 
    and $i \in \left\{0,1\right\}$}
    }
\end{axiomBlock}
\end{minipage}%
\hspace*{-1em}%
\begin{minipage}[t]{0.27\textwidth}
\begin{prooftree}
    \AxiomC{$\modBox{\schemaProgramOne}{\schemaVarOne}$}
    \AxiomC{\hspace*{-1em}$\left(\schemaVarOne \implies \schemaVarTwo\right)$}
    \LeftLabel{\axiomLabel{MRMain}{MR} \normalfont (MR)}
    \BinaryInfC{$\modBox{\schemaProgramOne}{\schemaVarTwo}$}
\end{prooftree}
\end{minipage}%
\hspace*{0.5em}%
\begin{minipage}[t]{0.2\textwidth}
\begin{prooftree}
    \AxiomC{$\schemaVarOne \implies \modBox{\schemaProgramOne}{\schemaVarOne}$}
    \LeftLabel{\axiomLabel{indMain}{ind} \normalfont (ind)}
    \UnaryInfC{$\schemaVarOne \implies \modBox{\schemaProgramOne^*}{\schemaVarOne}$}
\end{prooftree}
\end{minipage}
\end{minipage}}
\caption{Additional proof rules derived from previously defined axioms (see \ifextension{\Cref{sec:additional-proof-rules}}{Appendix A}).}
\label{fig:additional_axioms}
\end{figure}%
\noindent
\looseness=-1
The validity of this formula guarantees safety ($x\leq s$) for any number of control-environment loop iterations assuming $\mathrm{coupledPre}$\del{L9}{ (see Appendix X)}.
First, note that there is no straight forward way to even specify this desired behavior without HDL as it represents a system level guarantee only emerging from the looped interaction of Java and the hybrid program.
Secondly, note that our calculus proves this result by \emph{decomposing} proof obligations until the \texttt{JavaDL} and \dL{} lemmas above are applicable.
This underscores the power of our logic to state and prove statements about systems that do not neatly live in the world of a single dynamic theory, but require their interaction.

\paragraph{Limitations}
\new{L15}{
Our case study demonstrates that practical proofs reach a scale where pen-and-paper formalization alone becomes infeasible.
In future work, we plan to extend our Isabelle formalization into a practical tool for HDL proofs.
Our approach is also fundamentally limited by its definition of a dynamic theory.
For example, the validity of Barcan's axiom \ref{axiom:B} implies that a variable's domain does not grow or shrink. However, this limitation can typically be circumvented by introducing predicates stating whether an object actually exists~\cite{DBLP:journals/corr/abs-1206-3357}.
}

\section{Conclusion}
\looseness=-1
This work introduces \emph{Heterogeneous Dynamic Logic}, a compositional framework for the verification of heterogeneous programs.
Similarly to how Satisfiability Modulo Theories modularly combines data logics to construct more expressive, combined first-order theories, Heterogeneous Dynamic Logic modularly combines program logics to construct more expressive, combined heterogeneous program logics.
The resulting heterogeneous dynamic theories enable not only the \emph{specification} of otherwise hard to formalize properties (see \Cref{sec:exemplary_case_study_intro,case_study_guarantee}),
but also their \emph{verification} by providing a proof calculus that decomposes problems in a manner that is compatible with existing homogeneous verification infrastructure -- including, both, classical dynamic logic calculi and KAT-based equational reasoning.
\Cref{sec:rel-complete} shows, under common assumptions, that heterogeneous proof calculi inherit not just rules and axioms but also relative completeness.
This proves that verifying heterogeneous programs is no harder than verifying properties about the homogeneous programs and their shared first-order structure (i.e., their data logic).
Our proof theory of heterogeneous systems will not only enable the combination of existing \chg{S45}{verification methodologies}{verificaiton methodolgies}, but also serves as a vehicle for future dynamic logic or KAT-based verification techniques, which can seamlessly integrate with the existing ecosystem through HDL.

\section*{Data Availability}
We provide our extensible Isabelle formalization on Zenodo~\cite{samuel_teuber_2026_19075933}.

\section*{Acknowledgments}
This work was supported by funding from the pilot program Core-Informatics of the Helmholtz Association (HGF) and by an Alexander von Humboldt Professorship.

\printbibliography

\ifextension{
\clearpage
\appendix

\section{Additional Proof Rules}
\label{sec:additional-proof-rules}

\subsection{Additional Proof Rules for Dynamic Theories}
\FloatBarrier
\begin{figure}[t]
\begin{minipage}{0.65\textwidth}
\begin{axiomBlock}
    \refAxiom{boxAnd}{\ensuremath{\left[\right]\land}}{
    $
    \left(
    \modBox{\schemaProgramOne}{\schemaVarOne} \land \modBox{\schemaProgramOne}{ \schemaVarTwo}
    \right)
    \iff
    \modBox{\schemaProgramOne}{\left(\schemaVarOne \land \schemaVarTwo\right)}
    $}
    \refAxiom{KDiamond}{\ensuremath{K_{\langle\rangle}}}{
    $
    \modBox{\schemaProgramOne}\left(\schemaVarOne \implies \schemaVarTwo\right) \implies
    \modDia{\schemaProgramOne}{\schemaVarOne} \implies \modDia{\schemaProgramOne}{\schemaVarTwo}
    $}
    \refAxiom{MPDiamond}{\ensuremath{MP_{\langle\rangle}}}{
    $
    \modDia{\schemaProgramOne}{\schemaVarTwo} \land
    \fa{\variableVecOne}{
    \left(\schemaVarTwo \implies \schemaVarOne\right)}
    \implies
    \modDia{\schemaProgramOne}{\schemaVarOne}
    $\\
    where $\variableVecOne \supseteq \boundVarsSem\mleft(\schemaProgramOne\mright)$
    }
\end{axiomBlock}
\end{minipage}%
\begin{minipage}{0.35\textwidth}
\begin{prooftree}
    \AxiomC{$\schemaVarOne \implies \schemaVarTwo$}
    \LeftLabel{\axiomLabel{M}{M}\normalfont (M)}
    \UnaryInfC{$\modBox{\schemaProgramOne}{\schemaVarOne}
    \implies
    \modBox{\schemaProgramOne}{\schemaVarTwo}$}
\end{prooftree}
\begin{prooftree}
    \AxiomC{$\modBox{\schemaProgramOne}{\schemaVarOne}$}
    \AxiomC{$\left(\schemaVarOne \implies \schemaVarTwo\right)$}
    \LeftLabel{\axiomLabel{MR}{MR} \normalfont (MR)}
    \BinaryInfC{$\modBox{\schemaProgramOne}{\schemaVarTwo}$}
\end{prooftree}
\end{minipage}
    \caption{Additional proof rules for Dynamic Theories. All rules pictured here are derivable from the axioms of Dynamic Logic established in \Cref{sec:DL}.}
    \label{fig:DL_rules_helper}
\end{figure}

\begin{lemma}[Helper Rules]
    \label{lem:DL_rules_helper}
    The axioms \ref{axiom:boxAnd}, \ref{axiom:KDiamond}, and \ref{axiom:MPDiamond} as well as the proof rules \ref{axiom:M} and \ref{axiom:MR} in \Cref{fig:DL_rules_helper} are sound and can be derived from the proof rules of Dynamic Theories established in \Cref{sec:DL}.
\end{lemma}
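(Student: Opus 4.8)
The plan is to derive all five items by purely propositional and first-order manipulation on top of the elementary rules \ref{axiom:G}, \ref{axiom:K}, and \ref{axiom:V} of \Cref{fig:elementary-axioms}; soundness then comes essentially for free, since each derived rule only composes steps already known to be sound (\Cref{thm:elementary:soundness}) with the assumed sound, propositionally/first-order complete base calculus. The workhorse is the monotonicity rule \ref{axiom:M}: from $\calculus\schemaVarOne\implies\schemaVarTwo$, apply \ref{axiom:G} to get $\calculus\modBox{\schemaProgramOne}{(\schemaVarOne\implies\schemaVarTwo)}$, instantiate \ref{axiom:K} as $\modBox{\schemaProgramOne}{(\schemaVarOne\implies\schemaVarTwo)}\implies(\modBox{\schemaProgramOne}{\schemaVarOne}\implies\modBox{\schemaProgramOne}{\schemaVarTwo})$, and discharge by modus ponens. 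Rule \ref{axiom:MR} is then immediate: apply \ref{axiom:M} to the premise $\calculus\schemaVarOne\implies\schemaVarTwo$ and chain the result with $\calculus\schemaVarThree\implies\modBox{\schemaProgramOne}{\schemaVarOne}$ propositionally.

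For \ref{axiom:boxAnd}, the direction $\modBox{\schemaProgramOne}{(\schemaVarOne\land\schemaVarTwo)}\implies\modBox{\schemaProgramOne}{\schemaVarOne}\land\modBox{\schemaProgramOne}{\schemaVarTwo}$ follows by applying \ref{axiom:M} to the tautologies $\schemaVarOne\land\schemaVarTwo\implies\schemaVarOne$ and $\schemaVarOne\land\schemaVarTwo\implies\schemaVarTwo$ and combining propositionally, while the converse follows by applying \ref{axiom:G} to $\schemaVarOne\implies(\schemaVarTwo\implies\schemaVarOne\land\schemaVarTwo)$ and then \ref{axiom:K} twice. For \ref{axiom:KDiamond}, unfold $\modDia{\schemaProgramOne}{\schemaVarOne}$ as $\neg\modBox{\schemaProgramOne}{\neg\schemaVarOne}$; it then suffices to derive $\modBox{\schemaProgramOne}{(\schemaVarOne\implies\schemaVarTwo)}\implies(\modBox{\schemaProgramOne}{\neg\schemaVarTwo}\implies\modBox{\schemaProgramOne}{\neg\schemaVarOne})$, which comes from applying \ref{axiom:G} to the tautology $(\schemaVarOne\implies\schemaVarTwo)\implies(\neg\schemaVarTwo\implies\neg\schemaVarOne)$ and then \ref{axiom:K} twice, after which a propositional rearrangement recovers the stated diamond form.

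The main obstacle is \ref{axiom:MPDiamond}, where the side condition $\variableVecOne\supseteq\boundVarsSem\mleft(\schemaProgramOne\mright)$ must be turned into an application of \ref{axiom:V}. First I would show $\freeVarsSem\mleft(\fa{\variableVecOne}{(\schemaVarTwo\implies\schemaVarOne)}\mright)\cap\variableVecOne=\emptyset$ — a routine consequence of the semantics of $\forall$ together with \Cref{isaLem:coincidence_formulas} (the truth of $\fa{v}{F}$ does not depend on $v$, applied componentwise along the vector) — so that, since $\boundVarsSem\mleft(\schemaProgramOne\mright)\subseteq\variableVecOne$, the side condition of \ref{axiom:V} is met and we obtain $\calculus\fa{\variableVecOne}{(\schemaVarTwo\implies\schemaVarOne)}\implies\modBox{\schemaProgramOne}{\fa{\variableVecOne}{(\schemaVarTwo\implies\schemaVarOne)}}$. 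Since $\fa{\variableVecOne}{(\schemaVarTwo\implies\schemaVarOne)}\implies(\schemaVarTwo\implies\schemaVarOne)$ is a valid first-order instantiation (take the identity reassignment) and hence derivable, \ref{axiom:M} gives $\modBox{\schemaProgramOne}{\fa{\variableVecOne}{(\schemaVarTwo\implies\schemaVarOne)}}\implies\modBox{\schemaProgramOne}{(\schemaVarTwo\implies\schemaVarOne)}$, so chaining yields $\calculus\fa{\variableVecOne}{(\schemaVarTwo\implies\schemaVarOne)}\implies\modBox{\schemaProgramOne}{(\schemaVarTwo\implies\schemaVarOne)}$. Composing with the already-derived instance $\modBox{\schemaProgramOne}{(\schemaVarTwo\implies\schemaVarOne)}\implies(\modDia{\schemaProgramOne}{\schemaVarTwo}\implies\modDia{\schemaProgramOne}{\schemaVarOne})$ of \ref{axiom:KDiamond} and rearranging propositionally delivers $\modDia{\schemaProgramOne}{\schemaVarTwo}\land\fa{\variableVecOne}{(\schemaVarTwo\implies\schemaVarOne)}\implies\modDia{\schemaProgramOne}{\schemaVarOne}$, as required. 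The only genuinely nontrivial point is this free-variable computation that unlocks \ref{axiom:V}; everything else is bookkeeping with \ref{axiom:G}, \ref{axiom:K}, and propositional logic, and soundness of each rule follows at once from soundness of its ingredients.
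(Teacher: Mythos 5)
Your proposal is correct and follows essentially the same route as the paper's proof: \ref{axiom:M} from \ref{axiom:G}+\ref{axiom:K}, \ref{axiom:boxAnd} from the projection/pairing tautologies, \ref{axiom:KDiamond} by contraposition under the box and duality, \ref{axiom:MR} by a cut plus \ref{axiom:M}, and \ref{axiom:MPDiamond} by applying \ref{axiom:V} to $\fa{\variableVecOne}{\left(\schemaVarTwo\implies\schemaVarOne\right)}$ (whose semantically free variables avoid $\boundVarsSem\mleft(\schemaProgramOne\mright)$ precisely because $\variableVecOne\supseteq\boundVarsSem\mleft(\schemaProgramOne\mright)$), instantiating under the modality, and finishing with \ref{axiom:KDiamond}. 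Your explicit justification of the free-variable side condition is the one point the paper leaves implicit; otherwise the two arguments coincide.
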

\begin{proof}
The proofs presented here are based on proofs for Modal Logic from the literature~\cite{cresswell2012new}.\\
The rule \ref{axiom:boxAnd} is sound and can be derived:\\
$\Rightarrow$\\
    \begin{prooftree}

        \AxiomC{$*$}
        \UnaryInfC{$\schemaVarOne\implies\schemaVarTwo\implies\left(\schemaVarOne\land\schemaVarTwo\right)$}
        \LeftLabel{\ref{axiom:G}}
        \UnaryInfC{$\modBox{\schemaProgramOne}{\left(\schemaVarOne\implies\schemaVarTwo\implies\left(\schemaVarOne\land\schemaVarTwo\right)\right)}$}
        \LeftLabel{\ref{axiom:K}}
        \UnaryInfC{$\modBox{\schemaProgramOne}{\schemaVarOne}\implies\modBox{\schemaProgramOne}{\left(\schemaVarTwo\implies\left(\schemaVarOne\land\schemaVarTwo\right)\right)}$}
        \LeftLabel{\ref{axiom:K}}
        \UnaryInfC{$\modBox{\schemaProgramOne}{\schemaVarOne} \implies \left(\modBox{\schemaProgramOne}{\schemaVarTwo}\implies\modBox{\schemaProgramOne}{\left(\schemaVarOne\land\schemaVarTwo\right)}\right)$}
    \end{prooftree}
    Elementary logic transformations yield that this is equivalent to
    $\left(\modBox{\schemaProgramOne}{\schemaVarOne} \land \modBox{\schemaProgramOne}{\schemaVarTwo}\right)\implies\modBox{\schemaProgramOne}{\left(\schemaVarOne\land\schemaVarTwo\right)}$.\\
$\Leftarrow$\\
First, we show the property for $\phi$; subsequently the same proof can be repeated for $\psi$ and subsequently be joined:
    \begin{prooftree}
        \AxiomC{$*$}
        \UnaryInfC{$\schemaVarOne \land \schemaVarTwo \implies \schemaVarOne$}
        \LeftLabel{\ref{axiom:G}}
        \UnaryInfC{$\modBox{\schemaProgramOne}{\left(\schemaVarOne \land \schemaVarTwo \implies \schemaVarOne\right)}$}
        \LeftLabel{\ref{axiom:K}}
        \UnaryInfC{$\modBox{\schemaProgramOne}{\left(\schemaVarOne \land \schemaVarTwo\right)} \implies \modBox{\schemaProgramOne}{\schemaVarOne}$}
    \end{prooftree}
The rule \ref{axiom:M} is sound and can be derived:
\begin{prooftree}
    \AxiomC{$ \left(\schemaVarOne \implies \schemaVarTwo\right)$}
    \LeftLabel{\ref{axiom:G}}
    \UnaryInfC{$ \modBox{\schemaProgramOne}{\left(\schemaVarOne \implies \schemaVarTwo\right)}$}
    \LeftLabel{\ref{axiom:K}}
    \UnaryInfC{$ \modBox{\schemaProgramOne}{\schemaVarOne} \implies \modBox{\schemaProgramOne}{\schemaVarTwo}$}
\end{prooftree}
The proof for \ref{axiom:KDiamond} is based on work by Platzer~\cite[Appendix B.4]{Platzer2012}:
We first obtain $\modBox{\schemaProgramOne} \left(\neg\schemaVarTwo \implies \neg\schemaVarOne\right)\implies \left(\modBox{\schemaProgramOne}\neg\schemaVarTwo \implies \modBox{\schemaProgramOne}\neg\schemaVarOne\right)$ via \ref{axiom:K}.
Via propositional reasoning this can be rewritten as:
$\modBox{\schemaProgramOne} \left(\neg\schemaVarTwo \implies \neg\schemaVarOne\right)\implies \left(\neg\modBox{\schemaProgramOne}\neg\schemaVarOne \implies \neg\modBox{\schemaProgramOne}\neg\schemaVarTwo\right)$.
Via duality this is the same as 
$\modBox{\schemaProgramOne} \left(\neg\schemaVarTwo \implies \neg\schemaVarOne\right)\implies \left(\modDia{\schemaProgramOne}{\schemaVarOne} \implies \modDia{\schemaProgramOne}{\schemaVarTwo}\right)$.
Via $\left(\schemaVarOne \implies \schemaVarTwo\right) \implies \left(\neg\schemaVarTwo \implies \neg\schemaVarOne\right)$ and \ref{axiom:G} with \ref{axiom:K} we derive that $\modBox{\schemaProgramOne}\left(\schemaVarOne\implies\schemaVarTwo\right) \implies \left(\modDia{\schemaProgramOne}{\schemaVarOne} \implies \modDia{\schemaProgramOne}{\schemaVarTwo}\right)$.

The rule \ref{axiom:MR} is sound and can be derived by cutting $\modBox{\schemaProgramOne}{\schemaVarOne}$ into the proof and subsequently applying \ref{axiom:M} (proven sound above).\\

For the proof of \ref{axiom:MPDiamond} we start with the formulas $\modDia{\schemaProgramOne}{\schemaVarTwo}$ and $\fa{\variableVecOne}{\left(\schemaVarTwo \implies \schemaVarOne\right)}$.
Via \ref{axiom:V} (all bound variables of $\schemaProgramOne$ are not free in the formula) we get that the latter formula implies 
$\modBox{\schemaProgramOne} \fa{\variableVecOne}{\left(\schemaVarTwo \implies \schemaVarOne\right)}$.
We can then instantiate the variables in $\variableVecOne$ with their respective non-bound versions which yields 
$\modBox{\schemaProgramOne} \left(\schemaVarTwo \implies \schemaVarOne\right)$.
Via \ref{axiom:KDiamond} (proven above) we then get $\modDia{\schemaProgramOne}{\schemaVarTwo} \implies \modDia{\schemaProgramOne}{\schemaVarOne}$.
Therefore, the following formula is valid: $\modDia{\schemaProgramOne}{\schemaVarTwo} \land \left(\modDia{\schemaProgramOne}{\schemaVarTwo} \implies \modDia{\schemaProgramOne}{\schemaVarOne}\right) \implies \modDia{\schemaProgramOne}{\schemaVarOne}$.
This in turn implies $\modDia{\schemaProgramOne}{\schemaVarOne}$.

\end{proof}

\begin{lemma}[Pullback Axiom]
The following axiom is sound for any dynamic theory $\dynamicTheory$ and derivable from the axioms and proof rules in \Cref{sec:DL}:
    \begin{axiomBlock}
    \refAxiom{PB}{PB}{
        $
        \fa{\variableVecOneNext}{
        \left(\schemaVarTwo \implies \modDia{\schemaProgramOne}{\schemaVarThree}\right)}
        \land
        \ex{\variableVecOneNext}{
        \left(
        \schemaVarTwo \land
        \fa{\variableVecOne}{
        \left(\schemaVarThree \implies \schemaVarOne\right)}
        \right)}
        \implies
        \left(
        \modDia{\schemaProgramOne}{\schemaVarOne}
        \right)
        $
    \hfill
    $
    \begin{array}{ll}
         \variableVecOne &\supseteq \left(\boundVarsSem\mleft(\schemaProgramOne\mright) \cup \freeVarsSem\left(\modDia{\schemaProgramOne}{\schemaVarOne}\right)\right)\\
    \variableVecOneNext &= \freeVarsSem\left(\schemaVarTwo\land\schemaVarThree\right) \setminus \variableVecOne
    \end{array}
    $}
    \end{axiomBlock}
\end{lemma}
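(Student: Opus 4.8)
The plan is to reduce the axiom, by essentially pure first-order reasoning, to a single instance of the modus-ponens-under-diamond rule \ref{axiom:MPDiamond} (itself derived in \Cref{lem:DL_rules_helper} from \ref{axiom:V} and \ref{axiom:KDiamond}). Abbreviate the two conjuncts of the antecedent as $A_1 \equiv \fa{\variableVecOneNext}{(\schemaVarTwo \implies \modDia{\schemaProgramOne}{\schemaVarThree})}$ and $A_2 \equiv \ex{\variableVecOneNext}{(\schemaVarTwo \land \fa{\variableVecOne}{(\schemaVarThree \implies \schemaVarOne)})}$; the goal is $\calculus[\dynamicTheory] A_1 \land A_2 \implies \modDia{\schemaProgramOne}{\schemaVarOne}$. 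First I would extract the two facts the side conditions provide and that the proof actually consumes: (i) $\variableVecOne \supseteq \boundVarsSem\mleft(\schemaProgramOne\mright)$, immediate from $\variableVecOne \supseteq \boundVarsSem\mleft(\schemaProgramOne\mright) \cup \freeVarsSem\mleft(\modDia{\schemaProgramOne}{\schemaVarOne}\mright)$; and (ii) no variable in $\variableVecOneNext$ is semantically free in $\modDia{\schemaProgramOne}{\schemaVarOne}$, since $\variableVecOneNext = \freeVarsSem\mleft(\schemaVarTwo\land\schemaVarThree\mright) \setminus \variableVecOne$ is disjoint from $\variableVecOne \supseteq \freeVarsSem\mleft(\modDia{\schemaProgramOne}{\schemaVarOne}\mright)$ (in particular $\variableVecOne$ and $\variableVecOneNext$ are disjoint).

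For the core step, instantiate \ref{axiom:MPDiamond} with $\schemaVarTwo := \schemaVarThree$, $\schemaVarOne := \schemaVarOne$, program $\schemaProgramOne$, and vector $\variableVecOne$ (admissible by (i)), obtaining $\modDia{\schemaProgramOne}{\schemaVarThree} \land \fa{\variableVecOne}{(\schemaVarThree \implies \schemaVarOne)} \implies \modDia{\schemaProgramOne}{\schemaVarOne}$. From this a one-line propositional derivation gives $(\schemaVarTwo \implies \modDia{\schemaProgramOne}{\schemaVarThree}) \implies \bigl((\schemaVarTwo \land \fa{\variableVecOne}{(\schemaVarThree \implies \schemaVarOne)}) \implies \modDia{\schemaProgramOne}{\schemaVarOne}\bigr)$: assume $\schemaVarTwo \implies \modDia{\schemaProgramOne}{\schemaVarThree}$, $\schemaVarTwo$ and $\fa{\variableVecOne}{(\schemaVarThree \implies \schemaVarOne)}$; the first two yield $\modDia{\schemaProgramOne}{\schemaVarThree}$, and the \ref{axiom:MPDiamond} instance then yields $\modDia{\schemaProgramOne}{\schemaVarOne}$. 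Generalising this implication over $\variableVecOneNext$ and distributing $\forall$ across the implication (the first-order $K$-rule for the universal quantifier, part of the assumed first-order completeness) yields $A_1 \implies \fa{\variableVecOneNext}{\bigl((\schemaVarTwo \land \fa{\variableVecOne}{(\schemaVarThree \implies \schemaVarOne)}) \implies \modDia{\schemaProgramOne}{\schemaVarOne}\bigr)}$.

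Finally I would close the argument with the first-order schema $\fa{\variableVecOneNext}{(F \implies G)} \land \ex{\variableVecOneNext}{F} \implies G$, valid precisely when $\variableVecOneNext$ is not free in $G$ — the quantifier counterpart of \ref{axiom:V}, derivable from vacuous quantification plus quantifier distribution. Taking $F \equiv \schemaVarTwo \land \fa{\variableVecOne}{(\schemaVarThree \implies \schemaVarOne)}$ (so $\ex{\variableVecOneNext}{F} = A_2$) and $G \equiv \modDia{\schemaProgramOne}{\schemaVarOne}$, with freshness of $\variableVecOneNext$ in $G$ supplied by (ii), we obtain $A_1 \land A_2 \implies \modDia{\schemaProgramOne}{\schemaVarOne}$, completing the derivation. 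No further modal machinery is used: the only dynamic-logic-specific ingredient is \ref{axiom:MPDiamond}; everything else is propositional or first-order.

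The main obstacle is not modal content but side-condition bookkeeping: one must check that the \ref{axiom:MPDiamond} instance is admissible ($\variableVecOne \supseteq \boundVarsSem\mleft(\schemaProgramOne\mright)$), that $\variableVecOne$ and $\variableVecOneNext$ are disjoint so the nested $\forall$-blocks do not interfere, and above all that the final quantifier-elimination step is sound, i.e. that $\variableVecOneNext$ is not semantically free in $\modDia{\schemaProgramOne}{\schemaVarOne}$ — which is exactly what $\variableVecOne \supseteq \freeVarsSem\mleft(\modDia{\schemaProgramOne}{\schemaVarOne}\mright)$ buys. One also relies on the assumed first-order-complete calculus supporting vacuous quantification phrased with the \emph{semantic} free-variable notion, consistent with how the paper already states \ref{axiom:V} and \ref{axiom:B} semantically.
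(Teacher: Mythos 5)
Your proof is correct and matches the paper's own derivation in substance: both reduce the axiom to a single application of \ref{axiom:MPDiamond} (instantiated with $\schemaVarThree$, $\schemaVarOne$ and the vector $\variableVecOne\supseteq\boundVarsSem\mleft(\schemaProgramOne\mright)$), surrounded by purely propositional and first-order quantifier reasoning, and both discharge the $\variableVecOneNext$ quantifiers using exactly the freshness fact $\variableVecOneNext\cap\freeVarsSem\left(\modDia{\schemaProgramOne}{\schemaVarOne}\right)=\emptyset$ supplied by the side conditions. The only (cosmetic) difference is that the paper pulls the universal conjunct inside the existential and applies \ref{axiom:MPDiamond} under $\ex{\variableVecOneNext}{}$ before eliminating the existential, whereas you apply it at top level and finish with generalisation plus the $\fa{\variableVecOneNext}{(F\implies G)}\land\ex{\variableVecOneNext}{F}\implies G$ schema.
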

\begin{proof}
Starting from 
$\fa{\variableVecOneNext}{
\left(\schemaVarTwo \implies \modDia{\schemaProgramOne}{\schemaVarThree}\right)}$
and
$
\ex{\variableVecOneNext}{
\left(
\schemaVarTwo \land
\fa{\variableVecOne}{
\left(\schemaVarThree \implies \schemaVarOne\right)}
\right)}$
we can derive via first-order reasoning that this implies
$
\ex{\variableVecOneNext}{
\left(
\schemaVarTwo \land
\fa{\variableVecOne}{
\left(\schemaVarThree \implies \schemaVarOne\right)}
\land
\left(\schemaVarTwo \implies \modDia{\schemaProgramOne}{\schemaVarThree}\right)
\right)}
$
via propositional recombination this can be turned into
$
\ex{\variableVecOneNext}{
\left(
\schemaVarTwo 
\land
\left(\schemaVarTwo \implies \left(
\modDia{\schemaProgramOne}{\schemaVarThree} \land
\fa{\variableVecOne}{
\left(\schemaVarThree \implies \schemaVarOne\right)}
\right)
\right)
\right)}
$.
Using \ref{axiom:MPDiamond} this implies
$
\ex{\variableVecOneNext}{
\left(
\schemaVarTwo 
\land
\left(\schemaVarTwo \implies
\modDia{\schemaProgramOne}{\schemaVarOne}
\right)
\right)}
$.
Via (local) Modus Ponens this implies
$
\ex{\variableVecOneNext}{
\left(
\modDia{\schemaProgramOne}{\schemaVarOne }
\right)}
$.
Since $\variableVecOneNext \notin \freeVarsSem\left(\modDia{\schemaProgramOne}{\schemaVarOne} \right)$ this in turn implies $\modDia{\schemaProgramOne}{\schemaVarOne}$
\end{proof}

\subsection{Additional Proof Rules for Regular Programs}
\begin{lemma}[Loop Induction]
For any dynamic theory \dynamicTheory, the following axiom is sound in $\regular{\dynamicTheory}$ and derivable from the axioms in \Cref{sec:liftedDL:regular}:
\begin{prooftree}
    \AxiomC{$\schemaVarOne \implies \modBox{\schemaProgramOne}{\schemaVarOne}$}
    \LeftLabel{\axiomLabel{ind}{ind} \normalfont (ind)}
    \UnaryInfC{$\schemaVarOne \implies \modBox{\schemaProgramOne^*}{\schemaVarOne}$}
\end{prooftree}    
\end{lemma}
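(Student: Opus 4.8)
The plan is to derive \ref{axiom:ind} by a single use of the generalization rule \ref{axiom:G} followed by the induction axiom \ref{axiom:I}, both of which are available in $\regular{\dynamicTheory}$: the former carries over to every dynamic theory (\Cref{thm:elementary:soundness}), and the latter is sound in the regular closure by \Cref{isathm:soundess_regular_axioms}. Soundness of \ref{axiom:ind} is then immediate from the soundness of these two constituents, so the only real content is the derivation.

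First I would take the premise $\schemaVarOne \implies \modBox{\schemaProgramOne}{\schemaVarOne}$. Since this is derivable in $\calculus[\regular{\dynamicTheory}]$, the proof rule \ref{axiom:G}, instantiated with the program $\schemaProgramOne^*$, yields $\calculus[\regular{\dynamicTheory}]\,\modBox{\schemaProgramOne^*}{\left(\schemaVarOne \implies \modBox{\schemaProgramOne}{\schemaVarOne}\right)}$. Next I would instantiate the loop-induction axiom \ref{axiom:I} with the program $\schemaProgramOne$ and the formula $\schemaVarOne$, giving the derivable formula $\modBox{\schemaProgramOne^*}{\left(\schemaVarOne \implies \modBox{\schemaProgramOne}{\schemaVarOne}\right)} \implies \left(\schemaVarOne \implies \modBox{\schemaProgramOne^*}{\schemaVarOne}\right)$. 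Combining the two by modus ponens — available because the base calculus of $\dynamicTheory$ (and hence of $\regular{\dynamicTheory}$) is complete for propositional reasoning — produces $\schemaVarOne \implies \modBox{\schemaProgramOne^*}{\schemaVarOne}$, which is exactly the consequent of \ref{axiom:ind}.

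I do not expect any genuine obstacle here. The one point worth stating carefully is that \ref{axiom:G} is applied as a \emph{proof rule} to the already-derivable premise (rather than requiring its validity a priori), which is precisely its formulation in \Cref{fig:elementary-axioms}; and that \ref{axiom:I} is used with no side conditions, exactly as stated. The derived rule \ref{axiom:ind} is thus an admissible rule of $\calculus[\regular{\dynamicTheory}]$, and its soundness follows inductively from that of \ref{axiom:G} and \ref{axiom:I}.
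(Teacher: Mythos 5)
Your derivation — apply \ref{axiom:G} with the program $\schemaProgramOne^*$ to the premise, then discharge via the induction axiom \ref{axiom:I} and modus ponens — is exactly the paper's own proof. Correct, no gaps.
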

\begin{proof}
The proof can be performed as follows:
\begin{prooftree}
    \AxiomC{$\schemaVarOne \implies \modBox{\schemaProgramOne}{ \schemaVarOne}$}
    \LeftLabel{\ref{axiom:G}}
    \UnaryInfC{$\modBox{\schemaProgramOne^*}{ \left(\schemaVarOne \implies \modBox{\schemaProgramOne}{\schemaVarOne}\right)}$}
    \LeftLabel{\ref{axiom:I}}
    \UnaryInfC{$\schemaVarOne \implies \modBox{\schemaProgramOne^*}{ \schemaVarOne}$}
\end{prooftree}%
\end{proof}
\subsection{Additional Proof Rules for Heterogeneous Dynamic Theories}
\begin{lemma}[Soundness of Frame Rule]
\label{lem:derived_rules}
For any heterogeneous dynamic theory $\hero{\dynamicTheory}$ over dynamic theories $\zero{\dynamicTheory},\one{\dynamicTheory}$ the following axiom is sound and can be derived from the rules from \Cref{sec:hdl}:
\begin{axiomBlock}
    \refAxiom{Fi}{\ensuremath{F^{(i)}}}{
    $
    \left(\schemaVarTwo \implies \modBox{\schemaProgramOne}{\schemaVarThree}\right)
    \implies
    \left(
    \schemaVarOne \land \schemaVarTwo
    \implies
    \modBox{\schemaProgramOne}{\left(\schemaVarThree \land \schemaVarOne\right)}
    \right)
    $\\
    where $\schemaProgramOne\in\signaturePrograms^{(1-i)},\schemaVarOne \in \formulaSet{\dynamicSignature^{(i)}}$ 
    and $i \in \left\{0,1\right\}$
    }
\end{axiomBlock}
\end{lemma}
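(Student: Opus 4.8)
The plan is to obtain \ref{axiom:Fi} from a short Hilbert-style derivation in $\hero{\dynamicTheory}$ whose only modal ingredient is the vacuity axiom \ref{axiom:V} applied to the frame formula $\schemaVarOne$. First I would establish $\schemaVarOne \implies \modBox{\schemaProgramOne}{\schemaVarOne}$ as an instance of \ref{axiom:V}; this instance is admissible precisely when the side condition $\freeVarsSem\left(\schemaVarOne\right) \cap \boundVarsSem\mleft(\schemaProgramOne\mright) = \emptyset$ holds. Granting that for the moment, I would reason purely propositionally: from the premise $\schemaVarTwo \implies \modBox{\schemaProgramOne}{\schemaVarThree}$ and the assumption $\schemaVarOne \land \schemaVarTwo$, Modus Ponens yields both $\modBox{\schemaProgramOne}{\schemaVarThree}$ and $\modBox{\schemaProgramOne}{\schemaVarOne}$; conjoining the two boxes via \ref{axiom:boxAnd} (derivable from \ref{axiom:G} and \ref{axiom:K} by \Cref{lem:DL_rules_helper}) gives $\modBox{\schemaProgramOne}{\left(\schemaVarThree \land \schemaVarOne\right)}$; discharging the assumptions $\schemaVarOne$, $\schemaVarTwo$ and the premise then produces exactly \ref{axiom:Fi}. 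Since $\hero{\dynamicTheory}$ is a dynamic theory (\Cref{isathm:full_hdl}), \ref{axiom:V} and the elementary proof rules are sound for it (\Cref{thm:elementary:soundness}), so the derived rule is sound; the same derivation works verbatim for both $i \in \left\{0,1\right\}$.

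All the substance lies in discharging the side condition $\freeVarsSem\left(\schemaVarOne\right) \cap \boundVarsSem\mleft(\schemaProgramOne\mright) = \emptyset$, which I would split into two confinement claims. For the formula: since $\schemaVarOne \in \formulaSet{\dynamicSignature^{(i)}}$ is built purely from theory-$i$ atoms and theory-$i$ programs, and in the heterogeneous signature those carry their theory-$i$ overapproximations $\hero{\freeVarsAtom}\mleft(\atomOne\mright) = \freeVarsAtom^{(i)}\mleft(\atomOne\mright)$ and $\hero{\freeVarsProgram}\mleft(\programThree\mright) = \freeVarsProgram^{(i)}\mleft(\programThree\mright)$ (\Cref{def:free_vars_simple_hero}, with the havoc and regular liftings leaving atoms and atomic programs untouched), a structural induction through \Cref{def:liftedDL:freeVarsFOL} gives $\freeVarsSyn\left(\schemaVarOne\right) \subseteq \signatureVariables^{(i)}$, hence $\freeVarsSem\left(\schemaVarOne\right) \subseteq \signatureVariables^{(i)}$ by the over-approximation \Cref{isaLem:fv_is_overapprox}. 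For the program: $\schemaProgramOne \in \signaturePrograms^{(1-i)}$ is an atomic program of the other theory, so in $\hero{\dynamicTheory} = \regular{\left(\havoced{\left(\preHero{\dynamicTheory}\right)}\right)}$ — where neither lift changes the semantics of atomic programs — its transition relation is just $\preHero{\programEval}\mleft(\schemaProgramOne\mright)$, and on the product state space $\hero{\stateSpace} = \zero{\stateSpace} \times \one{\stateSpace}$ this leaves the $\stateSpace^{(i)}$-component untouched; by the bounded-effect characterization (\Cref{isaLem:bounded_effect_progams}) every variable of $\signatureVariables^{(i)}$ therefore lies outside $\boundVarsSem\mleft(\schemaProgramOne\mright)$. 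Since the theory-$i$ and theory-$(1-i)$ variable sets are disjoint ($\hero{\signatureVariables}$ being their disjoint union), the two footprints are disjoint, as required.

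I expect the main obstacle to be the bookkeeping in this second part: making precise that pushing $\preHero{\dynamicTheory}$ through the havoc and regular liftings alters neither the semantics of the atomic program $\schemaProgramOne$ nor of the atoms occurring in $\schemaVarOne$, so that the componentwise definitions of $\hero{\atomEval}$, $\preHero{\programEval}$, $\hero{\freeVarsAtom}$ and the bound-variable function over $\zero{\stateSpace}\times\one{\stateSpace}$ can be used to confine, respectively, $\freeVarsSem\left(\schemaVarOne\right)$ to $\signatureVariables^{(i)}$ and $\boundVarsSem\mleft(\schemaProgramOne\mright)$ away from it. Once that disjointness is established, the derivation via \ref{axiom:V} and \ref{axiom:boxAnd} is immediate.
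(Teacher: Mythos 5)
Your proposal is correct and follows essentially the same route as the paper: decompose $\modBox{\schemaProgramOne}{\left(\schemaVarThree \land \schemaVarOne\right)}$ via \ref{axiom:boxAnd}, discharge the $\schemaVarThree$-conjunct with the premise, and discharge the $\schemaVarOne$-conjunct with \ref{axiom:V} using $\freeVarsSem\left(\schemaVarOne\right) \subseteq \signatureVariables^{(i)}$, $\boundVarsSem\mleft(\schemaProgramOne\mright) \subseteq \signatureVariables^{(1-i)}$, and disjointness of the two variable sets. You merely spell out the confinement arguments for the side condition in more detail than the paper, which states them as immediate.
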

\begin{proof}
The proofs assume the previous derivation of the rules from \Cref{lem:DL_rules_helper}\\
After application of \ref{axiom:boxAnd}, the left side reads $\schemaVarOne \land \schemaVarTwo \implies \modBox{\schemaProgramOne}\schemaVarThree \land \modBox{\schemaProgramOne}\schemaVarOne$ which we can decompose into two implications.
Observe that $\schemaVarTwo \implies \modBox{\schemaProgramOne}\schemaVarThree$ is our assumption on the right.
For the case of $\schemaVarOne \implies \modBox{\schemaProgramOne}\schemaVarOne$ we can apply the \ref{axiom:V} axiom since $\ith{\signatureVariables} \cap \ith[1-i]{\signatureVariables} = \emptyset$ and $\boundVarsSem\mleft(\schemaProgramOne\mright) \subseteq \ith[1-i]{\signatureVariables}$ while $\freeVarsSem\left(\schemaVarOne\right) \subseteq \ith[i]{\signatureVariables}$.\\
\end{proof}

\begin{lemma}[Ghost Axiom given Equality]
For any heterogeneous dynamic theory $\hero{\dynamicTheory}$ over dynamic theories $\zero{\dynamicTheory},\one{\dynamicTheory}$, the following axiom is sound and can be derived from the rules from \Cref{sec:hdl} under the additional assumption of the axiom $\variableOne \doteq \variableOne$ (for all $\variableOne\in\signatureVariables$):\\
\begin{axiomBlock}
    \refAxiom{ghost}{ghost}{
    $
    \modBox{\variableOne \coloneqq *; ?\left(\variableOne \doteq \variableTwo\right)} \schemaVarOne \implies \schemaVarOne
    $
    (for $\variableOne \notin \freeVarsSem\left(\schemaVarOne\right)$)}
\end{axiomBlock}
\end{lemma}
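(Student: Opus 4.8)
The plan is to unfold the compound program $\variableOne \coloneqq *; ?(\variableOne \doteq \variableTwo)$ with the decomposition axioms for sequential composition, tests, and havoc until the box modality disappears, leaving a purely first-order goal that is then closed using the assumed reflexivity axiom $\variableOne \doteq \variableOne$.

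First I would apply \ref{axiom:seq} to split the sequential composition, obtaining $\modBox{\variableOne \coloneqq *; ?(\variableOne \doteq \variableTwo)}{\schemaVarOne} \iff \modBox{\variableOne \coloneqq *}{\modBox{?(\variableOne \doteq \variableTwo)}{\schemaVarOne}}$. Next, I would rewrite the inner test modality with \ref{axiom:?}, so that $\modBox{?(\variableOne \doteq \variableTwo)}{\schemaVarOne}$ becomes $\variableOne \doteq \variableTwo \implies \schemaVarOne$; since this rewriting occurs underneath the havoc modality it is licensed by the congruence rule derivable from \ref{axiom:M} (applied in both directions, exactly as in \Cref{lem:DL_rules_helper}), yielding $\modBox{\variableOne \coloneqq *}{(\variableOne \doteq \variableTwo \implies \schemaVarOne)}$. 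Finally, \ref{axiom:havoc} converts the havoc into a universal quantifier, so the left-hand side of the axiom is equivalent to $\fa{\variableOne}{(\variableOne \doteq \variableTwo \implies \schemaVarOne)}$.

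It then remains to establish $\fa{\variableOne}{(\variableOne \doteq \variableTwo \implies \schemaVarOne)} \implies \schemaVarOne$ by first-order reasoning. Here I would instantiate the quantifier over $\variableOne$ with $\variableTwo$: the side condition $\variableOne \notin \freeVarsSem(\schemaVarOne)$ ensures that the substitution $[\variableTwo/\variableOne]$ does not alter $\schemaVarOne$ (equivalently, the quantifier is vacuous over $\schemaVarOne$), so the instance is $\variableTwo \doteq \variableTwo \implies \schemaVarOne$. The assumed axiom $\variableOne \doteq \variableOne$, read at $\variableTwo$, discharges the antecedent, and Modus Ponens delivers $\schemaVarOne$. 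Chaining the three equivalences with this implication gives the axiom; its soundness then follows from the soundness of all the rules used (alternatively, one can check it semantically by unfolding the transition relation of $\variableOne \coloneqq *; ?(\variableOne \doteq \variableTwo)$ and appealing to the coincidence lemma \Cref{isaLem:coincidence_formulas} for $\schemaVarOne$, so that the post-state and the pre-state agree on $\freeVarsSem(\schemaVarOne)$).

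The delicate point is the first-order instantiation step. Because each variable in our setting carries its own admissible value set $\valueSet(\cdot)$, instantiating the universally bound $\variableOne$ by $\variableTwo$ is only sound when $\valueSet(\variableTwo) \subseteq \valueSet(\variableOne)$ — in particular when $\variableOne$ and $\variableTwo$ are twin variables, which is the intended reading of the axiom (the fresh ghost $\variableOne$ is chosen to be a twin of $\variableTwo$, so that $\variableOne \coloneqq *; ?(\variableOne \doteq \variableTwo)$ really behaves like the assignment $\variableOne \coloneqq \variableTwo$). The rest is routine bookkeeping: confirming that the congruence step under $\modBox{\variableOne \coloneqq *}{\cdot}$ is justified by the derived monotonicity rules, and tracking that the free-variable condition on $\schemaVarOne$ is preserved through each rewriting.
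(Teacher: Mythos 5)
Your proposal is correct and follows essentially the same route as the paper's proof: decompose $\variableOne \coloneqq *;?\left(\variableOne\doteq\variableTwo\right)$ via \ref{axiom:seq}, \ref{axiom:?}, and \ref{axiom:havoc} to reach $\fa{\variableOne}{\left(\variableOne\doteq\variableTwo\implies\schemaVarOne\right)}$, then instantiate $\variableOne$ with $\variableTwo$ (using $\variableOne\notin\freeVarsSem\left(\schemaVarOne\right)$) and close with reflexivity plus Modus Ponens. Your additional remarks on the congruence step under the havoc modality and on the twin-variable caveat for the quantifier instantiation are sound elaborations of details the paper leaves implicit.
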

\begin{proof}
The rule Ghost is sound and can be derived:
Resolving the program on the left using \ref{axiom:havoc} and \ref{axiom:?}, we get an equivalent representation
\[
    \fa{\variableOne}{
    \left(\variableOne \doteq \variableTwo
    \implies
    \schemaVarOne
    \right)}
    \text{ for }\variableOne \notin \freeVarsSem\left(\schemaVarOne\right)
\]
By instantiating $\variableOne$ with $\variableTwo$ we get $\variableTwo \doteq \variableTwo \implies \schemaVarOne$ (since $\variableOne \notin \variablesOf{\schemaVarOne}$). Using Modus Ponens with the axiom $\variableTwo \doteq \variableTwo$ this derives $\schemaVarOne$.%
\end{proof}

\section{A Calculus for Reasoning about Meta-Properties of Dynamic Theories}
\label{apx:rel-complete}
A desirable meta property for dynamic logics is the relative completeness of their axiomatizations~\cite{harel_dynamic_2000,beckert_dynamic_2006,Platzer2012}:
Due to Gödel~\cite{godel1931formal}, for most dynamic logics, one cannot hope for a \emph{complete} axiomatization, as their expressiveness allows the axiomatization of integers~\cite{harel_dynamic_2000} that constitute an undecidable first-order theory.
However, it is often possible to create a finite set of axiom schemata and proof rules that allow us to reduce any validity question over dynamic logic formulas to a validity question over first-order theory formulas for which we then assume the availability of an oracle.
This limited notion of completeness, called \emph{relative completeness}, implies that our calculus ensures that
programs are handled completely, as proving properties about the considered programs is exactly as hard as proving properties about the underlying first-order theory (and no harder).
We formalize this notion in \Cref{def:rel-complete:rel-complete}.

The preceding sections of this paper have proposed a highly compositional approach to the construction of dynamic theories:
Starting from a base theory $\dynamicTheory$, we can generate lifted dynamic theories with more program functionality ($\havoced{\dynamicTheory}$ and $\regular{\dynamicTheory}$) or combine theories ($\preHero{\dynamicTheory}$ and then $\hero{\dynamicTheory}$).
The key advantage of theory lifting and combination was the ability to \emph{reuse} proof calculi and obtain standardized additional proof rules which are valid irrespective of the lifted/combined theory.
The approach outlined above raises the question to what degree meta properties of the calculus, in particular its relative completeness, are equally compositional:
For example, given a relatively complete dynamic theory $\zero{\dynamicTheory}$, can we derive a notion of relative completeness for $\regular{\left(\zero{\dynamicTheory}\right)}$? Or, similarly, can we derive a relative completeness result for its theory combination w.r.t. some secondary dynamic theory $\hero{\dynamicTheory}$?
To this end, we observe that relative completeness proofs in the literature, implicitly or explicitly, follow a highly-compositional structure~\cite{beckert_dynamic_2006,Platzer2012,harel_first-order_1979} that can be leveraged to compositionalize results on the relative completeness of dynamic theory calculi.
Hence, in this section, we describe a number of results that allow us to lift and merge relative completeness results when lifting and combining dynamic theories in a structured manner.

\Cref{sec:DL} outlined properties/assumptions that defined what a dynamic theory is.
In \Cref{sec:liftedDL:regular} we saw, for the first time, a case where we leveraged further assumptions (inductive expressiveness, \Cref{def:liftedDL:inductive_expressive}) to derive stronger calculus rules (axiom \ref{axiom:C}).
We will begin this section by describing a set of additional meta properties, like relative completeness (\Cref{def:rel-complete:rel-complete}), that certain dynamic theories may satisfy (\Cref{subsec:rel-complete:assumptions}).
For these additional meta properties/assumptions, we then derive sufficient conditions under which the meta properties are preserved under theory lifting and/or theory combination (\Cref{subsec:rel-complete:calculus}).
Finally, we leverage the defined meta properties and proven preservation rules to derive relative completeness results for havoc lifting, regular program lifting, and theory combination in (fully) heterogeneous dynamic theories.

\paragraph{Notation and Admissible Variable Values.}
Going forward, we will use underlining of variables to denote (syntactically!) finite vectors of variables or values (e.g., $\variableVecOne$ or $\valueVecOne$).
By slight abuse of notation, we will also use these vectors in quantifiers (e.g., $\fa{\variableVecOne}{\dots}$), which can be reduced to a (syntactically) finite sequence of quantifiers.
By an even worse abuse of notation, we will sometimes use the vectors in set expressions (e.g. $\variableVecOne \subseteq \freeVarsSem\mleft(\schemaVarOne\mright)$), in which case we consider the set of all variables in the vector.

\begin{definition}[Reasoning about admissible variable values]
Given a dynamic theory $\dynamicTheory$ we define $\valueSet: \signatureVariables \to 2^\universe$ as the mapping from variables to all admissible values of a variable (i.e. $\valueSet\mleft(\variableOne\mright) = \left\{ \valueOne\in\universe ~\middle|~ \text{it exists }\stateOne\in\stateSpace \text{ such that }\variableEval\mleft(\stateOne,\variableOne\mright)=\valueOne\right\}$).
We call two variables $\variableOne,\variableOneVariant$ \emph{twin variables} iff $\valueSet\mleft(\variableOne\mright)=\valueSet\mleft(\variableOneVariant\mright)$.
We call two variable vectors $\variableVecOne,\variableVecOneVariant$ \emph{twins} iff the vectors are of equal length and for every component it holds that $\variableVecOne_i$ and $\variableVecOneVariant_i$ are twin variables.
\end{definition}

\subsection{Assumptions for Relative Completeness}
\label{subsec:rel-complete:assumptions}
To transfer relative completeness across liftings and into heterogeneous theories, we must impose one further meaningful restriction about the behavior of programs in our underlying dynamic theory.
We now require that programs not only have a finite read footprint (\Cref{def:DL:prog_eval}), but also a finite write footprint to be able to characterise their behavior:

\begin{definition}[Finite Support]
\label{def:rel-complete:finite-support}
Given a dynamic theory $\dynamicTheory$.
We say $\dynamicTheory$ has \emph{finite support} iff for some function $\boundVarsProgram : \signaturePrograms \to 2^{\signatureVariables}$ it holds that:
\begin{localeAssmBlock}
\localeAssm{$\boundVarsProgram$ Coverage}{
For all programs $\programOne\in\signaturePrograms$ it holds that $\boundVarsSem[\signaturePrograms]\mleft(\programOne\mright) \subseteq \boundVarsProgram\mleft(\programOne\mright)$
}
\localeAssm{$\boundVarsProgram$ Finite}{
For all programs $\programOne\in\signaturePrograms$ 
the result of $\boundVarsProgram\mleft(\programOne\mright)$ is finite
}
\end{localeAssmBlock}
\end{definition}

\paragraph{First Order Expressiveness}
Beyond this assumption on program behavior and the assumption on the availability of a relatively complete proof calculus (\Cref{def:rel-complete:rel-complete}), we also require further tools that are necessary to prove relative completeness.
These assumptions do not always constrain the range of allowed programs or atomic formulas, but for relative completeness results to carry over, the designer of a given dynamic theory must ensure these additional meta properties hold.
The first requirement of this kind is termed \textit{FOL Expressiveness} and ensures that a given dynamic theory is capable of representing the behavior of its programs in the theory's first-order fragment.
While this requirement might seem surprising and unintuitive for someone unfamiliar with dynamic logic relative completeness results, it is, in fact, the bread and butter of meaningful relative completeness proofs~\cite{harel_first-order_1979,beckert_dynamic_2006,Platzer2012}:
\begin{definition}[FOL Expressive]
\label{def:rel-complete:fol-expressive}
Consider a dynamic theory $\dynamicTheory$ with finite support.
We say $\dynamicTheory$ is \emph{FOL Expressive} iff it satisfies the following properties:
\begin{localeAssmBlock}
\localeAssm{Infinite Variables}{For any variable $\variableOne\in\signatureVariables$ there exists an infinite number of twin variables.}
\localeAssm{Eq Predicate}{
There exists a function $\doteq : \signatureVariables^2 \to \folFormulaSet{\dynamicTheory}$ such that for twin variables $\variableOne,\variableTwo\in\integerVariables$ and any state $\stateOne\in\stateSpace$:\\
$\stateOne\in\sem{\doteq\left(\variableOne,\variableTwo\right)}$ iff $
\variableEval\mleft(\stateOne,\variableOne\mright) = 
\variableEval\mleft(\stateOne,\variableTwo\mright)
$
}
\localeAssm{Rendition}{
There exists a function $\programRenditionSym$ such that for all programs $\programOne\in\signaturePrograms$ and all twin variable vectors $\variableVecOne,\variableVecOneNext$ with $\variableVecOne \supseteq \variablesOf{\programOne}$ and $\variableVecOne \cap \variableVecOneNext=\emptyset$ the function $\programRendition{\programOne}\mleft(\variableVecOne,\variableVecOneNext\mright)$ returns a formula in $\folFormulaSet{\dynamicSignature}$ such that:\\
$
\models_{\dynamicTheory} \programRendition{\programOne}\mleft(\variableVecOne,\variableVecOneNext\mright)
\iff
\modDia{\programOne}{\variableVecOne \doteq \variableVecOneNext}
$
}
\end{localeAssmBlock}
\end{definition}
For clarity, we will denote $\doteq\left(\variableOne,\variableTwo\right)$ as $\variableOne\doteq\variableTwo$.
For example, there exist results for rendition of first-order dynamic logic programs~\cite{harel_first-order_1979} or hybrid programs~\cite{Platzer2012} (via relative completeness w.r.t. the discrete fragment of hybrid programs).
We can then derive the following additional axiom on the behavior of equality:
\begin{lemmaE}[Identity Equality][all end]
\label{lem:rel-completeness:id_eq}
For any dynamic theory $\dynamicTheory$ that is FOL Expressive or Communicating, the following axiom holds for any $\variableOne\in\signatureVariables$:
\[
\variableOne \doteq \variableOne.
\]
\end{lemmaE}
\begin{proofE}
Note, that $\variableOne$ is a twin variable of itself.
Hence, by definition we know that for all states $\mu\in\sem{\doteq\left(\variableOne,\variableOne\right)}$ iff $\variableEval\left(\stateOne,\variableOne\right)=\variableEval\left(\stateOne,\variableOne\right)$ which holds in all states.
\end{proofE}

\paragraph{Gödelization.}
Another characteristic of relative completeness proofs in dynamic logic is that they typically rely on the ability of first-order theories to encode both program behavior (see above) \emph{and} finite sequences of domain values into a single value.
The first example of this mechanism is due to Gödel~\cite{godel1931formal} who showed that any finite sequence of integer values can be injectively mapped to a single integer value.
The purpose of these encodings lies in the reasoning about loops:
To prove relative completeness, we require that the first-order fragment of our dynamic theory is capable of expressing the strongest loop invariant.

A more general description of this sequence encoding mechanism was established by Parikh (and first published by Harel~\cite[p. 29]{harel_first-order_1979}):
Intuitively, without precisely formalizing the logic below, we desire a total predicate $R\mleft(x,i,y\mright)$ such that for all $n\in\mathbb{N}$ it holds that (taken from \cite{harel_first-order_1979}):
\[
\left(\forall x_1\dots x_n\right)\left(\exists y\right)\left(\forall w\right)\left(\forall i\right) \left(\left(i\in\mathbb{N} \land 1 \leq i \leq n\right) \Longrightarrow \left(R\mleft(w,i,y\mright) \equiv \left(w=x_i\right)\right)\right)
\]
This definition is w.r.t. a common domain for $x$ and $y$; however, this is not a strict requirement for our dynamic theories where it would also suffice to encode sequences of values for variable $\variableOne$ in the value set of another variable $\variableTwo$ with $\valueSet\mleft(\variableOne\mright)\neq\valueSet\mleft(\variableTwo\mright)$.
Hence, we define Gödelization of a variable $\variableOne$ w.r.t. a second variable $\variableTwo$, which is a representative for the value range into which sequences of values of $\variableOne$ are embedded.
Given twin variables $\variableOneVariant,\variableTwoVariant$ for $\variableOne,\variableTwo$, we then demand that there exists a first-order formula $\goedel{\variableOne}\mleft(\variableTwoVariant,n,j,\variableOneVariant\mright)$ that decodes an $n$-dimensional sequence of values stored in $\variableTwoVariant$, extracts the $j$-th value, and asserts this component is ``stored'' in $\variableOneVariant$.
Formally, this leads to the following definition, which describes that the value range of $\variableOne\in\signatureVariables$ supports Gödelization:
\begin{definition}[Gödelization Support]
\label{def:rel-complete:goedel}
Let $\dynamicTheory$ be some dynamic theory that is inductively expressive w.r.t. $\integerVariables$.
We say $\variableOne\in\signatureVariables$ has \emph{Gödelization Support} w.r.t. $\integerVariables$ iff there exists a variable $\variableTwo\in\signatureVariables$ and some formula $\goedel{\variableOne}$, called \emph{Gödel Formula}, such that:\\
For any $N\in\mathbb{N}$ and any vector of values $\valueVecOne\in\valueSet\mleft(\variableOne\mright)^N$ there exists a value $\valueTwo\in\valueSet\mleft(\variableTwo\mright)$ such that:\\
For any integer variables $i,n\in\integerVariables$ and any twin variables $\variableOneVariant$ (resp. $\variableTwoVariant$) of $\variableOne$ (resp. $\variableTwo$) and any state $\stateOne$ with
$\variableEval\mleft(\stateOne,\variableTwoVariant\mright)=\valueTwo$ and $ \universeToNat\mleft(\variableEval\mleft(\stateOne,n\mright)\mright)=N$ it holds that $\stateOne \models_{\dynamicTheory} \goedel{\variableOne}\mleft(\variableTwoVariant,n,j,\variableOneVariant\mright)$ iff $1\leq k =\universeToNat\mleft(\variableEval\mleft(\stateOne,j\mright)\mright) \leq N$ and $\variableEval\mleft(\stateOne,\variableOneVariant\mright) = \valueVecOne_{k}$.
\end{definition}
Some concrete examples of Gödelization would be:
\begin{itemize}
    \item In interpreted (integer) first-order dynamic logic, the variables support Gödelization using standard Gödel Numbers~\cite{godel1931formal,harel_dynamic_2000}
    \item In differential dynamic logic, real values have a Gödelization 
    for a differentially expressive first-order structure
    (formalizable via our atoms)~\cite{Platzer2012}
    \item Under certain constraints, non-rigid functions (i.e., functions whose evaluation depends on the state) equally admit Gödelization~\cite{beckert_dynamic_2006}
\end{itemize}
For the relative completeness proofs below, we consequently demand that our underlying inductively expressive dynamic theory is also \emph{Gödel Expressive}:
\begin{definition}[Gödel Expressive]
\label{def:rel-complete:goedel-expressive}
Consider a dynamic theory $\dynamicTheory$ that is inductively expressive w.r.t. variables $\integerVariables$.
We say $\dynamicTheory$ is \emph{Gödel Expresive} iff it satisfies the following properties:
\begin{localeAssmBlock}
\localeAssm{Gödel Formulas}{
All variables $\variableOne\in\signatureVariables$ support Gödelization w.r.t. $\integerVariables$ for a provided Gödel Function $\goedel{\variableOne}$ (see \Cref{def:rel-complete:goedel}).
}
\localeAssm{Less Predicate}{
There exists a function $\natLess: \integerVariables^2 \to \folFormulaSet{\dynamicTheory}$ such that for variables $\variableOne,\variableTwo\in\integerVariables$  with $\variableOne \neq \variableTwo$ and any state $\stateOne\in\stateSpace$:\\
$\stateOne\in\sem{\natLess\mleft(\variableOne,\variableTwo\mright)}$ iff $
\universeToNat\mleft(\variableEval\mleft(\stateOne,\variableOne\mright)\mright) < 
\universeToNat\mleft(\variableEval\mleft(\stateOne,\variableTwo\mright)\mright)
$
}
\end{localeAssmBlock}
\end{definition}

\paragraph{Communication in Heterogeneous Dynamic Theories.}
The properties described above concern a single dynamic theory.
We now discuss assumptions under which relative completeness results can also be obtained for heterogeneous dynamic theories as described in \Cref{sec:hdl}.
To this end, one might hope that Gödelization and/or FOL expressiveness might be sufficient to prove relative completeness of fully heterogeneous dynamic theories.
Unfortunately, it turns out that this is not quite sufficient yet:
While subtle, Gödelization requires that the Gödel formulas over all variables use the same set of integer expressive variables.
However, by default, variables $\zero{\variableOne}\in\zero{\signatureVariables}$ from theory $\zero{\dynamicTheory}$ and their Gödel Formulas $\goedel{\zero{\variableOne}}$ will be using a different set of integer expressive variables for Gödelization ($\zero{\integerVariables}$) than variables $\one{\variableOne}\in\one{\signatureVariables}$ from theory $\one{\dynamicTheory}$ which rely on $\one{\integerVariables}$.
The requirement that all Gödelizations use the same set of integer expressive variables is no definitional accident, but indeed imperative for the derivation of loop renditions in regular programs:
Without a common ``integer language'' we cannot encode that both worlds evolve the loop to an equal number of iterations and in lock-step. Hence, we must guarantee a minimal amount of information flow between the two worlds, which must be implemented using suitable (equality) atoms in $\common{\signatureAtoms}$ and corresponding functions mapping variables to $\preHero{\dynamicSignature}$-formulas:
\begin{definition}[Communicating Theories]
\label{def:hdl:communicating}
Let $\zero{\dynamicTheory},\one{\dynamicTheory}$ be two inductively expressive, FOL expressive dynamic theories with integer expressive sets $\zero{\integerVariables},\one{\integerVariables}$ which together form $\preHero{\dynamicTheory}$.
We say $\zero{\dynamicTheory},\one{\dynamicTheory}$ are \emph{communicating in $\preHero{\dynamicTheory}$} iff the following properties are satisfied:
\begin{localeAssmBlock}
    \localeAssm{$\mathbb{Z}$ Communcation}{
    There exists a function $\preHero{\natEq}:\zero{\integerVariables}\times\one{\integerVariables} \to \folFormulaSet{\preHero{\dynamicSignature}}$ such that for variables $\variableOne\in\zero{\integerVariables}, \variableTwo\in\one{\integerVariables}$ and any state $\stateOne\in\stateSpace$:\\
    $\stateOne\in\sem[\preHero{\dynamicTheory}]{\preHero{\natEq}\mleft(\variableOne,\variableTwo\mright)}$ iff
    $\zero{\universeToNat}\mleft(\preHero{\variableEval}\mleft(\stateOne,\variableOne\mright)\mright) = \one{\universeToNat}\mleft(\preHero{\variableEval}\mleft(\stateOne,\variableTwo\mright)\mright)$
    }
    \localeAssm{Eq Predicate}{
    $\preHero{\dynamicTheory}$ satisfies the \textsc{\bf Eq Predicate} property from \Cref{def:rel-complete:fol-expressive} on $\preHero{\dynamicTheory}$.
    }
\end{localeAssmBlock}
\end{definition}

\begin{figure}[t]
    \centering
\fbox{\begin{minipage}[t]{\linewidth}\vspace{0pt}
\begin{minipage}[t]{0.35\linewidth}\vspace{0pt}
\begin{align*}
\intertext{\textbf{By Definition:}}
    \proofProp{FolExp}{\Delta} \Longrightarrow& \proofProp{Fin}{\Delta}
\intertext{\textbf{\Cref{lem:rel-complete}}:}
    \proofProp{FolExp}{\Delta} \land
    \proofProp{FolRelCom}{\Delta}
    \Longrightarrow& \proofProp{RelCom}{\Delta}
\end{align*}
\end{minipage}\begin{minipage}[t]{0.552\linewidth}\vspace{0pt}
\begin{align*}
\intertext{\textbf{Results for Havoc Lifting:}}
    \proofProp{Fin}{\Delta} \Longrightarrow& \proofProp{Fin}{\havoced{\Delta}} & \text{\Cref{lem:rel-completeness:trivial}}\\
    \proofProp{Göd}{\Delta} \Longrightarrow& \proofProp{Göd}{\havoced{\Delta}} & \text{\Cref{lem:rel-completeness:trivial}}\\
    \proofProp{FolExp}{\Delta} \Longrightarrow& \proofProp{FolExp}{\havoced{\Delta}} & \text{\Cref{lem:rel-completeness:fol-expressiveness-havoc}}\\
    \proofProp{RelCom}{\Delta} \Longrightarrow& \proofProp{FolRelCom}{\havoced{\Delta}} & \text{\Cref{lem:rel-completeness:box_diamond_complete}}
\end{align*}
\end{minipage}
\begin{align*}
\intertext{\textbf{Results for Regular Program Lifting:}}
    \proofProp{Fin}{\Delta} \Longrightarrow& \proofProp{Fin}{\regular{\Delta}}
    &\text{\Cref{lem:rel-completeness:trivial-regular}}\\
    \proofProp{Göd}{\Delta} \Longrightarrow& \proofProp{Göd}{\regular{\Delta}}&\text{\Cref{lem:rel-completeness:trivial-regular}}\\
    \proofProp{Göd}{\Delta} \land \proofProp{FolExp}{\Delta} \Longrightarrow& \proofProp{FolExp}{\regular{\Delta}}&\text{\Cref{lem:rel-completeness:fol-expressiveness-regular}}\\
    \proofProp{Göd}{\Delta} \land \proofProp{FolExp}{\Delta} \land \proofProp{RelCom}{\Delta} \Longrightarrow & \proofProp{FolRelCom}{\regular{\Delta}} &
    \text{\Cref{lem:rel-completeness:regular-box}, \ref{lem:rel-completeness:regular-diamond}}
\intertext{\textbf{Results for Heterogeneous Dynamic Theories:}}
    \proofProp{Fin}{\zero{\Delta}} \land \proofProp{Fin}{\one{\Delta}} \Longrightarrow& \proofProp{Fin}{\preHero{\Delta}}
    &\text{\Cref{lem:rel-complte:prehero-BV}}\\
    \proofProp{Göd}{\zero{\Delta}} \land \proofProp{Göd}{\one{\Delta}} \land \proofProp{Co}{\preHero{\Delta}} \Longrightarrow& \proofProp{Göd}{\preHero{\Delta}} & \text{\Cref{lem:rel-completeness:hero-goedel}}\\
    \proofProp{FolExp}{\zero{\Delta}} \land \proofProp{FolExp}{\one{\Delta}} \land \proofProp{Co}{\preHero{\Delta}} \Longrightarrow& \proofProp{FolExp}{\preHero{\Delta}} & \text{\Cref{lem:rel-completeness:hero-fol-expressive}}\\
    \left(\begin{array}{l}
         \proofProp{FolExp}{\zero{\Delta}} \land \proofProp{RelCom}{\zero{\Delta}} \land\\
         \proofProp{FolExp}{\one{\Delta}} \land \proofProp{RelCom}{\one{\Delta}} \land
         \proofProp{Co}{\preHero{\Delta}}
    \end{array}\right)
    \Longrightarrow&
    \proofProp{FolRelCom}{\preHero{\Delta}} & \text{\Cref{lem:rel-completeness:hero-box}, \ref{lem:rel-completeness:hero-diamond}}
\end{align*}
\end{minipage}
}
    \caption{Results on Relative Completeness}
    \label{fig:rel-completeness:relative_completeness_calculus}
\end{figure}

\subsection{A Calculus of Relative Completeness}
\label{subsec:rel-complete:calculus}
To succinctly represent the property preservation and relative completeness results obtained from our theory, we summarize and denote possible properties a dynamic theory (and its axiomatization) may have as follows:
\begin{description}[leftmargin=!, labelwidth=3cm, align=right]
    \item[\proofProp{Fin}{\Delta}] Finite Support (\Cref{def:rel-complete:finite-support})
    \item[\proofProp{FolExp}{\Delta}] FOL Expressiveness (\Cref{def:rel-complete:fol-expressive})
    \item[\proofProp{Göd}{\Delta}] Gödel Expressiveness (\Cref{def:rel-complete:goedel-expressive})
    \item[\proofProp{FolRelCom}{\Delta}] Calculus of Dynamic Logic is relatively complete w.r.t. box and diamond first-order formulas (i.e. $\formulaOneFOL\implies\modBox{\programOne}{\formulaTwoFOL}$ and $\formulaOneFOL\implies\modDia{\programOne}{\formulaTwoFOL}$)%
    \item[\proofProp{RelCom}{\Delta}] Relative Completeness (\Cref{def:rel-complete:rel-complete})
    \item[\proofProp{Co}{\Delta}] Communicating heterogeneous dynamic theory (\Cref{def:hdl:communicating})
\end{description}

Based on this formalization, we have derived the rules presented in \Cref{fig:rel-completeness:relative_completeness_calculus}, describing under which circumstances properties carry over from one dynamic theory to the next during lifting and theory combination.
While we defer details of these results to the Appendix,
we can derive \Cref{lem:rel-completeness:rel-complete-regular,thm:rel-complete:hero-dl} from the results in \Cref{fig:rel-completeness:relative_completeness_calculus}, which concern relative completeness under regular program closure and for theory combination.

\begin{theorem}[Relative Completeness for Regular Programs]
\label{lem:rel-completeness:rel-complete-regular}
Let $\dynamicTheory$ be some FOL Expressive, Gödel expressive, relatively complete dynamic theory with calculus $\calculus[\dynamicTheory]$.
Then $\regular{\dynamicTheory}$ is relatively complete with the calculus $\calculus[\regular{\dynamicTheory}]$ from \Cref{sec:liftedDL:regular}.
\end{theorem}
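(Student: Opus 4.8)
The plan is to obtain the conclusion by chaining together the property-preservation results collected in the ``calculus of relative completeness'' (\Cref{fig:rel-completeness:relative_completeness_calculus}); concretely, it suffices to track which of the meta-properties of \Cref{subsec:rel-complete:assumptions} are inherited by $\regular{\dynamicTheory}$. The hypotheses already hand us $\proofProp{FolExp}{\dynamicTheory}$, $\proofProp{Göd}{\dynamicTheory}$, and $\proofProp{RelCom}{\dynamicTheory}$. Moreover, by \Cref{def:rel-complete:fol-expressive} the property $\proofProp{FolExp}{\dynamicTheory}$ already presupposes finite support, so $\proofProp{Fin}{\dynamicTheory}$ holds, and by \Cref{def:rel-complete:goedel-expressive} the property $\proofProp{Göd}{\dynamicTheory}$ presupposes inductive expressivity, so the convergence axiom \ref{axiom:C} is available for $\regular{\dynamicTheory}$ (via \Cref{isalem:loop_convergence}).

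First I would transfer the ``structural'' properties. By \Cref{lem:rel-completeness:trivial-regular}, $\proofProp{Fin}{\dynamicTheory}$ and $\proofProp{Göd}{\dynamicTheory}$ give $\proofProp{Fin}{\regular{\dynamicTheory}}$ and $\proofProp{Göd}{\regular{\dynamicTheory}}$; by \Cref{lem:rel-completeness:fol-expressiveness-regular}, the combination $\proofProp{Göd}{\dynamicTheory} \land \proofProp{FolExp}{\dynamicTheory}$ gives $\proofProp{FolExp}{\regular{\dynamicTheory}}$. Next, by \Cref{lem:rel-completeness:regular-box,lem:rel-completeness:regular-diamond}, from $\proofProp{Göd}{\dynamicTheory} \land \proofProp{FolExp}{\dynamicTheory} \land \proofProp{RelCom}{\dynamicTheory}$ we obtain $\proofProp{FolRelCom}{\regular{\dynamicTheory}}$, i.e.\ the calculus $\calculus[\regular{\dynamicTheory}]$ derives, from the first-order oracle $\Gamma_{\regular{\dynamicTheory}}$, every valid box/diamond first-order formula $\formulaOneFOL \implies \modBox{\programOne}{\formulaTwoFOL}$ and $\formulaOneFOL \implies \modDia{\programOne}{\formulaTwoFOL}$ with $\programOne \in \regular{\signaturePrograms}$. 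Finally, applying \Cref{lem:rel-complete} (the implication $\proofProp{FolExp}{} \land \proofProp{FolRelCom}{} \implies \proofProp{RelCom}{}$) at $\regular{\dynamicTheory}$ yields $\proofProp{RelCom}{\regular{\dynamicTheory}}$, which is exactly the claim; since all the invoked lemmas refer to the calculus $\calculus[\regular{\dynamicTheory}]$ of \Cref{sec:liftedDL:regular}, this is the calculus witnessing relative completeness.

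The only step that is not mere bookkeeping is $\proofProp{FolRelCom}{\regular{\dynamicTheory}}$, carried out in \Cref{lem:rel-completeness:regular-box,lem:rel-completeness:regular-diamond}. For the box direction one argues by induction on the regular structure of $\programOne$: the composite constructs $\programOne;\programTwo$, $\programOne \cup \programTwo$, $?(\formulaOneFOL)$ are peeled off using \ref{axiom:seq}, \ref{axiom:cup}, \ref{axiom:?}, reducing the modality to a first-order formula; the atomic base case $\programThree \in \signaturePrograms$ is discharged via the rendition $\programRendition{\programThree}$ supplied by $\proofProp{FolExp}{\dynamicTheory}$ together with $\proofProp{RelCom}{\dynamicTheory}$ and the reduction rule \ref{axiom:RR}; and the loop case $\programOne^*$ uses \ref{axiom:star} and the induction axiom \ref{axiom:I} with the \emph{strongest} loop invariant, which lies in $\folFormulaSet{\regular{\dynamicSignature}}$ precisely because Gödelization lets one encode the finite trace of a terminating run (and its length, counted via the $\integerVariables$ machinery) into a single value. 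The diamond direction additionally needs a well-founded termination argument, furnished by \ref{axiom:C}, whose variant counts down remaining iterations using $\natLess$, $\natPlusOne$, and the Gödel formulas — this is why $\proofProp{Göd}{\dynamicTheory}$ (hence inductive expressivity) is indispensable, not just $\proofProp{FolExp}{\dynamicTheory}$. I expect this Gödelization-of-traces argument for loops (and the bookkeeping of which variables are fresh/admissible so that the side conditions of \ref{axiom:B}, \ref{axiom:C}, and \ref{axiom:I} are met) to be the main obstacle; the transfers of $\proofProp{Fin}{}$, $\proofProp{Göd}{}$, and $\proofProp{FolExp}{}$ through $\regular{(\cdot)}$ are routine manipulations of the free/bound-variable overapproximations $\regular{\freeVarsProgram}$ from \Cref{def:liftedDL:programFV}.
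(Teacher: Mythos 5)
Your proposal is correct and follows essentially the same route as the paper: chaining \Cref{lem:rel-completeness:trivial-regular}, \Cref{lem:rel-completeness:fol-expressiveness-regular}, and \Cref{lem:rel-completeness:regular-box,lem:rel-completeness:regular-diamond} to obtain $\proofProp{FolRelCom}{\regular{\dynamicTheory}}$, then concluding via \Cref{lem:rel-complete}. Your additional sketch of the box/diamond induction and the Gödelization-of-traces argument for loops matches the appendix proofs of those lemmas, so nothing is missing.
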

\begin{proof}
Using the notation from \Cref{fig:rel-completeness:relative_completeness_calculus}:
We have \proofProp{FolExp}{\dynamicTheory}, \proofProp{Göd}{\dynamicTheory},\proofProp{Fin}{\dynamicTheory},\proofProp{RelCom}{\dynamicTheory}.
Then via \Cref{lem:rel-completeness:trivial-regular} get \proofProp{Fin}{\regular{\dynamicTheory}}.
Via \Cref{lem:rel-completeness:fol-expressiveness-regular} get \proofProp{FolExpr}{\regular{\dynamicTheory}} and via \Cref{lem:rel-completeness:regular-box,lem:rel-completeness:regular-diamond} get \proofProp{FolRelCom}{\regular{\dynamicTheory}}.
Then via \Cref{lem:rel-complete} get \proofProp{RelCom}{\regular{\dynamicTheory}}.
\end{proof}

\begin{theorem}[Relative Completeness of Fully Heterogeneous Dynamic Theories] %
\label{end-thm:rel-complete:hero-dl}
Let $\zero{\dynamicTheory},\one{\dynamicTheory}$ be two relatively complete, FOL expressive, Gödel Expressive dynamic theories with finite support
that are communicating in $\preHero{\dynamicTheory}$.
Then the fully heterogeneous dynamic theory $\hero{\dynamicTheory}$ with its proof calculus $\calculus[\hero{\dynamicTheory}]$ is relatively complete.
\end{theorem}
\begin{proof}
\label{proof:rel-complete:hero-dl}
We use the notation and rules from \Cref{fig:rel-completeness:relative_completeness_calculus}:\\
First, let us reconsider our assumptions:
We have \proofProp{FolExp}{\zero{\dynamicTheory}}, \proofProp{Göd}{\zero{\dynamicTheory}},\proofProp{Fin}{\zero{\dynamicTheory}},\proofProp{RelCom}{\zero{\dynamicTheory}}.
We also have \proofProp{FolExp}{\one{\dynamicTheory}}, \proofProp{Göd}{\one{\dynamicTheory}},\proofProp{Fin}{\one{\dynamicTheory}},\proofProp{RelCom}{\one{\dynamicTheory}}.
We further know \proofProp{Co}{\preHero{\dynamicTheory}}.

Using \Cref{lem:rel-completeness:hero-box,lem:rel-completeness:hero-diamond} we get \proofProp{FolRelCom}{\preHero{\dynamicTheory}}.
Moreover have \proofProp{Göd}{\preHero{\dynamicTheory}} (via \Cref{lem:rel-completeness:hero-goedel}) and \proofProp{FolExp}{\preHero{\dynamicTheory}} (via \Cref{lem:rel-completeness:hero-fol-expressive}).
Together this yields \proofProp{RelCom}{\preHero{\dynamicTheory}} via \Cref{lem:rel-complete}.

Via \Cref{lem:rel-completeness:trivial} we moreover get \proofProp{Fin}{\havoced{(\preHero{\dynamicTheory})}} and \proofProp{Göd}{\havoced{(\preHero{\dynamicTheory})}} and we can further derive \proofProp{FolExpr}{\havoced{(\preHero{\dynamicTheory})}} (\Cref{lem:rel-completeness:fol-expressiveness-havoc}) and \proofProp{FolRelCom}{\havoced{(\preHero{\dynamicTheory})}} (\Cref{lem:rel-completeness:box_diamond_complete}).
Again, this yields \proofProp{RelCom}{\havoced{(\preHero{\dynamicTheory})}} via \Cref{lem:rel-complete}.

Based on this, \Cref{lem:rel-completeness:trivial-regular} yields \proofProp{Fin}{\regular{(\havoced{(\preHero{\dynamicTheory})})}} and \proofProp{Göd}{\regular{(\havoced{(\preHero{\dynamicTheory})})}}.
\Cref{lem:rel-completeness:fol-expressiveness-regular} then delivers \proofProp{FolExpr}{\regular{(\havoced{(\preHero{\dynamicTheory})})}} and together with our relative completeness result for \havoced{(\preHero{\dynamicTheory})} (\proofProp{RelCom}{\havoced{(\preHero{\dynamicTheory})}}) this delivers \proofProp{FolRelCom}{\regular{(\havoced{(\preHero{\dynamicTheory})})}} (\Cref{lem:rel-completeness:regular-box,lem:rel-completeness:regular-diamond}).

Finally, this yields $\proofProp{RelCom}{\regular{(\havoced{(\preHero{\dynamicTheory})})}} \equiv \proofProp{RelCom}{\hero{\dynamicTheory}}$ (via \Cref{lem:rel-complete}).
\end{proof}

\section{Proofs and Additional Definitions}
\subsection{Auxilliary Definitions and Lemmas}
\printProofs

\subsection{Definitions and Proofs for Relative Completeness Results}
In order to reason more explicitly about the behavior of renditions, we require a vehicle to reason more explicitly about renditions.
To this end, note that due to the \newNameForInterpolation{} property (see \Cref{def:DL:universe_state_space_var_eval}), a variable $\variableOne$ can be updated to any value from $\valueSet\left(\variableOne\right)$ in any state.
This gives rise to the notion of twin states that allows us to selectively pull values of a twin variable from another state:
\begin{definition}[Twin States]
Let $\dynamicTheory$ be some dynamic theory and $\variableVecOne,\variableVecOneNext$ be twin variable vectors from $\signatureVariables$.
For any two states $\stateOne,\stateTwo\in\stateSpace$ we define the \emph{twin state} $\twinState{\stateOne}{\stateTwo}{\variableVecOne}{\variableVecOneNext}$
as the state that maps variables from $\variableVecOneNext$ to the value of their twin $\variableVecOne$ in $\stateTwo$ and otherwise to $\stateOne$, formally we require that:
\[
\variableEval\left(\twinState{\stateOne}{\stateTwo}{\variableVecOne}{\variableVecOneNext},\variableOne\right) = \begin{cases}
    \variableEval\left(\stateTwo,\variableVecOne_i\right) & \variableOne=\variableVecOneNext_i \text{ for some }i\\
    \variableEval\left(\stateOne,\variableOne\right) & \text{else}
\end{cases}
\]
\end{definition}
Note that due to the \newNameForInterpolation{} property of the state space, we have for any twin variable vectors $\variableVecOne,\variableVecOneNext$ and any states $\stateOne,\stateTwo\in\stateSpace$ that $\twinState{\stateOne}{\stateTwo}{\variableVecOne}{\variableVecOneNext} \in \stateSpace$.
We can then derive the following, equivalent rendition property for a finite support dynamic theory:
\begin{definition}[Equivalent Rendition Definition]
\label{def:rel-complete:rendition-equiv}
For a dynamic theory $\dynamicTheory$ with finite support we say it supports \emph{rendition} iff there exists a function $\programRenditionSym$ such that for all programs $\programOne\in\signaturePrograms$ and all twin variable vectors $\variableVecOne,\variableVecOneNext$ with $\variableVecOne \supseteq \variablesOf{\programOne}$ and $\variableVecOne\cap\variableVecOneNext=\emptyset$ the function $\programRendition{\programOne}\left(\variableVecOne,\variableVecOneNext\right)$ returns a formula in $\folFormulaSet{\dynamicSignature}$ such that:\\
$
\left(\stateOne,\stateTwo\right) \in \programEval\mleft(\programOne\mright)$ iff
$
\twinState{\stateOne}{\stateTwo}{\variableVecOne}{\variableVecOneNext} \models_{\dynamicTheory} \programRendition{\programOne}\left(\variableVecOne,\variableVecOneNext\right)
$ and $
\equalOn{\stateOne}{\stateTwo}{\variableVecOne^\complement}
$
\end{definition}
\begin{proposition}
The definition in \Cref{def:rel-complete:rendition-equiv} is equivalent to the \textsc{Rendition} property as defined in \Cref{def:rel-complete:fol-expressive}.
\end{proposition}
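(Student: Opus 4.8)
The plan is to fix a program $\programOne \in \signaturePrograms$ together with twin variable vectors $\variableVecOne,\variableVecOneNext$ subject to the shared side conditions $\variableVecOne \supseteq \variablesOf{\programOne}$ and $\variableVecOne \cap \variableVecOneNext = \emptyset$, and to show that a candidate formula $F \in \folFormulaSet{\dynamicSignature}$ (playing the role of $\programRendition{\programOne}(\variableVecOne,\variableVecOneNext)$) meets the \textsc{Rendition} clause of \Cref{def:rel-complete:fol-expressive} if and only if it meets the transition characterisation of \Cref{def:rel-complete:rendition-equiv}; equivalence of the two existentially quantified properties is then immediate, since in both directions the witnessing $\programRenditionSym$ can be taken to be the same. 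A useful first move is to put the \textsc{Rendition} clause into pointwise form: $\models_{\dynamicTheory}\bigl(F \leftrightarrow \modDia{\programOne}{\variableVecOne \doteq \variableVecOneNext}\bigr)$ unfolds to $\sem{F} = \sem{\modDia{\programOne}{\variableVecOne \doteq \variableVecOneNext}}$, and because $\variableVecOneNext$ is disjoint from $\boundVarsSem(\programOne) \subseteq \variablesOf{\programOne} \subseteq \variableVecOne$, the Bounded Effect lemma (\Cref{isaLem:bounded_effect_progams}) together with the semantics of $\doteq$ (the \textsc{Eq Predicate} property) shows $\sem{\modDia{\programOne}{\variableVecOne \doteq \variableVecOneNext}} = \{\stateOne \mid \exists \stateTwo.\ (\stateOne,\stateTwo) \in \programEval(\programOne) \text{ and } \variableEval(\stateTwo,\variableVecOne_i) = \variableEval(\stateOne,\variableVecOneNext_i) \text{ for all } i\}$.

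For the direction from \Cref{def:rel-complete:rendition-equiv} to \Cref{def:rel-complete:fol-expressive} --- which I expect to be routine --- I would fix a state $\stateOne$ and let $\stateTwo := \twinState{\stateOne}{\stateOne}{\variableVecOneNext}{\variableVecOne}$, which lies in $\stateSpace$ by the interpolation property (\Cref{def:DL:universe_state_space_var_eval}) and differs from $\stateOne$ only on $\variableVecOne$, with $\variableEval(\stateTwo,\variableVecOne_i) = \variableEval(\stateOne,\variableVecOneNext_i)$, so that $\equalOn{\stateOne}{\stateTwo}{\variableVecOne^\complement}$ holds. One then observes that $\twinState{\stateOne}{\stateTwo}{\variableVecOne}{\variableVecOneNext}$ agrees with $\stateOne$ on \emph{all} variables, hence by the Coincidence Lemma for formulas (\Cref{isaLem:coincidence_formulas}) it satisfies $F$ exactly when $\stateOne$ does. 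Invoking \Cref{def:rel-complete:rendition-equiv} turns this into $(\stateOne,\stateTwo) \in \programEval(\programOne)$ iff $\stateOne \in \sem{F}$, yielding one inclusion of the pointwise form; the reverse inclusion follows symmetrically by starting from an arbitrary witnessing successor of $\stateOne$.

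For the converse direction, fix $\stateOne,\stateTwo$ and set $\stateThree := \twinState{\stateOne}{\stateTwo}{\variableVecOne}{\variableVecOneNext}$. The ``only if'' half is the easier one: from $(\stateOne,\stateTwo) \in \programEval(\programOne)$, Bounded Effect gives $\equalOn{\stateOne}{\stateTwo}{\variableVecOne^\complement}$, and transporting the transition from $\stateOne$ to $\stateThree$ --- which differs from $\stateOne$ only on the finite set $\variableVecOneNext$, disjoint from $\freeVarsSem(\programOne) \subseteq \variableVecOne$ --- via the Coincidence Lemma for programs (\Cref{isaLem:coincidence_programs}) produces a successor witnessing $\stateThree \in \sem{\modDia{\programOne}{\variableVecOne \doteq \variableVecOneNext}} = \sem{F}$. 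The ``if'' half is the crux and the step I expect to be the main obstacle: from $\stateThree \in \sem{F}$ and $\equalOn{\stateOne}{\stateTwo}{\variableVecOne^\complement}$ one extracts, via the pointwise \textsc{Rendition} form and another application of the coincidence/over-approximation machinery for programs from \Cref{def:DL:prog_eval} and \Cref{isaLem:coincidence_programs}, a transition $(\stateOne,\rho) \in \programEval(\programOne)$ where $\rho$ agrees with $\stateTwo$ both on $\variableVecOne$ (reading off the decoded diamond witness) and on $\variableVecOne^\complement$ (by Bounded Effect together with $\equalOn{\stateOne}{\stateTwo}{\variableVecOne^\complement}$), hence $\equalOn{\rho}{\stateTwo}{\signatureVariables}$. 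Concluding the desired $(\stateOne,\stateTwo) \in \programEval(\programOne)$ then hinges on applying the extensional behaviour of program evaluation from \Cref{def:DL:prog_eval} so as to replace $\rho$ by the extensionally equal $\stateTwo$; this is precisely where the possibility of intensionally distinct but extensionally equal states must be handled, and where the interpolation and finite-support hypotheses (\Cref{def:rel-complete:finite-support}), together with the careful twin-state bookkeeping, carry the argument.
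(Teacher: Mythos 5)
First, a point of reference: the paper states this proposition \emph{without proof}, so there is no official argument to compare against and your attempt has to stand on its own. Most of it does: the pointwise reformulation of the \textsc{Rendition} clause via Bounded Effect, the direction from \Cref{def:rel-complete:rendition-equiv} to \Cref{def:rel-complete:fol-expressive}, and the ``only if'' half of the converse are all correct. The problem is exactly the step you yourself flag as the crux. Having extracted a transition $(\stateOne,\rho)\in\programEval(\programOne)$ with $\equalOn{\rho}{\stateTwo}{\signatureVariables}$, you propose to conclude $(\stateOne,\stateTwo)\in\programEval(\programOne)$ by ``applying the extensional behaviour of program evaluation from \Cref{def:DL:prog_eval}''. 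But the (\textsc{Extensionality}) assumption there is extensional only in the \emph{pre}-state: if $\equalOn{\stateOne}{\stateTwo}{\signatureVariables}$ then $(\stateOne,\stateThree)\in\programEval(\programOne)$ iff $(\stateTwo,\stateThree)\in\programEval(\programOne)$, with the post-state $\stateThree$ held fixed. Nothing in \Cref{def:DL:prog_eval}, nor in the coincidence and bounded-effect lemmas (which only yield existence of \emph{some} suitable successor), licenses replacing $\rho$ by an extensionally equal state in the \emph{second} component; interpolation and finite support do not help either.

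The gap is not cosmetic. The right-hand side of \Cref{def:rel-complete:rendition-equiv} depends on $\stateTwo$ only through $\variableEval(\stateTwo,\cdot)$, so any $\programRenditionSym$ satisfying it forces $\programEval(\programOne)$ to be extensional in the post-state. A dynamic theory with two intensionally distinct but extensionally equal states $s,s'$ and $\programEval(\programOne)=\left\{(s,s),(s',s)\right\}$ satisfies every assumption of \Cref{def:DL:prog_eval} yet admits no $\programRenditionSym$ in the sense of \Cref{def:rel-complete:rendition-equiv}, while the \textsc{Rendition} clause of \Cref{def:rel-complete:fol-expressive}, which only asserts the existence of a witnessing successor with the right valuation, can still hold. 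So the direction from \Cref{def:rel-complete:fol-expressive} to \Cref{def:rel-complete:rendition-equiv} genuinely requires an additional post-state extensionality hypothesis (or a restriction to extensional state spaces); as written, your final step would fail, and the issue appears to lie with the proposition itself rather than with your bookkeeping.
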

\begin{lemma}[Expressibility of $\dynamicSignature$-formulas as first-order formulas]
\label{lem:expressibility_fol}
If a dynamic theory $\dynamicTheory$ is FOL Expressive with finite support, then for every formula 
$\formulaOne\in\formulaSet{\dynamicSignature}$ there exists a $\formulaOneFOL\in\folFormulaSet{\dynamicSignature}$ such that
$\vDash_{\dynamicTheory} \formulaOne \iff \formulaOneFOL$.
\end{lemma}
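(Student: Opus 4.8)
The plan is to prove the statement by structural induction on $\formulaOne\in\formulaSet{\dynamicSignature}$, producing in each case an equivalent $\formulaOneFOL\in\folFormulaSet{\dynamicSignature}$. For $\formulaOne\equiv\atomOne$ there is nothing to do, since atoms already lie in $\folFormulaSet{\dynamicSignature}$. For $\neg\formulaOne$, $\formulaOne\land\formulaTwo$ and $\fa{\variableOne}{\formulaOne}$ the induction hypothesis supplies box-free equivalents of the immediate subformulas, and the corresponding combination is again box-free, so it lies in $\folFormulaSet{\dynamicSignature}$; validity of the resulting equivalence is immediate from the semantics. Hence the only genuine case is $\modBox{\programOne}{\formulaOne}$ with $\programOne\in\signaturePrograms$, where the induction hypothesis gives $\formulaOneFOL$ with $\models_{\dynamicTheory}\formulaOne\iff\formulaOneFOL$ and the modality must be eliminated using the \textsc{Rendition} and equality ($\doteq$) data of \Cref{def:rel-complete:fol-expressive}.

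First I would note that $\variablesOfSem{\programOne}$ is finite: $\freeVarsSem\left(\programOne\right)\subseteq\freeVarsProgram\left(\programOne\right)$ by \Cref{isaLem:fv_soundness}, which is finite by \Cref{def:DL:prog_eval}, while $\boundVarsSem\left(\programOne\right)\subseteq\boundVarsProgram\left(\programOne\right)$ is finite by finite support (\Cref{def:rel-complete:finite-support}). I can therefore fix a finite twin variable vector $\variableVecOne\supseteq\variablesOfSem{\programOne}$ and, using \textsc{Infinite Variables}, a twin vector $\variableVecOneNext$ that is fresh, i.e.\ disjoint from $\variableVecOne$ and from the (finite) set $\freeVarsSyn\left(\formulaOneFOL\right)$. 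I then take as the first-order equivalent of $\modBox{\programOne}{\formulaOneFOL}$ the formula
\[
\formulaTwoFOL \;\equiv\; \fa{\variableVecOneNext}{\left(\programRendition{\programOne}\left(\variableVecOne,\variableVecOneNext\right)\implies\ex{\variableVecOne}{\left(\variableVecOne\doteq\variableVecOneNext\land\formulaOneFOL\right)}\right)},
\]
which lies in $\folFormulaSet{\dynamicSignature}$ because rendition formulas and $\doteq$-atoms are box-free and $\formulaOneFOL$ is box-free by hypothesis. The inner $\ex{\variableVecOne}{\left(\variableVecOne\doteq\variableVecOneNext\land\formulaOneFOL\right)}$ is the crucial gadget: it simulates the syntactic substitution $\formulaOneFOL[\variableVecOne:=\variableVecOneNext]$ — unavailable in this term-free setting — by rebinding $\variableVecOne$ to the value currently stored in its twin $\variableVecOneNext$.

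It then remains to verify $\models_{\dynamicTheory}\modBox{\programOne}{\formulaOneFOL}\iff\formulaTwoFOL$, i.e.\ agreement in every state $\stateOne$. Both directions proceed by matching the program successors $\stateTwo$ of $\stateOne$ — characterised via \Cref{def:rel-complete:rendition-equiv} as the $\stateTwo$ with $\twinState{\stateOne}{\stateTwo}{\variableVecOne}{\variableVecOneNext}\models_{\dynamicTheory}\programRendition{\programOne}\left(\variableVecOne,\variableVecOneNext\right)$ and $\equalOn{\stateOne}{\stateTwo}{\variableVecOne^\complement}$ — against instantiations of the outer $\fa{\variableVecOneNext}$ that assign $\variableVecOneNext$ the $\variableVecOne$-values of $\stateTwo$ (legitimate since $\variableVecOne,\variableVecOneNext$ are twins and by the interpolation property of \Cref{def:DL:universe_state_space_var_eval}). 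For such an instantiation the antecedent of $\formulaTwoFOL$ holds exactly when $\stateTwo$ is a successor, and the $\ex{\variableVecOne}$-witness then yields a state that coincides with $\stateTwo$ on all variables except possibly the fresh $\variableVecOneNext$, where $\boundVarsSem\left(\programOne\right)\subseteq\variableVecOne$ together with bounded effect (\Cref{isaLem:bounded_effect_progams}) controls the variables outside $\variableVecOne$. Since $\variableVecOneNext$ is disjoint from $\freeVarsSyn\left(\formulaOneFOL\right)\supseteq\freeVarsSem\left(\formulaOneFOL\right)$, the coincidence lemma for formulas (\Cref{isaLem:coincidence_formulas}) transfers the truth value of $\formulaOneFOL$ between that state and $\stateTwo$, and $\formulaOneFOL$ is interchangeable with $\formulaOne$ by the induction hypothesis; reading the equivalence in both directions closes the case.

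The step I expect to be the main obstacle is precisely this modal case and, within it, the variable bookkeeping: establishing finiteness of $\variablesOfSem{\programOne}$ (the one place where finite support is genuinely used), choosing $\variableVecOne$ and the fresh twins $\variableVecOneNext$ so that the $\ex{\variableVecOne}{\left(\variableVecOne\doteq\variableVecOneNext\land\cdot\right)}$ gadget faithfully mimics substitution, and then threading the equivalent rendition property through the bounded-effect and coincidence lemmas so that the reconstructed witness state and the actual successor $\stateTwo$ provably agree on $\freeVarsSem\left(\formulaOneFOL\right)$. Once that is settled, the remaining cases are routine applications of the induction hypothesis and the closure of $\folFormulaSet{\dynamicSignature}$ under connectives and quantifiers.
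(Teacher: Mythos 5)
Your proposal is correct and follows essentially the same route as the paper's proof: structural induction with the only substantive case being $\modBox{\programOne}{\formulaOne}$, handled by fixing a finite twin vector $\variableVecOne\supseteq\variablesOfSem{\programOne}$ (finiteness from $\freeVarsProgram$ and finite support), guarding with $\programRendition{\programOne}\left(\variableVecOne,\variableVecOneNext\right)$ under $\fa{\variableVecOneNext}$, and simulating substitution by re-quantifying $\variableVecOne$ pinned to $\variableVecOneNext$ via $\doteq$, with coincidence and bounded effect closing the gap between the witness state and the actual successor. The only difference is that your inner substitution gadget is the existential dual $\ex{\variableVecOne}{\left(\variableVecOne\doteq\variableVecOneNext\land\formulaOneFOL\right)}$ of the paper's $\fa{\variableVecOne}{\left(\variableVecOne\doteq\variableVecOneNext\implies\formulaOneFOL\right)}$, which is equivalent here because the guard is satisfiable by interpolation and all guarded states agree on $\signatureVariables$, so this is a cosmetic variation rather than a different argument.
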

\begin{proof}
Proof by structural induction:
\begin{itemize}[label={ABCDEFG},align=left, leftmargin=*]
\item[$\formulaOne\in\folFormulaSet{\dynamicSignature}$]
In this case $\formulaOneFOL \equiv \formulaOne$.

\item[$\formulaOne \equiv \neg \formulaTwo$]
By structural induction we know there is a formula $\formulaTwoFOL$ such that $\vDash_{\dynamicTheory} \formulaTwo \iff \formulaTwoFOL$.
Then, it also holds that
$\vDash_{\dynamicTheory} \formulaOne \iff \neg\formulaTwoFOL$.

\item[$\formulaOne \equiv \modBox{\programOne} \formulaTwo$]
By structural induction we know there is a formula $\formulaTwoFOL$ such that $\vDash_{\dynamicTheory} \formulaTwo \iff \formulaTwoFOL$.
Through our assumption of FOL Expressiveness, we also know that there exists a formula $\programRendition{\programOne}$ such that 
$\left(\stateOne,\stateTwo\right)\in\programEval\mleft(\programOne\mright)$ iff
we have $\twinState{\stateOne}{\stateTwo}{\variableVecOne}{\variableVecOneNext} \models_{\dynamicTheory} \programRendition{\programOne}\left(\variableVecOne,\variableVecOneNext\right)$ and $\equalOn{\stateOne}{\stateTwo}{\variableVecOne^\complement}$.

Hence, choose twin variable vectors $\variableVecOne \supseteq \variablesOf{\programOne}\cup\freeVarsSyn\left(\formulaTwoFOL\right)$ and $\variableVecOneNext$.
We then construct the following formula
$\formulaThree$:
\[
\fa{\variableVecOneNext}{\left(
\programRendition{\programOne}\left(\variableVecOne,\variableVecOneNext\right)
\implies
\left(
\fa{\variableVecOne}{
\left(
\variableVecOne \doteq \variableVecOneNext
\implies
\formulaTwoFOL
\right)
}
\right)
\right)}.
\]
Let $\stateOne\in\stateSpace$ be some state such that $\stateOne\models_{\dynamicTheory} \formulaThree$.
Then consider a state $\stateOneVariant$ such that $\equalOn{\stateOne}{\stateOneVariant}{\left(\variableVecOneNext\right)^\complement}$.
Then we can interpret $\stateOneVariant$ as a twin state for some $\stateTwo$ such that $\stateOneVariant \equiv \twinState{\stateOne}{\stateTwo}{\variableVecOne}{\variableVecOneNext}$ and $\equalOn{\stateOne}{\stateTwo}{\variableVecOne^\complement}$.
Then $\stateOneVariant\models_{\dynamicTheory} \programRendition{\programOne}\left(\variableVecOne,\variableVecOneNext\right)$ iff $\left(\stateOne,\stateTwo\right)\in\programEval\mleft(\programOne\mright)$.
We consider the case where $\programRendition{\programOne}$ is satisfied by $\stateOneVariant$.
Then consider any state $\stateTwoVariant$ with $\equalOn{\stateOneVariant}{\stateTwoVariant}{\variableVecOne^\complement}$ and $\stateTwoVariant \models_{\dynamicTheory} \variableVecOne \doteq \variableVecOneNext$.
Then by assumption we know that $\stateTwoVariant \models_{\dynamicTheory} \formulaTwoFOL$.
However, we also know that $\equalOn{\stateTwo}{\stateTwoVariant}{\variableVecOne}$,
because $\stateOneVariant$ encodes the values of $\stateTwo$ in $\variableVecOneNext$, $\stateTwoVariant$ has the same values as $\stateOneVariant$ for $\variableVecOneNext$ and $\stateTwoVariant$ has equal values for $\variableVecOne$ and $\variableVecOneNext$.
Consequently, by coincidence we get that $\stateTwo \models_{\dynamicTheory}\formulaTwoFOL$.
Hence $\stateOne \models_{\dynamicTheory} \modBox{\programOne}{\formulaTwoFOL}$.

Conversely, assume $\stateOne \models_{\dynamicTheory} \modBox{\programOne}{\formulaTwoFOL}$.
In this case, we know that for any $\left(\stateOne,\stateTwo\right)\in\programEval\mleft(\programOne\mright)$ we get that $\stateTwo \models_{\dynamicTheory} \formulaTwoFOL$.
Importantly, for any $\variableSet\supseteq \variablesOf{\programOne}$ we get $\equalOn{\stateOne}{\stateTwo}{\variableSet}$.
Then for any $\stateOneVariant$ with $\equalOn{\stateOne}{\stateOneVariant}{\variableVecOneNext^\complement}$ with $\stateOneVariant\models_{\dynamicTheory}\programRendition{\programOne}\left(\variableVecOne,\variableVecOneNext\right)$ (otherwise $\stateOneVariant$ trivially satisfies the implication), we know that we can interpret $\stateOneVariant$ as a twin state $\twinState{\stateOne}{\stateTwoVariant}{\variableVecOne}{\variableVecOneNext}$ with $\equalOn{\stateOne}{\stateTwoVariant}{\variableVecOne^\complement}$.
We then know that $\left(\stateOne,\stateTwoVariant\right)\in\programEval\mleft(\programOne\mright)$ and hence $\stateTwoVariant\models_{\dynamicTheory} \formulaTwoFOL$.
However, just as before, we can then construct a state $\stateThree$ such that $\equalOn{\stateOneVariant}{\stateThree}{\variableVecOne^\complement}$ but $\stateThree\models_{\dynamicTheory} \variableVecOne \doteq \variableVecOneNext$.
Hence, $\equalOn{\stateTwoVariant}{\stateThree}{\variableVecOne}$ and hence by coincidence lemma $\stateThree\models_{\dynamicTheory}\formulaTwoFOL$.
Hence, since $\stateThree$ merely represents an arbitrary state after instantiation of quantifiers for $\stateOne$, we get that $\stateOne\models_{\dynamicTheory} \formulaThree$.

\item[$\formulaOne \equiv \forall \variableOne \formulaTwo$]
By structural induction we know there is a formula $\formulaTwoFOL$ such that $\models_{\dynamicTheory} \formulaTwo \iff \formulaTwoFOL$.
Then it also holds that $\models_{\dynamicTheory} \forall \variableOne\formulaTwo \iff \forall \variableOne\formulaTwoFOL$

\item[$\formulaOne \equiv \formulaTwo \land \formulaThree$]
By structural induction we know there are formulas $\formulaTwoFOL$ and $\formulaThreeFOL$ such that $\vDash_{\dynamicTheory} \formulaTwo \iff \formulaTwoFOL$ and resp. $\vDash_{\dynamicTheory} \formulaThree \iff \formulaThreeFOL$.
Then, it also holds that
$\vDash_{\dynamicTheory} \formulaOne \iff \formulaTwoFOL \land \formulaThreeFOL$.
\end{itemize}
\end{proof}

\begin{theorem}[Relative Completeness]
    \label{lem:rel-complete}
    Let $\dynamicTheory$ be an FOL Expressive dynamic theory with finite support. %
    If its proof calculus $\calculus[\dynamicTheory]$ admits proving all valid formulas of the form $\formulaOneFOL \implies \modBox{\programOne}{\formulaTwoFOL}$ and all valid formulas of the form $\formulaOneFOL \implies \modDia{\programOne}{\formulaTwoFOL}$, then its proof calculus is relatively complete w.r.t. $\folFormulaSet{\dynamicTheory}$.
\end{theorem}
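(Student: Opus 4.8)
The plan is to reduce any valid $\dynamicSignature$-formula to an equivalent first-order formula and then discharge that formula against the oracle $\Gamma_{\dynamicTheory}$. Concretely, assume $\models_{\dynamicTheory}\formulaOne$ for some $\formulaOne\in\formulaSet{\dynamicSignature}$. By \Cref{lem:expressibility_fol} (applicable since $\dynamicTheory$ is FOL Expressive with finite support) there is an $\formulaOneFOL\in\folFormulaSet{\dynamicSignature}$ with $\models_{\dynamicTheory}\formulaOne\iff\formulaOneFOL$; since $\formulaOne$ is valid, so is $\formulaOneFOL$, hence $\formulaOneFOL\in\Gamma_{\dynamicTheory}$ and thus $\Gamma_{\dynamicTheory}\calculus[\dynamicTheory]\formulaOneFOL$. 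It therefore suffices to establish $\Gamma_{\dynamicTheory}\calculus[\dynamicTheory]\formulaOne\iff\formulaOneFOL$: one application of modus ponens then yields $\Gamma_{\dynamicTheory}\calculus[\dynamicTheory]\formulaOne$, which is exactly the \textsc{Rel.\ Complete} property of \Cref{def:rel-complete:rel-complete}.

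To obtain the missing equivalence $\Gamma_{\dynamicTheory}\calculus[\dynamicTheory]\formulaOne\iff\formulaOneFOL$ I would revisit the proof of \Cref{lem:expressibility_fol} and upgrade it from a purely semantic statement to a derivability statement: prove by structural induction on $\formulaOne$ that the rendition $\formulaOneFOL$ built there additionally satisfies $\Gamma_{\dynamicTheory}\calculus[\dynamicTheory]\formulaOne\iff\formulaOneFOL$. The base case $\formulaOne\in\folFormulaSet{\dynamicSignature}$ is trivial ($\formulaOneFOL\equiv\formulaOne$); the cases $\formulaOne\equiv\neg\formulaTwo$ and $\formulaOne\equiv\formulaTwo\land\formulaThree$ are immediate from the induction hypothesis and propositional completeness of $\calculus[\dynamicTheory]$; and for $\formulaOne\equiv\fa{\variableOne}{\formulaTwo}$ one takes $\formulaOneFOL\equiv\fa{\variableOne}{\formulaTwoFOL}$ and uses congruence under $\forall$, the only delicacy being that generalizing a $\Gamma_{\dynamicTheory}$-derivation over $\variableOne$ is legitimate because $\Gamma_{\dynamicTheory}$ contains only valid formulas, so no unsound variable capture can arise.

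The heart of the argument — and the step I expect to be the main obstacle — is the box case $\formulaOne\equiv\modBox{\programOne}{\formulaTwo}$ (the diamond case then follows by duality from $\modDia{\programOne}{\formulaTwo}\equiv\neg\modBox{\programOne}{\neg\formulaTwo}$). From the induction hypothesis $\Gamma_{\dynamicTheory}\calculus[\dynamicTheory]\formulaTwo\iff\formulaTwoFOL$ and the monotonicity rule \ref{axiom:M} (derived from \ref{axiom:G} and \ref{axiom:K}) one obtains $\Gamma_{\dynamicTheory}\calculus[\dynamicTheory]\modBox{\programOne}{\formulaTwo}\iff\modBox{\programOne}{\formulaTwoFOL}$, so it remains to derive the \emph{rendition equivalence} $\modBox{\programOne}{\formulaTwoFOL}\iff\formulaOneFOL$ where, exactly as in \Cref{lem:expressibility_fol}, $\formulaOneFOL\equiv\fa{\variableVecOneNext}{\left(\programRendition{\programOne}\mleft(\variableVecOne,\variableVecOneNext\mright)\implies\fa{\variableVecOne}{\left(\variableVecOne\doteq\variableVecOneNext\implies\formulaTwoFOL\right)}\right)}$ for a twin vector $\variableVecOne\supseteq\variablesOf{\programOne}\cup\freeVarsSyn\left(\formulaTwoFOL\right)$ and fresh twins $\variableVecOneNext$. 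The crucial observation is that the two halves of the \textsc{Rendition} property of \Cref{def:rel-complete:fol-expressive} are precisely of the shape whose validity the theorem hypothesis makes provable: $\programRendition{\programOne}\mleft(\variableVecOne,\variableVecOneNext\mright)\implies\modDia{\programOne}{\variableVecOne\doteq\variableVecOneNext}$ is a valid $\formulaOneFOL\implies\modDia{\programOne}{\formulaTwoFOL}$-formula, and (after contraposition and unfolding the diamond) its converse is a valid $\formulaOneFOL\implies\modBox{\programOne}{\formulaTwoFOL}$-formula, so $\calculus[\dynamicTheory]\programRendition{\programOne}\mleft(\variableVecOne,\variableVecOneNext\mright)\iff\modDia{\programOne}{\variableVecOne\doteq\variableVecOneNext}$. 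Feeding this equivalence into the Barcan and vacuity axioms \ref{axiom:B}, \ref{axiom:V}, the pullback axiom \ref{axiom:PB} (equivalently \ref{axiom:MPDiamond}), the identity-equality axiom of \Cref{lem:rel-completeness:id_eq}, and first-order reasoning turns each semantic step of the box case in the proof of \Cref{lem:expressibility_fol} into a formal derivation of $\modBox{\programOne}{\formulaTwoFOL}\iff\formulaOneFOL$.

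The genuinely hard part is bookkeeping: one must carry the side conditions — freshness of $\variableVecOneNext$, the inclusion $\variableVecOne\supseteq\variablesOf{\programOne}\cup\freeVarsSyn\left(\formulaTwoFOL\right)$, and $\variableVecOne\cap\variableVecOneNext=\emptyset$ — through the derivation so that the variable-condition premises of \ref{axiom:B}, \ref{axiom:V} and \ref{axiom:PB} are discharged, relying on the \textsc{Infinite Variables} clause of \Cref{def:rel-complete:fol-expressive} together with finiteness of $\variablesOf{\programOne}$ (from finite support, \Cref{def:rel-complete:finite-support}, and \Cref{def:DL:prog_eval}) to guarantee enough room to pick $\variableVecOne$ and $\variableVecOneNext$. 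Once the strengthened induction is complete, the conclusion follows from the reduction sketched in the first paragraph.
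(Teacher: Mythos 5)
Your proposal is correct in substance but takes a genuinely different route from the paper. The paper normalizes the valid formula into conjunctive normal form and inducts on the number of modalities and non-trivial quantifiers; in each disjunct of the form $\formulaOne \lor \modBox{\programOne}{\formulaTwo}$ (dually for diamonds) it invokes \Cref{lem:expressibility_fol} only \emph{semantically}, obtains provability of $\formulaOne \iff \formulaOneFOL$ and $\formulaTwo\iff\formulaTwoFOL$ from the induction hypothesis (these equivalences are valid and have fewer modalities), applies the box/diamond hypothesis to the valid formula $\neg\formulaOneFOL \implies \modBox{\programOne}{\formulaTwoFOL}$, and recombines with \ref{axiom:G} and \ref{axiom:K}. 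You instead strengthen \Cref{lem:expressibility_fol} itself to a derivability statement $\Gamma_{\dynamicTheory}\calculus[\dynamicTheory]\formulaOne\iff\formulaOneFOL$ by structural induction and finish with one modus ponens against the oracle. Both arguments rest on the same two pillars (the semantic rendition construction and the box/diamond hypothesis), and both share the same standard glossed-over subtlety of applying \ref{axiom:G} and universal generalization underneath oracle premises (harmless because $\Gamma_{\dynamicTheory}$ contains only validities, which you at least flag). Your route buys a cleaner induction --- no CNF normalization, no counting of non-trivial quantifiers, and a single localized use of the hypothesis --- at the cost of having to re-establish the expressibility equivalence proof-theoretically. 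One simplification you should make in the box case: the detour through the \textsc{Rendition} property and the subsequent ``feeding into'' \ref{axiom:B}, \ref{axiom:V}, \ref{axiom:PB} is unnecessary and, as described, would amount to replaying the twin-state argument of \Cref{lem:expressibility_fol} syntactically, which is exactly the part you leave vague. The rendition equivalence $\modBox{\programOne}{\formulaTwoFOL}\iff\formulaOneFOL$ is derivable directly from the theorem's hypothesis: the implication $\formulaOneFOL\implies\modBox{\programOne}{\formulaTwoFOL}$ is itself a valid formula of the hypothesized box shape, and its converse, after contraposition, is the valid diamond-shaped formula $\neg\formulaOneFOL\implies\modDia{\programOne}{\neg\formulaTwoFOL}$; both are therefore provable by assumption, and propositional recombination closes the case.
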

\begin{proof}
\label{proof:hdl_relative_completeness}
While admitting more modularity through the properties outlined in \Cref{sec:rel-complete}, the basic gist of the proof is structurally identical to similar proofs in prior literature~\cite{harel_first-order_1979,Platzer2012}.
We assume we are given a valid formula in conjunctive normal form (if not, perform appropriate propositional recombination), where negations are pushed inside modalities and modalities are not pulled out of conjunctions/disjunctions.
Like \cite{harel_first-order_1979}, we perform induction over the number of modalities and the number $N$ of \emph{non-trivial} quantifiers, i.e.\ such quantifiers which have a modality inside.
First, note that the case for $N=0$ modalities and non-trivial quantifiers is covered by our assumption of relative completeness:
If the formula under consideration has no modalities, we can prove its validity relative to $\folFormulaSet{\dynamicTheory}$.
Next, consider the case $N+1$:
We first consider the conjunctions of formulas:
Since the formula is valid, we can prove each conjunct separately.
If a conjunction has $N$ modalities/non-trivial quantifiers, this case is already handled by the inductive hypothesis.
For a conjunct with $N+1$ modalities/non-trivial quantifiers, we must consider a disjunction.
There exist the following cases:
\begin{description}[leftmargin=!, labelwidth=2cm, align=right]
    \item[\ensuremath{\formulaOne\lor\modBox{\programOne}{\formulaTwo}}]
We then know that $\formulaOne,\formulaTwo$ have fewer modalities than our considered formula (covered by case $N$).
Thus, our calculus can already prove that $\formulaOne \iff \formulaOneFOL$ and $\formulaTwo \iff \formulaTwoFOL$ where $\formulaOneFOL,\formulaTwoFOL$ are constructed according to \Cref{lem:expressibility_fol} (note that $\formulaOneFOL,\formulaTwoFOL$ have $N=0$ modalities/non-trivial quantifiers).
We can furthermore prove $\neg\formulaOneFOL \implies \modBox{\programOne}{\formulaTwoFOL}$ by assumption.
Via \ref{axiom:MR} and propositional reasoning, we then derive $\neg\formulaOne \implies \modBox{\programOne} \formulaTwo$.
Propositional recombination then yields $\formulaOne \lor \modBox{\programOne} \formulaTwo$.
    \item[\ensuremath{\formulaOne\lor\modDia{\programOne}{\formulaTwo}}]
    Just as before, $\formulaOne,\formulaTwo$ have fewer modalities and our proof calculus can hence prove $\formulaOne \iff \formulaOneFOL$ and $\formulaTwo \iff \formulaTwoFOL$ which yields $\neg\formulaOne \implies \neg\formulaOneFOL$.
    Using $\formulaTwoFOL\implies\formulaTwo$ with \ref{axiom:G} and \ref{axiom:KDiamond} gives $\modDia{\programOne}{\formulaTwoFOL} \implies \modDia{\programOne}{\formulaTwo}$.
    By assumption, we can then prove $\neg\formulaOneFOL \implies \modDia{\programOne}{\formulaTwoFOL}$.
    Using propositional recombination we can then derive $\neg\formulaOne \implies \modDia{\programOne}{\formulaTwo}$ which then yields $\formulaOne \lor \modDia{\programOne}{\formulaTwo}$.
    \item[\ensuremath{\formulaOne\lor\fa{\variableOne}{\formulaTwo}}]
    \item[\ensuremath{\formulaOne\lor\ex{\variableOne}{\formulaTwo}}]
    Just as before, we know there exist first-order formulas $\formulaOneFOL,\formulaTwoFOL$ equivalent to $\formulaOne,\formulaTwo$.
    We leave $\neg\formulaOneFOL \implies \fa{\variableVecOne}{\formulaTwoFOL}$ (resp. $\neg\formulaOneFOL \implies \ex{\variableVecOne}{\formulaTwoFOL}$) as an exercise for our first-order oracle.
    It remains to prove that $\neg \formulaOne \iff \neg\formulaOneFOL$ and $\formulaTwoFOL\iff\formulaTwo$ (which we know to be true by construction).
    Importantly, these formulas have one \emph{non-trivial} quantifier less and are hence covered by our inductive hypothesis.
    First-order recombination of these three provable formulas then yields $\formulaOne\lor\fa{\variableOne}{\formulaTwo}$ (resp. $\formulaOne\lor\ex{\variableOne}{\formulaTwo}$).
\end{description}
Overall, the induction demonstrates that any valid formula can be derived via the given calculus.
\end{proof}

\begin{isabelleLemma}{Bound Variables over Havoc Programs}{HavocDynamicLogic.BV\_overapprox}
\label{lem:rel-complte:havoc_BV}
Let $\dynamicTheory$ be a dynamic theory with finite support,
then $\havoced{\dynamicTheory}$ also has finite support with the following $\havoced{\boundVarsProgram}$:
\begin{itemize}
    \item $\havoced{\boundVarsProgram}\mleft(\programThree\mright) = \boundVarsProgram\mleft(\programThree\mright)$ (for $\programThree \in \signaturePrograms$)
    \item $\havoced{\boundVarsProgram}\mleft(\variableOne \coloneqq *\mright) = \left\{\variableOne\right\}$
\end{itemize}
\end{isabelleLemma}
\begin{proof}
First note that by construction if $\boundVarsProgram$ is finite (assumed) then so is $\havoced{\boundVarsProgram}$.
Next, we consider the coverage property.
We have proven this property via a structural induction proof in Isabelle.
\end{proof}
\begin{isabelleLemma}{Bound Variables over Regular Programs}{KATDynamicLogic.BV\_overapprox}
\label{lem:rel-complte:regular_BV}
Let $\dynamicTheory$ be a dynamic theory with finite support,
then $\regular{\dynamicTheory}$ also has finite support with the following $\regular{\boundVarsProgram}$:
\begin{itemize}
    \item $\regular{\boundVarsProgram}\mleft(\programThree\mright) = \boundVarsProgram\mleft(\programThree\mright)$ (for $\programThree \in \signaturePrograms$)
    \item $\regular{\boundVarsProgram}\mleft(\programOne;\programTwo\mright) = \regular{\boundVarsProgram}\mleft(\programOne\cup\programTwo\mright) = \regular{\boundVarsProgram}\mleft(\programOne\mright) \cup \regular{\boundVarsProgram}\mleft(\programTwo\mright)$
    \item $\regular{\boundVarsProgram}\mleft(?\left(\formulaOneFOL\right)\mright) = \boundVarsProgram\mleft(\formulaOneFOL\mright)$
    \item $\regular{\boundVarsProgram}\left(\mleft(\programOne\mright)^*\right) = \regular{\boundVarsProgram}\mleft(\programOne\mright)$
\end{itemize}
\end{isabelleLemma}
\begin{proof}
First note that by construction if $\boundVarsProgram$ is finite (assumed) then so is $\regular{\boundVarsProgram}$.
Next, we consider the coverage property.
We have proven this property via a structural induction proof in Isabelle.
\end{proof}
\begin{isabelleLemma}{Bound Variables for Simple Heterogeneous Dynamic Theories}{}
\label{lem:rel-complte:prehero-BV}
Let $\preHero{\dynamicTheory}$ be the simple heterogeneous dynamic theory over $\zero{\dynamicTheory},\one{\dynamicTheory}$. If both homogeneous dynamic theories have finite support, then so has $\preHero{\dynamicTheory}$ with the following $\preHero{\boundVarsProgram}$:
\begin{itemize}
    \item $\preHero{\boundVarsProgram}\left(\programThree\right) = \zero{\boundVarsProgram}\mleft(\programThree\mright)$ (for $\programThree \in \zero{\signaturePrograms}$)
    \item $\preHero{\boundVarsProgram}\left(\programThree\right) = \one{\boundVarsProgram}\mleft(\programThree\mright)$ (for $\programThree \in \one{\signaturePrograms}$)
\end{itemize}
\end{isabelleLemma}
\begin{proof}
First note that by construction if $\zero{\boundVarsProgram}$ and $\one{\boundVarsProgram}$ are finite (assumed) then so is $\preHero{\boundVarsProgram}$.
Next, we consider the coverage property.
We have proven this property via a structural induction proof in Isabelle.
\end{proof}

\subsubsection{Havoc and Relative Completeness}

To illustrate on a simple example how relative completeness results can be extended to lifted logics and heterogeneous dynamic theories, we begin by demonstrating how relative completeness and similar properties of interest carry over when moving from a dynamic theory $\dynamicTheory$ to its lifted version with havoc support $\havoced{\dynamicTheory}$.
To this end, we recall the axioms that we derived earlier for $\havoced{\dynamicTheory}$:
\ref{axiom:havoc} allows us to replace a havoc operation by a first-order quantifier, and \ref{axiom:HR} allows us to reuse proof results for $\dynamicTheory$.
Intuitively, it is hence sensible that this lifting would preserve relative completeness.
Formally, we first observe the following:
\begin{lemma}[Trivial Liftings for Havoc]
\label{lem:rel-completeness:trivial}
For an inductively expressive, Gödel expressive dynamic theory $\dynamicTheory$ with finite support it holds that $\havoced{\dynamicTheory}$ is inductively expressive, Gödel expressive and has finite support
\end{lemma}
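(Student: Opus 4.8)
The plan is to exploit the fact that the havoc lifting $\havoced{\left(\cdot\right)}$ of \Cref{def:liftedDL:havoced} modifies \emph{only} the program-related components of a dynamic theory: the universe $\universe$, the state space $\stateSpace$, the variable evaluation $\variableEval$, the atoms $\signatureAtoms$, the atom evaluation $\atomEval$, and the atom free-variable map $\freeVarsAtom$ are carried over from $\dynamicTheory$ to $\havoced{\dynamicTheory}$ unchanged (and by \Cref{thm:liftedDL:havoced} the result is again a dynamic theory, so all three target notions are even well-posed for it). Since no new atoms and no new variables are introduced, the first-order fragments coincide verbatim, $\folFormulaSet{\dynamicSignature} = \folFormulaSet{\havoced{\dynamicSignature}}$; and since the semantics of a box-free formula is built up purely from $\atomEval$, Boolean set operations, and universal quantification over $\stateSpace$ and $\variableEval$, a one-line structural induction over box-free formulas yields $\sem{\formulaOneFOL} = \sem[\havoced{\domComp}]{\formulaOneFOL}$ for every $\formulaOneFOL \in \folFormulaSet{\dynamicSignature}$. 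This single fact does essentially all of the work.

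Next I would establish \emph{inductive expressivity} of $\havoced{\dynamicTheory}$ by reusing the witnesses that $\dynamicTheory$ provides: the map $\universeToNat : \universe \to \mathbb{N}$, the \integerExpressive{} set $\integerVariables$, and the functions $\natPlusOne, \natEq, \natPositive$. Being \integerExpressive{} w.r.t.\ $\universeToNat$ is a property of $\stateSpace$, $\variableEval$ and $\universeToNat$ alone, so $\integerVariables$ is still \integerExpressive{} for $\havoced{\dynamicTheory}$. The three soundness conditions of \Cref{def:liftedDL:inductive_expressive} (\textsc{Positive Sound}, \textsc{Equal Sound}, \textsc{Plus One Sound}) are each phrased in terms of $\sem{\cdot}$ on first-order formulas together with $\variableEval$ and $\universeToNat$; by the observation above they transfer unchanged, and the codomain $\folFormulaSet{\dynamicSignature}$ of the three functions equals $\folFormulaSet{\havoced{\dynamicSignature}}$. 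Hence $\havoced{\dynamicTheory}$ is inductively expressive w.r.t.\ the very same $\integerVariables$.

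Then \emph{Gödel expressivity} follows identically: Gödelization support for a variable $\variableOne\in\signatureVariables$ (\Cref{def:rel-complete:goedel}) is a statement about $\valueSet\mleft(\variableOne\mright)$, the value set of a companion variable, the Gödel formula $\goedel{\variableOne}\in\folFormulaSet{\dynamicSignature}$, the map $\universeToNat$, and first-order satisfaction --- all preserved --- so the provided $\goedel{\variableOne}$ (for each $\variableOne\in\signatureVariables$) and the provided $\natLess$ keep witnessing the \textsc{Gödel Formulas} and \textsc{Less Predicate} clauses of \Cref{def:rel-complete:goedel-expressive} for $\havoced{\dynamicTheory}$, now legitimately, since inductive expressivity has just been re-established. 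Finally, \emph{finite support} of $\havoced{\dynamicTheory}$ is exactly \Cref{lem:rel-complte:havoc_BV}, which supplies the required $\havoced{\boundVarsProgram}$ (namely $\boundVarsProgram$ on old programs and $\left\{\variableOne\right\}$ on $\variableOne\coloneqq*$) together with its finiteness and over-approximation proofs. I do not anticipate a genuine obstacle here: the content of the lemma is precisely that havoc lifting is inert on exactly the data-theoretic structure on which these meta-properties are defined; the only care required is the short structural-induction argument that first-order semantics is preserved, plus the bookkeeping that traces each definitional clause back to it.
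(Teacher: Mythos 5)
Your proposal is correct and matches the paper's own argument: the paper likewise observes that inductive and Gödel expressivity are first-order properties depending only on the universe, state space, variables, and atoms — none of which the havoc lifting touches — and delegates finite support to \Cref{lem:rel-complte:havoc_BV}. Your explicit structural induction showing $\sem{\formulaOneFOL} = \sem[\havoced{\domComp}]{\formulaOneFOL}$ for box-free formulas merely spells out the step the paper leaves implicit.
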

\begin{proof}
Finite support is shown in \Cref{lem:rel-complte:havoc_BV}, inductive expressiveness and Gödel expressiveness are first-order properties, i.e. they depend on the state space, the universe, the variables, and the atoms. However, all these components of our dynamic theory are not modified by the lifting. Consequently, $\havoced{\dynamicTheory}$ is inductively expressive by the same argument as $\dynamicTheory$.
\end{proof}
We can next consider the property of FOL Expressiveness.
We observe that FOL Expressiveness also carries over to $\havoced{\dynamicTheory}$:
\begin{lemma}[FOL Expressiveness for Havoc Lifting]
\label{lem:rel-completeness:fol-expressiveness-havoc}
For an FOL expressive dynamic theory $\dynamicTheory$ it holds that $\havoced{\dynamicTheory}$ is also FOL expressive.
\end{lemma}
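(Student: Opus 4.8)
The plan is to verify, one by one, the three conditions of \Cref{def:rel-complete:fol-expressive} for $\havoced{\dynamicTheory}$, exploiting the fact that the havoc lift modifies \emph{only} the program component of a dynamic theory. First I would record that an FOL expressive theory has finite support by definition, so by \Cref{lem:rel-complte:havoc_BV} the lifted theory $\havoced{\dynamicTheory}$ again has finite support, which is the standing hypothesis needed to even speak of FOL expressiveness. Since the universe, state space, variable evaluation, the variable set $\signatureVariables$, the atoms $\signatureAtoms$, the atom evaluation, and the atom free-variable overapproximation are all inherited unchanged from $\dynamicTheory$, the first-order fragments coincide: $\folFormulaSet{\havoced{\dynamicSignature}} = \folFormulaSet{\dynamicSignature}$.

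Given this, the \textsc{Infinite Variables} and \textsc{Eq Predicate} conditions are immediate: both are stated purely in terms of $\signatureVariables$, states, variable evaluation, and $\folFormulaSet{}$-formulas, none of which the lift touches, so they transfer verbatim and we may reuse the very same equality function $\doteq$. The substantive condition is \textsc{Rendition}. Here I would extend the rendition function so that on the old programs $\programOne \in \signaturePrograms$ it still returns $\programRendition{\programOne}\mleft(\variableVecOne,\variableVecOneNext\mright)$, and on each new havoc program $\variableOne \coloneqq *$ it returns
\[
\ex{\variableOne}{\left(\variableVecOne \doteq \variableVecOneNext\right)},
\]
which is a genuine $\folFormulaSet{\dynamicSignature}$-formula, being a finite conjunction of $\doteq$-atoms placed under an existential quantifier. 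For the old programs, the rendition equivalence transfers from $\dynamicTheory$: it concerns only pre-existing syntactic material whose semantics — and whose semantically defined footprint $\variablesOf{\programOne}$ — is unchanged, so $\dynamicTheory$-validity of $\programRendition{\programOne}\mleft(\variableVecOne,\variableVecOneNext\mright) \iff \modDia{\programOne}{\left(\variableVecOne \doteq \variableVecOneNext\right)}$ immediately yields its $\havoced{\dynamicTheory}$-validity. For the havoc programs, the required equivalence $\models_{\havoced{\dynamicTheory}} \left(\ex{\variableOne}{\left(\variableVecOne \doteq \variableVecOneNext\right)} \iff \modDia{\variableOne \coloneqq *}{\left(\variableVecOne \doteq \variableVecOneNext\right)}\right)$ is exactly the dual of the \ref{axiom:havoc} axiom — i.e.\ $\modDia{\variableOne \coloneqq *}{\schemaVarOne} \iff \ex{\variableOne}{\schemaVarOne}$, which follows from $\modBox{\variableOne\coloneqq *}{\schemaVarTwo} \iff \fa{\variableOne}{\schemaVarTwo}$ by the definitions of the diamond and existential shorthands — instantiated with $\schemaVarOne \equiv \variableVecOne \doteq \variableVecOneNext$; its soundness over $\havoced{\dynamicTheory}$ is \Cref{isalem:liftedDL:havoc:axiom}.

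The one point requiring some care is the bookkeeping around the side conditions that \Cref{def:rel-complete:fol-expressive} imposes on admissible twin vectors — namely $\variableVecOne \supseteq \variablesOf{\variableOne \coloneqq *}$ and $\variableVecOne \cap \variableVecOneNext = \emptyset$ — since the extended rendition must be correct for \emph{every} such pair, not merely a canonical one. I expect this to be harmless: the havoc-axiom equivalence above holds unconditionally for an arbitrary formula, so no restriction on $\variableVecOne,\variableVecOneNext$ is actually invoked; one only has to observe in passing that $\variablesOf{\variableOne \coloneqq *} \subseteq \left\{\variableOne\right\}$ (because $\havoced{\freeVarsProgram}\mleft(\variableOne \coloneqq *\mright) = \emptyset$ over-approximates $\freeVarsSem$, and the bounded-effect property confines $\boundVarsSem\mleft(\variableOne \coloneqq *\mright)$ to $\left\{\variableOne\right\}$, cf.\ \Cref{lem:rel-complte:havoc_BV}). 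I would then phrase the conclusion so that it plugs cleanly into the downstream chain of preservation results (\Cref{lem:rel-completeness:box_diamond_complete} and ultimately \Cref{thm:rel-complete:hero-dl}).
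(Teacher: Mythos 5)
Your proposal is correct and structurally identical to the paper's proof: the \textsc{Infinite Variables} and \textsc{Eq Predicate} conditions transfer verbatim because the havoc lift leaves all first-order material (universe, state space, variables, atoms) untouched, and the only substantive step is supplying a rendition for $\variableOne \coloneqq *$. The paper's rendition is the quantifier-free conjunction $\bigwedge_{\variableTwo\in\variableVecOne,\,\variableTwo\neq\variableOne}\variableTwo\doteq\variableTwo^{+}$ (verified semantically via the twin-state reformulation of \textsc{Rendition}), which is logically equivalent to your $\ex{\variableOne}{\left(\variableVecOne\doteq\variableVecOneNext\right)}$ — the extra conjunct $\ex{\variableOne}{\left(\variableOne\doteq\variableOne^{+}\right)}$ is valid by interpolation over twin variables — so your derivation via the dual of \ref{axiom:havoc} is an equally sound, slightly more syntactic route to the same conclusion.
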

\begin{proof}
Towards \textsc{\bf Infinite Variables} and \textsc{\bf Eq Predicate} note that $\havoced{\left(\cdot\right)}$ does not modify the state space nor the variable set. Consequently, this property automatically transfers.

Towards \textsc{\bf Rendition} we note that for all programs $\programOne\in\signaturePrograms$ the property is already satisfied.
In addition, we propose the following encoding for the havoc operation:
\[
\programRendition{\variableTwo\coloneqq *}
\left(\variableVecOne,\variableVecOneNext\right)
\equiv
\bigwedge_{
\substack{
    \variableOne\in\variableVecOne\\
    \variableOne \neq \variableTwo
}
} \variableOne\doteq\variableOneNext
\]
Since it holds that $\left(\stateOne,\stateTwo\right)\in\programEval\mleft(\variableTwo\coloneqq*\mright)$ iff
$\equalOn{\stateOne}{\stateTwo}{\left\{\variableTwo\right\}^\complement}$, it naturally follows that
$\left(\stateOne,\stateTwo\right)\in\programEval\mleft(\variableTwo\coloneqq*\mright)$ iff
$\equalOn{\variableOne}{\variableTwo}{\variableVecOne^\complement}$ and 
$\twinState{\stateOne}{\stateTwo}{\variableVecOne}{\variableVecOneNext} \models_{\havoced{\dynamicTheory}} \bigwedge_{
\substack{
    \variableOne\in\variableVecOne\\
    \variableOne \neq \variableTwo
}
} \variableOne\doteq\variableOneNext$.
Consequently, $\havoced{\dynamicTheory}$ is FOL Expressive.
\end{proof}
By extending the calculus $\calculus[\dynamicTheory]$ of $\dynamicTheory$ by the axiom \ref{axiom:havoc} and the rule \ref{axiom:HR}, we can show that the extended calculus $\calculus[\havoced{\dynamicTheory}]$ can derive all valid first-order safety and diamond properties:
\begin{lemma}[Box and Diamond Properties for Havoc Lifting]
\label{lem:rel-completeness:box_diamond_complete}
Consider a dynamic theory $\dynamicTheory$ with relatively complete proof calculus $\calculus[\dynamicTheory]$.
The extended proof calculus $\calculus[\havoced{\dynamicTheory}]$ is relatively complete for formulas of the following two forms where $\formulaOneFOL,\formulaTwoFOL\in\folFormulaSet{\havoced{\dynamicSignature}}$ and $\programOne\in\havoced{\signaturePrograms}$:
\begin{align*}
    \formulaOneFOL & \implies \modBox{\programOne}{\formulaTwoFOL}
    &\formulaOneFOL & \implies \modDia{\programOne}{\formulaTwoFOL}
\end{align*}
\end{lemma}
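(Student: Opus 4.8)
The plan is to case-split on the shape of $\programOne \in \havoced{\signaturePrograms}$: either $\programOne \in \signaturePrograms$ is one of the original programs, in which case I discharge the obligation to the inherited calculus $\calculus[\dynamicTheory]$ via the reduction rule \ref{axiom:HR}; or $\programOne \equiv \variableOne \coloneqq *$ is a havoc (\Cref{def:liftedDL:havoced}), in which case I rewrite the modality away using \ref{axiom:havoc} and appeal to the fact that the residue is a valid first-order formula, hence an element of the assumption set $\Gamma_{\havoced{\dynamicTheory}}$. Throughout I use that havoc lifting leaves the variables, atoms, universe, state space and all $\signaturePrograms$-material untouched, only adding data for the new havoc programs: hence $\folFormulaSet{\havoced{\dynamicSignature}} = \folFormulaSet{\dynamicSignature}$, and a routine structural induction gives $\sem[\havoced{\dynamicTheory}]{\formulaOne} = \sem[\dynamicTheory]{\formulaOne}$ for every $\formulaOne \in \formulaSet{\dynamicSignature}$ (the only interesting case, $\modBox{\programOne}{\cdot}$ with $\programOne \in \signaturePrograms$, uses $\havoced{\programEval}(\programOne) = \programEval(\programOne)$). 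In particular validity of such formulas coincides in the two theories, so $\Gamma_{\dynamicTheory} = \Gamma_{\havoced{\dynamicTheory}}$.

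For the case $\programOne \equiv \variableOne \coloneqq *$, soundness of \ref{axiom:havoc} (\Cref{isalem:liftedDL:havoc:axiom}) makes $\modBox{\variableOne \coloneqq *}{\formulaTwoFOL}$ provably and semantically equivalent to $\fa{\variableOne}{\formulaTwoFOL}$; unfolding $\langle\cdot\rangle$ and $\exists$ and instantiating \ref{axiom:havoc} at $\neg\formulaTwoFOL$ shows $\modDia{\variableOne \coloneqq *}{\formulaTwoFOL}$ equivalent to $\ex{\variableOne}{\formulaTwoFOL}$. Thus if $\models_{\havoced{\dynamicTheory}} \formulaOneFOL \implies \modBox{\variableOne \coloneqq *}{\formulaTwoFOL}$ then $\formulaOneFOL \implies \fa{\variableOne}{\formulaTwoFOL}$ is a valid formula of $\folFormulaSet{\havoced{\dynamicSignature}}$, hence lies in $\Gamma_{\havoced{\dynamicTheory}}$; combining it with the relevant instance of \ref{axiom:havoc} by propositional reasoning (for which $\calculus[\havoced{\dynamicTheory}]$ is complete) yields $\Gamma_{\havoced{\dynamicTheory}} \calculus[\havoced{\dynamicTheory}] \formulaOneFOL \implies \modBox{\variableOne \coloneqq *}{\formulaTwoFOL}$, and the diamond version is identical.

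For the case $\programOne \in \signaturePrograms$, the formula $\formulaOneFOL \implies \modBox{\programOne}{\formulaTwoFOL}$ lies in $\formulaSet{\dynamicSignature}$ by the preliminary observation and is valid in $\dynamicTheory$, so relative completeness of $\calculus[\dynamicTheory]$ (\Cref{def:rel-complete:rel-complete}) gives $\Gamma_{\dynamicTheory} \calculus[\dynamicTheory] \formulaOneFOL \implies \modBox{\programOne}{\formulaTwoFOL}$. Internalizing the finitely many first-order assumptions $\gamma_1, \dots, \gamma_n \in \Gamma_{\dynamicTheory}$ used in this derivation via the deduction theorem yields an assumption-free derivation $\calculus[\dynamicTheory] (\gamma_1 \land \cdots \land \gamma_n) \implies (\formulaOneFOL \implies \modBox{\programOne}{\formulaTwoFOL})$ of a formula in $\formulaSet{\dynamicSignature}$; applying \ref{axiom:HR} (\Cref{isaLem:liftedDL:havoc:reduction}) transfers it to $\calculus[\havoced{\dynamicTheory}]$, and since each $\gamma_i \in \Gamma_{\dynamicTheory} = \Gamma_{\havoced{\dynamicTheory}}$, modus ponens gives $\Gamma_{\havoced{\dynamicTheory}} \calculus[\havoced{\dynamicTheory}] \formulaOneFOL \implies \modBox{\programOne}{\formulaTwoFOL}$; the diamond version is analogous.

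I expect the main obstacle to be bookkeeping rather than depth: justifying that the \emph{relative} derivability $\Gamma_{\dynamicTheory} \calculus[\dynamicTheory] \cdot$ can be pushed through the reduction rule \ref{axiom:HR}, which is stated without a hypothesis context. This is precisely what the observations $\folFormulaSet{\havoced{\dynamicSignature}} = \folFormulaSet{\dynamicSignature}$ and $\Gamma_{\dynamicTheory} = \Gamma_{\havoced{\dynamicTheory}}$, together with the deduction-theorem internalization of the finitely many first-order lemmas used, are for. The remaining ingredients (soundness of \ref{axiom:havoc}, propositional/first-order recombination) are standard.
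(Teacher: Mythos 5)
Your proof is correct and takes essentially the same route as the paper's: the same case split on $\programOne\in\signaturePrograms$ versus $\programOne\equiv\variableOne\coloneqq *$, discharging the former to $\calculus[\dynamicTheory]$ via \ref{axiom:HR} and rewriting the latter to a first-order formula via \ref{axiom:havoc} (dualized for the diamond case). Your explicit deduction-theorem internalization of the finitely many premises from $\Gamma_{\dynamicTheory}$ before applying \ref{axiom:HR}, together with the observations $\folFormulaSet{\havoced{\dynamicSignature}}=\folFormulaSet{\dynamicSignature}$ and $\Gamma_{\dynamicTheory}=\Gamma_{\havoced{\dynamicTheory}}$, merely makes precise a step the paper's proof leaves implicit.
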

\begin{proof}
We first consider box properties and perform a case distinction over $\programOne$:
\begin{description}[leftmargin=!, labelwidth=2cm, align=right]
    \item[$\programOne\in\signaturePrograms$]
    In this case, the formula is part of $\formulaSet{\dynamicTheory}$. Since $\calculus[\dynamicTheory]$ is relatively complete, there exists a first-order formula $\models_{\dynamicTheory} \formulaThreeFOL\in\folFormulaSet{\dynamicTheory}$ such that $\formulaThreeFOL \calculus[\dynamicTheory] \formulaOneFOL \implies\modBox{\programOne}{\formulaTwoFOL}$. Using \ref{axiom:HR} we can hence derive $\formulaThreeFOL \calculus[\havoced{\dynamicTheory}] \formulaOneFOL \implies\modBox{\programOne}{\formulaTwoFOL}$ and furthermore $\models_{\dynamicTheory} \formulaThreeFOL$.
    \item[$\programOne\equiv \variableOne \coloneqq *$]
    In this case, we can apply \ref{axiom:havoc} and via propositional reasoning we arrive at the equivalent formulation $\formulaOneFOL \implies \fa{\variableOne}{\formulaTwoFOL}$.
    Since this first-order formula is equivalent and we assumed validity, we have that $\models_{\havoced{\dynamicTheory}} \formulaOneFOL \implies \fa{\variableOne}{\formulaTwoFOL}$ and $\formulaOneFOL \implies \fa{\variableOne}{\formulaTwoFOL} \calculus[\havoced{\dynamicTheory}] \formulaOneFOL \implies \modBox{\programOne}{\formulaTwoFOL}$
\end{description}
Next, we consider diamond properties.
To this end, note that for $p\in\signaturePrograms$ we can make the structurally same argument.
For $v\coloneqq *$ we swap $\forall$ for $\exists$ and apply the same argument as above.
\end{proof}
Using these Lemmas, we can then prove the following result:
\begin{lemma}[Relative Completeness for Havoc Lift]
\label{lem:rel-completeness:rel-complete-havoc}
Let $\dynamicTheory$ be some FOL Expressive relatively complete dynamic theory with calculus $\calculus[\dynamicTheory]$.
Then $\havoced{\dynamicTheory}$ is relatively complete with the calculus $\calculus[\havoced{\dynamicTheory}]$ from \Cref{sec:liftedDL:havoc}.
\end{lemma}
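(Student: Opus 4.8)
The plan is to derive this as a direct corollary of the master relative completeness criterion \Cref{lem:rel-complete}, assembled from the havoc-specific lemmas already established, in exact analogy to the proof of \Cref{lem:rel-completeness:rel-complete-regular}. Note first that, by \Cref{def:rel-complete:fol-expressive}, FOL Expressiveness already presupposes finite support; hence from the hypothesis I have that $\dynamicTheory$ is FOL Expressive, has finite support, and is relatively complete, i.e. $\proofProp{FolExp}{\dynamicTheory}$, $\proofProp{Fin}{\dynamicTheory}$ and $\proofProp{RelCom}{\dynamicTheory}$.

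The argument proceeds in three steps. First, I would invoke \Cref{lem:rel-complte:havoc_BV} (equivalently, the $\proofProp{Fin}$-transfer of \Cref{lem:rel-completeness:trivial}) to obtain $\proofProp{Fin}{\havoced{\dynamicTheory}}$: finite support is preserved by the havoc lift, since the universe, state space, variables and atoms are untouched by $\havoced{(\cdot)}$ and the new programs $\variableOne\coloneqq *$ have the obvious bound-variable over-approximation $\{\variableOne\}$. Second, \Cref{lem:rel-completeness:fol-expressiveness-havoc} gives $\proofProp{FolExp}{\havoced{\dynamicTheory}}$: the \textsc{Infinite Variables} and \textsc{Eq Predicate} clauses transfer verbatim because the state space and variable set are unchanged, while the \textsc{Rendition} clause holds for the old programs by the inherited rendition and for the new programs via the first-order formula $\programRendition{\variableTwo\coloneqq *}(\variableVecOne,\variableVecOneNext) \equiv \bigwedge_{\variableOne\in\variableVecOne,\ \variableOne\neq\variableTwo}\variableOne\doteq\variableOneNext$.

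Third, I would apply \Cref{lem:rel-completeness:box_diamond_complete}: the extension $\calculus[\havoced{\dynamicTheory}]$ of $\calculus[\dynamicTheory]$ by the reduction rule \ref{axiom:HR} and the decomposition axiom \ref{axiom:havoc} proves every valid formula of the form $\formulaOneFOL \implies \modBox{\programOne}{\formulaTwoFOL}$ or $\formulaOneFOL \implies \modDia{\programOne}{\formulaTwoFOL}$ with $\programOne\in\havoced{\signaturePrograms}$ — for $\programOne\in\signaturePrograms$ the statement lies in $\formulaSet{\dynamicSignature}$ and is discharged to the relatively complete $\calculus[\dynamicTheory]$ via \ref{axiom:HR}, whereas for $\programOne\equiv\variableOne\coloneqq *$ one rewrites with \ref{axiom:havoc} to the equivalent first-order formula $\formulaOneFOL\implies\fa{\variableOne}{\formulaTwoFOL}$ (resp. with an existential quantifier in the diamond case), which is then valid and supplied by the first-order oracle. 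This establishes $\proofProp{FolRelCom}{\havoced{\dynamicTheory}}$. Feeding $\proofProp{FolExp}{\havoced{\dynamicTheory}}$, $\proofProp{Fin}{\havoced{\dynamicTheory}}$ and $\proofProp{FolRelCom}{\havoced{\dynamicTheory}}$ into \Cref{lem:rel-complete} yields $\proofProp{RelCom}{\havoced{\dynamicTheory}}$, i.e. $\havoced{\dynamicTheory}$ is relatively complete with the calculus $\calculus[\havoced{\dynamicTheory}]$ of \Cref{sec:liftedDL:havoc}.

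I do not anticipate a genuine obstacle: all the substantive content is carried by \Cref{lem:rel-completeness:fol-expressiveness-havoc}, \Cref{lem:rel-completeness:box_diamond_complete} and \Cref{lem:rel-complete}, and the present statement is merely their composition. The only step requiring care is bookkeeping — verifying that the calculus ``$\calculus[\havoced{\dynamicTheory}]$ from \Cref{sec:liftedDL:havoc}'' is exactly the extension of $\calculus[\dynamicTheory]$ by \ref{axiom:HR} and \ref{axiom:havoc} used in the intermediate lemmas, and that ``relatively complete w.r.t. $\folFormulaSet{\havoced{\dynamicTheory}}$'' in \Cref{lem:rel-complete} coincides with the $\Gamma_{\havoced{\dynamicTheory}}$-relative notion of \Cref{def:rel-complete:rel-complete}; both hold by construction.
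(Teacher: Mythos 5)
Your proposal is correct and follows exactly the paper's route: the paper likewise proves this as a corollary of \Cref{lem:rel-complete} by combining \Cref{lem:rel-completeness:trivial}, \Cref{lem:rel-completeness:fol-expressiveness-havoc}, and \Cref{lem:rel-completeness:box_diamond_complete}. Your additional unpacking of the rendition formula for $\variableTwo\coloneqq *$ and the \ref{axiom:HR}/\ref{axiom:havoc} case split merely restates the content of those supporting lemmas and matches the paper's treatment.
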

\begin{proof}
This is a corollary of \Cref{lem:rel-complete} using \Cref{lem:rel-completeness:trivial,lem:rel-completeness:box_diamond_complete,lem:rel-completeness:fol-expressiveness-havoc}
\end{proof}
The case of the havoc lifting is quite straight forward, however, we can apply similar principles to the case of regular programs which will be discussed next.

\subsubsection{Regular Programs and Relative Completeness}

\begin{lemma}[Trivial Liftings for Regular Programs]
\label{lem:rel-completeness:trivial-regular}
For an inductively expressive, Gödel expressive dynamic theory $\dynamicTheory$ with finite support it holds that $\regular{\dynamicTheory}$ is inductively expressive, Gödel expressive and has finite support
\end{lemma}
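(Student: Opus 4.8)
The plan is to follow the template set by the proof of \Cref{lem:rel-completeness:trivial} for the havoc lifting, exploiting the fact that regular-program lifting modifies only the program component of a dynamic theory and leaves all ``first-order'' material intact. I would split the statement into its three conjuncts and handle them separately.

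\emph{Finite support.} This is precisely the content of \Cref{lem:rel-complte:regular_BV}: that lemma exhibits a concrete over-approximation $\regular{\boundVarsProgram}$ defined by recursion over the grammar of \Cref{def:liftedDL:regular_closure}, and establishes both requirements of \Cref{def:rel-complete:finite-support} (over-approximation of $\boundVarsSem$ and finiteness) by structural induction on regular programs. So here I would simply invoke it. \emph{Inductive and Gödel expressivity.} The key observation, exactly as in \Cref{lem:rel-completeness:trivial}, is that \Cref{def:liftedDL:inductive_expressive} and \Cref{def:rel-complete:goedel-expressive} are purely ``first-order'' properties: they refer only to $\universe$, $\stateSpace$, $\variableEval$, $\universeToNat$, $\valueSet$, the variable set $\signatureVariables$, the atoms $\signatureAtoms$ (via $\atomEval$, $\freeVarsAtom$), the integer-expressive set $\integerVariables$, and to the semantics of formulas drawn from the first-order fragment. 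Regular lifting changes none of $\universe$, $\stateSpace$, $\variableEval$, $\signatureVariables$, $\signatureAtoms$, $\atomEval$, $\freeVarsAtom$, and — since first-order formulas contain no box modality — we have $\folFormulaSet{\regular{\dynamicSignature}} = \folFormulaSet{\dynamicSignature}$ with the formula semantics agreeing on this set (because $\regular{\programEval}$ restricts to $\programEval$ on $\signaturePrograms$ and atoms are interpreted identically). Hence integer expressivity of each $\variableOne \in \integerVariables$ is inherited verbatim, and the same $\universeToNat$ together with the same witness functions $\natPlusOne$, $\natEq$, $\natPositive$ (now read as maps into $\folFormulaSet{\regular{\dynamicSignature}}$) satisfy the \textsc{Positive/Equal/Plus-One Sound} conditions without change; likewise the supplied Gödel formulas $\goedel{\variableOne}$ and the predicate $\natLess$ carry over, so $\regular{\dynamicTheory}$ is Gödel expressive w.r.t.\ the same $\integerVariables$.

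The one point in this argument that deserves an explicit sentence, rather than being folded into ``first-order property,'' is the identity $\folFormulaSet{\regular{\dynamicSignature}} = \folFormulaSet{\dynamicSignature}$ together with the invariance of the semantics of these formulas under the lifting; this is what licenses reusing every first-order witness. Beyond that, there is no genuine difficulty internal to this lemma: the only substantive work sits inside \Cref{lem:rel-complte:regular_BV}, whose structural induction must check that the bound-variable over-approximation propagates through sequential composition, nondeterministic choice, tests, and — the mildly less obvious case — the Kleene star, where $\regular{\boundVarsProgram}((\programOne)^*) = \regular{\boundVarsProgram}(\programOne)$ holds because any finite iteration $\programOne^n$ can only alter variables already in $\regular{\boundVarsProgram}(\programOne)$. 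I would therefore present this as a corollary-style argument: finite support from \Cref{lem:rel-complte:regular_BV}, and the remaining two properties by the ``unchanged first-order structure'' reasoning above.
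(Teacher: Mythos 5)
Your proposal is correct and matches the paper's own proof: finite support is delegated to \Cref{lem:rel-complte:regular_BV}, and inductive/Gödel expressivity are observed to be first-order properties untouched by the lifting since the universe, state space, variables, and atoms are unchanged. Your extra remark that $\folFormulaSet{\regular{\dynamicSignature}} = \folFormulaSet{\dynamicSignature}$ with agreeing semantics merely makes explicit what the paper's one-line argument leaves implicit.
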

\begin{proof}
Finite support is shown in \Cref{lem:rel-complte:regular_BV}, inductive expressiveness and Gödel expressiveness are first-order properties, i.e. they depend on the state space, the universe, the variables, and the atoms. However, all these components of our dynamic theory are not modified by the lifting. Consequently, $\regular{\dynamicTheory}$ is inductively expressive by the same argument as $\dynamicTheory$.
\end{proof}

\begin{lemma}[FOL Expressiveness for Regular Lifting]
\label{lem:rel-completeness:fol-expressiveness-regular}
For an FOL expressive, Gödel Expressive dynamic theory $\dynamicTheory$ with finite support, it holds that $\regular{\dynamicTheory}$ is also FOL expressive.
\end{lemma}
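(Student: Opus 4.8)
The plan is to check the three requirements of \Cref{def:rel-complete:fol-expressive} for $\regular{\dynamicTheory}$. The properties \textsc{Infinite Variables} and \textsc{Eq Predicate} transfer verbatim: the regular lifting $\regular{(\cdot)}$ touches only the program component ($\signaturePrograms \mapsto \regular{\signaturePrograms}$, $\programEval \mapsto \regular{\programEval}$, $\freeVarsProgram \mapsto \regular{\freeVarsProgram}$) and leaves the universe, state space, variable evaluation and atoms — hence $\valueSet$, twin variables and the predicate $\doteq$ — unchanged, exactly as for the havoc lifting in \Cref{lem:rel-completeness:fol-expressiveness-havoc} and \Cref{lem:rel-completeness:trivial-regular}. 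Finite support of $\regular{\dynamicTheory}$, which is needed to even phrase \textsc{Rendition} in the twin-state form of \Cref{def:rel-complete:rendition-equiv}, is \Cref{lem:rel-complte:regular_BV}. So the entire content is the \textsc{Rendition} property: I must define a function $\regular{\programRenditionSym}$ assigning to every $\programOne \in \regular{\signaturePrograms}$ a formula $\regular{\programRendition{\programOne}}(\variableVecOne,\variableVecOneNext) \in \folFormulaSet{\regular{\dynamicSignature}}$ with $\models \regular{\programRendition{\programOne}}(\variableVecOne,\variableVecOneNext) \iff \modDia{\programOne}{\variableVecOne \doteq \variableVecOneNext}$ (equivalently, in twin-state form), for all twin vectors $\variableVecOne \supseteq \variablesOf{\programOne}$ and $\variableVecOneNext$ with $\variableVecOne \cap \variableVecOneNext = \emptyset$.

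I would define $\regular{\programRenditionSym}$ by structural recursion on the grammar of \Cref{def:liftedDL:regular_closure} and prove the rendition equivalence by structural induction, everywhere relying on the coincidence and bounded-effect lemmas (\Cref{isaLem:coincidence_programs}, \Cref{isaLem:bounded_effect_progams}) to move first-order subformulas between a state and the relevant twin state, and on finiteness of $\regular{\freeVarsProgram},\regular{\boundVarsProgram}$ to keep every auxiliary vector finite. The base case $\programThree \in \signaturePrograms$ reuses the rendition $\programRendition{\programThree}$ supplied by FOL expressiveness of $\dynamicTheory$ (note $\regular{\programEval},\regular{\freeVarsProgram}$ restrict to $\programEval,\freeVarsProgram$ there). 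For a test, since $\regular{\programEval}(?(\formulaOneFOL))$ relates precisely the $\signatureVariables$-equal pairs whose source satisfies $\formulaOneFOL$, take $\regular{\programRendition{?(\formulaOneFOL)}}(\variableVecOne,\variableVecOneNext) \equiv \formulaOneFOL \land \variableVecOne \doteq \variableVecOneNext$. For $\programOne \cup \programTwo$ take a disjunction of the two sub-renditions, and for $\programOne;\programTwo$ introduce a fresh twin vector $\variableVecOneVariant$ for the intermediate state (available by \textsc{Infinite Variables}) and set $\regular{\programRendition{\programOne;\programTwo}}(\variableVecOne,\variableVecOneNext) \equiv \ex{\variableVecOneVariant}{\bigl(\regular{\programRendition{\programOne}}(\variableVecOne,\variableVecOneVariant) \land \regular{\programRendition{\programTwo}}(\variableVecOneVariant,\variableVecOneNext)\bigr)}$. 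The one recurring annoyance already in these "easy" cases is that a subprogram may read (or write) variables not in $\variableVecOne$; such variables are then either constant along the relevant transition (by bounded effect) or unconstrained, so they enter the rendition as existentially quantified fresh twins or as free parameters — the latter being legitimate because $\regular{\freeVarsProgram}$ over-approximates them, e.g. $\regular{\freeVarsProgram}((\programOne)^*) = \regular{\freeVarsProgram}(\programOne)$.

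The loop $(\programOne)^*$ is the main obstacle and the only place the Gödel- and inductive-expressivity hypotheses are used: expressing the unbounded transitive closure of $\regular{\programEval}(\programOne)$ in the first-order fragment requires encoding finite sequences of states, which is exactly \textsc{Gödel Formulas} (\Cref{def:rel-complete:goedel-expressive}, \Cref{def:rel-complete:goedel}). Following the Harel/Platzer pattern I would, with fresh integer variables $n,i \in \integerVariables$ and, for each relevant $v$, a fresh twin $g_v$ of $v$'s Gödel-target variable, take $\regular{\programRendition{(\programOne)^*}}(\variableVecOne,\variableVecOneNext)$ to be the existential closure over $n$ and the $g_v$ of the conjunction of: (i) each $\goedel{v}$-encoding is a well-formed length-$(n{+}1)$ sequence whose first component equals $\variableVecOne$; (ii) its last component equals $\variableVecOneNext$; and (iii) a $\fa{i}{\cdots}$ clause that, using $\natLess$, $\natPositive$, $\natPlusOne$ and $\natEq$ to emulate the step $i \mapsto i{+}1$, decodes the two consecutive components into fresh twin vectors and asserts $\regular{\programRendition{\programOne}}$ on them. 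Soundness is the routine but lengthy verification that a twin state satisfies this iff there is a finite $\regular{\programEval}(\programOne)$-chain from the source to a state carrying the $\variableVecOneNext$-values, i.e. $(\stateOne,\stateTwo) \in \regular{\programEval}((\programOne)^*)$ with $\equalOn{\stateOne}{\stateTwo}{\variableVecOne^\complement}$ holding automatically from bounded effect. The delicate points are: getting the $1$- versus $0$-indexing of $\goedel{v}$ right; ensuring all decoded vectors land in genuinely fresh twins; and again handling variables read by $\programOne$ but not bound by $(\programOne)^*$ (constant along the run, hence free parameters over-approximated by $\regular{\freeVarsProgram}(\programOne)$). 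Once \textsc{Rendition} is established, FOL expressiveness of $\regular{\dynamicTheory}$ follows.
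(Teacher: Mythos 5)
Your strategy coincides with the paper's: \textsc{Infinite Variables} and \textsc{Eq Predicate} transfer because $\regular{\left(\cdot\right)}$ only touches the program component, finite support is exactly \Cref{lem:rel-complte:regular_BV}, and the whole content is a structurally recursive rendition --- reuse of $\programRendition{\programThree}$ for atomic programs, an existential over a fresh intermediate twin vector for $;$, a disjunction for $\cup$, and, for $\left(\programOne\right)^*$, a Gödel-encoded sequence of intermediate states with $\natLess$, $\natPlusOne$, $\natEq$, $\natPositive$ driving the index arithmetic --- verified by structural induction using coincidence and bounded effect. This is the paper's proof of \Cref{lem:rel-completeness:fol-expressiveness-regular} almost clause for clause.

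One step would fail as written, though. Your test rendition $\regular{\programRendition{?\left(\formulaOneFOL\right)}}\left(\variableVecOne,\variableVecOneNext\right) \equiv \formulaOneFOL \land \variableVecOne \doteq \variableVecOneNext$ evaluates $\formulaOneFOL$ on the \emph{original} variables, which is only correct when the rendition's source vector actually coincides with $\freeVarsSyn\left(\formulaOneFOL\right)$. But your own clause for $\programOne;\programTwo$ invokes the second rendition as $\regular{\programRendition{\programTwo}}\left(\variableVecOneVariant,\variableVecOneNext\right)$ with a fresh twin vector $\variableVecOneVariant$ as source (and the loop clause likewise feeds decoded twin vectors into $\regular{\programRendition{\programOne}}$); if that subprogram is a test, the intermediate state's values live in $\variableVecOneVariant$ while $\formulaOneFOL$ still reads the original variables, which in the relevant twin state carry the outer pre-state. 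Since the framework makes no syntactic assumptions on atoms, you cannot substitute $\variableVecOneVariant$ into $\formulaOneFOL$; the paper instead uses the guarded form that conjoins $\variableVecOneVariant \doteq \variableVecOneNext$ with $\fa{\variableVecOne}{\left(\variableVecOne\doteq\variableVecOneVariant \implies \formulaOneFOL\right)}$ whenever the source vector differs from $\freeVarsSyn\left(\formulaOneFOL\right)$, keeping your direct form only for the coinciding case. With that two-case definition of the test rendition, the rest of your argument goes through exactly as in the paper.
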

\begin{proof}
Towards \textsc{\bf Infinite Variables} and \textsc{\bf Eq Predicate} note that $\regular{\left(\cdot\right)}$ does not modify the state space nor the variable set. Consequently, this property automatically transfers.

We will now focus on \textsc{\bf Rendition}:
To prove this property, we must construct a program rendition $\programRendition{\programOne}$ for all programs $\programOne \in \regular{\signaturePrograms}$.
To this end, we first compute $\variableSet= \freeVarsProgram[\regular{\dynamicTheory}]\mleft(\programOne\mright)\cup\regular{\boundVarsProgram}\mleft(\programOne\mright)$ and then proceed as follows where
$\variableVecOne,\variableVecOneNext,\variableVecTwo,\variableVecOneVariant,\variableVecTwoVariant$ are twin variable vectors
and all contain twin variables of all variables in $\variableSet$.
Note that $V$ is finite and $\left|V\right|$ is syntactically fixed.
Hence, quantifiers over variable vectors are to be understood as a sequence of quantifiers instantiating all $\left|\variableSet\right|$ variables.
Since we assume $\dynamicTheory$ to be FOL Expressive, we can skip the case for $\programOne\in\signaturePrograms$ (given by assumption) and focus on the case $\programOne\in\regular{\signaturePrograms}\setminus\signaturePrograms$:
\begin{align*}
\programRendition{\programOne;\programTwo}
\left(\variableVecOne,\variableVecOneNext\right)
&\equiv
\exists \variableVecTwo
\left(
    \programRendition{\programOne}\left(\variableVecOne,\variableVecTwo\right)
    \land
    \programRendition{\programTwo}\left(\variableVecTwo,\variableVecOneNext\right)
\right)
&&\text{for }
\programOne,\programTwo \in \regular{\signaturePrograms}\\
\programRendition{\programOne\cup\programTwo}
\left(\variableVecOne,\variableVecOneNext\right)
&\equiv
\left(
    \programRendition{\programOne}\left(\variableVecOne,\variableVecOneNext\right)
    \lor
    \programRendition{\programTwo}\left(\variableVecOne,\variableVecOneNext\right)
\right)
&&\text{for }
\programOne,\programTwo \in \regular{\signaturePrograms}\\
\programRendition{?\left(\formulaOne\right)}
\left(\variableVecOneVariant,\variableVecOneNext\right)
&\equiv
\bigwedge_{\variableOne\in V} \variableOneVariant\doteq\variableOneNext
\land
\fa{\variableVecOne}{
\left(\variableVecOne\doteq\variableVecOneVariant \implies \formulaOne\right)
}
&&
\text{for }\formulaOne \in \formulaSet{\regular{\signaturePrograms}}\text{ iff } 
\variableVecOneVariant\neq\variableVecOne\supseteq\freeVarsSyn\left(\formulaOne\right)\\
\programRendition{?\left(\formulaOne\right)}
\left(\variableVecOne,\variableVecOneNext\right)
&\equiv
\bigwedge_{\variableOne\in V} \variableOne \doteq\variableOneNext
\land
\formulaOne
&&
\text{for }\formulaOne \in \formulaSet{\regular{\signaturePrograms}}\text{ else}
\end{align*}
For the case of loops, let $k,n\in\integerVariables$ be integer expressive variables of $\dynamicTheory$.
Let $\variableVecOne$ be $M$ dimensional, then
$Z_1,\dots,Z_M$ is a sequence of fresh variables such that $Z_i$ is a twin variable of the variable $\variableTwo$ that admits Gödelization for $\variableVecOne_i$ (remember that $\variableVecOne_i$ may only admit Goedelization w.r.t. variables with a different value set represented by $\variableTwo$).
We can then construct the following first-order formula for $\programRendition{\left(\programOne\right)^*}\left(\variableVecOne,\variableVecOneNext\right)$:
\begin{align*}
\exists n&
\exists Z_1
\dots
\exists Z_{\zero{M}}
\Big(\\
&
\bigwedge_{j=1}^{M}
\goedel{\variableVecOne_j}\left(Z_j, n, 1, \variableVecOne_j\right) \land
\goedel{\variableVecOne_j}\left(Z_j, n, n, \variableVecOneNext_j\right) \land \\
&
\forall k \exists \variableVecTwo \exists \variableVecTwoVariant \\
    &\left(0 < k < n
    \implies 
    \programRendition{\programOne}\left(\variableVecTwo,\variableVecTwoVariant\right)
    \land
    \bigwedge_{j=1}^{M}
    \goedel{\variableVecOne_j}\left(Z_j, n, k, \variableVecTwo_j\right) \land
    \goedel{\variableVecOne_j}\left(Z_j, n, k+1, \variableVecTwoVariant_j\right)
\right)
\\
\Big)
\end{align*}
Note that $0 < k$ and $k < n$ can be encoded via $\natLess$ and $k + 1$ can be encoded as follows:
\[
\ex{\tilde{k}}{\natPlusOne\left(k,\tilde{k}\right) \land \goedel{\variableVecOne_j}\left(Z_j, n, \tilde{k}, \variableVecTwoVariant_j\right)}
\]
Further, for $\tilde{o},\tilde{z}\in\integerVariables$ the literals $0$ and $1$ inside a formula $\formulaOne$ can resp. be encoded as $\tilde{z},\tilde{o}$ via $\ex{\tilde{o}}{\ex{\tilde{z}}{\left(\neg\natPositive\left(\tilde{z}\right) \land \natPlusOne\left(\tilde{z},\tilde{o}\right) \land \formulaOne\right)}}$.

We now need to prove that $\programRendition{\programOne}$ the way defined above indeed satisfies our definition of program renditions.
To this end, we proceed via structural induction over the programs where $\programOne\in\signaturePrograms$ is given by assumption.\\
Remaining Base Case:
\begin{description}[leftmargin=!, labelwidth=1cm, align=right]
\item[$?\left(\formulaOne\right)$]
    By definition we know that $\left(\stateOne,\stateTwo\right) \in \regular{\programEval}\mleft(?\left(\formulaOne\right)\mright)$ iff $\equalOn{\stateOne}{\stateTwo}{\signatureVariables}$
     and $\stateOne \models \formulaOne$.
     This is the case iff we have
    $
    \twinState{\stateOne}{\stateTwo}{\variableVecOneVariant}{\variableVecOneNext}
    \models_{\regular{\dynamicTheory}}
    \bigwedge_{\variableOne\in V} \variableOneVariant\doteq\variableOneNext
    \land
    \fa{\variableVecOne}{
    \left(\variableVecOne\doteq\variableVecOneVariant \implies \formulaOne\right)
    }
    $ and $\equalOn{\stateOne}{\stateTwo}{\variableVecOne^\complement}$.
    Note that we require the $\forall$ around $\formulaOne$ since we have no infrastructure for rewriting $\formulaOne$ with a different set of variables.
    This is due to our minimal set of assumptions about first-order atoms.
    For the case that the first argument of $\programRendition{?\left(\formulaOne\right)}$ already coincides with the variables of $F$, we provide the case without quantification.
\end{description}
Inductive Cases:\\
We assume that for $\programOne,\programTwo \in \regular{\signaturePrograms}$ there exist program renditions $\programRendition{\programOne},\programRendition{\programTwo}$.
\begin{itemize}[label={ABCDE},align=left, leftmargin=*]
\item[$\programOne ; \programTwo$]
By definition of sequential composition for any $\left(\stateOne,\stateThree\right) \in \regular{\programEval}\mleft(\programOne ; \programTwo\mright)$ there exists a state $\stateTwo \in \regular{\stateSpace}$ such that $\left(\stateOne,\stateTwo\right) \in \regular{\programEval}\mleft(\programOne\mright)$ and $\left(\stateTwo,\stateThree\right) \regular{\programEval}\mleft(\programTwo\mright)$.
By our inductive assumption, we then know
$\left(\stateOne,\stateTwo\right) \in \regular{\programEval}\mleft(\programOne\mright)$
iff
we have $\equalOn{\stateOne}{\stateTwo}{\variableVecOne^\complement}$ and
$
\twinState{\stateOne}{\stateTwo}{\variableVecOne}{\variableVecOneNext}
\models_{\regular{\dynamicTheory}}\programRendition{\programOne}\left(\variableOne,\variableOneNext\right)$.
Further, we know that 
$\left(\stateTwo,\stateThree\right) \in \regular{\programEval}\mleft(\programTwo\mright)$
iff
$\equalOn{\stateTwo}{\stateThree}{\variableVecOne^\complement}$ and
$
\twinState{\stateTwo}{\stateThree}{\variableVecOne}{\variableVecOneNext}
\models_{\regular{\dynamicTheory}}\programRendition{\programTwo}\left(\variableOne,\variableOneNext\right)$.
It is then straight forward that $\left(\stateOne,\stateThree\right) \in \regular{\programEval}\mleft(\programOne ; \programTwo\mright)$
iff
$\equalOn{\stateOne}{\stateThree}{\variableVecOne^\complement}$ and 
$\twinState{\stateTwo}{\stateThree}{\variableVecOne}{\variableVecOneNext}
\models_{\regular{\dynamicTheory}}\programRendition{\programOne;\programTwo}\left(\variableOne,\variableOneNext\right)$ (note we get $\equalOn{\stateOne}{\stateThree}{\variableVecOne^\complement}$ from the two inductive assumptions).

\item[$\programOne \cup \programTwo$]
By definition of nondeterministic choice we know that $\left(\stateOne,\stateTwo\right) \in \regular{\programEval}\mleft(\programOne \cup \programTwo\mright)$ iff 
$\left(\stateOne,\stateTwo\right) \in \regular{\programEval}\mleft(\programOne\mright)$ or $\left(\stateOne,\stateTwo\right) \in \regular{\programEval}\mleft(\programTwo\mright)$.
Combining the inductive hypotheses, this yields that 
$\left(\stateOne,\stateTwo\right) \in \regular{\programEval}\mleft(\programOne \cup \programTwo\mright)$ iff 
$\equalOn{\stateOne}{\stateTwo}{\variableVecOne^\complement}$ and furthermore
$
\twinState{\stateTwo}{\stateThree}{\variableVecOne}{\variableVecOneNext}
\models_{\regular{\dynamicTheory}}\programRendition{\programOne\cup\programTwo}\left(\variableOne,\variableOneNext\right)$.
\item[$\left(\programOne\right)^*$]
We know by definition that  $\left(\stateOne,\stateTwo\right) \in \regular{\programEval}\mleft(\left(\programOne\right)^*\mright)$ iff for some $N \in \mathbb{N}$ it holds that $\left(\stateOne,\stateTwo\right) \in \regular{\programEval}\mleft(\left(\programOne\right)^N\mright)$.
Again, by definition this is the case iff there exist states $\stateOneVariant_1,\dots,\stateOneVariant_{N}$ such that $\stateOne=\stateOneVariant_{1}$, $\stateOneVariant_{N}=\stateTwo$ and for all $1 \leq k < N$ it holds that $\left(\stateOneVariant_k,\stateOneVariant_{k+1}\right) \in \regular{\programEval}\mleft(\programOne\mright)$.
We now need to show that this is the case iff 
$\twinState{\stateOne}{\stateTwo}{\variableVecOne}{\variableVecOneNext} \models_{\dynamicTheory} \programRendition{\left(\programOne\right)^*}\left(\variableVecOne,\variableVecOneNext\right)$ and $\equalOn{\stateOne}{\stateTwo}{\variableVecOne^\complement}$.

We show this by initializing the existential variables of the formula.
To this end (via \newNameForInterpolation{}), pick a state $\stateThree$
that is equal to $\stateOne$ everywhere except for:
\begin{itemize}
    \item $\universeToNat\left(\regular{\variableEval}\left(\stateThree,n\right)\right)=N$
    \item For all variables $\variableVecOne_j$ the variable $Z_j$ are assigned to the corresponding value for the sequence
    \[
    \regular{\variableEval}\left(\stateOneVariant_1,\variableVecOne_j\right),\dots,\regular{\variableEval}\left(\stateOneVariant_N,\variableVecOne_j\right)
    \]
    w.r.t. the G\"odel-Predicate $\goedel{\variableVecOne_j}$.
    That is, we assign $Z_j$ a G\"odel-Encoding of the sequence of values that $\variableOne_j$ took on across all $N$ loop iterations.
\end{itemize}
The latter assignments of $Z_j$ ensure that
$\stateThree \models 
\bigwedge_{j=1}^{M}
\goedel{\variableVecOne_j}\left(Z_j, n, 1, \variableVecOne_j\right) \land
\goedel{\variableVecOne_j}\left(Z_j, n, n, \variableVecOneNext_j\right)$ as we previously asserted that $\stateOne=\stateOneVariant_{1}$ and $\stateOneVariant_{\regular{n}}=\stateTwo$.
And the first/last element of $\ith{Z}_j$'s series evaluates to the values of $\variableOne_j$ in $\stateOne$/$\stateTwo$.
We also know that for all $1 \leq K < N$ it holds that $\left(\stateOneVariant_{K},\stateOneVariant_{K+1}\right) \in \regular{\programEval}\mleft(\programOne\mright)$
iff
$\twinState{\stateOneVariant_{K}}{\stateOneVariant_{K+1}}{\variableVecOne}{\variableVecOneNext} \models_{\dynamicTheory}
\programRendition{\programOne}\left(\variableVecOne,\variableVecOneNext\right)$ and $\equalOn{\stateOneVariant_{K}}{\stateOneVariant_{K+1}}{\variableVecOne^\complement}$.
We still need to show that $\stateThree$ satisfies the remaining formula.
To this end, consider any version of $\stateThree$ after the the quantifier $\forall k$ (denoted $\stateThreeVariant$).
To this end, consider the case where $0 < K=\universeToNat\left(\regular{\variableEval}\left(\stateThreeVariant,k\right)\right) < \universeToNat\left(\regular{\variableEval}\left(\stateThreeVariant,n\right)\right)=N$ (the other case trivially satisfies the implication).
For any such $K$, $\stateThreeVariant$ can be updated for $\variableVecTwo$ and $\variableVecTwoVariant$ such that they are (component wise) assigned to the values of $\regular{\variableEval}\left(\stateOneVariant_{K},\variableVecOne\right)$ and
$\regular{\variableEval}\left(\stateOneVariant_{K+1},\variableVecOne\right)$.
Since we know that $\left(\stateOneVariant_{K},\stateOneVariant_{K+1}\right)\in\regular{\programEval}\mleft(\programOne\mright)$ and recalling the condition on $\programRendition{\programOne}$ from above, this yields that:
\begin{align*}
    \stateThreeVariant \models_{\regular{\dynamicTheory}} \exists \variableVecTwo \exists \variableVecTwoVariant \Big(\programRendition{\programOne}\left(\variableVecTwo,\variableVecTwoVariant\right)
    \land \bigwedge_{j=1}^{M}
    \goedel{\variableVecOne_j}\left(Z_j, n, k, \variableVecTwo_j\right) \land
    \goedel{\variableVecOne_j}\left(Z_j, n, k+1, \variableVecTwoVariant_j\right)
\Big)
\end{align*}

In conclusion, $\stateThree$ satisfies all parts of the formula of $\programRendition{\left(\programOne\right)^*}\left(\variableVecOne,\variableVecOneNext\right)$.
Remember that $\stateThreeVariant$ only varies from $\stateThree$ in existentially quantified variables.
Thus, by coincidence lemma we get that
$\twinState{\stateOne}{\stateTwo}{\variableVecOne}{\variableVecOneNext} \models_{\regular{\dynamicTheory}} \programRendition{\left(\programOne\right)^*}$ and via the coincidence lemma and bounded effect, we also know that $\equalOn{\stateOne}{\stateTwo}{\variableVecOne^\complement}$.
Conversely, given a state $\stateThree$ such that $\stateThree = \twinState{\stateOne}{\stateTwo}{\variableVecOne}{\variableVecOneNext} \models_{\regular{\dynamicTheory}} \programRendition{\left(\programOne\right)^*}$ with $\equalOn{\stateOne}{\stateTwo}{\variableVecOne^\complement}$, we can reconstruct a program run of the loop:
The existential instantiation of $n$ tells us the number of loop iterations and the sequences encoded in $Z_j$ (which by construction must all be sequences of length $n$) provide us with values for the intermediate states, where it is ensured that all loop steps correspond to the transition relation of $\programOne$.
Overall, this yields that 
$\left(\stateOne,\stateTwo\right)\in\regular{\programEval}\mleft(\programOne^*\mright)$
iff
$\equalOn{\stateOne}{\stateTwo}{\variableVecOne^\complement}$ and
$\twinState{\stateOne}{\stateTwo}{\variableVecOne}{\variableVecOneNext} \models_{\regular{\dynamicTheory}} \programRendition{\left(\programOne\right)^*}$
\end{itemize}
This concludes our proof which showed that for all programs $\programOne \in \regular{\signaturePrograms}$ it holds that 
$\left(\stateOne,\stateTwo\right) \in \regular{\programEval}\mleft(\programOne\mright)$
iff
$\twinState{\stateOne}{\stateTwo}{\variableVecOne}{\variableVecOneNext} \models_{\regular{\dynamicTheory}} \programRendition{\programOne}$
and $\equalOn{\stateOne}{\stateTwo}{\variableVecOne^\complement}$
\end{proof}

\begin{lemma}[Box Properties for Regular Programs]
\label{lem:rel-completeness:regular-box}
Consider an  FOL expressive, Gödel expressive dynamic theory $\dynamicTheory$ with finite support and a relatively complete proof calculus $\calculus[\dynamicTheory]$.
The extended proof calculus $\calculus[\regular{\dynamicTheory}]$ is relatively complete for formulas of the form $\formulaOneFOL\implies\modBox{\programOne}{\formulaTwoFOL}$ with $\formulaOneFOL,\formulaTwoFOL\in\folFormulaSet{\regular{\dynamicSignature}}$ and $\programOne\in\regular{\signaturePrograms}$
\end{lemma}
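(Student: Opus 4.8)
The plan is to prove the slightly sharper claim that for \emph{every} $\programOne\in\regular{\signaturePrograms}$ and every valid formula $\formulaOneFOL\implies\modBox{\programOne}{\formulaTwoFOL}$ with $\formulaOneFOL,\formulaTwoFOL\in\folFormulaSet{\regular{\dynamicSignature}}$ one has $\Gamma_{\regular{\dynamicTheory}}\calculus[\regular{\dynamicTheory}]\formulaOneFOL\implies\modBox{\programOne}{\formulaTwoFOL}$, where $\Gamma_{\regular{\dynamicTheory}}$ is the set of valid first-order formulas of $\regular{\dynamicTheory}$ (cf.\ \Cref{def:rel-complete:rel-complete}); note $\folFormulaSet{\regular{\dynamicSignature}}=\folFormulaSet{\dynamicSignature}$, since the first-order fragment depends only on variables and atoms, which lifting leaves untouched. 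The argument is a structural induction on $\programOne$, and the guiding discipline is that every subsidiary proof obligation we generate must be either (i) a first-order formula, discharged by the oracle — here we use that validity of first-order formulas transfers from $\dynamicTheory$ to $\regular{\dynamicTheory}$ as they share $\universe,\stateSpace,\variableEval,\atomEval$ — or (ii) a formula of shape $\psi_1\implies\modBox{\programTwo}{\psi_2}$ with $\psi_1,\psi_2$ first-order and $\programTwo$ a strict subprogram of $\programOne$, discharged by the induction hypothesis. Relative completeness of $\calculus[\regular{\dynamicTheory}]$ in full then follows by combining this lemma with its diamond analogue through \Cref{lem:rel-complete}, using that $\regular{\dynamicTheory}$ is FOL expressive (\Cref{lem:rel-completeness:fol-expressiveness-regular}) with finite support (\Cref{lem:rel-completeness:trivial-regular}).

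For the base cases: if $\programOne\in\signaturePrograms$, the goal lies in $\formulaSet{\dynamicSignature}$, so relative completeness of $\calculus[\dynamicTheory]$ yields a derivation from finitely many first-order validities of $\dynamicTheory$; by the deduction theorem of the underlying Hilbert calculus this becomes $\calculus[\dynamicTheory](\psi\implies\formulaOneFOL\implies\modBox{\programOne}{\formulaTwoFOL})$ for some first-order $\psi\in\Gamma_{\dynamicTheory}\subseteq\Gamma_{\regular{\dynamicTheory}}$, which \ref{axiom:RR} lifts to $\calculus[\regular{\dynamicTheory}]$, and modus ponens closes it. If $\programOne\equiv\,?(\formulaThreeFOL)$, then \ref{axiom:?} provably rewrites the goal into the equivalent first-order formula $\formulaOneFOL\implies(\formulaThreeFOL\implies\formulaTwoFOL)$, which is valid and hence in $\Gamma_{\regular{\dynamicTheory}}$. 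For $\programOne\equiv\programOne_1\cup\programOne_2$, \ref{axiom:cup} rewrites $\modBox{\programOne}{\formulaTwoFOL}$ as $\modBox{\programOne_1}{\formulaTwoFOL}\land\modBox{\programOne_2}{\formulaTwoFOL}$, so the goal splits propositionally into the two valid box formulas $\formulaOneFOL\implies\modBox{\programOne_i}{\formulaTwoFOL}$, both supplied by the induction hypothesis. For $\programOne\equiv\programOne_1;\programOne_2$, \ref{axiom:seq} rewrites $\modBox{\programOne}{\formulaTwoFOL}$ as $\modBox{\programOne_1}{\modBox{\programOne_2}{\formulaTwoFOL}}$; by \Cref{lem:expressibility_fol} applied to $\regular{\dynamicTheory}$ there is a first-order $\formulaThreeFOL$ with $\models_{\regular{\dynamicTheory}}(\formulaThreeFOL\iff\modBox{\programOne_2}{\formulaTwoFOL})$. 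Then $\formulaThreeFOL\implies\modBox{\programOne_2}{\formulaTwoFOL}$ is a valid box formula over $\programOne_2$ (induction hypothesis), so \ref{axiom:M} gives $\modBox{\programOne_1}{\formulaThreeFOL}\implies\modBox{\programOne_1}{\modBox{\programOne_2}{\formulaTwoFOL}}$; and $\formulaOneFOL\implies\modBox{\programOne_1}{\formulaThreeFOL}$, valid because semantically equivalent to the original goal via the equivalence for $\formulaThreeFOL$, is a box formula over $\programOne_1$ (induction hypothesis). Chaining these with \ref{axiom:seq} proves the goal.

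The interesting case is $\programOne\equiv(\programOne_1)^*$, and here I take the loop invariant to be a first-order formula $I\in\folFormulaSet{\dynamicSignature}$ that is equivalent to $\modBox{(\programOne_1)^*}{\formulaTwoFOL}$ itself, whose existence is again granted by \Cref{lem:expressibility_fol}. Given such $I$: the formulas $\formulaOneFOL\implies I$ and $I\implies\formulaTwoFOL$ are valid first-order formulas (oracle); $I\implies\modBox{\programOne_1}{I}$ is valid — prepending one $\programOne_1$-step preserves the property ``all finite $\programOne_1$-iterations lead to $\formulaTwoFOL$'' — and is a box formula over the strict subprogram $\programOne_1$ (induction hypothesis); the loop-induction rule \ref{axiom:ind} (derivable from \ref{axiom:I}) then yields $I\implies\modBox{(\programOne_1)^*}{I}$, and \ref{axiom:M} applied to $I\implies\formulaTwoFOL$ yields $\modBox{(\programOne_1)^*}{I}\implies\modBox{(\programOne_1)^*}{\formulaTwoFOL}$; chaining $\formulaOneFOL\implies I\implies\modBox{(\programOne_1)^*}{I}\implies\modBox{(\programOne_1)^*}{\formulaTwoFOL}$ closes the induction.

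I expect the main obstacle to be maintaining this discipline in the sequential and loop cases: the natural temptation is to reason with full equivalences $\modBox{\programTwo}{\psi}\iff(\text{first-order})$ inside the calculus, but such equivalences are not of box-first-order shape, so the calculus cannot be asked to prove them directly; one must instead always isolate the single direction that \emph{is} a box formula over a smaller program and let the oracle absorb everything genuinely first-order. The other delicate point is that the loop case silently depends on a \emph{first-order} loop invariant being available at all — the equivalence of $\modBox{(\programOne_1)^*}{\formulaTwoFOL}$ with a first-order formula — which is precisely the content imported from \Cref{lem:rel-completeness:fol-expressiveness-regular} and, through it, from Gödel expressiveness: sequences of intermediate states are encoded by Gödelization so that the unbounded loop admits a finite first-order rendition.
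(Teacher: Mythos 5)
Your proposal is correct and follows essentially the same route as the paper's proof: structural induction on $\programOne$, discharging base programs via \ref{axiom:RR} and relative completeness of $\calculus[\dynamicTheory]$, handling $?$, $\cup$, and $;$ via the corresponding decomposition axioms with \Cref{lem:expressibility_fol} supplying the first-order intermediate assertion, and taking the first-order rendition of $\modBox{(\programOne_1)^*}{\formulaTwoFOL}$ itself as the loop invariant for \ref{axiom:ind}. The only differences are cosmetic (you use \ref{axiom:M} where the paper uses \ref{axiom:MR}, and you are slightly more explicit about the deduction-theorem step in the base case and about why first-order validity transfers to $\regular{\dynamicTheory}$).
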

\begin{proof}
Note that due to our assumptions, we know that $\regular{\dynamicTheory}$ is FOL expressive.
We perform the proof by structural induction over programs in a similar fashion to Harel \emph{et al.}~\cite{harel_first-order_1979} and Platzer~\cite{Platzer2012} while obtaining the base case from assumptions:
\begin{itemize}[label={ABCDEFG},align=left, leftmargin=*]
\item[$\programOne \in \signaturePrograms$]
In this case, the formula is in $\formulaSet{\dynamicSignature}$.
Since the formula is valid and $\calculus[\dynamicTheory]$ is relatively complete, we can use \ref{axiom:RR} do relatively completely derive the formula in $\regular{\dynamicTheory}$ using $\calculus[\dynamicTheory]$.

\item[$\programOne \equiv ?\left(\formulaThree\right)$]
By applying the axiom \ref{axiom:?} we get a formula which is equivalent and modality free, namely $\formulaOneFOL \implies \left(\formulaThree \implies \formulaTwoFOL\right)$.

\item[$\programOne \equiv \programTwo \cup \programThree$]
By applying the axiom \ref{axiom:cup} we get a formula which is equivalent, namely (after propositional restructuring) $\left(\formulaOneFOL \implies \modBox{\programTwo}{ \formulaTwoFOL}\right) \land \left(\formulaOneFOL \implies \modBox{\programThree}{\formulaTwoFOL}\right)$.
We can then prove these two parts of the conjunction separately.
By inductive assumption, our calculus is capable of proving the two statements, as they must both be valid.

\item[$\programOne \equiv \programTwo;\programThree$]
By applying the axiom \ref{axiom:seq} we get an equivalent formula, namely $\formulaOneFOL \implies \modBox{\programTwo}{\modBox{\programThree}{\formulaTwoFOL}}$.
Using \Cref{lem:expressibility_fol}, we know there is a formula $\formulaThreeFOL \in \formulaSet{\dynamicSignature}$ such that $\models_{\dynamicTheory} \formulaThreeFOL \iff \modBox{\programThree} \formulaTwoFOL$.
We then apply \ref{axiom:MR} w.r.t. $\formulaThreeFOL$, which yields two proof obligations.
First, we have to prove 
$
\formulaOneFOL \implies \modBox{\programOne}{\formulaThreeFOL}
$.
By construction of $\formulaThreeFOL$, this is equivalent to the original formulation.
Therefore, this formula is valid and can be proven according to our inductive assumption.
Secondly, we have to prove 
$
\formulaThreeFOL \implies \modBox{\programOne}{\formulaTwoFOL}.
$.
As the two sides are, by construction, equivalent, this is equally a valid formula which can be proven according to our inductive assumption.

\item[$\programOne \equiv \left(\programTwo\right)^*$]
First, note that $\modBox{\left(\programTwo\right)^*}{\formulaTwoFOL}$ is a provable invariant:
Via \ref{axiom:star} we get 
$\modBox{\left(\programTwo\right)^*} \formulaTwoFOL \implies \modBox{\programTwo}\modBox{\left(\programTwo\right)^*}\formulaTwoFOL$.
Using \ref{axiom:ind} this implies that the formula is an invariant.
Using \Cref{lem:expressibility_fol}, we construct the formula $\formulaThreeFOL$ which is equivalent to $\modBox{\left(\programTwo\right)^*} \formulaTwoFOL$.
Then the formula $\formulaOneFOL \implies \formulaThreeFOL$ is valid and first-order.
Moreover, the formula $\modBox{\left(\programTwo\right)^*}{\formulaTwoFOL} \implies\formulaTwoFOL$ is valid (as $\formulaTwoFOL$ must also hold after 0 iterations) which yields the validity of $\formulaThreeFOL \implies \formulaTwoFOL$ which is again a first-order formula.
Thus, we first adjust the left hand-side via  $\formulaOneFOL \implies \formulaThreeFOL$
and then apply \ref{axiom:MR} with $\formulaThreeFOL$.
The formula $\formulaThreeFOL \implies \modBox{\left(\programTwo\right)^*}\formulaThreeFOL$ is then valid and provable using \ref{axiom:ind} via the induction hypothesis and the right side $\formulaThreeFOL \implies \formulaTwoFOL$ is valid and a first-order provable.
\end{itemize}
\end{proof}

\begin{lemma}[Diamond Properties for Regular Programs]
\label{lem:rel-completeness:regular-diamond}
Consider an FOL expressive, Gödel expressive dynamic theory $\dynamicTheory$ with finite support and a relatively complete proof calculus $\calculus[\dynamicTheory]$.
The extended proof calculus $\calculus[\regular{\dynamicTheory}]$ is relatively complete for formulas of the form $\formulaOneFOL\implies\modDia{\programOne}{\formulaTwoFOL}$ with $\formulaOneFOL,\formulaTwoFOL\in\folFormulaSet{\regular{\dynamicSignature}}$ and $\programOne\in\regular{\signaturePrograms}$.
\end{lemma}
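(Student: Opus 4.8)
The plan is to run the same structural induction on $\programOne \in \regular{\signaturePrograms}$ as in the proof of \Cref{lem:rel-completeness:regular-box}, replacing each box axiom by its diamond dual. Throughout I use that $\regular{\dynamicTheory}$ is again FOL Expressive, Gödel Expressive and has finite support (\Cref{lem:rel-completeness:fol-expressiveness-regular,lem:rel-completeness:trivial-regular}), so that \Cref{lem:expressibility_fol} lets me replace any $\regular{\dynamicSignature}$-formula occurring inside a modality by a validity-equivalent first-order formula whenever convenient. The base case $\programOne \in \signaturePrograms$ is discharged exactly as in \Cref{lem:rel-completeness:regular-box} via \ref{axiom:RR}, and the test case $\programOne \equiv\, ?(\formulaThree)$ collapses, by the diamond form of \ref{axiom:?} (obtained from \ref{axiom:?} by duality), to the modality-free goal $\formulaOneFOL \implies (\formulaThree \land \formulaTwoFOL)$, which the oracle discharges.

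For nondeterministic choice $\programOne \equiv \programTwo \cup \programThree$ I would use the diamond form of \ref{axiom:cup}, i.e. $\modDia{\programTwo\cup\programThree}{\formulaTwoFOL} \iff \modDia{\programTwo}{\formulaTwoFOL} \lor \modDia{\programThree}{\formulaTwoFOL}$. Since the right-hand side is a disjunction, the plain conjunction-split of the box case is not available; instead I obtain first-order $\chi_2,\chi_3$ equivalent to the two disjuncts via \Cref{lem:expressibility_fol}, prove $\chi_2 \implies \modDia{\programTwo}{\formulaTwoFOL}$ and $\chi_3 \implies \modDia{\programThree}{\formulaTwoFOL}$ by the induction hypothesis (both subprograms are structurally smaller), leave the first-order tautology $\formulaOneFOL \implies \chi_2 \lor \chi_3$ to the oracle, and recombine propositionally with the two monotonicity directions of \ref{axiom:cup}. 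For sequential composition $\programOne \equiv \programTwo;\programThree$ I would use the diamond form of \ref{axiom:seq} together with a diamond monotonicity rule --- from $\schemaVarThree \implies \modDia{\schemaProgramOne}{\schemaVarOne}$ and $\schemaVarOne \implies \schemaVarTwo$ conclude $\schemaVarThree \implies \modDia{\schemaProgramOne}{\schemaVarTwo}$, which is derivable from \ref{axiom:G} and \ref{axiom:KDiamond} --- and the first-order witness for $\modDia{\programThree}{\formulaTwoFOL}$ from \Cref{lem:expressibility_fol}, reducing the goal to two diamond goals over the strictly smaller $\programTwo$ and $\programThree$ that the induction hypothesis settles.

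The crux is the loop case $\programOne \equiv \left(\programTwo\right)^*$, where an invariant no longer suffices and I must exhibit a well-founded variant; here I rely on $\regular{\dynamicTheory}$ being Gödel (hence inductively) expressive so that \ref{axiom:C} is sound on it. Reusing the Gödel-encoded loop rendition from the proof of \Cref{lem:rel-completeness:fol-expressiveness-regular}, I would build a first-order formula $\schemaVarOne$, parameterized by an integer-expressive counter variable $\variableOne$ (and a fresh helper $\variableTwo$ for the ``subtract one'' bookkeeping forced by the shape of \ref{axiom:C}), expressing ``there is a run of exactly $\variableOne$ iterations of $\programTwo$ from the current state ending in a state satisfying $\formulaTwoFOL$''. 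Instantiating \ref{axiom:C} with this $\schemaVarOne$ leaves (i) the progress premise $\modBox{\left(\programTwo\right)^*}{(\dots \modDia{\programTwo}{\dots \schemaVarOne \dots})}$, which is valid by construction and which I would discharge by using \Cref{lem:expressibility_fol} to make the inner diamond first-order and then appealing to \Cref{lem:rel-completeness:regular-box} for the remaining $\modBox{\left(\programTwo\right)^*}{\cdot}$ obligation together with the induction hypothesis for the residual diamond goals over the smaller $\programTwo$; and (ii) two purely first-order obligations --- $\formulaOneFOL \implies \ex{\variableOne}{\schemaVarOne}$ (the diamond-star runs finitely many iterations) and $\ex{\variableOne}{(\neg\natPositive(\variableOne) \land \schemaVarOne)} \implies \formulaTwoFOL$ ($\schemaVarOne$ at counter $0$ is just $\formulaTwoFOL$) --- handed to the oracle. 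Chaining the conclusion of \ref{axiom:C} with these via the diamond monotonicity rule yields $\formulaOneFOL \implies \modDia{\left(\programTwo\right)^*}{\formulaTwoFOL}$. I expect the main obstacle to be precisely this loop case: fitting the first-order variant $\schemaVarOne$ to the exact quantifier shape of \ref{axiom:C} (its helper variable and its $\natPlusOne$/$\natEq$ encoding of decrement), choosing all auxiliary variables fresh for $\programTwo$, $\formulaOneFOL$ and $\formulaTwoFOL$, and carefully interleaving \Cref{lem:expressibility_fol}, \Cref{lem:rel-completeness:regular-box} and the induction hypothesis when discharging the progress premise --- technical work rather than a new conceptual difficulty beyond the classical relative-completeness arguments for regular programs.
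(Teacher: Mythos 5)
Your proposal is correct and follows essentially the same route as the paper's proof: structural induction with dualized axioms for test, choice and sequence, \Cref{lem:expressibility_fol} for intermediate first-order formulas, and axiom \ref{axiom:C} applied to a counter-parameterized, Gödel-encoded loop rendition with the outer $\exists n$ dropped. The only minor divergence is in discharging the boxed progress premise of \ref{axiom:C}: the paper proves the inner (diamond-over-$\programTwo$) formula valid via the induction hypothesis and then wraps it with \ref{axiom:G} directly, which is slightly more economical than your detour through \Cref{lem:rel-completeness:regular-box}.
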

\begin{proof}
Note that due to our assumptions we know that $\regular{\dynamicTheory}$ is FOL expressive.
We perform the proof by structural induction over programs in a similar fashion to Harel \emph{et al.}~\cite{harel_first-order_1979} and Platzer~\cite{Platzer2012} while obtaining the base case from assumptions:
\begin{itemize}[label={ABCDEFG},align=left, leftmargin=*]
\item[\ensuremath{\programOne \equiv \programTwo;\programThree}]
By applying the axiom \ref{axiom:seq} we get an equivalent formula namely $\formulaOneFOL \implies \modDia{\programTwo}{\modDia{\programThree}{\formulaTwoFOL}}$.
Then using \Cref{lem:expressibility_fol} get a formula $\formulaThreeFOL$ such that $\models_{\regular{\dynamicTheory}} \formulaThreeFOL \iff \modDia{\programThree}{\formulaTwoFOL}$.
Due to the assumed validity in combination with the inductive hypothesis we can then use \calculus[\regular{\dynamicTheory}] to (relatively) prove $\formulaOneFOL \implies \modDia{\programTwo}{\formulaThreeFOL}$ and $\formulaThreeFOL \implies \modDia{\programThree}{\formulaTwoFOL}$.
From the latter via \ref{axiom:G} (with $\programTwo$) and \ref{axiom:KDiamond} get
$\modDia{\programTwo}{\formulaThreeFOL} \implies \modDia{\programTwo}{\modDia{\programThree}{\formulaTwoFOL}}$.
Via propositional reasoning this yields $\formulaOneFOL \implies \modDia{\programTwo}{\modDia{\programThree}{\formulaTwoFOL}}$.
\item[...]
For checks and choice the proof follows directly from the dual version of the proof for \Cref{lem:rel-completeness:regular-box}.
For example, for check we apply the dual interpretation of \ref{axiom:?} $\modDia{?\left(\schemaVarOne\right)}\schemaVarTwo \equiv \neg \modBox{?\left(\schemaVarOne\right)} \neg \schemaVarTwo \equiv \neg \left(\schemaVarOne \implies \neg \schemaVarTwo\right) \equiv \schemaVarOne\land\schemaVarTwo$.
Similar arguments follow for the other composition primitives besides loops, which we will discuss now.
\item[\ensuremath{\programOne \equiv \left(\programTwo\right)^*}]%
For loops we instantiate the first-order formula for $\modDia{\left(\programTwo\right)^*} \formulaTwoFOL$ (see \Cref{lem:expressibility_fol}) via the loop construction from \Cref{lem:rel-completeness:fol-expressiveness-regular} while dropping the outer existential quantification of $\exists n$.
Instead, we choose $n,m\in\integerVariables$ in such a way that it is a fresh variable .
We refer to this formula as $\formulaThreeFOL\left(n\right)$.
The formula
\[
\fa{n}{
\natPositive\left(n\right) \implies \modDia{\programTwo}{
\fa{m}{\left(
\natPlusOne\left(m,n\right)
\implies
\fa{n}{\left(
\natEq\left(n,m\right)
\implies
\formulaThreeFOL\left(n\right)
\right)
}\right)
}
}
}
\]
then is valid, because the reachability of a state satisfying $\formulaTwo$ after $n$ states implies there exists a state transition via $\programTwo$ such that after one iteration we can reach $\formulaTwo$ in $n-1$ transitions.
Thus, we can prove the formula above via inductive assumption.
Via \ref{axiom:G} we can then derive the precondition for \ref{axiom:C}
which yields
\[
\fa{n}{
\formulaThreeFOL\left(n\right)
\implies
\modDia{\programTwo^*}{
\ex{n}{
\left(
\neg\natPositive\left(n\right)
\land
\formulaThreeFOL\left(n\right)
\right)
}
}
}
\]
We then know that $\formulaOneFOL \implies \ex{n}{ \formulaThreeFOL\left(n\right)}$ is valid (due to the validity of $\formulaOneFOL\implies\modDia{\mleft(\programOne\mright)^*}\formulaTwoFOL$) which implies that we can derive it via first-order reasoning.
Moreover, it is valid (and thus derivable via first-order reasoning) that
$
\ex{n}{\left(
\neg\natPositive\left(n\right)
\land \formulaThreeFOL\left(n\right)\right)
\implies
\formulaTwoFOL}
$.
Using \ref{axiom:G} and further reasoning via \ref{axiom:KDiamond}, this yields that: 
\[
\modDia{\mleft(\programOne\mright)^*}
\exists
\variableOne \left(
\neg\natPositive\left(n\right)
\land \formulaThreeFOL\left(n\right)\right)
\implies
\modDia{\mleft(\programOne\mright)^*}
\formulaTwoFOL.
\]
Via Modus Ponens, we can therefore derive that $\modDia{\mleft(\programOne\mright)^*}
\formulaTwoFOL$
\end{itemize}
\end{proof}%

\subsubsection{Heterogeneous Dynamic Theories and Relative Completeness}

\begin{lemma}[Gödel Encodings for Simple Heterogeneous Dynamic Theories]
\label{lem:rel-completeness:hero-goedel}
Let $\zero{\dynamicTheory},\one{\dynamicTheory}$ be two Gödel expressive dynamic theories
that are communicating and together form $\preHero{\dynamicTheory}$.
Then $\preHero{\dynamicTheory}$ is also Gödel expressive.
\end{lemma}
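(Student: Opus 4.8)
The plan is to exhibit, for $\preHero{\dynamicTheory}$, exactly the data demanded by \Cref{def:rel-complete:goedel-expressive}: a set of integer-expressive variables relative to which $\preHero{\dynamicTheory}$ is inductively expressive, a $\natLess$ predicate over that set, and a Gödel formula for \emph{every} variable of $\hero{\signatureVariables}$. I would take $\zero{\integerVariables}$ as the canonical integer-expressive set for $\preHero{\dynamicTheory}$ and keep the adjusted counting map (still written $\preHero{\universeToNat}$) that agrees with $\zero{\universeToNat}$ on $\zero{\universe}$; inductive expressivity of $\preHero{\dynamicTheory}$ with respect to $\zero{\integerVariables}$ is then \Cref{lem:hdl:inductive}, which only manipulates the universe and state space and hence applies already at the level of $\preHero{\dynamicTheory}$ (the later liftings change neither). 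The one structural fact I use throughout is immediate from \Cref{def:hdl:simple_dom_comp} (cf.\ the proof of \Cref{isathm:simple_hdl}): every $\zero{\dynamicSignature}$-formula has the same truth value in a state of $\preHero{\dynamicTheory}$ as it does in the first component of that state, and symmetrically for $\one{\dynamicSignature}$-formulas. This immediately discharges the $\natLess$ obligation ($\preHero{\natLess} \coloneqq \zero{\natLess}$) and, for variables $\variableOne\in\zero{\signatureVariables}$, lets me reuse verbatim the Gödel formula $\goedel{\variableOne}$ and witness $\variableTwo\in\zero{\signatureVariables}$ supplied by Gödel expressivity of $\zero{\dynamicTheory}$: it uses only $\zero{\dynamicSignature}$-material with integer arguments from $\zero{\integerVariables}$, and $\valueSet[\preHero{\dynamicTheory}](\variableOne) = \valueSet[\zero{\dynamicTheory}](\variableOne)$ because $\hero{\stateSpace}$ is a full product.

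The substantive case is a variable $\variableOne\in\one{\signatureVariables}$: Gödel expressivity of $\one{\dynamicTheory}$ only yields a Gödel formula, say $\one{\goedelSym}_{\variableOne}$, phrased over $\one{\integerVariables}$, whereas \Cref{def:rel-complete:goedel-expressive} for $\preHero{\dynamicTheory}$ requires one phrased over $\zero{\integerVariables}$. I would bridge the two integer vocabularies using the $\preHero{\natEq}$ predicate from the Communicating assumption. Taking the $\one{\dynamicTheory}$-witness variable $\one{\variableTwo}$ as the $\preHero{\dynamicTheory}$-witness for $\variableOne$ and choosing fresh $n', j'\in\one{\integerVariables}$ (infinitely many exist, since $\one{\dynamicTheory}$ is FOL expressive and twins of integer-expressive variables are again integer-expressive), set
\[
\goedel{\variableOne}\mleft(\variableTwoVariant,n,j,\variableOneVariant\mright)\ \coloneqq\ \ex{n'}{\ex{j'}{\Bigl(\preHero{\natEq}\mleft(n,n'\mright)\ \land\ \preHero{\natEq}\mleft(j,j'\mright)\ \land\ \one{\goedelSym}_{\variableOne}\mleft(\variableTwoVariant,n',j',\variableOneVariant\mright)\Bigr)}}.
\]
Since $\preHero{\natEq}(n,n')$ forces $\zero{\universeToNat}$ of $n$ to equal $\one{\universeToNat}$ of $n'$ (and likewise for $j, j'$), the defining equivalence of $\one{\goedelSym}_{\variableOne}$ with respect to $\one{\integerVariables}$ translates into the equivalence required of $\goedel{\variableOne}$ with respect to $\zero{\integerVariables}$ and $\preHero{\universeToNat}$: the ``only if'' direction unfolds the existentials and reads off the $\one{\dynamicTheory}$-Gödelization property on the $\one{\dynamicTheory}$-component, while the ``if'' direction first uses interpolation together with integer-expressiveness of $n', j'$ to install values whose $\one{\universeToNat}$-images are the prescribed counts, then invokes the same property.

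The main obstacle is the faithful bookkeeping in this verification rather than any new idea. One must keep straight which sub-state each subformula observes, manage the freshness of $n', j'$ against $\variableOneVariant, \variableTwoVariant$ and one another, and --- should the universes $\zero{\universe}$ and $\one{\universe}$ overlap, so that a variable of one signature can happen to be a twin of a variable of the other --- additionally precompose the construction (in \emph{both} cases above) with equality renamings $\variableOneVariant\doteq\variableOneVariant'$, $\variableTwoVariant\doteq\variableTwoVariant'$ onto same-component twins, using the $\preHero{\dynamicTheory}$-level equality predicate that \Cref{def:hdl:communicating} supplies alongside $\preHero{\natEq}$. This is precisely why \Cref{def:hdl:communicating} bundles \emph{both} an integer-equality channel and a general equality predicate: without them there is no first-order link between the two integer vocabularies, and the downstream loop renditions in \Cref{lem:rel-completeness:fol-expressiveness-regular} could not express that the two components run the same number of loop iterations in lock-step.
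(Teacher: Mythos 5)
Your proposal is correct and follows essentially the same route as the paper: Gödel formulas for $\zero{\signatureVariables}$-variables are kept verbatim, and for $\one{\signatureVariables}$-variables the foreign integer indices are translated by existentially quantifying fresh native integer-expressive variables linked through $\preHero{\natEq}$ from the Communicating assumption. Your additional bookkeeping (the $\natLess$ predicate, freshness, and the twin-overlap caveat) goes beyond the paper's terse justification but does not change the construction.
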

\begin{proof}
For any variable in $\zero{\signatureVariables}$ we leave the Gödel formula as is.
For any variable in $\variableOne\in\one{\signatureVariables}$ (encoded in $\variableTwo\in\one{\signatureVariables}$) and $\zero{n},\zero{j}\in\zero{\integerVariables},\one{n},\one{j}\in\one{\integerVariables}$ have:
\[
\goedel{\variableOne}\left(\variableTwo,\zero{n},\zero{j},\variableOne\right) \equiv 
\left(\ex{\one{n}}{\ex{\one{j}}{\left(
\preHero{\natEq}\left(\zero{n},\one{n}\right) \land
\preHero{\natEq}\left(\zero{j},\one{j}\right) \land
\goedel{\variableOne}\left(\variableTwo,\one{n},\one{j},\variableOne\right)
\right)}}\right)
\]
This translates the integer indices $\zero{n},\zero{j}$ into integer expressive variables native to $\one{\dynamicTheory}$.
\end{proof}

\begin{lemma}[FOL Expressiveness]
\label{lem:rel-completeness:hero-fol-expressive}
Let $\zero{\dynamicTheory},\one{\dynamicTheory}$ be two FOL expressive dynamic theories
that are communicating in $\preHero{\dynamicTheory}$.
Then $\preHero{\dynamicTheory}$ is also FOL expressive.  
\end{lemma}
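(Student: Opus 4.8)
The plan is to establish the three clauses of \Cref{def:rel-complete:fol-expressive} for $\preHero{\dynamicTheory}$ (\textsc{Infinite Variables}, \textsc{Eq Predicate}, \textsc{Rendition}), reusing the witnesses of $\zero{\dynamicTheory}$ and $\one{\dynamicTheory}$ together with the communication data. The finite-support precondition comes for free: since \proofProp{FolExp}{\zero{\dynamicTheory}} entails \proofProp{Fin}{\zero{\dynamicTheory}} and likewise for $\one{\dynamicTheory}$, \Cref{lem:rel-complte:prehero-BV} gives \proofProp{Fin}{\preHero{\dynamicTheory}}.

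First I would dispatch the two easy clauses. For \textsc{Infinite Variables}: because $\hero{\stateSpace}=\zero{\stateSpace}\times\one{\stateSpace}$ with both factors nonempty and $\hero{\variableEval}$ evaluated at a variable $\variableOne\in\zero{\signatureVariables}$ depends only on the $\zero$-component of a product state, we get $\valueSet[\preHero{\dynamicTheory}]\mleft(\variableOne\mright)=\valueSet[\zero{\dynamicTheory}]\mleft(\variableOne\mright)$, and symmetrically for $\one{\signatureVariables}$; hence the twin relation restricted to each side of the partition is unchanged by the combination and already supplies infinitely many twins for every variable. The \textsc{Eq Predicate} clause is literally the second requirement of the Communicating assumption (\Cref{def:hdl:communicating}), so it holds by hypothesis.

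The substance of the proof is \textsc{Rendition}, and I expect it to be the main obstacle. I must define $\preHero{\programRenditionSym}$ on every program of $\preHero{\signaturePrograms}$, the disjoint union of $\zero{\signaturePrograms}$ and $\one{\signaturePrograms}$; by symmetry it suffices to treat $\programOne\in\zero{\signaturePrograms}$. Since $\preHero{\programEval}(\programOne)$ is the natural product lift of $\zero{\programEval}(\programOne)$ (leaving the $\one$-component unchanged up to variable-evaluation equivalence), $\variablesOf{\programOne}\subseteq\zero{\signatureVariables}$, and for admissible twin vectors $\variableVecOne\supseteq\variablesOf{\programOne}$ and $\variableVecOneNext$ the only positions $\programOne$ can change are those of $\variableVecOne$ lying in $\variablesOf{\programOne}$. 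I would build $\preHero{\programRendition{\programOne}}(\variableVecOne,\variableVecOneNext)$ by: routing each such changeable position $i$ through a fresh $\zero$-twin $\variableVecTwo_i$ of $\variableVecOne_i$ (available by \textsc{Infinite Variables}, realizable by the state-space interpolation property), applying $\zero{\programRendition{\programOne}}$ to the $\variablesOf{\programOne}$-sub-vector of $\variableVecOne$ against the fresh $\variableVecTwo_i$, existentially quantifying the $\variableVecTwo_i$, and conjoining both the bridging equalities $\variableVecTwo_i\preHero{\doteq}\variableVecOneNext_i$ at changeable positions and the plain frame equalities $\variableVecOne_i\preHero{\doteq}\variableVecOneNext_i$ at the remaining positions (where $\variableVecOne_i$ -- a $\zero$- or $\one$-variable -- is left untouched by $\programOne$). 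Every conjunct lies in $\folFormulaSet{\preHero{\dynamicSignature}}$ because $\zero{\signatureAtoms}\subseteq\hero{\signatureAtoms}$ and $\preHero{\doteq}$ is available. Correctness then reduces, using the product structure of $\preHero{\dynamicTheory}$'s semantics and the coincidence lemmas (\Cref{isaLem:coincidence_formulas,isaLem:coincidence_programs}), to three facts: that every pure $\zero{\dynamicSignature}$-formula has the same validity in $\zero{\dynamicTheory}$ and in $\preHero{\dynamicTheory}$; that $\zero{\doteq}$ and $\preHero{\doteq}$ agree semantically on $\zero$-variables; and that, since $\programOne$ fixes every variable outside $\variablesOf{\programOne}$, the frame and bridging conjuncts commute past the modality, i.e. $\modDia{\programOne}{(\psi\land\chi)}$ equals $\chi\land\modDia{\programOne}{\psi}$ whenever $\chi$ mentions only unchanged variables. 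The delicate point throughout is that $\variableVecOne$ and $\variableVecOneNext$ need not respect the $\zero/\one$ partition -- a $\zero$-variable may be a twin of a $\one$-variable -- which is exactly why the renaming-and-bridging layer on top of $\zero{\programRenditionSym}$ is needed; checking that this layer is faithful while remaining inside the first-order fragment is the crux of the argument.
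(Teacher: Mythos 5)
Your proof is correct and follows essentially the same route as the paper: \textsc{Infinite Variables} and \textsc{Eq Predicate} transfer directly (the latter straight from the Communicating assumption), and the rendition is obtained by applying the homogeneous rendition $\ith{\programRendition{\programOne}}$ to the $\ith{\signatureVariables}$-sub-vector and conjoining $\doteq$-frame equalities for the remaining variables of $\variableVecOne$. Your additional renaming-and-bridging layer through fresh same-theory twins (existentially quantified, linked to $\variableVecOneNext$ via $\preHero{\doteq}$) addresses a genuine subtlety that the paper's one-line construction silently elides --- in $\preHero{\dynamicTheory}$ a twin of a $\zero{\signatureVariables}$-variable may lie in $\one{\signatureVariables}$, in which case $\zero{\programRendition{\programOne}}$ cannot be applied to $\variableVecOneNext\mid_{\zero{\signatureVariables}}$ directly --- so your version is, if anything, more careful than the paper's.
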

\begin{proof}
For \textsc{\bf Infinite Variables} note that the value ranges of variables for any $\variableOne\in\ith{\signatureVariables}$ has not changed.
Hence, since $\zero{\dynamicTheory},\one{\dynamicTheory}$ have the infinite variable property, so has $\preHero{\dynamicTheory}$.
For \textsc{\bf Eq Predicate} this is enforced via the assumption that $\zero{\dynamicTheory},\one{\dynamicTheory}$ are communicating in $\preHero{\dynamicTheory}$ (and thus via new atoms in $\common{\signatureAtoms}$).

The \textsc{\bf Rendition} is constructed as follows:
\begin{align*}
\programRendition{\programOne}
\left(\variableVecOne,\variableVecOneNext\right)
&\equiv
\ith{\programRendition{\programOne}}
\left(\variableVecOne\mid_{\ith{\signatureVariables}},\variableVecOneNext\mid_{\ith{\signatureVariables}}\right)
\land
\bigwedge_{\variableOne \in \ith[1-i]{\signatureVariables} \cap \variableVecOne
} \variableOne\doteq\variableOneNext
&&
\text{for }
\programOne \in \ith{\signaturePrograms}\
\end{align*}
Where $\variableVecOne\mid_{\ith{\signatureVariables}}=\left(\ith{\variableOne}_1,\dots,\ith{\variableOne}_{\ith{m}}\right)$ with $\ith{\variableOne}_1,\dots,\ith{\variableOne}_{\ith{m}}$ corresponding to the (syntactically created) subvector of $\variableVecOne$ whose variables are in $\ith{\signatureVariables}$ and $\variableVecOneNext\mid_{\ith{\signatureVariables}}$ contains the corresponding twin variables.
With our assumption about the correctness of $\programRendition{\programOne}$ in the FOL Expressive dynamic theory $\ith{\dynamicTheory}$, this immediately implies that $\preHero{\dynamicTheory}$ equally has the rendition property.
Note that we may not assume finite support in this proof, as we solely rely on the correctness of the FOL expressivity of the underlying logics.
\end{proof}

\begin{lemma}[Box Properties for Simple Heterogeneous Theories]
\label{lem:rel-completeness:hero-box}
Let $\zero{\dynamicTheory},\one{\dynamicTheory}$ be two relatively complete, FOL expressive dynamic theories with finite support
that are communicating in $\preHero{\dynamicTheory}$.
Consider now the simple heterogeneous dynamic theory (i.e. $\left(\preHero{\dynamicTheory}\right)$), then its proof calculus $\calculus[\preHero{\dynamicTheory}]$ is relatively complete for formulas of the form $\formulaOneFOL\implies\modBox{\programOne}{\formulaTwoFOL}$ with $\formulaOneFOL,\formulaTwoFOL\in\folFormulaSet{\preHero{\dynamicTheory}},\programOne\in\folFormulaSet{\preHero{\signaturePrograms}}$.
\end{lemma}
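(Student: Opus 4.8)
The plan is to exploit that in $\preHero{\dynamicTheory}$ every modality carries an \emph{atomic} program of one of the two component theories, so that a case split $\programOne\in\zero{\signaturePrograms}$ versus $\programOne\in\one{\signaturePrograms}$ reduces the task to the relative completeness of the corresponding component calculus, exported into $\calculus[\preHero{\dynamicTheory}]$ through the \ref{axiom:HR0}/\ref{axiom:HR1}-style reductions. These reductions are available already for $\preHero{\dynamicTheory}$ by the argument of \Cref{isalem:hdl:reduction}, since validity of a pure $\zero{\dynamicSignature}$- (resp. $\one{\dynamicSignature}$-) formula is preserved by the heterogeneous semantics; in particular $\Gamma_{\zero{\dynamicTheory}}\subseteq\Gamma_{\preHero{\dynamicTheory}}$ and $\Gamma_{\one{\dynamicTheory}}\subseteq\Gamma_{\preHero{\dynamicTheory}}$, which is what upgrades \ref{axiom:HR0} from absolute to $\Gamma$-relative provability. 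I treat $\programOne\in\zero{\signaturePrograms}$; the other case is symmetric. First I would record that a $\zero{}$-program is inert on the $\one{}$-component: by \Cref{lem:rel-complte:prehero-BV} the theory $\preHero{\dynamicTheory}$ has finite support with $\preHero{\boundVarsProgram}(\programOne)=\zero{\boundVarsProgram}(\programOne)\subseteq\zero{\signatureVariables}$, and $\hero{\freeVarsProgram}(\programOne)=\zero{\freeVarsProgram}(\programOne)\subseteq\zero{\signatureVariables}$ by construction, so $\variablesOf{\programOne}\subseteq\zero{\signatureVariables}$; hence $\boundVarsSem(\programOne)$ is disjoint from the free variables of every $\one{\dynamicSignature}$-formula, a fact the calculus exposes through \ref{axiom:V} (and the frame axiom \ref{axiom:Fi}).

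Second, I would transfer the rendition of $\programOne$ into the heterogeneous calculus. By \Cref{lem:rel-completeness:hero-fol-expressive} the theory $\preHero{\dynamicTheory}$ is FOL expressive, and the \textsc{Rendition} property of $\zero{\dynamicTheory}$ (\Cref{def:rel-complete:fol-expressive}) provides a formula $\zero{\programRendition{\programOne}}(\variableVecOne,\variableVecOneNext)\in\folFormulaSet{\zero{\dynamicSignature}}$ with $\models_{\zero{\dynamicTheory}}\zero{\programRendition{\programOne}}(\variableVecOne,\variableVecOneNext)\iff\modDia{\programOne}{(\variableVecOne\doteq\variableVecOneNext)}$ for twin vectors $\variableVecOne\supseteq\variablesOf{\programOne}$ and fresh, disjoint $\variableVecOneNext$ over $\zero{\signatureVariables}$. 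This equivalence is a pure $\zero{\dynamicSignature}$-formula, so by $\proofProp{RelCom}{\zero{\dynamicTheory}}$ it is derivable in $\calculus[\zero{\dynamicTheory}]$ from $\Gamma_{\zero{\dynamicTheory}}$; internalising a finite part of $\Gamma_{\zero{\dynamicTheory}}$ into an implication and applying \ref{axiom:HR0} yields $\Gamma_{\preHero{\dynamicTheory}}\calculus[\preHero{\dynamicTheory}]\zero{\programRendition{\programOne}}(\variableVecOne,\variableVecOneNext)\iff\modDia{\programOne}{(\variableVecOne\doteq\variableVecOneNext)}$.

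Third, given a valid $\formulaOneFOL\implies\modBox{\programOne}{\formulaTwoFOL}$, I would build --- exactly as in the proof of \Cref{lem:expressibility_fol} --- the first-order formula $\formulaThreeFOL\equiv\fa{\variableVecOneNext}{\bigl(\zero{\programRendition{\programOne}}(\variableVecOne,\variableVecOneNext)\implies\fa{\variableVecOne}{\left((\variableVecOne\doteq\variableVecOneNext)\implies\formulaTwoFOL\right)}\bigr)}$, now choosing $\variableVecOne$ to enumerate $\variablesOf{\programOne}$ together with the $\zero{}$-variables in $\freeVarsSyn(\formulaTwoFOL)$ while leaving the $\one{}$-variables of $\formulaTwoFOL$ (hence the $\one{}$-arguments of its $\common{\signatureAtoms}$-atoms) untouched. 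Because $\programOne$ freezes those $\one{}$-variables, the coincidence argument of \Cref{lem:expressibility_fol} still gives $\models_{\preHero{\dynamicTheory}}\modBox{\programOne}{\formulaTwoFOL}\iff\formulaThreeFOL$, so $\formulaOneFOL\implies\formulaThreeFOL$ is a first-order validity and lies in $\Gamma_{\preHero{\dynamicTheory}}$. It then remains to derive $\formulaThreeFOL\implies\modBox{\programOne}{\formulaTwoFOL}$ in $\calculus[\preHero{\dynamicTheory}]$: rewriting $\formulaThreeFOL$ with the transferred rendition equivalence turns it into $\fa{\variableVecOneNext}{\bigl(\modDia{\programOne}{(\variableVecOne\doteq\variableVecOneNext)}\implies\fa{\variableVecOne}{\left((\variableVecOne\doteq\variableVecOneNext)\implies\formulaTwoFOL\right)}\bigr)}$, and, reasoning contrapositively from $\modDia{\programOne}{\neg\formulaTwoFOL}$, one instantiates the snapshot vector $\variableVecOneNext$ by the $\variableVecOne$-values of the reached state and contradicts $\formulaThreeFOL$; this ``snapshot and re-reach'' pattern is precisely what the pullback axiom \ref{axiom:PB} packages, with \ref{axiom:B}, \ref{axiom:G}, \ref{axiom:K} and \ref{axiom:V} handling the commutation of the finitely many $\variableVecOneNext$-quantifiers with the modality. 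Chaining $\formulaOneFOL\implies\formulaThreeFOL$ with $\formulaThreeFOL\implies\modBox{\programOne}{\formulaTwoFOL}$ closes the case.

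The step I expect to be the main obstacle is this last one. In the homogeneous atomic-program case (compare the base case of \Cref{lem:rel-completeness:regular-box} or of \Cref{lem:rel-completeness:box_diamond_complete}) one merely transfers a ready-made relative-completeness certificate for $\formulaOneFOL\implies\modBox{\programOne}{\formulaTwoFOL}$ wholesale, because it is a formula of the component signature; here the \emph{heterogeneous} postcondition $\formulaTwoFOL$ --- in particular its $\common{\signatureAtoms}$-atoms, whose truth couples both state-space coordinates --- blocks a direct appeal to $\proofProp{RelCom}{\zero{\dynamicTheory}}$, so one must reconstruct the weakest-precondition step through the generic Barcan/pullback machinery, carefully invoking that the $\zero{}$-program freezes the $\one{}$-coordinate and every $\zero{}$-variable it does not write, so that the rendition over-variables $\variableVecOneNext$ pin down enough of the post-state to decide $\formulaTwoFOL$. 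The accompanying bookkeeping --- that $\Gamma_{\zero{\dynamicTheory}}\subseteq\Gamma_{\preHero{\dynamicTheory}}$ so that \ref{axiom:HR0} transfers relative (not just absolute) derivability --- is routine but must be stated explicitly.
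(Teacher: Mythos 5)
Your overall architecture differs from the paper's: you express $\modBox{\programOne}{\formulaTwoFOL}$ by the weakest-precondition formula $\formulaThreeFOL$ built from the \emph{post-state} arguments of the rendition (as in \Cref{lem:expressibility_fol}), hand $\formulaOneFOL\implies\formulaThreeFOL$ to the oracle, and then try to re-derive the modality from $\formulaThreeFOL$; the paper instead snapshots the \emph{pre-state} into fresh ghosts $\variableVecOnePrev$ via \ref{axiom:ghost}, applies \ref{axiom:MR} with the cut formula $\fa{\variableVecOne}{(\variableVecOne\doteq\variableVecOnePrev\implies\formulaOneFOL)}\land\programRendition{\programOne}(\variableVecOnePrev,\variableVecOne)$, and splits via \ref{axiom:boxAnd} and \ref{axiom:V} so that the only remaining modal obligation is the \emph{pure} $\zero{\dynamicSignature}$-formula $\variableVecOnePrev\doteq\variableVecOne\implies\modBox{\programOne}{\programRendition{\programOne}(\variableVecOnePrev,\variableVecOne)}$, which is discharged to the relatively complete component calculus; the residual obligation is a first-order validity over $\preHero{\dynamicSignature}$ for the oracle. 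Your framing of the difficulty (mixed postconditions with $\common{\signatureAtoms}$-atoms block a wholesale appeal to the component calculus) and your bookkeeping (inertness of $\zero{}$-programs on the $\one{}$-coordinate, transfer of $\Gamma_{\zero{\dynamicTheory}}$-relative derivations through an \ref{axiom:HR0}-style rule) match the paper's.

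The gap is in your final step, which is the heart of the lemma: you never actually derive $\formulaThreeFOL\implies\modBox{\programOne}{\formulaTwoFOL}$ in $\calculus[\preHero{\dynamicTheory}]$. The tool you name, \ref{axiom:PB}, concludes a \emph{diamond} formula and is exactly what the paper reserves for the dual result (\Cref{lem:rel-completeness:hero-diamond}); its contrapositive turns a box premise into a negated conjunction and points the wrong way for establishing a box conclusion. Moreover, ``instantiate the snapshot vector $\variableVecOneNext$ by the $\variableVecOne$-values of the reached state'' is a semantic move with no syntactic counterpart in this term-free setting: there is no term naming the post-state values, so the instantiation must be simulated either by the paper's route (prove the box-rendition formula in $\calculus[\zero{\dynamicTheory}]$ and lift it) or by a careful derivation from the contrapositive of the rendition equivalence, $\neg\programRendition{\programOne}(\variableVecOne,\variableVecOneNext)\implies\modBox{\programOne}{\neg(\variableVecOne\doteq\variableVecOneNext)}$, combined with \ref{axiom:V}, \ref{axiom:M} and Barcan to push $\fa{\variableVecOneNext}$ under the modality, followed by a first-order coincidence argument inside the box. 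Either repair invokes the component calculus (or the transferred rendition biconditional) in a box-shaped way that your sketch omits; without it, the modality is never eliminated and the proof does not close.
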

\begin{proof}
In contrast to similar proofs (e.g. Harel \emph{et al.}~\cite{harel_first-order_1979} and Platzer~\cite{Platzer2012}), the major challenge in deriving relative completeness is the case where $\programOne \in \ith{\signaturePrograms}$ but $\formulaOneFOL,\formulaTwoFOL \in \formulaSet{\preHero{\dynamicSignature}}$,
i.e. the formulas may contain predicates, functions and variables from both theories.
We tackle this case, which is in fact the only one for programs $\programOne\in\preHero{\signaturePrograms}$, as follows:
\begin{description}[leftmargin=!, labelwidth=1.5cm, align=right]
\item[$\programOne \in \ith{\signaturePrograms}$]
In this case we must discharge the reasoning about $\programOne$ to the homogeneous calculus $\calculus[\ith{\dynamicTheory}]$.
However, $\formulaOneFOL$ and $\formulaTwoFOL$ are first-order formulas over $\preHero{\dynamicSignature}$ and can therefore contain variables or atoms not native to $\ith{\dynamicSignature}$.
Thus, we cannot simply hand the formula to $\calculus[\ith{\dynamicTheory}]$.
Instead, for each $\variableOne \in \variablesOf{\programOne}$ we apply the rule \ref{axiom:ghost} (keeping in mind \Cref{lem:rel-completeness:id_eq}) with $\variableOnePrev=\variableOne$ for a fresh variable $\variableOnePrev$.
Decomposition of the introduced modalities using
\ref{axiom:havoc}, \ref{axiom:seq} and \ref{axiom:?}
coupled with further simplification using first-order reasoning yields
$
    \left(\formulaOneFOL \land 
    \variableVecOnePrev \doteq \variableVecOne
    \right)
    \implies
    \modBox{\programOne}
    \formulaTwoFOL
$

We can then derive the validity of the following formula (note that all $\variableOnePrev$ are fresh):
\[
\left(
    \left(\formulaOneFOL \land 
    \variableVecOnePrev \doteq \variableVecOne
    \right)
    \iff
    \left(
    \fa{\variableVecOne}{\left(
    \variableVecOne\doteq\variableVecOnePrev \implies \formulaOneFOL
    \right)} \land 
    \variableVecOnePrev \doteq \variableVecOne
    \right)
\right)
\]
Using this fact, we can obtain an equivalent formula where we substitute the left hand-side by the right hand-side using first-order reasoning.
This yields 
\[
    \left(
    \fa{\variableVecOne}{\left(
    \variableVecOne\doteq\variableVecOnePrev \implies \formulaOneFOL
    \right)} \land 
    \variableVecOnePrev \doteq \variableVecOne
    \right)
    \implies
    \modBox{\programOne}
    \formulaTwoFOL
\]
We then apply \ref{axiom:MR} w.r.t. $\fa{\variableVecOne}{\left(
    \variableVecOne\doteq\variableVecOnePrev \implies \formulaOneFOL
    \right)} \land \programRendition{\programOne}\left(\variableVecOnePrev,\variableVecOne\right)$ which yields two proof obligations.
The first proof obligation reads:
\[
\left(
    \fa{\variableVecOne}{\left(
    \variableVecOne\doteq\variableVecOnePrev \implies \formulaOneFOL
    \right)} \land 
    \variableVecOnePrev \doteq \variableVecOne
    \right)
    \implies
    \modBox{\programOne}
    \left(\fa{\variableVecOne}{\left(
    \variableVecOne\doteq\variableVecOnePrev \implies \formulaOneFOL
    \right)} \land \programRendition{\programOne}\left(\variableVecOnePrev,\variableVecOne\right)\right)  
\]
Using \ref{axiom:boxAnd} we decompose this formula into a proof that $\fa{\variableVecOne}{\left(
    \variableVecOne\doteq\variableVecOnePrev \implies \formulaOneFOL
    \right)}$ is preserved (this immediately follows from \ref{axiom:V}) and a proof that $\programRendition{\programOne}\left(\variableVecOnePrev,\variableVecOne\right)$ holds after execution of $\programOne$.
To this end, we weaken the right side yielding the formula
$\variableVecOnePrev\doteq\variableVecOne
\implies
\modBox{\programOne}
\programRendition{\programOne}\left(\variableVecOnePrev,\variableVecOne\right)
$.
Note, that this formula is in $\ith{\dynamicSignature}$ and is furthermore valid, as $\programRendition{\programOne}$ precisely describes the variable values of all reachable states.
Thus, we can discharge this proof obligation to $\calculus[\ith{\dynamicTheory}]$ which (by assumption) must relatively prove its validity w.r.t. first-order formulas.

Secondly, we must prove that:
\[
\fa{\variableVecOne}{\left(
    \variableVecOne\doteq\variableVecOnePrev \implies \formulaOneFOL
    \right)}  \land \programRendition{\programOne}\left(\variableVecOnePrev,\variableVecOne\right)
\implies
\formulaTwoFOL
\]
First, note that this formula is a first-order formula over $\preHero{\dynamicSignature}$.
Consider any state $\stateTwo$ such that $\stateTwo \models_{\dynamicTheory} \fa{\variableVecOne}{\left(
    \variableVecOne\doteq\variableVecOnePrev \implies \formulaOneFOL
    \right)}  \land \programRendition{\programOne}\left(\variableVecOnePrev,\variableVecOne\right)$ (otherwise the formula is trivially satisfied).
Then, we construct $\stateOne \in \stateSpace$ where
$\preHero{\variableEval}\left(\stateOne,\variableOne\right)=\preHero{\variableEval}\left(\stateTwo,\variableOnePrev\right)$ for all $\variableOne \in \variablesOf{\programOne}$ and
$\equalOn{\stateOne}{\stateTwo}{\variablesOf{\programOne}^\complement}$.
Then, by definition of $\programRendition{\programOne}$ it holds that $\left(\stateOne,\stateTwo\right) \in \preHero{\programEval}\mleft(\programOne\mright)$ and $\stateOne \models_{\dynamicTheory} \formulaOneFOL$.
Since we assumed the validity of the original formula, this implies that $\stateTwo \vDash \formulaTwoFOL$.
Thus, the second proof obligation is valid and we therefore reduced the validity of the original formula to a first-order formulas.
\end{description}
\end{proof}
\begin{lemma}[Diamond Properties for Simple Heterogeneous Theories with Havoc]
\label{lem:rel-completeness:hero-diamond}
Let $\zero{\dynamicTheory},\one{\dynamicTheory}$ be two relatively complete, FOL expressive dynamic theories with finite support
that are communicating in $\preHero{\dynamicTheory}$.
Consider now the simple heterogeneous dynamic theory (i.e. $\left(\preHero{\dynamicTheory}\right)$), then its proof calculus $\calculus[\preHero{\dynamicTheory}]$ is relatively complete for formulas of the form $\formulaOneFOL\implies\modDia{\programOne}{\formulaTwoFOL}$ with $\formulaOneFOL,\formulaTwoFOL\in\folFormulaSet{\preHero{\dynamicTheory}},\programOne\in\folFormulaSet{\preHero{\signaturePrograms}}$.
\end{lemma}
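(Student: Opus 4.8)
Since the only programs of $\preHero{\dynamicTheory}$ are the atomic programs $\programOne\in\zero{\signaturePrograms}\cupdot\one{\signaturePrograms}$, there is nothing to recurse over: exactly as in the proof of \Cref{lem:rel-completeness:hero-box}, the whole argument is the single case $\programOne\in\ith{\signaturePrograms}$ for $i\in\{0,1\}$, where $\formulaOneFOL,\formulaTwoFOL\in\folFormulaSet{\preHero{\dynamicTheory}}$ may mention atoms and variables of \emph{both} component signatures. The plan is to dualise that proof: wherever it uses the box-only principles \ref{axiom:boxAnd} and \ref{axiom:MR}, I will instead use \ref{axiom:KDiamond}, \ref{axiom:G}-monotonicity for the diamond, and the derived principle $\modDia{\programOne}{X}\land\modBox{\programOne}{Y}\implies\modDia{\programOne}{(X\land Y)}$ (obtainable from \ref{axiom:K} and \ref{axiom:KDiamond}, exactly as the appendix derives \ref{axiom:KDiamond} and \ref{axiom:Fi}); equivalently one may invoke \ref{axiom:PB}/\ref{axiom:MPDiamond} directly. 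Throughout I rely on $\preHero{\dynamicTheory}$ being FOL expressive and of finite support (\Cref{lem:rel-completeness:hero-fol-expressive} and \Cref{lem:rel-complte:prehero-BV}, using that $\zero{\dynamicTheory},\one{\dynamicTheory}$ are FOL expressive, of finite support, and communicating in $\preHero{\dynamicTheory}$), so that for $\programOne\in\ith{\signaturePrograms}$ the rendition $\programRendition{\programOne}$ splits as $\ith{\programRendition{\programOne}}(\cdots)\land\bigwedge_{\variableOne\in\ith[1-i]{\signatureVariables}\cap\variableVecOne}\variableOne\doteq\variableOneNext$, i.e.\ into a part native to $\ith{\dynamicSignature}$ and a ``frame'' part over the foreign variables, and on relative completeness of $\calculus[\ith{\dynamicTheory}]$ for valid diamond--first-order formulas over $\ith{\dynamicSignature}$.

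First I would \emph{freeze the pre-state} of $\programOne$: for each $\variableOne\in\variablesOf{\programOne}$ (all in $\ith{\signatureVariables}$) introduce a fresh twin variable $\variableOnePrev\in\ith{\signatureVariables}$ and apply \ref{axiom:ghost}, with the side goal $\variableOne\doteq\variableOne$ supplied by \Cref{lem:rel-completeness:id_eq}. Decomposing the introduced modalities via \ref{axiom:havoc}, \ref{axiom:seq}, \ref{axiom:?} together with first-order reasoning reduces the goal to $\bigl(\variableVecOnePrev\doteq\variableVecOne\land\formulaOneFOL\bigr)\implies\modDia{\programOne}{\formulaTwoFOL}$; this reduction is literally the one from the box proof, since it never inspects the inner modality. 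Next I would \emph{cut in the intermediate first-order postcondition} $C\equiv\fa{\variableVecOne}{(\variableVecOne\doteq\variableVecOnePrev\implies\formulaOneFOL)}\land\programRendition{\programOne}(\variableVecOnePrev,\variableVecOne)$: using $C\implies\formulaTwoFOL$ together with \ref{axiom:G} and \ref{axiom:KDiamond} (the diamond counterpart of \ref{axiom:MR}), it then suffices to establish the two obligations $\bigl(\variableVecOnePrev\doteq\variableVecOne\land\formulaOneFOL\bigr)\implies\modDia{\programOne}{C}$ and $C\implies\formulaTwoFOL$.

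The second obligation $C\implies\formulaTwoFOL$ is a first-order formula over $\preHero{\dynamicSignature}$ and is valid --- because in any state $\programRendition{\programOne}(\variableVecOnePrev,\variableVecOne)$ exactly describes the $\programOne$-successors of the state recorded by $\variableVecOnePrev$, so the semantic argument is the one closing the proof of \Cref{lem:rel-completeness:hero-box} --- hence it goes to the first-order oracle. For the first obligation, I would massage the antecedent (first-order reasoning plus $\variableOne\doteq\variableOne$) into the form $\fa{\variableVecOne}{(\variableVecOne\doteq\variableVecOnePrev\implies\formulaOneFOL)}\land\variableVecOnePrev\doteq\variableVecOne$, and then --- in place of \ref{axiom:boxAnd} --- apply $\modDia{\programOne}{X}\land\modBox{\programOne}{Y}\implies\modDia{\programOne}{(X\land Y)}$. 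The ``frame'' conjuncts of $C$, namely $\fa{\variableVecOne}{(\variableVecOne\doteq\variableVecOnePrev\implies\formulaOneFOL)}$ (whose free variables are the fresh ghosts $\variableVecOnePrev$) and the foreign-variable equalities coming out of the split rendition, are untouched by $\programOne$ and therefore preserved under the box by \ref{axiom:V}; the remaining native core $\variableVecOnePrev\doteq\variableVecOne\implies\modDia{\programOne}{\ith{\programRendition{\programOne}}(\cdots)}$ is a valid diamond--first-order formula over $\ith{\dynamicSignature}$ --- this is precisely what the equivalent rendition definition \Cref{def:rel-complete:rendition-equiv} asserts for $\programOne$ --- so it is proved by the relatively complete $\calculus[\ith{\dynamicTheory}]$ and imported into $\calculus[\preHero{\dynamicTheory}]$ (the inclusion $\formulaSet{\ith{\dynamicSignature}}\subseteq\formulaSet{\preHero{\dynamicSignature}}$ is semantics-preserving, as already exploited in the proof of \Cref{lem:rel-completeness:hero-box}).

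The step I expect to be the main obstacle is this dualised splitting. In the box direction \ref{axiom:boxAnd} separates ``$\programRendition{\programOne}$ holds afterwards'' from ``the frame persists'' for free; the diamond has no such split, so the native reachability claim and the box-frame invariant must be threaded together via $\modDia{\programOne}{X}\land\modBox{\programOne}{Y}\implies\modDia{\programOne}{(X\land Y)}$ (equivalently \ref{axiom:PB}/\ref{axiom:MPDiamond}). This forces careful bookkeeping: one has to choose the variable vectors so that $\boundVarsSem(\programOne)$ contains no foreign variable and no ghost --- so that \ref{axiom:V} indeed applies to the frame part and only genuinely $\ith{\dynamicSignature}$-material is handed to $\calculus[\ith{\dynamicTheory}]$ --- and to keep the ghost and snapshot vectors honest twins of $\programOne$'s variables, so that the rendition's side conditions (and the requirements of \Cref{def:rel-complete:rendition-equiv}) are met. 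Once that bookkeeping is fixed, the remainder is routine first-order manipulation and appeals to the appendix's derived modal principles.
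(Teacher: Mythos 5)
Your proposal has the right high-level shape (freeze the pre-state, hand a pure $\ith{\dynamicSignature}$ diamond obligation to $\calculus[\ith{\dynamicTheory}]$, hand a first-order obligation to the oracle), but the cut formula you dualise from the box proof breaks the argument. You cut in $C\equiv\fa{\variableVecOne}{(\variableVecOne\doteq\variableVecOnePrev\implies\formulaOneFOL)}\land\programRendition{\programOne}(\variableVecOnePrev,\variableVecOne)$ and claim that $C\implies\formulaTwoFOL$ is valid by ``the semantic argument closing the proof of \Cref{lem:rel-completeness:hero-box}''. That argument does not dualise: a state satisfying $C$ is an \emph{arbitrary} $\programOne$-successor of a $\formulaOneFOL$-state, and the hypothesis $\models\formulaOneFOL\implies\modDia{\programOne}{\formulaTwoFOL}$ only guarantees that \emph{some} successor satisfies $\formulaTwoFOL$. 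For a nondeterministic atomic program with one good and one bad successor from a $\formulaOneFOL$-state, $C\implies\formulaTwoFOL$ is false, so the oracle cannot discharge it, and the passage from $\modDia{\programOne}{C}$ to $\modDia{\programOne}{\formulaTwoFOL}$ via \ref{axiom:G}/\ref{axiom:KDiamond} collapses. (A smaller symptom of the same over-eager dualisation: your ``native core'' $\variableVecOnePrev\doteq\variableVecOne\implies\modDia{\programOne}{\ith{\programRendition{\programOne}}(\cdots)}$ is not valid once $\formulaOneFOL$ is weakened away, since $\programOne$ may have no transition at all from some states.)

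The paper's proof avoids this by applying the pullback axiom \ref{axiom:PB} with $\formulaThree_1\equiv\programRendition{\programOne}(\variableVecOne,\variableVecOneNext)$ and $\formulaThree_2\equiv\variableVecOne\doteq\variableVecOneNext$. The resulting oracle obligation, $\formulaOneFOL\implies\ex{\variableVecOneNext}{\left(\formulaThree_1\land\fa{\variableVecOne}{\left(\formulaThree_2\implies\formulaTwoFOL\right)}\right)}$, places the clause forcing $\formulaTwoFOL$ \emph{inside} an existential over the post-state snapshot $\variableVecOneNext$: the first-order oracle must exhibit the recorded values of a \emph{witness} successor that is both reachable (by the rendition) and satisfies $\formulaTwoFOL$; this existential is exactly what your universally-flavoured $C$ loses. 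The remaining obligation $\fa{\variableVecOneNext}{\left(\formulaThree_1\implies\modDia{\programOne}{\formulaThree_2}\right)}$ is one direction of the rendition property, lies entirely in $\formulaSet{\ith{\dynamicSignature}}$, and is discharged by the relatively complete $\calculus[\ith{\dynamicTheory}]$ --- no ghost introduction or frame/box threading is needed at all. To repair your proof you would have to replace $C$ by this existentially quantified post-state description, at which point you have reconstructed the paper's use of \ref{axiom:PB}.
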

\begin{proof}
In contrast to similar proofs (e.g. Harel \emph{et al.}~\cite{harel_first-order_1979} and Platzer~\cite{Platzer2012}), the major challenge in deriving relative completeness is the case where $\programOne \in \ith{\signaturePrograms}$ but $\formulaOneFOL,\formulaTwoFOL \in \formulaSet{\preHero{\dynamicSignature}}$,
i.e. the formulas may contain predicates, functions and variables from both theories.
We tackle this case, which is in fact the only one for programs $\programOne\in\preHero{\signaturePrograms}$, as follows:
\begin{description}[leftmargin=!, labelwidth=1.5cm, align=right]
\item[$\programOne \in \ith{\signaturePrograms}$]%
We apply the \ref{axiom:PB} rule with the two fresh formulas $\formulaThree_1 \equiv \programRendition{\programOne}\left(\variableVecOne,\variableVecOneNext\right)$ and $\formulaThree_2 \equiv {\variableVecOne \doteq \variableVecOneNext}$.
This results in the formula
\[
\formulaOneFOL \implies
\left(\forall \variableVecOneNext
\left(\formulaThree_1 \implies \modDia{\programOne} \formulaThree_2\right)
\land
\exists \variableVecOneNext
\left(
\formulaThree_1 \land
\forall \variableVecOne
\left(\formulaThree_2 \implies \formulaTwoFOL\right)
\right)\right).
\]
We can split this up into two separate proofs.
The first proof obligation (after omitting $\formulaOneFOL$) reads
$\forall \variableVecOneNext
\left(\formulaThree_1 \implies \modDia{\programOne} \formulaThree_2\right)$.
Notably, this formula is entirely in $\ith{\dynamicSignature}$ and can therefore be discharged via the calculus $\calculus[\ith{\dynamicTheory}]$.
Since we assumed the relative completeness of $\calculus[\ith{\dynamicTheory}]$ this must succeed as $\formulaThree_1$ is our program rendition.
The second proof obligation reads 
$
\formulaOneFOL \implies
\exists \variableVecOneNext
\left(
\formulaThree_1 \land
\forall \variableVecOne
\left(\formulaThree_2 \implies \formulaTwoFOL\right)
\right)
$.
We assumed that $\formulaOneFOL \implies \modDia{\programOne} \formulaTwoFOL$ is valid and $\formulaThree_1\equiv\programRendition{\programOne}\left(\variableVecOne,\variableVecOneNext\right)$ exactly describes the possible state transitions starting from $\variableVecOne$ (in this instance in particular for a state satisfying $\formulaOneFOL$).
Thus, there must exists a $\variableVecOneNext$ such that $\programRendition{\programOne}\left(\variableVecOne,\variableVecOneNext\right)$
and that further for all $\variableVecOne$ with $\variableVecOne\doteq\variableVecOneNext$ it holds that $\formulaTwoFOL$ is satisfied.
This is exactly what the formula checks and the formula is thus valid.
Since it is moreover by definition a first-order formula, we can prove it using our oracle.
Thus, we can prove the validity of diamond formulas for $\programOne \in \ith{\signaturePrograms}$.
\end{description}
\end{proof}
\begin{lemma}[Relative Completeness: Simple Heterogeneous Dynamic Theories]
\label{lem:rel-completeness:rel-complete-simple-hero}
Let $\zero{\dynamicTheory},\one{\dynamicTheory}$ be two relatively complete, FOL expressive dynamic theories with finite support
that are communicating in $\preHero{\dynamicTheory}$.
Then the simple heterogeneous dynamic theory $\preHero{\dynamicTheory}$ with its proof calculus $\calculus[\preHero{\dynamicTheory}]$ is relatively complete.
\end{lemma}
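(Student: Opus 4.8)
The plan is to obtain this statement as a direct corollary of the relative completeness meta-theorem (\Cref{lem:rel-complete}), once the three ingredients it requires for $\preHero{\dynamicTheory}$ have been assembled: finite support, FOL expressiveness, and relative completeness of $\calculus[\preHero{\dynamicTheory}]$ with respect to first-order box and diamond properties. All three have already been established in the preceding lemmas, so the work here is purely compositional.

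First I would record that, since both $\zero{\dynamicTheory}$ and $\one{\dynamicTheory}$ have finite support by hypothesis, \Cref{lem:rel-complte:prehero-BV} gives $\proofProp{Fin}{\preHero{\dynamicTheory}}$, using the obvious $\preHero{\boundVarsProgram}$ that case-splits on whether a program lies in $\zero{\signaturePrograms}$ or $\one{\signaturePrograms}$. Next, because both homogeneous theories are FOL expressive and are communicating in $\preHero{\dynamicTheory}$, \Cref{lem:rel-completeness:hero-fol-expressive} yields $\proofProp{FolExp}{\preHero{\dynamicTheory}}$: here the \textsc{Eq Predicate} property is supplied by the communicating assumption (via fresh atoms in $\common{\signatureAtoms}$), \textsc{Infinite Variables} transfers since variable value sets are unchanged, and \textsc{Rendition} is built by restricting the twin variable vectors to each sub-signature, invoking the homogeneous renditions $\ith{\programRenditionSym}$, and conjoining frame equalities $\variableOne \doteq \variableOneNext$ for the variables of the other signature.

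The substantive input is the third ingredient: \Cref{lem:rel-completeness:hero-box} and \Cref{lem:rel-completeness:hero-diamond} together establish $\proofProp{FolRelCom}{\preHero{\dynamicTheory}}$, i.e.\ that $\calculus[\preHero{\dynamicTheory}]$ proves every valid formula of the form $\formulaOneFOL \implies \modBox{\programOne}{\formulaTwoFOL}$ and $\formulaOneFOL \implies \modDia{\programOne}{\formulaTwoFOL}$ for $\programOne \in \preHero{\signaturePrograms}$. With $\proofProp{Fin}{\preHero{\dynamicTheory}}$, $\proofProp{FolExp}{\preHero{\dynamicTheory}}$, and $\proofProp{FolRelCom}{\preHero{\dynamicTheory}}$ in hand, \Cref{lem:rel-complete} applies verbatim and concludes $\proofProp{RelCom}{\preHero{\dynamicTheory}}$, which is exactly the claim. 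In the notation of \Cref{fig:rel-completeness:relative_completeness_calculus} this is the chain $\left(\proofProp{FolExp}{\zero{\dynamicTheory}} \land \proofProp{RelCom}{\zero{\dynamicTheory}} \land \proofProp{FolExp}{\one{\dynamicTheory}} \land \proofProp{RelCom}{\one{\dynamicTheory}} \land \proofProp{Co}{\preHero{\dynamicTheory}}\right) \Rightarrow \proofProp{FolRelCom}{\preHero{\dynamicTheory}}$ followed by $\proofProp{RelCom}{\preHero{\dynamicTheory}}$ via \Cref{lem:rel-complete}.

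The only genuine difficulty lives inside the box/diamond lemmas, not in this corollary. The hard case there is a program $\programOne \in \ith{\signaturePrograms}$ occurring inside a modality whose surrounding first-order formulas $\formulaOneFOL, \formulaTwoFOL$ mention variables and atoms from \emph{both} signatures, so the obligation cannot be handed to $\calculus[\ith{\dynamicTheory}]$ directly. The resolution is to introduce ghost copies $\variableOnePrev$ of every variable in $\variablesOf{\programOne}$ via \ref{axiom:ghost}, rephrase pre- and postcondition in terms of the program rendition $\programRendition{\programOne}$, and then use \ref{axiom:MR} (resp.\ \ref{axiom:PB} in the diamond case) to separate a pure $\ith{\dynamicSignature}$ proof obligation, discharged by $\calculus[\ith{\dynamicTheory}]$, from a pure first-order obligation, discharged by the oracle. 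Since those lemmas are assumed available here, the present lemma is a two-line invocation of \Cref{lem:rel-complete}.
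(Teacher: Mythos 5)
Your proposal is correct and follows essentially the same route as the paper: the paper likewise proves this as a direct corollary of \Cref{lem:rel-complete}, citing \Cref{lem:rel-completeness:hero-fol-expressive} for FOL expressiveness and \Cref{lem:rel-completeness:hero-box,lem:rel-completeness:hero-diamond} for the box/diamond fragments. Your additional explicit appeal to \Cref{lem:rel-complte:prehero-BV} for finite support is a harmless (indeed slightly more careful) elaboration of what the paper leaves implicit.
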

\begin{proof}
This is a corollary from \Cref{lem:rel-complete}:
We know that $\preHero{\dynamicTheory}$ is FOL expressive (\Cref{lem:rel-completeness:hero-fol-expressive}) and that the diamond and box fragments are relatively complete (\Cref{lem:rel-completeness:hero-box,lem:rel-completeness:hero-diamond}).
Consequently $\calculus[\preHero{\dynamicTheory}]$ for the full theory.
\end{proof}

\subsection{Proof of Case Study Guarantee}
\label{apx:proof_example}
Here we prove the example described in \Cref{case_study_guarantee}.
We begin by giving a full account of the (in the main text abbreviated) formula $\mathrm{coupledPre}$:
\begin{align*}
    \mathrm{coupledPre} ~\equiv~&
    \texttt{ctrlPre} ~\land~\\
& \textit{A}>0 ~\land~ \textit{B}>0 ~\land~ \textit{T}>0 ~\land~ x + v^2/(2*B) \leq s ~\land~\\
&
    \intToReal\left(\texttt{A}^-, 100\textit{A}\right)  ~\land~
     \intToReal\left(\texttt{B}^-, -100\textit{B}\right)~\land~
     \intToReal\left(\texttt{T}^-, 100*\textit{T}\right)\\
&
    \mathrm{round}\left(\texttt{p}, 100\left(\textit{x}-\textit{s}\right)\right) \land
    \mathrm{round}\left(\texttt{v},100\textit{v}\right)
\end{align*}
We then apply \ref{axiom:MR}, introducing $\schemaVarOne \equiv \mathrm{coupledPre}$ which leaves us with the cases \textbf{(A)} and \textbf{(B)}.

\noindent
\textbf{(A)}
Note that $\mathrm{coupledPre}$ is a large conjunction which in particular contains $x + v^2/(2B) \leq s$
We can thus show that $\mathrm{coupledPre} \implies x \leq s$.

\noindent
\textbf{(B)}
We now need to prove that $\mathrm{coupledPre} \implies \modBox{\left(\texttt{ctrl};\hero{\alpha}\right)^*} \mathrm{coupledPre}$.
This allows us to apply the \ref{axiom:ind} rule leaving the proof obligation 
\[
\mathrm{coupledPre} \implies \modBox{\texttt{ctrl};\hero{\alpha}} \mathrm{coupledPre}.
\]
As a reminder we once again print $\hero{\alpha}$:
\[
\left(
\textit{a} \coloneqq *;~
?\left(%
\intToReal\left(\texttt{this.acc},a*100\right)\right);~
\textit{env};~
\texttt{p}\coloneqq *;~
\texttt{v} \coloneqq *;~
?\left(
\mathrm{coupling}
\right)
\right)
\]
We decompose $\texttt{ctrl};\hero{\alpha}$ via \ref{axiom:seq}.
Next we again apply \ref{axiom:MR} w.r.t $\schemaVarOne\equiv\formulaOne_1$. (see below) which produces \textbf{(B.A)} and \textbf{(B.B)}.
\begin{align*}
   \formulaOne_1 \equiv &\texttt{ctrlPost} \land\\
   & \left(\textit{A}>0 ~\land~ \textit{B}>0 ~\land~ \textit{T}>0 ~\land~ x + v^2/(2*B) \leq s\right) ~\land~\\
    &\big(
    \intToReal\left(\texttt{A}^-, 100\textit{A}\right)  ~\land~
     \intToReal\left(\texttt{B}^-, -100\textit{B}\right)~\land~
     \intToReal\left(\texttt{T}^-, 100*\textit{T}\right)~\land~\\
&     \mathrm{round}\left(\texttt{p}, 100\left(\textit{x}-\textit{s}\right)\right) \land
    \mathrm{round}\left(\texttt{v},100\textit{v}\right)
    \big)
\end{align*}

\noindent
\textbf{(B.A)}
We now need to prove that $\mathrm{coupledPre} \implies \modBox{\texttt{ctrl}} \formulaOne_1$.
We split this into two separate proof obligations via application of \ref{axiom:boxAnd}:
The preservation of the $\intToReal$/$\mathrm{round}$ atoms (last part) can be shown via \ref{axiom:V} as these variables are not part of $\boundVarsSem\mleft(\texttt{ctrl}\mright)$ of JavaDL and follow from $\mathrm{coupledPre}$.
The preservation of $\left(\textit{A}>0 ~\land~ \dots ~\land~ x + v^2/(2*B) \leq s\right)$ can be skipped via \ref{axiom:Fi} (for $i=\dL{}$) as this is already part of $\mathrm{coupledPre}$ and soley concerns \dL{} variables.
Finally, it remains to show that $\mathrm{coupledPre}\implies\modBox{\texttt{ctrl}}\texttt{ctrlPost}$.
Via \ref{axiom:HR0} we reduce this to \texttt{JavaDL}.
In fact, we have already proven this property in \Cref{lem:java_ctrl_correctness}.

\noindent
\textbf{(B.B)}
This proof branch now reads:
\[
\formulaOne_1
\implies
\modBox{\textit{a} \coloneqq *;~
?\left(
\intToReal\left(\texttt{this.acc}, \textit{a}*100\right)
\right);~
\textit{env};~
\texttt{p}\coloneqq *;~
\texttt{v} \coloneqq *;~
?\left(
\mathrm{coupling}
\right)}
\mathrm{coupledPre}
\]
We decompose further via
\ref{axiom:seq}
\ref{axiom:havoc}
\ref{axiom:seq}
\ref{axiom:?}.
After further simplification, the formula then reads:
\[
\left(
\begin{array}{l}
    \texttt{ctrlPost} \land \\
    \intToReal\left(\texttt{A}^-,100\textit{A}\right)~\land~\\
     \intToReal\left(\texttt{B}^-,-100\textit{B}\right)~\land~\\
     \intToReal\left(\texttt{T}^-,100\textit{T}\right) ~\land~\\
     \mathrm{round}\left(\texttt{p}, 100\left(\textit{x}-\textit{s}\right)\right) ~\land~\\
    \mathrm{round}\left(\texttt{v}, 100\textit{v}\right) \land\\
    \intToReal\left(\texttt{this.acc}, \textit{a}*100\right)~\land~\\
    \textit{A} > 0 ~\land~ -\textit{B} > 0 ~\land~\\
    T>0 ~\land~ x + v^2/(2B) \leq s
\end{array}
\right)
\implies
\modBox{\textit{env};~
\texttt{p}\coloneqq *;~
\texttt{v} \coloneqq *;~
?\left(
\mathrm{coupling}
\right)}
\mathrm{coupledPre}
\]
Note, that in comparison to \textit{envPre} we lack the formula \textit{acc\_assumptions}.
To remedy this, we perform a cut w.r.t. \textit{acc\_assumptions}.
We can prove this formula because it follows from the equation on the left hand-side (in particular via \texttt{ctrlPost}).
We perform this proof via mechanized first-order reasoning in \ac{keymaerax}.
After simplification and application of \ref{axiom:seq} this yields the following proof situation:
\[
\left(
\begin{array}{l}
    \texttt{ctrlPost} \land \textit{envPre} \\
    \intToReal\left(\texttt{A}^-,100\textit{A}\right)~\land~\\
     \intToReal\left(\texttt{B}^-,-100\textit{B}\right)~\land~\\
     \intToReal\left(\texttt{T}^-,100\textit{T}\right)
\end{array}
\right)
\implies
\modBox{\textit{env}}
\modBox{
\texttt{p}\coloneqq *;~
\texttt{v} \coloneqq *;~
?\left(
\mathrm{coupling}
\right)}
\mathrm{coupledPre}
\]
Once again we apply \ref{axiom:MR} this time w.r.t. $\schemaVarOne\equiv\formulaOne_2$ as follows:
\begin{align*}
\formulaOne_2 \equiv &
\textit{envPost} \land \texttt{ctrlPost}~\land\\
&\left(\intToReal\left(\texttt{A}^-,100\textit{A}\right)~\land~
     \intToReal\left(\texttt{B}^-,-100\textit{B}\right)~\land~
     \intToReal\left(\texttt{T}^-,100\textit{T}\right)\right)~\land\\
 & \left(\textit{A}>0 ~\land~ \textit{B}>0 ~\land~ \textit{T}>0\right)
\end{align*}
This leaves proof obligations \textbf{(B.B.A)} and \textbf{(B.B.B)}.

\noindent
\textbf{(B.B.A)}
We now need to prove that the following formula is valid:
\[
\left(
\begin{array}{l}
    \texttt{ctrlPost} \land \textit{envPre} \\
    \intToReal\left(\texttt{A}^-,100\textit{A}\right)~\land~\\
     \intToReal\left(\texttt{B}^-,-100\textit{B}\right)~\land~\\
     \intToReal\left(\texttt{T}^-,100\textit{T}\right)
\end{array}
\right)
\implies
\modBox{\textit{env}}
\formulaOne_2
\]
To this end, we apply \ref{axiom:boxAnd} to factor out the equalities and inequalities of the second and third line of $\formulaOne_2$, which can be shown via \ref{axiom:V} as the contained variables are not part of $\boundVarsSem\mleft(\textit{env}\mright)$.
For the remaining formula, we apply \ref{axiom:Fi} (for i=\texttt{JavaDL}), which shows the preservation of \texttt{ctrlPost}, meaning it only remains to show \textit{envPost}.
After weakening the precondition, we thus get $\textit{envPre} \implies \modBox{\textit{env}} \textit{envPost}$.
Fortunately, we have previously shown this with \ac{keymaerax} in \Cref{lem:dl_env_correctness}.
Hence, we can reuse this result via \ref{axiom:HR1}

\noindent
\textbf{(B.B.B)}
It remains to show that the following formula is valid:
\[
\formulaOne_2
\implies
\modBox{
\texttt{p}\coloneqq *;~
\texttt{v} \coloneqq *;~
?\left(
\mathrm{coupling}
\right)
}
\mathrm{coupledPre}
\]
After further reduction via 
\ref{axiom:seq}
\ref{axiom:havoc}
\ref{axiom:seq}
\ref{axiom:havoc}
\ref{axiom:seq}
\ref{axiom:?}
and further simplification this yields:
\begin{align*}
\left(
\begin{array}{l}
\texttt{heap\_assumptions}~\land~\\
\texttt{A}^-\doteq\texttt{A} \land \texttt{B}^-\doteq\texttt{B}~\land~\\
\texttt{T}^-\doteq\texttt{T} ~\land~\textit{envPost}~\land~\\
\intToReal\left(\texttt{A}^-,100\textit{A}\right)~\land~\\
 \intToReal\left(\texttt{B}^-,-100\textit{B}\right)~\land~\\
 \intToReal\left(\texttt{T}^-,100\textit{T}\right)\\
\mathrm{round}\left(\texttt{p},100\left(\textit{x}-\textit{s}\right)\right) ~\land~\\
\mathrm{round}\left(\texttt{v},100\textit{v}\right)~\land\\
\textit{A}>0 ~\land~ \textit{B}>0 ~\land~ T>0
\end{array}
\right)
\implies
\left(
\begin{array}{l}
    \texttt{ctrlPre} ~\land~\\
    \textit{A}>0 ~\land~ \textit{B}>0 ~\land~ \textit{T}>0 ~\land~\\
    x + v^2/(2*B) \leq s~\land\\
    \intToReal\left(\texttt{A}^-,100\textit{A}\right)~\land~\\
     \intToReal\left(\texttt{B}^-,-100\textit{B}\right)~\land~\\
     \intToReal\left(\texttt{T}^-,100\textit{T}\right)\\
    \mathrm{round}\left(\texttt{p},100\left(\textit{x}-\textit{s}\right)\right) ~\land~\\
\mathrm{round}\left(\texttt{v},100\textit{v}\right)~\land\\
\end{array}
\right)
\end{align*}
Almost all elements of the conjunction on the right (after expansion of \texttt{ctrlPre} and \textit{envPost}) can be shown via $\phi \implies \phi$.
The exception are the formulas $\texttt{A}>0$, $\texttt{B}<0$ and $\texttt{T}>0$ which nonetheless easily follow from $\textit{A}>0,\textit{B}>0,\textit{T}>0$ and the $\intToReal$ constraints.
This concludes our proof that $\mathrm{coupledPre}$ is inductive w.r.t. $\hero{\alpha}$ and simulaneously concludes our proof on the validity of \Cref{eq:valid_formula_example}.

\clearpage
\section{Isabelle Formalization: Overview}
\label{apx:isabelle}
To obtain the same modularity of components presented in the paper in Isabelle, the formalization is spread across several locales, which impose additional assumptions on the considered dynamic theory in exchange for providing additional proof rules.

\subsection{Basic Constructions}

\paragraph{FirstOrder.}
The underlying first-order structure is implemented in the locale \texttt{folCore} in the file \texttt{FirstOrder.thy}.
This locale demands the assumptions from \Cref{def:DL:universe_state_space_var_eval,def:DL:atom_eval} and constructs first-order formulas over the provided atoms (given as type parameter \texttt{'atom}).

\paragraph{Dynamic Logic\del{S14}{ Core}.}
Given an instantiation of the locale \texttt{folCore} (see above) and programs satisfying the assumptions from \Cref{def:DL:prog_eval}, we can instantiate the locale \chg{S14}{\texttt{dynamic\_theory}}{\texttt{dynLogCore}} in \texttt{DynamicLogicCore.thy}, which implements the core results of a dynamic theory. Programs are provided as the type parameter \texttt{'prog}.
This includes the data structures for formulas, their evaluation, all coincidence and bound effect lemmas, Hilbert calculus axioms (sanity check), and the axioms \ref{axiom:G} and \ref{axiom:K}.\del{S14}{
Given an instantiation of the locale \texttt{dynLogCore} we can instantiate (without further assumptions) the locale \texttt{dynLog} in}
\chg{S14}{The file \texttt{DynamicLogic.thy} then proves}{
the file \texttt{DynamicLogic.thy}, which implements} the additional axioms \ref{axiom:V} and \ref{axiom:B}.
When this paper states a particular structure is a \emph{dynamic theory} (see \Cref{def:DL:theory}) this corresponds to it being an instantiation of this locale (\chg{S14}{\texttt{dynamic\_theory}}{\texttt{dynLog}}).

\subsection{Liftings}
Liftings are implemented as locales whose only assumption is the instantiation of a\chg{S14}{n existing locale instantiation}{ locale} \chg{S14}{\texttt{dynamic\_theory}}{\texttt{dynLog}}. We have implemented Havoc lifting (see \Cref{sec:liftedDL:havoc}) and a lifting to generate the closure over regular programs (see \Cref{sec:liftedDL:regular}).

\paragraph{Havoc.}
The havoc lifting is implemented in the locale \texttt{\chg{S14}{havoc\_lifting}{HavocDynLog}} in \texttt{HavocDynamicLogic.thy}.
This locale takes a given instantiation of \chg{S14}{\texttt{dynamic\_theory}}{\texttt{dynLog}}.
It then constructs an extended programming language which includes the havoc operation, defines appropriate primitives for program evaluation, and free variable computation.
It then proves that the extended constructs are once again an instantiation of \chg{S14}{\texttt{dynamic\_theory}}{\texttt{dynLog}} via Isabelle's \texttt{sublocale} mechanism.
Further, all axioms presented in the paper are proven.

\paragraph{Regular Programs.}
Similarly to the havoc lifting, the regular program closure lifting is implemented as a locale \texttt{\chg{S14}{regular\_lifting}{KATDynLog}} in the file \texttt{KATDynamicLogic.thy}.
In the same approach as above, this locale takes as only assumptions the instantiation of the locale \chg{S14}{\texttt{dynamic\_theory}}{\texttt{dynLog}} and proves that the regular closure over programs (with an appropriately defined program evaluation) is once again an instantiation of the locale \chg{S14}{\texttt{dynamic\_theory}}{\texttt{dynLog}} via Isabelle's \texttt{sublocale} mechanism.
Further, all axioms presented in the paper are proven.

\subsection{Combination}
As outlined in \Cref{sec:hdl}, the construction of a heterogeneous dynamic theory has three requirements:
Two existing dynamic theories and a set of new first-order atoms evaluating over the composed state (i.e., $\zero{\stateSpace}\times\one{\stateSpace}$).
In Isabelle, this is implemented as the locale \texttt{HeterogeneousDynLog} inside \texttt{HeterogeneousDynamicLogic.thy}:
This locale takes as input two instantiations of \chg{S14}{\texttt{dynamic\_theory}}{\texttt{dynLog}} and one instantiation of \texttt{folCore} whose state space is fixed to the combined state space of the individual theories.
The locale constructs the simple heterogeneous dynamic theory as described in \Cref{subsec:hdl:simple} and proves that it is once again an instantiation of \chg{S14}{\texttt{dynamic\_theory}}{\texttt{dynLog}}.
Using \texttt{\chg{S14}{havoc\_lifting}{HavocDynLog}} and \texttt{\chg{S14}{regular\_lifting}{KATDynLog}}, this heterogeneous instantiation is then lifted to the fully heterogeneous dynamic theory.
Additionally, we prove the reduction rules \ref{axiom:HR0}, \ref{axiom:HR1}.

\subsection{Inductive Expressiveness}
To additionally derive axiom \ref{axiom:C}, we require inductive expressiveness.
This is formalized in the locale \texttt{\chg{S14}{nat\_regular\_lifting}{NatKATDynLog}}, which, in addition to an instantiation of \texttt{\chg{S14}{regular\_lifting}{KATDynLog}} (required for the availability of loops) makes the assumptions formalized in \Cref{def:liftedDL:inductive_expressive}.
It then proves the soundness of axiom \ref{axiom:C} for an arbitrary given regular program lifted dynamic theory.
In \texttt{HeterogeneousDynLog} we also prove that if a given homogeneous dynamic theory $\ith{\dynamicTheory}$ is inductively expressive (i.e. instantiates \texttt{\chg{S14}{nat\_regular\_lifting}{NatKATDynLog}}), then the fully heterogeneous dynamic theory is also inductively expressive w.r.t $\ith{\integerVariables}$ (i.e. instantiates \texttt{\chg{S14}{nat\_regular\_lifting}{NatKATDynLog}}).

\subsection{Equational reasoning}
We introduce the fundamental notions of program equivalence and refinement in the theory \texttt{DLRewriting} which extends \chg{S14}{\texttt{dynamic\_theory}}{\texttt{dynLog}}.
For dynamic logics with regular program closure the theory \texttt{KATRewriting} introduces the locale \texttt{KleeneDynLog} which proves the Kleene Algebra with Tests axioms and auxilliary lemmas.
\texttt{HeterogeneousRewriting} formalizes the KAT rewriting results for heterogeneous theory combination.

\paragraph{Note on assumptions of \texttt{KATRewriting}.}
For clarity this paper introduces dynamic logics as requiring the extensionality axiom required by \texttt{KATRewriting}.
In reality, \chg{S14}{\texttt{dynamic\_theory}}{\texttt{dynLog}} itself requires a slightly weaker notion of extensionality which we prove to be weaker in \texttt{ExtensionalDynamicLogic}.
Hence, all results of this paper are in particular derivable for \new{S14}{the }definition of dynamic theory presented here while also holding for a slightly weaker notion of extensionality formalized in \chg{S14}{\texttt{dynamic\_theory}}{\texttt{dynLog}}.

\subsection{Instantiations}
So far we have proven the following instantiations of provided locales:
\begin{itemize}
    \item Our formalization of Propositional Dynamic Logic is an instantiation of \chg{S14}{\texttt{dynamic\_theory}}{\texttt{dynLog}} (this instantiation implemented its own regular programs)
    \item Our formalization of semi-ring first-order logic and DL (solely with assignment) instantiates:
    \begin{itemize}
        \item \texttt{folCore}
        \item \chg{S14}{\texttt{dynamic\_theory}}{\texttt{dynLog}}
        \item \texttt{\chg{S14}{havoc\_lifting}{HavocDynLog}}
        \item \texttt{\chg{S14}{regular\_lifting}{KATDynLog}}
        \item \texttt{\chg{S14}{nat\_regular\_lifting}{NatKATDynLog}} for \texttt{nat} or \texttt{int}
    \end{itemize}
    \item The AFP's formalization of differential dynamic logic~\cite{Differential_Dynamic_Logic-AFP} instantiates \chg{S14}{\texttt{dynamic\_theory}}{\texttt{dynLog}}
    \item Two instantiations of semi-ring DL and a semi-ring first-order logic over corresponding state tuples instantiates \texttt{HeterogeneousDynLog} (sanity check for assumptions of HDL).
\end{itemize}

\subsection{Artifact}
We provide Isabelle outlines for our main project as well as our instantiation with Differential Dynamic Logic, along with the Isabelle files in the supplementary materials.

\clearpage
\subsection{Reference to Isabelle Proofs}
\label{apx:isabelle_reference}
\begin{tabular}{l|l|l}
    \textbf{Result} & \textbf{Isabelle Reference} & \textbf{File} \\\hline\hline
    \Cref{isaLem:fv_soundness} & \makecell[l]{\texttt{\chg{S14}{dynamic\_theory}{dynLogCore}.dyn\_FV\_fml\_sound},\\ \texttt{\chg{S14}{dynamic\_theory}{dynLogCore}.dyn\_FV\_prog\_sound}} & \texttt{DynamicLogicCore.thy} \\\hline
    \Cref{isaLem:pdl_dyn_theory} & \texttt{pdlDLFull} & \texttt{PDL.thy}\\\hline
    \Cref{lem:DL:semiring_dl} & \texttt{semiring\_dl\_full} & \texttt{SemiRingDL.thy}\\\hline
    \Cref{lem:DL:diffDL_theory} & \texttt{dl\_dyn\_core} & \texttt{DifferentialDL.thy}\\\hline
    \Cref{isaLem:coincidence_formulas} & \makecell[l]{\texttt{dyn\_fv\_sem\_coincidence},\\\texttt{dyn\_fv\_sem\_coincidence\_smallest}}  & \texttt{DynamicLogicCore.thy}\\\hline
    \Cref{isaLem:coincidence_programs} & \makecell[l]{
    \texttt{dyn\_prog\_fv\_sem\_coincidence},\\
    \texttt{dyn\_prog\_fv\_sem\_coincidence\_smallest}
    } & \texttt{DynamicLogicCore.thy}\\\hline
    \Cref{isaLem:bounded_effect_progams} & 
    \makecell[l]{\texttt{dyn\_prog\_bv\_sem\_bounded},\\
    \texttt{dyn\_prog\_bv\_sem\_bounded\_smallest1},\\
    \texttt{dyn\_prog\_bv\_sem\_bounded\_smallest2}}
    & \texttt{DynamicLogicCore.thy}\\\hline
    \Cref{isaLem:fv_is_overapprox} & 
        \texttt{\chg{S14}{dynamic\_theory}{dynLogCore}.dyn\_FV\_fml\_sound}
    & \texttt{DynamicLogicCore.thy}\\\hline
    \Cref{thm:elementary:soundness} & 
    \makecell[l]{
    \texttt{\chg{S14}{dynamic\_theory}{dynLogCore}.dyn\_axiom\_G},\\
    \texttt{\chg{S14}{dynamic\_theory}{dynLogCore}.dyn\_axiom\_K},\\
    \texttt{\chg{S14}{dynamic\_theory}{dynLog}.dyn\_axiom\_V},\\
    \texttt{\chg{S14}{dynamic\_theory}{dynLog}.dyn\_axiom\_B}\\
    \texttt{\chg{S14}{dynamic\_theory}{dynLog}.rewriteAxiomReplace}}
    & \makecell[l]{
    \texttt{DynamicLogicCore.thy}\\~\\
    \texttt{DynamicLogic.thy}\\~\\
    \texttt{DLRewriting.thy}}\\\hline
    \Cref{thm:liftedDL:havoced} &
    \makecell[l]{
    \texttt{\chg{S14}{havoc\_lifting}{HavocDynLog}.havoc\_dl}
    }
    & \texttt{HavocDynamicLogic.thy}\\\hline
    \Cref{isaLem:liftedDL:havoc:reduction} &
    \makecell[l]{
    \texttt{\chg{S14}{havoc\_lifting}{HavocDynLog}.dyn\_axiom\_HR}
    } &
    \texttt{HavocDynamicLogic.thy}\\\hline
    \Cref{isalem:liftedDL:havoc:axiom} &
    \makecell[l]{
    \texttt{\chg{S14}{havoc\_lifting}{HavocDynLog}.dyn\_axiom\_havoc}
    } &
    \texttt{HavocDynamicLogic.thy}\\\hline
    \Cref{thm:liftedDL:regular} &
    \texttt{\chg{S14}{regular\_lifting}{KATDynLog}.kat\_dl} & \texttt{KATDynamicLogic.thy}\\\hline
    \Cref{isalem:liftedDL:rr_axiom} &
    \texttt{\chg{S14}{regular\_lifting}{KATDynLog}.dyn\_axiom\_RR} &
    \texttt{KATDynamicLogic.thy}\\\hline
    \Cref{isathm:soundess_regular_axioms} &
    \makecell[l]{
    \texttt{\chg{S14}{regular\_lifting}{KATDynLog}.dyn\_axiom\_test},\\
    \texttt{\chg{S14}{regular\_lifting}{KATDynLog}.dyn\_axiom\_seq},\\
    \texttt{\chg{S14}{regular\_lifting}{KATDynLog}.dyn\_axiom\_star},\\
    \texttt{\chg{S14}{regular\_lifting}{KATDynLog}.dyn\_axiom\_choice},\\
    \texttt{\chg{S14}{regular\_lifting}{KATDynLog}.dyn\_axiom\_I}\\
    } & \texttt{KATDynamicLogic.thy}\\\hline
    \Cref{isathm:reg_closure_kat} &
    \makecell[l]{\texttt{KleeneDynLog.kat\_rule\_}*\\
    \texttt{KleeneDynLog.bool\_rule}*\texttt{\_general}
    }&
    \texttt{KATRewriting.thy}
    \\\hline
    \Cref{isalem:correctness_check_to_fml} &
    \texttt{KleeneDynLog.eq\_if\_pure} &
    \texttt{KATRewriting}
    \\\hline
    \Cref{isalem:loop_convergence} & 
    \makecell[l]{
    \texttt{\chg{S14}{nat\_regular\_lifting}{NatKATDynLog}.dyn\_axiom\_C}
    } & \texttt{NatKATDynamicLogic.thy}\\\hline
    \Cref{lem:semiring_dl_inductive} &
    \makecell[l]{\texttt{semiring\_dl\_full\_nat},\\
    \texttt{semiring\_dl\_full\_nat\_int}}
    & \texttt{SemiRingDL.thy}
\end{tabular}

\begin{tabular}{l|l|l}
    \textbf{Result} & \textbf{Isabelle Reference} & \textbf{File} \\\hline\hline
    \Cref{isathm:simple_hdl} &
    \texttt{HeterogeneousDynLog.het\_dl} &
    \texttt{HeterogeneousDynamicLogic.thy}\\\hline
    \Cref{isathm:full_hdl} & 
    \makecell[l]{\texttt{HeterogeneousDynamicLogic.het\_dl\_havoc}\\
    \texttt{HeterogeneousDynamicLogic.het\_dl\_kat}} & 
    \texttt{HeterogeneousDynamicLogic.thy}\\\hline
    \Cref{isalem:hdl:reduction} &
    \makecell[l]{\texttt{HeterogeneousDynLog.dyn\_axiom\_R0},\\
    \texttt{HeterogeneousDynLog.dyn\_axiom\_R1}} &
    \texttt{HeterogeneousDynamicLogic.thy}\\\hline
    \Cref{lem:hdl:inductive} & 
    \makecell[l]{\texttt{inductively\_expressive\_1}\\
    \texttt{inductively\_expressive\_2}} &
    \texttt{HeterogeneousDynamicLogic.thy}\\\hline
    \Cref{isathm:heterogeneous_rewriting} &
    \makecell[l]{%
    \texttt{HeterogeneousRewriting.choice\_rewrite}\\
    \texttt{HeterogeneousRewriting.seq\_rewrite}\\
    \texttt{HeterogeneousRewriting.check\_rewrite}\\
    \texttt{HeterogeneousRewriting.loop\_rewrite}\\
    }&
    \texttt{HeterogeneousRewriting.thy}\\\hline
    \Cref{lem:rel-complte:havoc_BV} &
    \texttt{HavocDynamicLogic.BV\_overapprox}
    &\texttt{HavocDynamicLogic.thy}\\\hline
    \Cref{lem:rel-complte:regular_BV} &
    \texttt{KATDynamicLogic.BV\_overapprox} &
    \texttt{KATDynamicLogic.thy} \\\hline
    \Cref{lem:rel-complte:prehero-BV} &
    \texttt{HeterogeneousDynLog.BV\_overapprox} &
    \texttt{HeterogeneousDynamicLogic.thy}
\end{tabular}

\clearpage
\subsection{Reference to Isabelle Definitions}
\label{apx:isa_defs}
\begin{tabular}{l|l|l}
    \textbf{Result} & \textbf{Isabelle Reference} & \textbf{File} \\\hline\hline
    \Cref{def:DL:signature} &
    \makecell[l]{
    types of \texttt{folCore}\\
    types of \texttt{dynamic\_theory}
    } &
    \makecell[l]{
    \texttt{FirstOrder.thy}\\
    \texttt{DynamicLogicCore.thy}
    }\\\hline
    \Cref{def:DL:formulas} &
    \makecell[l]{
    \texttt{fol\_formula}\\
    \texttt{dyn\_formula}
    } &
    \makecell[l]{
    \texttt{FirstOrder.thy}\\
    \texttt{DynamicLogicCore.thy}
    }\\\hline
    \Cref{def:DL:universe_state_space_var_eval} &
    \texttt{folCore} &
    \texttt{FirstOrder.thy}\\\hline
    \Cref{def:DL:atom_eval} &
    \texttt{folCore} &
    \texttt{FirstOrder.thy}\\\hline
    \Cref{def:DL:prog_eval} &
    \texttt{dynamic\_theory} &
    \texttt{DynamicLogicCore.thy}\\\hline
    \Cref{def:DL:domain} &
    \texttt{dynamic\_theory} &
    \texttt{DynamicLogicCore.thy}\\\hline
    \Cref{def:DL:theory} &
    \texttt{dynamic\_theory} &
    \texttt{DynamicLogicCore.thy}\\\hline
    \Cref{def:DL:semantics} &
    \makecell[l]{
    \texttt{folCore.fol\_eval}\\
    \texttt{dynamic\_theory.dyn\_eval}
    }&
    \makecell[l]{
    \texttt{FirstOrder.thy}\\
    \texttt{DynamicLogicCore.thy}
    }\\\hline
    \Cref{def:DL:free_bound_vars} &
    \makecell[l]{
    \texttt{folCore.fol\_fv\_sem}\\
    \texttt{dynamic\_theory.dyn\_fv\_sem}\\
    \texttt{dynamic\_theory.dyn\_prog\_fv\_sem}\\
    \texttt{dynamic\_theory.dyn\_prog\_bv\_sem}
    } &
    \makecell[l]{
    \texttt{FirstOrder.thy}\\
    \texttt{DynamicLogicCore.thy}\\
    \texttt{DynamicLogicCore.thy}\\
    \texttt{DynamicLogicCore.thy}
    }\\\hline
    \Cref{def:DL:refinement_equivalence} &
    \makecell[l]{
    \texttt{dynamic\_theory.progRefine}\\
    \texttt{dynamic\_theory.progEquiv}
    } &
    \texttt{DLRewriting.thy}\\\hline
    \Cref{def:liftedDL:havoced} &
    \makecell[l]{
    \texttt{havoc\_prog}\\
    \texttt{havoc\_lifting.full\_prog\_eval}\\
    \texttt{havoc\_lifting.full\_prog\_FV}
    } &
    \texttt{HavocDynamicLogic.thy}\\\hline
    \Cref{def:liftedDL:regular_closure}&
    \makecell[l]{
    \texttt{kat\_prog}\\
    \texttt{regular\_lifting.full\_prog\_eval\_int}\\
    \texttt{regular\_lifting.full\_prog\_eval}
    } &
    \texttt{KATDynamicLogic.thy}\\\hline
    \Cref{def:liftedDL:freeVarsFOL} &
    \texttt{folCore.fol\_fv\_syn} &
    \texttt{FirstOrder.thy}\\\hline
    \Cref{def:liftedDL:programFV} &
    \texttt{regular\_lifting.full\_prog\_fv\_syn} &
    \texttt{KATDynamicLogic.thy}\\\hline
    \Cref{def:liftedDL:regular_lift} &
    \texttt{regular\_lifting} &
    \texttt{KATDynamicLogic.thy}\\\hline
    \Cref{def:liftedDL:inductive_expressive} &
    \texttt{nat\_regular\_lifting} &
    \texttt{NatKATDynamicLogic.thy}\\\hline
    \Cref{def:hdl:hero_state_space} &
    \makecell[l]{
    \textit{see \texttt{HeterogeneousDynLog}}\\
    \texttt{('state1 × 'state2)}
    } &
    \texttt{HeterogeneousDynamicLogic.thy}\\\hline
    \Cref{def:hero_atoms} &
    \makecell[l]{
    \texttt{HeterogeneousDynLog.atom\_eval\_comm}\\
    \texttt{HeterogeneousDynLog.atom\_fv\_syn\_comm}\\
    } &
    \texttt{HeterogeneousDynamicLogic.thy}\\\hline
    \Cref{def:hdl:simple_signature} &
    \makecell[l]{
    \texttt{HeterogeneousDynLog}
    } &
    \texttt{HeterogeneousDynamicLogic.thy}\\\hline
    \Cref{def:hdl:simple_dom_comp} &
    \makecell[l]{
    \texttt{HeterogeneousDynLog.het\_atom\_eval}\\
    \texttt{HeterogeneousDynLog.het\_prog\_eval}\\
    } &
    \texttt{HeterogeneousDynamicLogic.thy}\\\hline
    \Cref{def:free_vars_simple_hero} &
    \makecell[l]{
    \texttt{HeterogeneousDynLog.het\_atom\_fv\_syn}\\
    \texttt{HeterogeneousDynLog.het\_dyn\_prog\_FV}
    } &
    \texttt{HeterogeneousDynamicLogic.thy}\\\hline
    \Cref{def:hdl:fully_hero} &
    \texttt{HeterogeneousDynLog.het\_dl\_kat} &
    \texttt{HeterogeneousDynamicLogic.thy}
    
\end{tabular}
}{}

\end{document}